\documentclass[11pt]{article}

\usepackage{xcolor}
\usepackage{pagecolor}
\usepackage{lipsum}  
\usepackage{mdframed}
\usepackage[export]{adjustbox}

\usepackage{amsthm}
\usepackage{amsmath}
\usepackage{amssymb}
\usepackage{bbm}
\usepackage{commath}

\usepackage{hyperref}
\usepackage{todonotes}

\newtheorem{theorem}{Theorem}[section]
\newtheorem{corollary}{Corollary}[section]
\newtheorem{lemma}{Lemma}[section]
\newtheorem{definition}{Definition}
\newtheorem{proposition}{Proposition}[section]
\newtheorem{assumption}{Assumption}[section]
\newtheorem{remark}{Remark}[section]

\usepackage{hyperref}
\usepackage{cleveref}
\Crefname{assumption}{Assumption}{Assumptions}

\usepackage{booktabs}
\usepackage[scale=2]{ccicons}

\usepackage{fancyhdr}
\usepackage{algpseudocode}
\usepackage{algorithm}
\usepackage{tikz}
\usetikzlibrary{arrows.meta}
\usepackage{setspace}

\usepackage[a4paper, left=3cm, right=3cm, top=2.5cm, bottom=3cm]{geometry}

\fancypagestyle{firstpage}{
  \lhead{}
  \rhead{}
}

\title{\vspace{-1cm} \textbf{Weak Poincar{\'{e}} inequalities for
Deterministic-scan Metropolis-within-Gibbs samplers}\\
        }

\author{\href{mailto:Mengxi.Gao@warwick.ac.uk}{Mengxi Gao }\\University of Warwick
\and \href{mailto:gareth.o.roberts@warwick.ac.uk}{Gareth O. Roberts}\\University of Warwick
\and \href{mailto:andi.wang@warwick.ac.uk}{Andi Q. Wang} \\University of Warwick}

\date{\vspace{-5ex}} 
\providecommand{\keywords}[1]{\textbf{\textit{Keywords: }} #1}

\begin{document}
\pagenumbering{roman}
\maketitle

\thispagestyle{firstpage}
\pagestyle{firstpage}

\setlength{\parskip}{0.8em}
\setlength{\parindent}{0pt}

\begin{abstract}
Using the framework of weak Poincar{\'{e}} inequalities, we  analyze the convergence properties of deterministic-scan Metropolis-within-Gibbs samplers, an important class of Markov chain Monte Carlo algorithms. 
Our analysis applies to nonreversible Markov chains and yields explicit (subgeometric) convergence bounds through novel comparison techniques based on Dirichlet forms.
We show that the joint chain inherits the convergence behavior of the marginal chain and conversely.
In addition, we establish several fundamental results for weak Poincar{\'{e}} inequalities for discrete-time Markov chains, such as a tensorization property for independent chains.
We apply our theoretical results through applications to algorithms for Bayesian inference for a hierarchical regression model and a diffusion model under discretely-observed data.
\end{abstract}

\keywords{Deterministic-scan Metropolis-within-Gibbs Samplers, Convergence Rate, Weak Poincar{\'{e}} Inequality}

\pagenumbering{arabic}
\setcounter{page}{1}

\section{Introduction}
\label{sec:introduction}
\subsection{Motivation}
Gibbs samplers are a class of Markov Chain Monte Carlo (MCMC) algorithms that are routinely used to sample from complex distributions \cite{gelfand1990sampling}. These samplers iteratively sample from the conditional distributions of each component given the others, with two important classes being \textit{random-scan} and \textit{deterministic-scan} samplers. Random-scan Gibbs samplers update one component at each iteration, selected at random, while  deterministic-scan Gibbs samplers update the components sequentially in a fixed order.

In many practical applications, the conditional distributions required by the Gibbs sampler are not easily sampled from. 
To address this limitation, Metropolis-within-Gibbs (MwG) samplers have been widely developed, see, for example \cite{besag1993spatial}. In such settings, one employs a collection of Markov kernel operators targeting the conditional distributions, rather than exact sampling. In principle, any suitable MCMC method can be used within this framework, and by convention the resulting algorithm is still referred to as MwG, even when the inner update is not a Metropolis--Hastings (MH) step.

The quantitative theories available for standard Gibbs samplers have been studied intensely in the literature; see, for example, \cite{barone1990improving, liu1994covariance, roberts1997updating, gaitonde2024comparison, ascolani2024entropy, chlebicka2025solidarity, goyal2025mixing}.
Given the popularity and practical success of MwG samplers, progress has also been made in developing analytical tools to study their convergence properties; for \textit{random-scan} MwG samplers, see \cite{ qin2023spectral, ascolani2024entropy,secchi2025spectral}, with some related insights found in \cite{roberts1997geometric, fort2003geometric, ascolani2024scalability}. The relationship between geometric ergodicity of random-scan and deterministic-scan MwG samplers has also been investigated in \cite{johnson2013component, qin2022convergence}.

Moving to \textit{deterministic-scan} MwG samplers, there are two sources of analytic difficulty. Firstly, the deterministic scan induces \textit{nonreversibility}, rendering spectral analysis substantially more complex. Secondly, Metropolis-within-Gibbs-type transition kernels are mixtures of densities with respect to potentially different dimensional dominating measures. In particular, the mixing of kernels with incompatible supports complicates the verification of basic Markov chain properties, which hold under much simpler conditions for pure Gibbs or full-dimensional MH schemes. Indeed, \cite{roberts2006harris} shows that even in a two-dimensional MwG algorithm, fundamental properties like Harris recurrence may fail.

However, some progress has been made based on \textit{comparison} techniques. For instance, \cite{jones2014convergence,qin2022convergence} provide sufficient conditions for the geometric convergence of deterministic-scan MwG samplers. For two-component deterministic-scan MwG samplers, the latter-updated component induces a \textit{marginal} Markov chain, obtained by integrating out the other component. A related line of work studies this marginal chain, see, for example, \cite{qin2022convergence}. 
In addition, \textit{hybrid slice samplers}, which can be seen as a marginal version of a particular deterministic-scan MwG sampler, have also been analyzed in \cite{power2024weak}. More recently, \cite{qin2025spectral} proposed a spectral gap decomposition framework for a class of MwG algorithms. Despite these advances, existing approaches \cite{jones2014convergence, qin2022convergence} largely depend on geometric ergodicity or explicit spectral gap conditions, which may fail for practically-relevant MwG samplers \cite{andrieu2015quantitative,durmus2016subgeometric}. 

The aim of this work is to fill this gap by developing a general and flexible framework based on \textit{weak Poincar{\'{e}} inequalities} (WPIs) \cite{rockner2001weak, andrieu2023weak}. By utilising WPIs within a comparison framework via Dirichlet forms -- see \cite{diaconis1993comparison,  andrieu2022comparison, power2024weak, Caprio2025} -- we will analyze the convergence behavior of nonreversible two-component deterministic-scan MwG samplers. This approach allows us to derive quantitative convergence bounds under mild assumptions, with a particular focus on the \textit{subgeometric} setting where spectral gaps do not exist. 

\subsection{Main result and structure}
First, we study the convergence behavior of two-component deterministic-scan MwG samplers under the assumption that the convergence properties of the standard deterministic-scan Gibbs sampler and those of the conditional MCMC samplers are known. More specifically, the convergence bound of the deterministic-scan MwG sampler can be characterized as a particular composition of the convergence bounds of the standard Gibbs sampler and its corresponding conditional update kernels; see \Cref{thm:weakPtotildeP}. This analysis requires us to extend the WPI comparison framework into the nonreversible setting, extending previous works which have largely required reversibility \cite{andrieu2022comparison,power2024weak, Caprio2025}.

Next, we demonstrate that the convergence bounds of the deterministic-scan Gibbs sampler and its associated marginal chain are closely related, see \Cref{thm:PtoP_X} and \Cref{thm:P_XtoP}. This provides a way to obtain improved convergence bounds for such MwG samplers, see \Cref{thm:weakPtotildeP2}.

As a byproduct of these investigations, we also establish several fundamental theoretical results regarding weak Poincar{\'{e}} inequalies for Markov chains, including a characterization of the relationship between the convergence of a Markov chain and that of its adjoint in the subgeometric setting, \Cref{Thm:K*:PtoP*}, as well as a \textit{tensorization} property for weak Poincar{\'{e}} inequalities, \Cref{thm: tensor product}, which enables the straightforward analysis of product chains.

This paper is organized as follows. In \Cref{Sec: Preliminary}, we introduce the necessary theoretical background on Gibbs samplers and weak Poincar{\'{e}} inequalities. In \Cref{Sec: Comparison for Gibbs}, we develop a series of new comparison results between two-component standard Gibbs and MwG samplers. Viewing independent chains on tensor product spaces as a special case of MwG samplers, we establish more general convergence results by deriving a tensorization property of weak Poincar{\'{e}} inequalities in \Cref{Sec: tensor product}. In \Cref{Sec: toy example}, we apply our results to Normal--inverse-gamma distributions, demonstrating the comparison results and convergence bounds for specific MwG samplers. A similar analysis is carried out for a Bayesian hierarchical model in \Cref{Sec: bayes hierarchical model}. Finally, in \Cref{Sec: diffusion}, we present convergence results for sampling from discretely-observed diffusion processes. 

\subsection{Related work}
Until fairly recently, most progress on subgeometric rates of convergence for Markov chains, particularly with a view to MCMC application, was obtained using Lyapunov drift conditions, for example see \cite{tuominen1994subgeometric,jarner2002polynomial,jarner2007convergence,fort2003polynomial,roberts2023polynomial}, and also the recent \cite{brevsar2025central} for \textit{lower} bounds. Although these techniques are very general, quantitative bounds obtained in this way tend to be extremely conservative, especially in high-dimensional problems.

The work \cite{qin2025spectral} introduced new spectral gap decomposition techniques for several hybrid Gibbs algorithms. Notably, one of their key examples is a hybrid data augmentation algorithm involving two intractable conditional distributions, which coincides with the setting considered in this work. Their focus is primarily on spectral decomposition, whereas our analysis is designed to handle convergence in cases which may or may not have positive spectral gaps. Although they do consider settings where spectral gaps are not uniformly bounded and discuss a decomposition approach based on WPIs, these techniques are not applied to their specific examples, which would require overcoming several additional technical difficulties. For example, the Dirichlet forms associated with operator-adjoint compositions (denoted in their paper as $\mathcal{E}(P^*E^*EP,f)$ and $\mathcal{E}(E^*P^*PE,f)$) are not identical, even when the associated right spectral gaps are the same, which limits the applicability of their Proposition~22 to case of two intractable conditional distributions. In contrast, our work directly addresses this issue by introducing \Cref{Thm:K*:PtoP*}, providing a more robust framework for analyzing convergence in the absence of strong spectral properties. Moreover, we present the complete framework to derive an explicit convergence bound for two-component deterministic-scan MwG samplers.

The contributions \cite{liu1994covariance,qin2022convergence,ascolani2025mixing} have shown that the convergence properties of the marginal chain are closely related to those of deterministic-scan Gibbs kernels in the two-component setting. It is well-established that, in geometrically ergodic cases, the convergence rate of the marginal chain is essentially identical to that of the original deterministic-scan Gibbs chain, as demonstrated through operator norm arguments in \cite{liu1994covariance, qin2022convergence}. Moreover, \cite[Lemma 3.2]{qin2022convergence} extends this relationship to MwG kernels and their corresponding marginal chains.
 \cite[Proposition 7.1]{ascolani2025mixing} further provides a more general result based on entropy contraction, showing that the distance between the $t$-step distribution of the deterministic-scan Gibbs sampler and its target can be bounded by the corresponding distances for the $t$-step and $(t+1)$-step marginal chains. In our work, we establish a similar connection within the weak Poincar{\'{e}} 
 inequality framework, focusing on $\mathrm{L}^2$-based convergence analysis and \textit{subgeometric} rates.

\section{Preliminaries}
\label{Sec: Preliminary}
\subsection{Markov chain theory}
Let $(\mathsf{E}, \mathcal{F})$ be a measurable space equipped with a probability measure $\mu$. For a measurable function $f: \mathsf{E} \rightarrow \mathbb{R}$, define $\mu(f)=\int_\mathsf{E} f(x)\mu(\dif x)$.

Let $\mathrm{L}^2(\mu)$ denote the Hilbert space of the set of real-valued measurable functions that are square integrable with respect to $\mu$, equipped with the inner product $\langle f, g\rangle_\mu=\int_{\mathrm{E}} f(x) g(x) \mathrm{d} \mu(x)$, for $f, g \in \mathrm{L}^2(\mu)$, let $\|f\|^2_\mu = \langle f, f\rangle_{\mu}$, and  write $\mathrm{L}^2_0(\mu)$ as the set of functions $f \in \mathrm{L}^2(\mu)$ satisfying $\mu(f) = 0$. 

Let $T$ be a Markov transition kernel, and define the operator $(Tf)(x)=\int T(x,\dif  x')f(x')$, and $T^*$ as the adjoint operator of $T$.
Recall that a Markov transition kernel $T$ on a $\sigma$-finite measure space $(\mathsf{E},\mathcal{F},\mu)$ is $\mu$-reversible if $\mu(\dif x)T(x,\dif y)=\mu(\dif y)T(y,\dif x)$ for all $x,y$. Moreover, a $\mu$-reversible Markov kernel $T$ is said to be positive if $\langle Tf, f \rangle_\mu \geqslant 0$ for all $f \in \mathrm{L}^2(\mu)$. For a $\mu$-invariant Markov transition kernel $T$, define the Dirichlet form by
\begin{align}
\label{equ:DirichletForm}
\begin{split}
    {\mathcal{E}}_\mu(T,f)&:=\langle (I-T)f,f \rangle_\mu
    =\frac{1}{2}\int_{\mathsf{E}}\int_{\mathsf{E}} (f(x)-f(y))^2T(x,\dif y)\mu(\dif x).
\end{split}
\end{align}
$T$ is said to be $ \mathrm{L}^2(\mu)$-exponentially convergent if there exists $\alpha \in [0,1)$ such that for any $f \in \mathrm{L}^2_0(\mu)$ and all $n \in \mathbb{N}$,
\begin{align}
\label{equ:exp}
    \|T^nf\|^2_\mu \leqslant \|f\|_\mu^2 \alpha^n.
\end{align}
The infimum of such $\alpha$ is the $\mathrm{L}^2$ convergence rate of the Markov chain associated with $T$.

For a measurable function $f : \mathsf{E}\rightarrow \mathbb{R}$, we define also the oscillation seminorm by
\begin{align*}
    \|f\|_\mathrm{osc}:=\text{ess}_\mu\sup f-\text{ess}_\mu\inf f,
\end{align*}
where $\text{ess}_\mu\sup f := \inf\{a \in \mathbb{R}: \mu(f^{-1}(a,\infty))=0\}$ and $\text{ess}_\mu\inf f := \sup\{b \in \mathbb{R}: \mu(f^{-1}(-\infty,b))=0\}$. For any Markov kernel $T$ on $\mathrm{L}^2(\mu)$, $f \in \mathrm{L}^2(\mu)$, we have $\|Tf\|_\mathrm{osc}^2\leqslant \|f\|_\mathrm{osc}^2$.

\subsection{Weak Poincar{\'{e}} inequalities}
We introduce the framework of weak Poincar{\'{e}} inequalities in the context of discrete-time Markov chains as developed in \cite{andrieu2022comparison, andrieu2023weak}.
\begin{definition}[Weak Poincar{\'{e}} inequality]
\cite{rockner2001weak, andrieu2022comparison}
\label{def:WPI}
    Given a Markov transition operator $T$ with $\mathrm{L}^2(\mu)$-adjoint $T^*$ on $\mathsf{E}$, we will say that $T^*T$ satisfies a weak Poincar{\'{e}} inequality (WPI) if for any $f \in \mathrm{L}^2_0(\mu)$,
    \begin{align}
        \|f\|_{\mu}^{2}\leqslant s\, {\mathcal{E}_\mu}(T^{*}T,f)+\beta(s) \|f\|^2_\mathrm{osc},\quad\forall s>0,
        \label{eq:WPI_beta}
    \end{align}
    where $\beta:(0,\infty)\rightarrow[0,\infty)$ is a decreasing function with $\lim_{s \rightarrow \infty} \beta(s) = 0$. 
    
    For $\beta$ as above, let
    $\mathrm{K} : [0, \infty) \rightarrow [0, \infty)$ be given by $ \mathrm{K}(u):= \mathbbm{1}_{\{u>0\}}u \cdot \beta(1 / u)$.
    Then let $\mathrm{K}^* : [0, \infty) \rightarrow [0, \infty]$ be defined as the convex conjugate,
    \begin{align*}
        \mathrm{K}^*(v) := \sup_{u\geqslant 0}\{uv - \mathrm{K}(u)\}.
    \end{align*} 

    Then the weak Poincar{\'{e}} inequality \eqref{eq:WPI_beta} can be equivalently expressed as an \textit{optimized} $K^*$-WPI; see \cite[Proposition 4]{andrieu2023weak}:
\begin{align}
    \mathrm{K}^*\left(\frac{\|f\|_\mu^2}{\|f\|^2_\mathrm{osc}}\right) \leqslant \frac{\mathcal{E}_\mu\left(T^* T, f\right)}{\|f\|^2_\mathrm{osc}}, \quad\quad  \forall f \in \mathrm{L}^2_0(\mu),\, 0<\|f\|_\mathrm{osc}<\infty.
\end{align}
\end{definition}
Without loss of generality, $\beta$ can be chosen such that $\beta(s) \leqslant 1/4$ for all $s > 0$, since $\|f\|^2_\mu\leqslant \|f\|^2_{\mathrm{osc}}/4$ for any $f \in \mathrm{L}^2_0(\mu)$ by Popoviciu's inequality. 

A weak Poincar{\'{e}} inequality implies the existence of a function decreasing to 0, which controls the (subgeometric) convergence rate of $T$ in $\mathrm{L}^2$. 

\begin{theorem}
    \cite[Theorem 8]{andrieu2022comparison}
\label{Thm:rate}
    Suppose a WPI with $\beta: (0, \infty) \rightarrow [0, \infty)$ holds for $\mu$ and $T^*T$. Then, for any $f \in \mathrm{L}^2_0(\mu)$ such that $\|f\|^2_\mathrm{osc} < \infty$ and any $n \in \mathbb{N}$,
    \begin{align*}
    \|T^nf \|^2_\mu \leqslant \|f\|^2_\mathrm{osc} F^{-1}(n),
    \end{align*}
    where $F: \left(0, \frac{1}{4}\right] \rightarrow \mathbb{R}$ is the decreasing convex and invertible function $F(x) :=\int_x^{\frac{1}{4}} \frac{\dif v}{\mathrm{K}^*(v)}$, with $\mathrm{K}^* : [0, \infty) \rightarrow [0, \infty]$ defined as \Cref{def:WPI}.
\end{theorem}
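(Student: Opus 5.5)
The plan is to turn the WPI into a one-step contraction inequality for the sequence $x_n := \|T^n f\|_\mu^2/\|f\|_\mathrm{osc}^2$, and then compare that inequality with an integrated ODE.

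First I would note the identity
\begin{align*}
\|T^n f\|_\mu^2-\|T^{n+1}f\|_\mu^2=\langle (I-T^*T)T^nf,T^nf\rangle_\mu=\mathcal{E}_\mu(T^*T,T^nf).
\end{align*}
Write $f_n:=T^nf$. Since $T$ is $\mu$-invariant, $f_n\in\mathrm{L}^2_0(\mu)$. Since $T$ is a Markov kernel, $\|f_n\|_\mathrm{osc}\leqslant\|f\|_\mathrm{osc}$.

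Next I apply the optimized $\mathrm{K}^*$-WPI to $f_n$. This needs a small monotonicity step, because the WPI is normalized by $\|f_n\|_\mathrm{osc}$ rather than $\|f\|_\mathrm{osc}$. The function $\mathrm{K}^*$ is a convex conjugate with $\mathrm{K}^*(0)=0$, so $v\mapsto \mathrm{K}^*(v)/v$ is nondecreasing. Equivalently, I use the $\beta$-form directly:
\begin{align*}
\|f_n\|_\mu^2\leqslant s\,\mathcal{E}(T^*T,f_n)+\beta(s)\|f_n\|_\mathrm{osc}^2\leqslant s\,\mathcal{E}(T^*T,f_n)+\beta(s)\|f\|_\mathrm{osc}^2 .
\end{align*}
Optimizing over $s$ then gives
\begin{align*}
\frac{\mathcal{E}(T^*T,f_n)}{\|f\|^2_\mathrm{osc}}\geqslant \mathrm{K}^*(x_n).
\end{align*}
Combining this with the identity yields the recursion $x_{n+1}\leqslant x_n-\mathrm{K}^*(x_n)$, together with $x_0\leqslant 1/4$ by Popoviciu's inequality.

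Finally I compare the recursion with the ODE $\dot y=-\mathrm{K}^*(y)$, whose solution is $y(t)=F^{-1}(t)$. Since $F$ is decreasing, it suffices to show $F(x_n)\geqslant n$ by induction. For the inductive step, as long as $x_{n+1}>0$ (otherwise the bound is trivial),
\begin{align*}
F(x_{n+1})-F(x_n)=\int_{x_{n+1}}^{x_n}\frac{\dif v}{\mathrm{K}^*(v)}\geqslant \frac{x_n-x_{n+1}}{\mathrm{K}^*(x_n)}\geqslant 1 .
\end{align*}
The first inequality holds because $\mathrm{K}^*$ is nondecreasing, so $1/\mathrm{K}^*(v)\geqslant 1/\mathrm{K}^*(x_n)$ on $[x_{n+1},x_n]$. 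For the base case, $F(x_0)\geqslant 0$ since $x_0\leqslant 1/4$. Induction then gives $F(x_n)\geqslant n$, hence $x_n\leqslant F^{-1}(n)$.

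The main obstacle is the technical bookkeeping around $\mathrm{K}^*$. I need to check that $\mathrm{K}^*$ is nondecreasing and positive on $(0,1/4]$ and that $F$ is finite and invertible on this interval. This relies on $\beta\to 0$ and $\beta\leqslant 1/4$: for $v>0$ we get $\mathrm{K}^*(v)\geqslant \sup_u u(v-\beta(1/u))>0$. I also need to handle the degenerate cases $x_n=0$ and $\mathrm{K}^*=\infty$, where the conclusion is immediate.
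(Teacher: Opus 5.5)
Your proposal is correct and follows essentially the same argument as the paper. The paper only cites this theorem from \cite{andrieu2022comparison}, but reproduces the same telescoping argument in the proof of \Cref{thm:P_XtoP}: bound $\mathcal{E}_\mu(T^*T,T^nf)/\|f\|^2_\mathrm{osc}$ below by $\mathrm{K}^*$ of the variance normalized by $\|f\|^2_\mathrm{osc}$, show $F(x_{n+1})-F(x_n)\geqslant 1$ using that $1/\mathrm{K}^*$ is decreasing, and telescope. The one small difference is that you absorb $\|T^nf\|_\mathrm{osc}\leqslant\|f\|_\mathrm{osc}$ in the $\beta$-form before optimizing over $s$, whereas the paper invokes the monotonicity of $v\mapsto \mathrm{K}^*(v)/v$ via \Cref{lem:f_osc}.
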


Thus for the function $\mathrm{K}^*(v)$, its behaviour in a neighbourhood of 0 critically influences the tail behavior of $F^{-1}$.

\begin{proposition} 
\cite[Remark 3]{andrieu2022comparison}
    If for some $\gamma > 0$, $\beta(s) = 0$ for all $s \geqslant \gamma^{-1}$, a WPI entails a standard Poincar{\'{e}}  inequality (SPI) for $T^*T$: for any $f \in \mathrm{L}^2_0(\mu)$,
    \begin{align}
    \label{eq:SPI}
        \gamma \cdot\|f\|_\mu^2 \leqslant \mathcal{E}_\mu(T^*T, f).
    \end{align}
     When this holds for a Markov kernel $T$ with $\mathrm{L}^2(\mu)$-adjoint $T^*$, we can deduce exponential convergence immediately: for any $f \in \mathrm{L}^2_0(\mu)$, 
     \begin{align*}
         \|T^n f\|^2_\mu \leqslant (1-\gamma)^n\|f\|_\mu^2 \leqslant \exp(-\gamma n)\|f\|_\mu^2. 
     \end{align*}
\end{proposition}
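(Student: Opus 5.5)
Two things need proving: first, that when $\beta$ vanishes on $[\gamma^{-1},\infty)$ the WPI collapses to the standard Poincaré inequality \eqref{eq:SPI}; second, that \eqref{eq:SPI} gives exponential decay of $\|T^nf\|_\mu^2$.

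\textbf{From WPI to SPI.} Fix $f\in\mathrm{L}^2_0(\mu)$ and first assume $\|f\|_{\mathrm{osc}}<\infty$. Take $s=\gamma^{-1}$ in \eqref{eq:WPI_beta}. Since $\beta(\gamma^{-1})=0$, the oscillation term drops out, and we get $\|f\|_\mu^2\leqslant \gamma^{-1}\mathcal{E}_\mu(T^*T,f)$, which is \eqref{eq:SPI}.

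For $f$ with infinite oscillation, the inequality \eqref{eq:WPI_beta} is only meaningful once $\beta(s)\|f\|^2_{\mathrm{osc}}$ is read as $0\cdot\infty=0$. If one prefers not to rely on that convention, argue by density instead:
\begin{itemize}
\item Bounded functions are dense in $\mathrm{L}^2(\mu)$. For bounded $f_k\to f$, the recentred functions $f_k-\mu(f_k)$ lie in $\mathrm{L}^2_0(\mu)$, have finite oscillation, and still converge to $f$ in $\mathrm{L}^2(\mu)$.
\item Both sides of \eqref{eq:SPI} are continuous in $\mathrm{L}^2(\mu)$: $\mathcal{E}_\mu(T^*T,f)=\|f\|_\mu^2-\|Tf\|_\mu^2$ and $T$ is an $\mathrm{L}^2$ contraction.
\end{itemize}
So \eqref{eq:SPI} passes to the limit. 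This density step is the only mildly delicate point.

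\textbf{From SPI to exponential convergence.} Use the identity
\[
\mathcal{E}_\mu(T^*T,g)=\langle g-T^*Tg,g\rangle_\mu=\|g\|_\mu^2-\|Tg\|_\mu^2 .
\]
Combined with \eqref{eq:SPI}, it gives $\|Tg\|_\mu^2\leqslant(1-\gamma)\|g\|_\mu^2$ for every $g\in\mathrm{L}^2_0(\mu)$.

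Since $\mu$ is $T$-invariant, $\mu(Tg)=\mu(g)=0$, so $T$ maps $\mathrm{L}^2_0(\mu)$ into itself. We can therefore apply the one-step bound to $g=T^{k}f$ for $k=0,\dots,n-1$ and iterate, giving $\|T^nf\|_\mu^2\leqslant(1-\gamma)^n\|f\|_\mu^2$. (This also forces $\gamma\leqslant 1$ whenever $f\neq0$.)

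Finally, the elementary inequality $1-\gamma\leqslant e^{-\gamma}$ yields $(1-\gamma)^n\leqslant \exp(-\gamma n)$, which completes the proof.
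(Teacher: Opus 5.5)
Your argument is correct. The paper gives no proof of its own and simply cites \cite[Remark 3]{andrieu2022comparison}; your route is the standard argument behind that remark: set $s=\gamma^{-1}$ so the oscillation term vanishes, then iterate $\|Tg\|_\mu^2=\|g\|_\mu^2-\mathcal{E}_\mu(T^*T,g)\leqslant(1-\gamma)\|g\|_\mu^2$ on the $T$-stable subspace $\mathrm{L}^2_0(\mu)$. Two points in your write-up are careful additions that the paper leaves implicit: the density step for functions with infinite oscillation, and the observation that $\gamma\leqslant 1$ unless $\mathrm{L}^2_0(\mu)=\{0\}$.
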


\begin{remark}
\label{rmk:strongtoweak}
    If $T^*T$ satisfies a standard Poincar{\'{e}} inequality with constant $\gamma$, one may take the corresponding $\beta$ to be $\beta(s)=\mathbbm{1}_{\{s\leqslant \gamma^{-1}\}}$. Conversely, if  $\beta(s)=\mathbbm{1}_{\{s\leqslant \gamma^{-1}\}}$ then one can deduce that a standard Poincar{\'{e}} inequality  holds.
\end{remark}
\begin{remark}
    If $T$ is reversible and the spectral gap $\gamma$ of $T$ exists, i.e. for any $f \in \mathrm{L}^2(\mu)$, $\gamma=\inf_{f: \mathrm{Var}_{\Pi_Y}(f) \neq 0} \frac{\mathcal{E}_{\mu}(T,f)}{\mathrm{Var}_{\mu}(f)}$, it implies a SPI with constant $\gamma$ for $T$.
    
\end{remark}

When the kernel $T$ is reversible and positive, we can derive a weak Poincar{\'{e}} inequality in terms of $T$, rather than $T^*T$, making the approach much more practical: 
\begin{lemma}[\cite{andrieu2022comparison}]
\label{lem:PtoP^2}
    Suppose that the kernel $T$ is $\mu$-reversible and positive. Then we have the bound on the Dirichlet forms for $f \in \mathrm{L}^2 (\mu)$,
    \begin{align*}
        \mathcal{E}_\mu(T^2,f)\geqslant \mathcal{E}_\mu(T,f).
    \end{align*}
\end{lemma}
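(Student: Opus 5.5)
Since $T$ is $\mu$-reversible, $T^*=T$, so both Dirichlet forms reduce to quadratic forms in $T$:
\begin{align*}
\mathcal{E}_\mu(T^2,f)=\|f\|_\mu^2-\langle T^2 f,f\rangle_\mu=\|f\|_\mu^2-\|Tf\|_\mu^2,\qquad
\mathcal{E}_\mu(T,f)=\|f\|_\mu^2-\langle Tf,f\rangle_\mu .
\end{align*}
The claim $\mathcal{E}_\mu(T^2,f)\geqslant \mathcal{E}_\mu(T,f)$ is therefore equivalent to
\begin{align*}
\langle T^2 f,f\rangle_\mu\leqslant \langle Tf,f\rangle_\mu\qquad\text{for all } f\in\mathrm{L}^2(\mu).
\end{align*}
The plan is to prove this operator inequality $T^2\preceq T$ for a positive self-adjoint contraction.

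The key step is to show that $T-T^2=T^{1/2}(I-T)T^{1/2}$ is a positive operator. Because $T$ is a Markov operator with invariant measure $\mu$, it is a contraction on $\mathrm{L}^2(\mu)$ by Jensen's inequality. Because it is self-adjoint and positive, its spectrum lies in $[0,1]$. By the spectral theorem, or the continuous functional calculus, $T$ then has a positive square root $T^{1/2}$ that commutes with $T$. Writing $g=T^{1/2}f$, we get
\begin{align*}
\langle (T-T^2)f,f\rangle_\mu=\langle (I-T)g,g\rangle_\mu=\|g\|_\mu^2-\langle Tg,g\rangle_\mu\geqslant 0,
\end{align*}
where the last inequality follows from Cauchy--Schwarz together with $\|Tg\|_\mu\leqslant\|g\|_\mu$.

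Equivalently, and without square roots, one can use the spectral measure $\nu_f$ of $T$ associated with $f$, which is supported on $[0,1]$:
\begin{align*}
\langle Tf,f\rangle_\mu-\langle T^2f,f\rangle_\mu=\int_{[0,1]}\lambda(1-\lambda)\,\nu_f(\dif\lambda)\geqslant 0,
\end{align*}
since $\lambda(1-\lambda)\geqslant 0$ on $[0,1]$. The only point that needs care is confirming that the spectrum lies in $[0,1]$. Positivity gives the lower end $0$, and the $\mathrm{L}^2(\mu)$-contraction property from invariance of $\mu$ gives the upper end $1$. Once that is settled, the rest is routine.
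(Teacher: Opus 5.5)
Your proof is correct and complete. Self-adjointness gives $\mathcal{E}_\mu(T^2,f)=\|f\|_\mu^2-\|Tf\|_\mu^2$, and positivity together with the $\mathrm{L}^2(\mu)$-contraction property (from invariance and Jensen) puts the spectrum in $[0,1]$, so $\langle (T-T^2)f,f\rangle_\mu=\int\lambda(1-\lambda)\,\nu_f(\dif\lambda)\geqslant 0$.

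The paper gives no proof of its own here; it simply quotes \cite{andrieu2022comparison}. Your spectral-calculus argument is the standard way to prove it. If you want to avoid functional calculus, Cauchy--Schwarz for the semi-inner product $(u,v)\mapsto\langle Tu,v\rangle_\mu$ with $u=f$, $v=Tf$ gives $\|Tf\|_\mu^4\leqslant\langle Tf,f\rangle_\mu\,\langle T(Tf),Tf\rangle_\mu\leqslant\langle Tf,f\rangle_\mu\,\|Tf\|_\mu^2$. Hence $\|Tf\|_\mu^2\leqslant\langle Tf,f\rangle_\mu$ (trivially so when $Tf=0$), which is the same inequality.
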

Thus for positive kernels, a weak Poincar{\'{e}} inequality for $T$ implies a weak Poincar{\'{e}} inequality for $T^2$ with the same $\beta$ function; we can then apply \Cref{Thm:rate}. See \cite[Section 2.2]{andrieu2022comparison} for further details.
Conversely, if $T^*T$ satisfies a WPI, we can deduce that $T$ satisfies a WPI by the following lemma, whose proof is in \Cref{Prf:lem:P^2toP}.

\begin{lemma}   
    \label{lem:P^2toP}
    Let $T$ be $\mu$-invariant. Then  
    \begin{align*}
        \mathcal{E}_\mu(T^*T, f) \leqslant  2\mathcal{E}_\mu(T, f), \quad  f \in \mathrm{L}^2(\mu).
    \end{align*}
\end{lemma}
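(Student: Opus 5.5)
The plan is to write both Dirichlet forms as quadratic forms in $\mathrm{L}^2(\mu)$ and compare them via Cauchy--Schwarz and the contraction property of $T$. First, note that $T^*T$ is $\mu$-invariant: $\mu T^* = \mu$ because $\langle T^*g,1\rangle_\mu=\langle g,T1\rangle_\mu=\langle g,1\rangle_\mu$. Hence $\mathcal{E}_\mu(T^*T,f)$ is well defined. Second, rewrite it as
\begin{align*}
\mathcal{E}_\mu(T^*T,f)=\|f\|_\mu^2-\langle T^*Tf,f\rangle_\mu=\|f\|_\mu^2-\|Tf\|_\mu^2 ,
\end{align*}
while $\mathcal{E}_\mu(T,f)=\|f\|_\mu^2-\langle Tf,f\rangle_\mu$.

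The inequality to prove is therefore
\begin{align*}
\|f\|_\mu^2-\|Tf\|_\mu^2\leqslant 2\|f\|_\mu^2-2\langle Tf,f\rangle_\mu ,
\end{align*}
which is equivalent to
\begin{align*}
0\leqslant \|f\|_\mu^2-2\langle Tf,f\rangle_\mu+\|Tf\|_\mu^2=\|f-Tf\|_\mu^2 .
\end{align*}
This holds trivially. So the whole proof reduces to expanding the square $\|(I-T)f\|_\mu^2\geqslant 0$, and the identity $\mathcal{E}_\mu(T^*T,f)=\|f\|^2_\mu-\|Tf\|^2_\mu$ is the only substantive step. The factor $2$ arises naturally from the cross term.

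The only point needing care is that all quantities are finite. For $f\in\mathrm{L}^2(\mu)$, invariance and Jensen's inequality give $\|Tf\|_\mu\leqslant\|f\|_\mu$, so $T$ is a contraction on $\mathrm{L}^2(\mu)$ and its adjoint $T^*$ exists as a bounded operator. I expect no real obstacle here. If one prefers to stay with the integral representation in \eqref{equ:DirichletForm}, one can alternatively write
\begin{align*}
\mathcal{E}_\mu(T^*T,f)=\tfrac12\iint (f(x)-f(y))^2\,T^*T(x,\dif y)\,\mu(\dif x).
\end{align*}
One then uses $(a-b)^2\leqslant 2(a-c)^2+2(c-b)^2$ through the intermediate point of the two-step transition $x\to z\to y$ (reverse step by $T^*$, then forward by $T$). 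The operator identity above is cleaner, however, and I would present that.
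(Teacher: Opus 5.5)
Your argument is correct and is exactly the paper's proof: expand $\|f-Tf\|_\mu^2\geqslant 0$, then rearrange using $\mathcal{E}_\mu(T^*T,f)=\|f\|_\mu^2-\|Tf\|_\mu^2$ and $\mathcal{E}_\mu(T,f)=\|f\|_\mu^2-\langle Tf,f\rangle_\mu$. One caution about your optional integral route: applying $(a-b)^2\leqslant 2(a-c)^2+2(c-b)^2$ through the intermediate point $z$ only gives $\mathcal{E}_\mu(T^*T,f)\leqslant 4\mathcal{E}_\mu(T,f)$; to recover the constant $2$ that way, use instead $\mathcal{E}_\mu(T^*T,f)=\int\mathrm{Var}_{T(z,\cdot)}(f)\,\mu(\dif z)$ together with $\mathrm{Var}_{T(z,\cdot)}(f)\leqslant\int(f(x)-f(z))^2\,T(z,\dif x)$.
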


Furthermore, comparison of Dirichlet forms can be used to deduce convergence properties of a given Markov chain from another one. 

\begin{theorem}[Chaining Poincar{\'{e}} inequalities]
\cite[Proposition 33]{andrieu2022comparison}
\label{Thm:chaining}
    Let $T_1$ and $T_2$ be two $\mu$-invariant Markov kernels. Assume that for all $s > 0$ and $f \in \mathrm{L}^2_0 (\mu)$,
    \begin{align}
            \|f\|_\mu^2 & \leqslant s \mathcal{E}_\mu\left(T_1^*T_1, f\right)+\beta_1(s) \|f\|^2_\mathrm{osc}, \nonumber \\
            \label{eq:comparison}
            \mathcal{E}_\mu\left(T_1^*T_1, f\right) & \leqslant s \mathcal{E}_\mu\left(T_2^*T_2, f\right)+\beta_2(s) \|f\|^2_\mathrm{osc},
    \end{align}
    where $\beta_1, \beta_2:(0, \infty) \rightarrow(0, \infty)$ satisfy the conditions of Definition~\ref{def:WPI}.
    Then there is the following WPI for $T_2^* T_2$: 
    \begin{align*}
        \|f\|_\mu^2 & \leqslant s \mathcal{E}_\mu\left(T_2^*T_2, f\right)+\beta(s) \|f\|^2_\mathrm{osc},\quad s > 0, f\in \mathrm L^2_0(\mu),\\
                \beta(s)&:=\inf \left\{s_1 \beta_2\left(s_2\right)+\beta_1\left(s_1\right) \mid s_1>0, s_2>0, s_1 s_2=s\right\}.
    \end{align*}
    Additionally, writing $\mathrm{K}_i(u):=u \cdot \beta_i(1 / u)$, $\mathrm{K}(u):=u \cdot \beta(1 / u)$, it holds that 
    \begin{align*}
        \begin{split}
            \mathrm{K}(u)&=\inf \left\{\mathrm{K}_2\left(u_2\right)+u_2 \mathrm{K}_1\left(u_1\right) \mid u_1>0, u_2>0, u_1 u_2=u\right\}, \\
             \mathrm{K}^*(v)&=\mathrm{K}_2^* \circ \mathrm{K}_1^*(v).
        \end{split}
    \end{align*}
\end{theorem}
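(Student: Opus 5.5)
The chaining result, \Cref{Thm:chaining}, is the last statement, and I would prove it in two stages: first the inequality with the stated $\beta$, then the identities for $\mathrm{K}$ and $\mathrm{K}^*$.

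\textbf{Stage 1: the WPI for $T_2^*T_2$.} Fix $f\in\mathrm{L}^2_0(\mu)$ and $s>0$, and take any $s_1,s_2>0$ with $s_1s_2=s$.
\begin{itemize}
\item Apply the first inequality with parameter $s_1$: $\|f\|_\mu^2\leqslant s_1\mathcal{E}_\mu(T_1^*T_1,f)+\beta_1(s_1)\|f\|_\mathrm{osc}^2$.
\item Bound $\mathcal{E}_\mu(T_1^*T_1,f)$ by \eqref{eq:comparison} with parameter $s_2$, then multiply through by $s_1$.
\item The result is
\begin{align*}
\|f\|_\mu^2\leqslant s\,\mathcal{E}_\mu(T_2^*T_2,f)+\bigl(s_1\beta_2(s_2)+\beta_1(s_1)\bigr)\|f\|_\mathrm{osc}^2 .
\end{align*}
\end{itemize}
Taking the infimum over admissible pairs gives the inequality with $\beta(s)$.

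I also need $\beta$ to satisfy the conditions of Definition~\ref{def:WPI}.
\begin{itemize}
\item \emph{Monotonicity.} Reparametrise by $s_1$, so that $\beta(s)=\inf_{s_1>0}\{s_1\beta_2(s/s_1)+\beta_1(s_1)\}$. For each fixed $s_1$ the bracket is nonincreasing in $s$, because $\beta_2$ is decreasing. An infimum of nonincreasing functions is nonincreasing.
\item \emph{Vanishing at infinity.} Given $\varepsilon>0$, choose $s_1$ with $\beta_1(s_1)<\varepsilon$. Then $s_1\beta_2(s/s_1)\to0$ as $s\to\infty$, so $\limsup_{s\to\infty}\beta(s)\leqslant\varepsilon$.
\end{itemize}

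\textbf{Stage 2: the formula for $\mathrm{K}$.} This is algebraic. Put $u=1/s$, $u_1=1/s_1$, $u_2=1/s_2$, so that $u_1u_2=u$. Then
\begin{align*}
u\beta(1/u)=\inf\Bigl\{u\,u_1^{-1}\beta_2(1/u_2)+u\,\beta_1(1/u_1)\Bigr\}.
\end{align*}
Since $u/u_1=u_2$, the first term is $u_2\beta_2(1/u_2)=\mathrm{K}_2(u_2)$. Since $u=u_2u_1$, the second term is $u_2\,u_1\beta_1(1/u_1)=u_2\mathrm{K}_1(u_1)$. This is exactly the claimed expression for $\mathrm{K}(u)$.

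\textbf{Stage 3: the formula for $\mathrm{K}^*$.} Compute directly:
\begin{align*}
\mathrm{K}^*(v)&=\sup_{u_1,u_2>0}\{u_1u_2v-\mathrm{K}_2(u_2)-u_2\mathrm{K}_1(u_1)\}\\
&=\sup_{u_2>0}\Bigl\{u_2\sup_{u_1>0}\bigl(u_1v-\mathrm{K}_1(u_1)\bigr)-\mathrm{K}_2(u_2)\Bigr\}\\
&=\sup_{u_2>0}\{u_2\mathrm{K}_1^*(v)-\mathrm{K}_2(u_2)\}=\mathrm{K}_2^*(\mathrm{K}_1^*(v)).
\end{align*}
The second equality pulls the factor $u_2>0$ outside the inner supremum. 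Two conventions need care:
\begin{itemize}
\item The point $u=0$, where $\mathrm{K}(0)=0$, must be included in every supremum.
\item Because $\mathrm{K}_1^*(v)\geqslant0$, the outer supremum over $u_2\geqslant0$ equals $\mathrm{K}_2^*$ evaluated at $\mathrm{K}_1^*(v)$.
\end{itemize}

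\textbf{Main obstacle.} The delicate step is Stage 3 at the boundary: handling $u_1$ or $u_2$ equal to zero, and the case $\mathrm{K}_1^*(v)=\infty$, where one must adopt the convention $\mathrm{K}_2^*(\infty)=\infty$. Everything else is direct substitution.
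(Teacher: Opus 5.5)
Your proposal is correct and is essentially the standard argument: substitute the comparison inequality at $s_2$ into the first WPI at $s_1$, optimise over $s_1s_2=s$, and obtain $\mathrm{K}^*=\mathrm{K}_2^*\circ\mathrm{K}_1^*$ by writing the supremum over $u=u_1u_2$ as an iterated supremum. The paper does not prove this result itself but quotes it from \cite[Proposition 33]{andrieu2022comparison}, whose proof, as far as I recall, follows the same substitute-then-conjugate route; your checks that $\beta$ is nonincreasing with $\beta(s)\to0$, and your care with the $u=0$ boundary terms, supply the routine details.
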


\Cref{Thm:chaining} also holds if $T_1^*T_1$ is replaced by $T_1$, and if $T_2^*T_2$ is replaced by $T_2$. Therefore, if we know $T_1$ or $T^*_1T_1$ satisfies SPI or WPI, we can deduce $T_2$ or $T^*_2T_2$ satisfies WPI, and
deduce the convergence rate of $T_2$. 

\begin{remark}
\label{def:comparison WPI}
    We refer to \eqref{eq:comparison}  as the comparison WPI for $T_1^*T_1$ and $T^*_2T_2$. Similarly, $T_1^*T_1$ and $T^*_2T_2$ satisfy the comparison $K^*$-WPI: $\mathrm{K}^*\left(\frac{\mathcal{E}_\mu(T_1^*T_1,f)}{\|f\|^2_\mathrm{osc}}\right) \leqslant \frac{\mathcal{E}_\mu\left(T_2^*T_2, f\right)}{\|f\|^2_\mathrm{osc}}$, with $\mathrm{K}^*$ defined as \Cref{def:WPI}.
\end{remark}

\begin{lemma}
\cite[Lemma 14]{andrieu2022comparison}
\label{thm:ConstantOnBeta}
    Let $\tilde{\beta}(s):=c_1 \beta\left(c_2 s\right)$ for $c_1, c_2>0$. Then $\tilde{K}^*(v):=\sup _{u \in \mathbb{R}_{+}} u[v-\tilde{\beta}(1 / u)]=c_1 c_2 \mathrm{K}^*\left(v / c_1\right)$ and the corresponding function $\tilde{F}(w)=c_2^{-1} \int_{w/c_1}^{1/{4c_1}} \frac{\dif v}{\mathrm{K}^*(v)}$. Furthermore, when $c_1 \geqslant 1, \tilde{F}(w) \leqslant c_2^{-1} F\left(w / c_1\right)$, and we can conclude $\tilde{F}^{-1}(x) \leqslant$ $c_1 F^{-1}\left(c_2 x\right)$.
\end{lemma}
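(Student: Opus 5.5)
Set $\tilde\beta(s):=c_1\beta(c_2 s)$ and $\tilde{\mathrm K}(u):=u\,\tilde\beta(1/u)$. I will prove the three claims in turn: first the formula for $\tilde{\mathrm K}^*$, then the formula for $\tilde F$ as a change of variables, and finally the comparison of inverses via monotonicity.

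\textbf{Formula for $\tilde{\mathrm K}^*$.} By definition,
\begin{align*}
\tilde{\mathrm K}^*(v)=\sup_{u\geqslant 0}\bigl\{uv-u c_1\beta(c_2/u)\bigr\}.
\end{align*}
I substitute $u=c_2 w$, so that $c_2/u=1/w$. The supremum becomes
\begin{align*}
\sup_{w\geqslant 0} c_2 w\bigl(v-c_1\beta(1/w)\bigr) = c_1c_2\sup_{w\geqslant 0}\bigl\{w(v/c_1)-w\beta(1/w)\bigr\} = c_1c_2\,\mathrm K^*(v/c_1).
\end{align*}
The value $u=0$ contributes $0$ on both sides, so it causes no trouble.

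\textbf{Formula for $\tilde F$.} The function $\tilde F$ is built from $\tilde\beta$ exactly as in \Cref{Thm:rate}, but with the upper limit $1/4$ kept:
\begin{align*}
\tilde F(w)=\int_w^{1/4}\frac{\dif v}{c_1c_2\mathrm K^*(v/c_1)}.
\end{align*}
I substitute $t=v/c_1$, so $\dif v=c_1\,\dif t$. This gives $\tilde F(w)=c_2^{-1}\int_{w/c_1}^{1/(4c_1)}\dif t/\mathrm K^*(t)$, which is the stated formula.

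\textbf{Bound on $\tilde F$ when $c_1\geqslant 1$.} In this case $1/(4c_1)\leqslant 1/4$. Since $\mathrm K^*\geqslant 0$, the integrand is nonnegative, so enlarging the upper limit to $1/4$ can only increase the integral. Hence $\tilde F(w)\leqslant c_2^{-1}F(w/c_1)$, for $w$ in the common domain where $w/c_1\leqslant 1/(4c_1)$.

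\textbf{Bound on $\tilde F^{-1}$.} This is the step requiring most care, because of domains and monotonicity. Both $F$ and $\tilde F$ are decreasing. Fix $x$ and set $w=\tilde F^{-1}(x)$, so $x=\tilde F(w)\leqslant c_2^{-1}F(w/c_1)$, that is, $F(w/c_1)\geqslant c_2x$. Applying the decreasing map $F^{-1}$ gives $w/c_1\leqslant F^{-1}(c_2x)$, so $\tilde F^{-1}(x)\leqslant c_1F^{-1}(c_2x)$.

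For this argument to be valid, $c_2x$ must lie in the range of $F$. If instead $c_2x$ exceeds $\sup F$ (when $F$ is bounded), I extend $F^{-1}$ by $0$ there. In that case $\tilde F^{-1}(x)$ must also be $0$, or $x$ lies outside the range of $\tilde F$, and this follows from the integral inequality $\sup\tilde F\leqslant c_2^{-1}\sup F$. Since $F^{-1}$ is only used through the convergence bound of \Cref{Thm:rate}, this convention is harmless.
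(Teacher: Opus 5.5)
Your proof is correct and follows the standard argument behind the cited \cite[Lemma 14]{andrieu2022comparison}, which the paper quotes without proof: rescale $u=c_2w$ in the supremum, substitute $t=v/c_1$ in the integral, discard the nonnegative contribution over $[1/(4c_1),1/4]$ when $c_1\geqslant 1$, and invert the decreasing map $F$. The extra case with bounded $F$ is in fact vacuous: $\mathrm{K}^*(v)\leqslant v$ near $0$ (\Cref{thm:K*Properties}) forces $\int_0^{1/4}\dif v/\mathrm{K}^*(v)=\infty$, so $F$ and $\tilde F$ are bijections from $(0,1/4]$ onto $[0,\infty)$ and $F^{-1}(c_2x)$ is always defined.
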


\begin{lemma}
\cite[Lemma 64]{andrieu2022comparison}
\label{thm:K*Properties}
With $\mathrm{K}^*$ defined as in \Cref{def:WPI}, we have:
\begin{enumerate}
    \item $\mathrm{K}^*(0)=0$ and for $v \neq 0$, $\mathrm{K}^*(v)>0$; 
    \item $v \mapsto \mathrm{K}^*(v)$ is convex, continuous and strictly increasing on its domain;
    \item for $v \in[0, a]$, it holds that $\mathrm{K}^*(v) \leq v$;
    \item  the function  $v \mapsto v^{-1} \mathrm{K}^*(v)$  is increasing.
\end{enumerate}
\end{lemma}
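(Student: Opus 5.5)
The plan is to work directly from the representation $\mathrm{K}^*(v)=\sup_{u\geqslant 0}\{uv-\mathrm{K}(u)\}=\sup_{u\geqslant 0} u\,[v-\beta(1/u)]$, with the $u=0$ term equal to $0$ since $\mathrm{K}(0)=0$. I will use only three facts: $\beta$ is decreasing, $\beta\geqslant 0$, and $\beta(s)\to 0$ as $s\to\infty$. The order will be (1), then convexity, then (4), and finally strict monotonicity, continuity and (3).

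\emph{Item 1.} At $v=0$ every term $-u\beta(1/u)$ is $\leqslant 0$ and the term $u=0$ equals $0$, so $\mathrm{K}^*(0)=0$. For $v>0$, choose $s$ large enough that $\beta(s)<v$, which is possible because $\beta(s)\to 0$. Taking $u=1/s$ then gives $\mathrm{K}^*(v)\geqslant s^{-1}(v-\beta(s))>0$.

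\emph{Convexity and item 4.} The function $\mathrm{K}^*$ is a pointwise supremum of the affine maps $v\mapsto uv-\mathrm{K}(u)$. It is therefore convex and lower semicontinuous, and its domain $\{v:\mathrm{K}^*(v)<\infty\}$ is an interval containing $0$. For item 4, take $0<v<w$ in the domain. Convexity together with $\mathrm{K}^*(0)=0$ gives
\begin{align*}
\mathrm{K}^*(v)=\mathrm{K}^*\bigl(\tfrac{v}{w}w+(1-\tfrac{v}{w})0\bigr)\leqslant \tfrac{v}{w}\mathrm{K}^*(w),
\end{align*}
so $v\mapsto v^{-1}\mathrm{K}^*(v)$ is increasing.

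\emph{Strict monotonicity.} Combining the display above with item 1 gives $\mathrm{K}^*(v)\leqslant \tfrac{v}{w}\mathrm{K}^*(w)<\mathrm{K}^*(w)$ whenever $0<v<w$. The case $v=0$ is immediate from item 1.

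\emph{Continuity.} On the interior of the domain, continuity follows from convexity. At an endpoint, lower semicontinuity combined with monotonicity gives continuity from the inside. At $0$, continuity will follow from item 3, since $0\leqslant \mathrm{K}^*(v)\leqslant v\to 0$.

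\emph{Item 3.} This is the step needing care, because the constant $a$ has to be identified. I will take $a:=\beta(1)$, or any smaller positive number; if $\beta(1)=0$, then $\beta$ vanishes on $[1,\infty)$, we are in the spectral gap case, and the bound is checked directly. Split the supremum at $u=1$:
\begin{itemize}
\item For $u\in[0,1]$, we have $u[v-\beta(1/u)]\leqslant uv\leqslant v$, using $\beta\geqslant 0$.
\item For $u>1$, we have $1/u<1$, so $\beta(1/u)\geqslant \beta(1)\geqslant v$ because $\beta$ is decreasing. Hence $u[v-\beta(1/u)]\leqslant 0$.
\end{itemize}
Together these give $\mathrm{K}^*(v)\leqslant v$ on $[0,a]$.

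The main obstacle is this identification of $a$ in item 3, which the statement leaves implicit. If the intended $a$ is larger, for instance $a=1/4$ under the normalisation $\beta\leqslant 1/4$, I would first try to modify $\beta$ on $(0,1)$ without loss of generality. The idea is to replace $\beta(s)$ by $\max\{\beta(s),1/4\}$ for $s<1$, which remains a valid WPI function because $\|f\|_\mu^2\leqslant \|f\|_{\mathrm{osc}}^2/4$. This ensures $\beta(1/u)\geqslant v$ for all $u>1$ whenever $v\leqslant 1/4$, and the same two-case argument then goes through.
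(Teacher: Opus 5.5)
The paper gives no proof of this lemma; it is quoted from \cite[Lemma 64]{andrieu2022comparison}, so there is no in-paper argument to compare against. Judged on its own, your proof of items 1, 2 and 4 is correct and is the standard one: $\mathrm{K}^*$ is a supremum of affine maps vanishing at $0$, and the chord bound $\mathrm{K}^*(v)\leqslant \frac{v}{w}\mathrm{K}^*(w)$ gives both item 4 and strict monotonicity. One small repair is needed. Take continuity at $0$ from that chord bound, using any $w>0$ in the domain, rather than from item 3, since your choice $a=\beta(1)$ may be $0$.

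Your caution on item 3 is justified, because with $a=1/4$ and only $\beta\leqslant 1/4$ the claim is false. Take $\mu=\frac{9}{10}\delta_0+\frac{1}{10}\delta_1$ and $T$ the independent sampler. Then $\mathcal{E}_\mu(T^*T,f)=\|f\|_\mu^2=\frac{9}{100}\|f\|^2_{\mathrm{osc}}$ on $\mathrm{L}^2_0(\mu)$. Hence $\beta=\frac14$ on $(0,\frac13)$, $\beta=\frac1{10}$ on $[\frac13,1)$, $\beta=0$ on $[1,\infty)$ is a valid WPI function, yet $\mathrm{K}^*(\frac15)\geqslant 3(\frac15-\frac1{10})=\frac3{10}>\frac15$.

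Your choice $a=\beta(1)$ is correct but can be degenerate. It equals $0$ in this example, and also for $\beta=\frac14\mathbbm{1}_{\{s<1\}}$, where in fact $\mathrm{K}^*(v)=v$ on $[0,\frac14]$. A route that keeps the given $\mathrm{K}^*$ is to use the WPI itself. Since $\mathcal{E}_\mu(T^*T,f)=\|f\|^2_\mu-\|Tf\|^2_\mu\leqslant\|f\|^2_\mu$, the inequality at $s<1$ forces $\beta(s)\geqslant(1-s)a^\circ$, where $a^\circ:=\sup_f\|f\|^2_\mu/\|f\|^2_{\mathrm{osc}}$. So for $u>1$ you get $u[v-\beta(1/u)]\leqslant v+(u-1)(v-a^\circ)\leqslant v$ whenever $v\leqslant a^\circ$. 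This gives item 3 on exactly the range of values $\|f\|^2_\mu/\|f\|^2_{\mathrm{osc}}$.

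Your fallback instead proves item 3 for a different, smaller $\mathrm{K}^*$. That is a legitimate normalisation: it leaves $\mathrm{K}^*$ unchanged on $[0,\beta(1)]$, and it stays valid whenever the left-hand side of the inequality is at most $\|f\|^2_{\mathrm{osc}}/4$. It is also what the paper tacitly needs in the proof of \Cref{thm:K*:PtoP1}. There it applies $\mathrm{K}_1^*(v)\leqslant v$ at $v=\mathcal{E}_\Pi(G_2^2,f)/\|f\|^2_{\mathrm{osc}}$, a value not controlled by the $G_1$/$H_1$ comparison. But you should present it as a normalisation of $\beta$, not as a proof for an arbitrary given $\mathrm{K}^*$.
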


\subsection{Two-component Gibbs samplers}

Suppose $\mathcal{X}$
and $\mathcal{Y}$ are Polish spaces equipped with their Borel $\sigma$-algebras $\mathcal{F}_\mathcal{X}$ and $\mathcal{F}_\mathcal{Y}$, respectively.  Consider the product measurable space $(\mathcal{X}\times\mathcal{Y}, \mathcal{F}_\mathcal{X}\otimes \mathcal{F}_\mathcal{Y})$, and denote $\Pi(\dif x, \dif y)$ as a probability measure defined on this space. Let $\Pi_X(\dif x)$ and $\Pi_Y(\dif y)$ be the associated marginal distributions, and, $\Pi_{X|Y}(\dif x|y)$ and $\Pi_{Y|X}(\dif y|x)$ be the full conditional distributions, which in an abuse of notation we may write as $\Pi(\dif x |y), \Pi(\dif y|x)$. The deterministic-scan Gibbs (DG) sampler is given in \Cref{Alg:DG}.

\begin{algorithm}[H]
\caption{Deterministic-Scan Gibbs (DG) Sampler}
\label{Alg:DG}
\begin{algorithmic}[1]
\Require  Current position $(X_n, Y_n) = (x, y)$.
\State Draw $y'$ from $\Pi_{Y|X}(\cdot|x)$, and set $Y_{n+1}=y'$.
\State Draw $x'$ from $\Pi_{X|Y}(\cdot|y')$, and set $X_{n+1}=x'$.
\State Set $n \leftarrow n+1$.
\end{algorithmic}
\end{algorithm}
Let $(x,y)\mapsto H_{1|x}(\dif y'|y)$ be a Markov kernel $(\mathcal{X}\times\mathcal{Y}, \mathcal{F}_\mathcal{X}\otimes \mathcal{F}_\mathcal Y) \to (\mathcal{Y},\mathcal{F}_\mathcal{Y})$ 
and $(x,y)\mapsto H_{2|y}(\dif x'|x)$ be a Markov kernel $(\mathcal{X}\times\mathcal{Y}, \mathcal{F}_\mathcal{X}\otimes \mathcal{F}_Y) \to (\mathcal{X},\mathcal{F}_\mathcal{X})$. Assume $y\mapsto H_{1|x}(\dif y'|y)$ and $x\mapsto H_{2|y}(\dif x'|x)$ are reversible with respect to their corresponding conditional distributions, $\Pi_{Y|X}(\cdot |x)$ and $\Pi_{X|Y}(\cdot |y)$, for fixed $x$ and $y$ respectively. We will refer to methods that use such kernels in place of the full conditionals as MwG samplers (MG) to distinguish them from exact Gibbs samplers;
see \Cref{Alg:DHG}.

\begin{algorithm}[H]
\caption{Deterministic-Scan MwG (MG) Sampler}
\label{Alg:DHG}
\begin{algorithmic}[1]
\Require Current position $(X_n, Y_n) = (x, y)$.
\State Draw $y'$ from $H_{1|x}(\cdot|y)$, which is reversible w.r.t $\Pi_{Y|X}(\cdot|x)$, and set $Y_{n+1}=y'$.
\State Draw $x'$ from $H_{2|y'}(\cdot|x)$, which is reversible w.r.t $\Pi_{X|Y}(\cdot|y')$, and set $X_{n+1}=x'$.
\State Set $n \leftarrow n+1$.
\end{algorithmic}
\end{algorithm}

The transition kernel for the DG sampler is
\begin{align*}
    P((x,y),(\dif x',\dif y'))=\Pi_{X|Y}(\dif x'|y')\Pi_{Y|X}(\dif y'|x),
\end{align*}
which corresponds to doing first a $Y$-move and then an $X$-move.
It is straightforward to verify that $P$ leaves $\Pi$  invariant but it is not reversible with respect to $\Pi$. 

Let $G_1$ (respectively $G_2$) characterize the transition rule in the \textit{joint space} $\mathcal X\times \mathcal Y$ for updating $Y$ (respectively $X$) through the conditional distribution of $Y$ given $X$ (respectively $X$ given $Y$):
\begin{align*}
\begin{split}
    G_1((x,y),(\dif x',\dif y'))=\Pi_{Y|X}(\dif y'|x)\delta_x(\dif x'),\quad 
    G_2((x,y),(\dif x',\dif y'))=\Pi_{X|Y}(\dif x'|y)\delta_y(\dif y').
\end{split}
\end{align*}
The kernels $G_1$ and $G_2$ satisfy $\Pi G_1 = \Pi G_2 = \Pi$.

In order to determine the convergence properties of the  MG sampler, we consider two further auxiliary MwG samplers, where one of the conditional distributions associated with $\Pi$ is intractable. 

The first component deterministic-scan MwG (1-MG) samplers is where, instead of sampling exactly from $\Pi_{Y|X}$, the we run one step of a kernel $H_1$, which is invariant with respect to $\Pi_{Y|X}$. 
The transition kernel $P_1$ for the 1-MG sampler is
\begin{align*}
    P_1((x,y),(\dif x',\dif y'))=H_{1|x}(\dif y'|y)\Pi_{X|Y}(\dif x'|y').
\end{align*}
Then we embed $H_{1|x}$ into the joint space,
\begin{align*}
    H_1((x,y),(\dif x',\dif y'))=H_{1|x}(\dif y'|y)\delta_x(\dif x').
\end{align*}

Similarly, we can define the second component deterministic-scan MwG (2-MG) samplers by the transition kernel $P_2$ and the embedded chain for $H_{2|y'}$: 
\begin{align*}
    P_2((x,y),(\dif x',\dif y'))=H_{2|y'}(\dif x'|x)\Pi_{Y|X}(\dif y'|x), H_2((x,y),(\dif x',\dif y'))=H_{2|y'}(\dif x'|x)\delta_y(\dif y').
\end{align*}
Consequently, the exact DG sampler is a special case of both first or second component deterministic-scan MwG samplers. Furthermore, these first or second component deterministic-scan MwG samplers are, in turn, special cases of the MG sampler (\Cref{Alg:DHG}).

We also consider the operators on $\mathrm{L}^2(\Pi)$ associated with these kernels. Let $P$ be the operator on $\mathrm{L}^2(\Pi)$ associated to the Gibbs sampling kernel $P$. Then we can write $Pf(x):=(G_1(G_2f))(x)$, where $G_1$ and $G_2$ are operators defined on $\mathrm{L}^2(\Pi)$:
\begin{align*}
\begin{split}
    G_1f(x,y)&=\int_{\mathcal Y} f(x,y')\Pi_{Y|X}(\dif y'|x), \quad G_2f(x,y)=\int_{\mathcal X} f(x',y)\Pi_{X|Y}(\dif x'|y).
\end{split}
\end{align*}
It is clear that $P_1=H_1G_2$, $P_2=G_1H_2$, and $P_{1,2}=H_1H_2$ can be regarded as operators defined on $\mathrm{L}^2(\Pi)$ in the same way. These correspond to the 1-MG sampler, the 2-MG sampler, and the MG sampler, respectively, where 
\begin{align*}
\begin{split}
    H_1f(x,y)=\int_{\mathcal Y}f(x,y')H_{1|x}(\dif y'|y), \quad 
    H_2f(x,y)=\int_{\mathcal X} f(x',y)H_{2|y}(\dif x'|x).
\end{split}
\end{align*}
These single-component Gibbs samplers $G_i$ and $H_i$, $i=1,2$, are readily observed to be reversible Markov chains corresponding to self-adjoint operators. Indeed, $G_1$ and $G_2$ are projection operators on the joint space. Similarly, if $H_{1|x}$ and $H_{2|y}$ are reversible with respect to $\Pi_{Y|X}(\dif y|x)$ and $\Pi_{X|Y}(\dif x|y)$, respectively, then $H_1$ and $H_2$ are self-adjoint operators on $\mathrm{L}^2(\Pi)$. 

Finally, we define the $X$-marginal versions of the DG chain, which is a Markov chain on $\mathcal{X}$.
We denote $P_{X}$ as the $X$-marginal chain of $P$, and denote $\bar{P}_{X}$ as the $X$-marginal chain of $P_2$:
\begin{align}
    P_{X}(x,\dif x')=\int_{\mathcal Y} \Pi_{X|Y}(\dif x'|y)\Pi_{Y|X}(\dif y|x), \quad \bar{P}_{X}(x,\dif x')=\int_{\mathcal Y} H_{2|y}(\dif x'|x)\Pi_{Y|X}(\dif y|x).
    \label{eq:marginalPX}
\end{align}

Note that $P_{X}$ is reversible with respect to $\Pi_X$. It is well-known that the convergence rate of the marginal chain $P_X$ is essentially the same as that of the original DG chain, in geometric cases \cite{liu1994covariance, qin2022convergence}. Similarly, $\bar{P}_{X}$ is reversible with respect to $\Pi_X$. However, due to lack of the Markov property, there is no corresponding $Y$-marginal version of the chain for $P_2$.
\section{Comparison results for DG Sampling}
\label{Sec: Comparison for Gibbs}
In this section, we shall give comparison results for DG samplers and MG samplers. Our ultimate goal is to supply bounds which describe the cost incurred when using MG, relative to the rate of the underlying DG chain.

\subsection{MwG sampler with two intractable conditionals}
\label{Sec: HDG}
Our roadmap is as follows. We begin our analysis on the joint space by updating the $Y$-component while keeping $X$ fixed, and then establish a comparison between the transition kernel $G_1$ and its corresponding MwG version $H_1$ (see \Cref{lem:P1_to_tildeP1}). A similar result holds for the comparison between $G_2$ and $H_2$.

Next, we establish the comparison between the DG kernel $P$ (\Cref{Alg:DG}) and 1-MG sampler $P_1$ by decomposing the Dirichlet form associated with $P$ (\Cref{thm:K*:PtoP1}). This provides a framework for replacing the first component with its corresponding MwG version in Gibbs samplers.

Furthermore, \Cref{Thm:K*:PtoP*} provides a framework for comparing a transition kernel with its adjoint. This allows us to derive the WPI when swapping the two components, giving the WPI for 2-MG samplers as an intermediate step toward the WPI for MG samplers.

In summary, our comparison path is as follows:  
\begin{enumerate}
    \item swap to make $G_2$ the first component;  
    \item replace $G_2$ with its corresponding MwG version $H_2$; 
    \item swap again to make $G_1$ the first component; 
    \item replace $G_1$ with its corresponding MwG version $H_1$.  
\end{enumerate}

These comparisons are summarized in the schematic diagram presented in \Cref{fig:PandP2}.

\begin{figure}[H] 
    \centering 
    \begin{tikzpicture}[>=stealth]
        \node (P) at (0,2) {$P=G_1G_2$};
        \node (P^*) at (4,2) {$P^*=G_2G_1$};
        \node (P_{(2)}^*) at (8,2) {$P_2^*=H_2G_1$};
        \node (P_{(2)}) at (8,0) {$P_2=G_1H_2$};
        \node (tildeP) at (4,0) {$P_{1,2}=H_1H_2$};
        \draw [->] (P) -- node[above] {\Cref{Thm:K*:PtoP*}} node[below] {Step 1}  (P^*); 
        \draw [->] (P^*) -- node[above] {\Cref{thm:K*:P*toP2*}}
        node[below] {Step 2} (P_{(2)}^*);
        \draw [->] (P_{(2)}^*) -- node[right] {\Cref{Thm:K*:PtoP*}}
        node[left] {Step 3} (P_{(2)});
        \draw [->] (P_{(2)}) -- node[below] {\Cref{thm:K*:P1tildeP2totildeP1tildeP2}} node[above] {Step 4} (tildeP);
    \end{tikzpicture}
    \caption{Relationship among two-component Gibbs samplers.} 
    \label{fig:PandP2} 
\end{figure}

Our goal is to obtain the WPIs for the MG sampler based on \Cref{Asp:P1,Asp:P2,Asp:P}. As a first step, we establish comparison WPIs for the corresponding embedded chain and its corresponding MwG version under these assumptions.

\begin{assumption}
\label{Asp:P1}
    We assume that the kernel $H_{1|x}$ is $\Pi(\cdot|x)$-reversible, positive and satisfies a WPI for each $x\in \mathcal X$: there is a function $\beta_1: (0, \infty) \times \mathcal{X} \rightarrow [0, \infty)$ with $\beta_1(\cdot, x)$ satisfying the conditions in \Cref{def:WPI} for each $x \in \mathcal{X}$, such that
    \begin{align*}
        \mathrm{Var}_{\Pi(\cdot|x)}(h) \leqslant s \cdot \mathcal{E}_{\Pi(\cdot|x)}(H_{1|x},h)+\beta_1(s,x) \cdot \|h\|^2_\mathrm{osc}, \quad \forall s > 0, h \in \mathrm{L}^2(\Pi(\cdot|x)).
    \end{align*}
\end{assumption}
\begin{assumption}
    \label{Asp:P2}
        We assume that the kernel $H_{2|y}$ is $\Pi(\cdot|y)$-reversible, positive and satisfies a WPI for each $y\in \mathcal Y$: there is a function $\beta_2: (0, \infty) \times \mathcal{Y} \rightarrow [0, \infty)$ with $\beta_2(\cdot, y)$ satisfying the conditions in \Cref{def:WPI} for each $y \in \mathcal{Y}$, such that
        \begin{align*}
            \mathrm{Var}_{\Pi(\cdot|y)}(g) \leqslant s \cdot \mathcal{E}_{\Pi(\cdot|y)}(H_{2|y},g)+\beta_2(s,y) \cdot \|g\|^2_\mathrm{osc},\quad \forall s > 0, g \in \mathrm{L}^2(\Pi(\cdot|y)).
        \end{align*}
    \end{assumption}

    \begin{assumption}
        \label{Asp:P}
        We assume that the kernel $P^*P$ satisfies a WPI: there is a function $\beta_0: (0, \infty) \rightarrow [0, \infty)$ satisfying the conditions in \Cref{def:WPI}, such that 
        \begin{align*}
            \|f\|_\Pi^2 \leqslant s \cdot \mathcal{E}_\Pi(P^*P, f) +\beta_0(s) \cdot \|f\|^2_\mathrm{osc},\quad\quad  \forall s > 0, f \in \mathrm L_0^2(\Pi).
        \end{align*}
    \end{assumption}
    
\begin{remark}
\label{rmk:requirementforH}
    The requirement that the kernels $H_{1|x}$ and $H_{2|y}$ satisfy a weak Poincar{\'{e}} inequality is extremely mild, and holds under irreducibility; see \cite{andrieu2023weak}. We further assume that both $H_{1|x}$ and $H_{2|y}$ are reversible, which is satisfied, for instance, if we use Metropolis--Hastings or slice sampling. Positivity is satisfied automatically by chains such as Random Walk Metropolis with Gaussian increments, Independent Metropolis--Hastings, or can be enforced by considering the \textit{lazy chain}. \Cref{Asp:P} is also a standard assumption, which states that the convergence behavior of the exact DG kernel $P$ is tractable. In the two-component case, \cite{liu1994covariance} shows that the convergence rate of the DG kernel $P$ is determined by the maximal correlation between the components; see also \cite{diaconis1999iterated}.
\end{remark}
The following result is standard; for completeness, we include a proof in \Cref{Prf:lem:P_1reversible}.
\begin{lemma}
\label{lem:P_1reversible}
    The kernels $G_1$ and $G_2$ are $\Pi$-reversible and positive. Under the \Cref{Asp:P1}, the kernel $H_1$ is $\Pi$-reversible and positive. Under the \Cref{Asp:P2}, the kernel $H_2$ is $\Pi$-reversible and positive.
\end{lemma}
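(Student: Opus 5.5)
The plan is to verify reversibility and positivity directly from the definitions, using disintegration of $\Pi$ as $\Pi(\dif x,\dif y)=\Pi_X(\dif x)\Pi_{Y|X}(\dif y|x)=\Pi_Y(\dif y)\Pi_{X|Y}(\dif x|y)$. By symmetry of the roles of $X$ and $Y$, it suffices to treat $G_1$ and $H_1$; the statements for $G_2$ and $H_2$ follow by exchanging the two components and using the disintegration along $Y$.

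\textbf{Reversibility.} For $f,g\in\mathrm{L}^2(\Pi)$, I will write
\begin{align*}
\langle H_1 f, g\rangle_\Pi=\int_{\mathcal X}\Pi_X(\dif x)\int_{\mathcal Y}\int_{\mathcal Y} f(x,y')\,g(x,y)\,H_{1|x}(\dif y'|y)\,\Pi_{Y|X}(\dif y|x).
\end{align*}
For fixed $x$, \Cref{Asp:P1} gives $\Pi_{Y|X}(\dif y|x)H_{1|x}(\dif y'|y)=\Pi_{Y|X}(\dif y'|x)H_{1|x}(\dif y|y')$. Swapping $y$ and $y'$ therefore yields $\langle f, H_1 g\rangle_\Pi$, so $H_1$ is self-adjoint on $\mathrm{L}^2(\Pi)$, i.e.\ $\Pi$-reversible. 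Equivalently, at the kernel level, $\Pi(\dif x,\dif y)H_1((x,y),(\dif x',\dif y'))=\Pi_X(\dif x)\delta_x(\dif x')\Pi_{Y|X}(\dif y|x)H_{1|x}(\dif y'|y)$, which is symmetric in $(x,y)\leftrightarrow(x',y')$. The same computation with $H_{1|x}(\dif y'|y)$ replaced by $\Pi_{Y|X}(\dif y'|x)$, which is trivially $\Pi_{Y|X}(\cdot|x)$-reversible, gives the claim for $G_1$.

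\textbf{Positivity.} For fixed $x$, set $f_x:=f(x,\cdot)$. Then
\begin{align*}
\langle H_1 f,f\rangle_\Pi=\int_{\mathcal X}\langle H_{1|x}f_x,f_x\rangle_{\Pi(\cdot|x)}\,\Pi_X(\dif x),
\end{align*}
and the integrand is nonnegative by the positivity of $H_{1|x}$ in \Cref{Asp:P1}. For $G_1$, I will use instead that $G_1f(x,\cdot)=\Pi_{Y|X}(f_x|x)$ is constant in $y$, so the integrand equals $\left(\int f_x\,\dif\Pi(\cdot|x)\right)^2\geqslant 0$. Alternatively, $G_1$ is idempotent and self-adjoint, hence an orthogonal projection, and so $\langle G_1 f,f\rangle_\Pi=\|G_1 f\|_\Pi^2\geqslant 0$.

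\textbf{Main obstacle.} The only delicate point is measure-theoretic: I need $f_x\in\mathrm{L}^2(\Pi(\cdot|x))$ for $\Pi_X$-a.e.\ $x$, and I need to justify the Fubini exchanges. Both follow from $\int_{\mathcal X}\|f_x\|^2_{\Pi(\cdot|x)}\,\Pi_X(\dif x)=\|f\|_\Pi^2<\infty$ together with Cauchy--Schwarz, which gives absolute integrability of the double integrals. The existence of regular conditional distributions is guaranteed because $\mathcal X$ and $\mathcal Y$ are Polish spaces.
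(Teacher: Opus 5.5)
Your proposal is correct and follows essentially the same route as the paper. Reversibility is checked fiberwise through the disintegration $\Pi(\dif x,\dif y)=\Pi_X(\dif x)\Pi_{Y|X}(\dif y|x)$, and positivity comes from $\langle H_1 f,f\rangle_\Pi=\int_{\mathcal X}\langle H_{1|x}f_x,f_x\rangle_{\Pi(\cdot|x)}\,\Pi_X(\dif x)$ and $\langle G_1 f,f\rangle_\Pi=\int_{\mathcal X}\bigl(\int_{\mathcal Y} f_x\,\dif\Pi(\cdot|x)\bigr)^2\,\Pi_X(\dif x)$, with $G_2$ and $H_2$ handled by symmetry; your remarks on $f_x\in\mathrm{L}^2(\Pi(\cdot|x))$ for $\Pi_X$-a.e.\ $x$ and on justifying Fubini supply details the paper leaves as ``easy to check''.
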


The following Lemma shows the comparison between $G_1$ and $H_1$, as well as between $G_2$ and $H_2$. The proof is similar to \cite[Theorem 4.2.3]{power2024weak} and is given in \Cref{prf:lem:P1_to_tildeP1}.
\begin{lemma}
\label{lem:P1_to_tildeP1}
    Under \Cref{Asp:P1}, we have the following WPI relating $G_1$ and $H_1$:
    \begin{align*}
        \mathcal{E}_{\Pi}(G_1,f) \leqslant s\cdot \mathcal{E}_{\Pi}(H_1,f)+\beta_1(s)\|f\|^2_{\mathrm{osc}}, \quad  \forall s>0, f \in \mathrm{L}^2(\Pi),
    \end{align*}
    where $\beta_1: (0, \infty) \rightarrow [0, \infty)$ is given by $\beta_1(s):= \int_\mathcal{X} \beta_1(s,x)\Pi_X(\dif x)$.
    
    Similarly, under \Cref{Asp:P2}, we have the following WPI for $G_2$ and $H_2$:
    \begin{align*}
        \mathcal{E}_{\Pi}(G_2,f) \leqslant s\cdot \mathcal{E}_{\Pi}(H_2,f)+\beta_2(s)\|f\|^2_{\mathrm{osc}}, \quad \forall s>0, f \in \mathrm{L}^2(\Pi),
        \end{align*}
    where $\beta_2: (0, \infty) \rightarrow [0, \infty)$ is given by $\beta_2(s):= \int_\mathcal{Y} \beta_2(s,y)\Pi_Y(\dif y)$. 
    \end{lemma}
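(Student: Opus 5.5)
The plan is to disintegrate both Dirichlet forms over $\Pi_X$ and apply \Cref{Asp:P1} fibrewise; the $G_2$/$H_2$ statement then follows by the same argument with the roles of $x$ and $y$ exchanged. Fix $f \in \mathrm{L}^2(\Pi)$ and, for each $x\in\mathcal X$, write $f_x(y):=f(x,y)$.

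\textbf{Step 1: $G_1$.} Since $G_1$ is the orthogonal projection onto functions of $x$ alone, the Dirichlet form is an averaged conditional variance:
\begin{align*}
\mathcal{E}_\Pi(G_1,f)=\|f\|_\Pi^2-\|G_1f\|_\Pi^2=\int_{\mathcal X}\mathrm{Var}_{\Pi(\cdot|x)}(f_x)\,\Pi_X(\dif x).
\end{align*}

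\textbf{Step 2: $H_1$.} The kernel $H_1$ fixes $x$, so the double-integral form of \eqref{equ:DirichletForm} gives
\begin{align*}
\mathcal{E}_\Pi(H_1,f)&=\frac12\int\!\!\int\!\!\int (f(x,y)-f(x,y'))^2 H_{1|x}(\dif y'|y)\,\Pi(\dif y|x)\,\Pi_X(\dif x)\\
&=\int_{\mathcal X}\mathcal{E}_{\Pi(\cdot|x)}(H_{1|x},f_x)\,\Pi_X(\dif x).
\end{align*}

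\textbf{Step 3: fibrewise WPI and integration.} For $\Pi_X$-a.e.\ $x$, $f_x\in\mathrm{L}^2(\Pi(\cdot|x))$, so \Cref{Asp:P1} applies to $h=f_x$:
\begin{align*}
\mathrm{Var}_{\Pi(\cdot|x)}(f_x)\leqslant s\,\mathcal{E}_{\Pi(\cdot|x)}(H_{1|x},f_x)+\beta_1(s,x)\|f_x\|_{\mathrm{osc}}^2 .
\end{align*}
If $\|f\|_{\mathrm{osc}}=\infty$ the claimed inequality is trivial, so assume it is finite. Integrating against $\Pi_X$ and using Steps 1--2 gives
\begin{align*}
\mathcal{E}_{\Pi}(G_1,f)\leqslant s\,\mathcal{E}_{\Pi}(H_1,f)+\int_{\mathcal X}\beta_1(s,x)\|f_x\|_{\mathrm{osc}}^2\,\Pi_X(\dif x).
\end{align*}

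\textbf{Step 4: oscillation bound (main obstacle).} It remains to show $\|f_x\|_{\mathrm{osc}}\leqslant \|f\|_{\mathrm{osc}}$ for $\Pi_X$-a.e.\ $x$. Let $M=\text{ess}_\Pi\sup f$ and $m=\text{ess}_\Pi\inf f$. The set $\{f>M\}$ is $\Pi$-null, so by disintegration its $x$-sections are $\Pi(\cdot|x)$-null for $\Pi_X$-a.e.\ $x$; the same holds for $\{f<m\}$. Hence $\text{ess}\sup f_x\leqslant M$ and $\text{ess}\inf f_x\geqslant m$ for a.e.\ $x$. This requires $f$ to be jointly measurable and the conditional distributions to form a genuine disintegration, which holds because the spaces are Polish.

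With this bound, the integral in Step 3 is at most $\beta_1(s)\|f\|^2_{\mathrm{osc}}$. Measurability of $x\mapsto\beta_1(s,x)$ is implicitly assumed in the definition of $\beta_1(s)$. The averaged $\beta_1(s)$ is decreasing, and it tends to $0$ by dominated convergence, using that we may take $\beta_1(s,x)\leqslant 1/4$ (Popoviciu). So $\beta_1$ satisfies the conditions of \Cref{def:WPI}, completing the $G_1$/$H_1$ case.
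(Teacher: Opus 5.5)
Your proposal is correct and follows essentially the same route as the paper's proof. Both disintegrate $\mathcal{E}_\Pi(G_1,f)$ and $\mathcal{E}_\Pi(H_1,f)$ over $\Pi_X$, apply \Cref{Asp:P1} fibrewise to $f_x$, integrate, and verify the regularity of the averaged $\beta_1$ via $\beta_1(\cdot,x)\leqslant 1/4$ and dominated convergence. The only differences are small and in your favour: you get the conditional-variance formula from the projection identity $\|f\|_\Pi^2-\|G_1f\|_\Pi^2$ rather than the double-integral form, and your Step 4 explicitly justifies $\|f_x\|_{\mathrm{osc}}\leqslant\|f\|_{\mathrm{osc}}$ for $\Pi_X$-a.e.\ $x$, which the paper uses without comment.
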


We now state the comparison results between the DG kernel $P^*P$ and the 1-MG kernel $P_1^*P_1$. Since we have derived the comparison between $G_1$ and $H_1$, the comparison between $P^*P$ and $P_1^*P_1$ follows from the decomposition of Dirichlet forms (\Cref{thm: decomposition}). 
We will use the $K^*$-WPI as the comparison tool in \Cref{thm:K*:PtoP1}, and show that the $K^*$ function of $P^*P$ and $P_1^*P_1$ can be derived by the $K^*$ function of $G_1$ and $H_1$.

We begin with a useful general observation.
\begin{lemma}
\label{lem:f_osc}
    Suppose that $T$ satisfies the optimised $K^*$-WPI for a given $\mathrm{K}^*$, then for any $\mu$-invariant Markov transition operator $R$ on $\mathrm{L^2(\mu)}$, 
    we have for $f \in \mathrm{L}_0^2(\mu)$,
    \begin{align}
        \mathrm{K}^*\left(\frac{\|Rf\|_\mu^2}{\|f\|^2_\mathrm{osc}}\right) \leqslant \frac{\mathcal{E}_\mu\left(T, Rf\right)}{\|f\|^2_\mathrm{osc}}.
    \end{align}
    Similarly, suppose the comparison $K^*$-WPI between $T_1$ and $T_2$ holds for a given $\mathrm{\bar{K}}^*$, then for any Markov transition operator $R$, $\mathrm{\bar{K}}^*\left(\frac{\mathcal{E}_\mu\left(T_1, Rf\right)}{\|f\|^2_\mathrm{osc}}\right) \leqslant \frac{\mathcal{E}_\mu\left(T_2, Rf\right)}{\|f\|^2_\mathrm{osc}}$.
\end{lemma}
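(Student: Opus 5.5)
The plan is to apply the assumed inequality to the function $g := Rf$ in place of $f$, and then replace $\|g\|_\mathrm{osc}$ by the larger quantity $\|f\|_\mathrm{osc}$. Two monotonicity properties of $\mathrm{K}^*$ do the work: it is increasing, and $v\mapsto v^{-1}\mathrm{K}^*(v)$ is increasing (\Cref{thm:K*Properties}). First, since $R$ is $\mu$-invariant, $\mu(Rf)=\mu(f)=0$, so $g\in\mathrm{L}^2_0(\mu)$. Also $\|g\|_\mathrm{osc}\leqslant\|f\|_\mathrm{osc}$ by the oscillation contraction noted in the preliminaries.

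\textbf{Degenerate cases.} If $\|g\|_\mathrm{osc}=0$, then $g$ is a.e. constant with mean zero, so $g=0$. The left side is then $\mathrm{K}^*(0)=0$ and the right side is $0$. Likewise, if $\|f\|_\mathrm{osc}=\infty$ the claim is trivial, since the left side is $\mathrm{K}^*(0)=0$ and the right side is nonnegative. So we may assume $0<\|g\|_\mathrm{osc}\leqslant\|f\|_\mathrm{osc}<\infty$.

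\textbf{Main step.} The optimised $K^*$-WPI applied to $g$ gives
\[
\mathrm{K}^*\!\left(\frac{\|g\|_\mu^2}{\|g\|^2_\mathrm{osc}}\right)\leqslant\frac{\mathcal E_\mu(T,g)}{\|g\|^2_\mathrm{osc}}.
\]
Set $a=\|g\|^2_\mu$, $b=\|g\|^2_\mathrm{osc}$ and $c=\|f\|^2_\mathrm{osc}\geqslant b$. Since $a/c\leqslant a/b$ and $v\mapsto v^{-1}\mathrm{K}^*(v)$ is increasing,
\[
\frac{c}{a}\,\mathrm{K}^*(a/c)\leqslant \frac{b}{a}\,\mathrm{K}^*(a/b) \quad (a>0).
\]
Rearranging, $\mathrm{K}^*(a/c)\leqslant \tfrac{b}{c}\mathrm{K}^*(a/b)\leqslant \tfrac{b}{c}\cdot\tfrac{\mathcal E_\mu(T,g)}{b}=\mathcal E_\mu(T,g)/c$, which is the claim. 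If $a=0$ the claim is trivial. One should also check that $a/b$ lies in the domain where $\mathrm{K}^*$ is finite. This holds because $a/b\leqslant 1/4$ by Popoviciu's inequality, and the WPI holding for $g$ gives finiteness there.

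\textbf{Comparison version.} The second statement follows by the identical argument, with $\|g\|_\mu^2$ replaced by $\mathcal E_\mu(T_1,g)$. Here $T_1$ may be assumed $\mu$-invariant, as in the comparison WPI setting, so that the Dirichlet form is nonnegative. The comparison $K^*$-WPI applied to $g$, together with the same scaling monotonicity of $\bar{\mathrm{K}}^*$, gives the result. For $g$ to be admissible, the comparison inequality must hold for all $g\in\mathrm{L}^2(\mu)$; if it is stated only for mean-zero functions, we need $\mu(Rf)=\mu(f)$, which uses invariance of $R$ as before.

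\textbf{Expected obstacle.} The only delicate point is the direction of the scaling inequality, namely that replacing $\|g\|_\mathrm{osc}$ with a larger normaliser is allowed. This is exactly property 4 of \Cref{thm:K*Properties}, together with the bookkeeping of the degenerate and infinite cases.
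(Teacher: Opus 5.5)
Your proposal is correct and follows essentially the same route as the paper. Both apply the (comparison) $K^*$-WPI to $Rf$, use $\|Rf\|_\mathrm{osc}\leqslant\|f\|_\mathrm{osc}$, and then invoke the monotonicity of $v\mapsto v^{-1}\mathrm{K}^*(v)$ from \Cref{thm:K*Properties} to replace the normaliser $\|Rf\|^2_\mathrm{osc}$ by $\|f\|^2_\mathrm{osc}$. Your extra bookkeeping covers the degenerate cases, the finiteness of $\mathrm{K}^*$ at the relevant point, and the role of $\mu$-invariance of $R$ for mean-zero admissibility and the oscillation contraction; the paper leaves these implicit.
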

\begin{proof}
    Applying the WPI with $f$ replaced by $Rf$, we have
    \begin{align*}
        \mathrm{K}^*\left(\frac{\|Rf\|_\mu^2}{\|Rf\|^2_\mathrm{osc}}\right)\leqslant\frac{\mathcal{E}_\mu(T,Rf)}{\|Rf\|^2_\mathrm{osc}}.
    \end{align*}
    It is easy to see that $\|Rf\|^2_\mathrm{osc} \leqslant \|f\|^2_\mathrm{osc}$, since $R$ is a Markov operator. Hence $\frac{\left\|R f\right\|_\mu^2}{\|Rf\|^2_\mathrm{osc}} \geqslant \frac{\left\|R f\right\|_\mu^2}{\|f\|^2_\mathrm{osc}}$. Additionally, since $u \mapsto u^{-1} \mathrm{K}^*(u)$ is increasing by \Cref{thm:K*Properties}, we obtain the conclusion
    \begin{align*}
    \left(\frac{\left\|R f\right\|_\mu^2}{\|Rf\|^2_\mathrm{osc}}\right)^{-1} \cdot \mathrm{K}^*\left(\frac{\left\|R f\right\|_\mu^2}{\|Rf\|^2_\mathrm{osc}}\right) \geqslant\left(\frac{\left\|R f\right\|_\mu^2}{\|f\|^2_\mathrm{osc}}\right)^{-1} \cdot \mathrm{K}^*\left(\frac{\left\|R f\right\|_\mu^2}{\|f\|^2_\mathrm{osc}}\right) \\
    \Longrightarrow \mathrm{K}^*\left(\frac{\left\|R f\right\|_\mu^2}{\|Rf\|^2_\mathrm{osc}}\right) \geqslant\left(\frac{\|f\|^2_\mathrm{osc}}{\|Rf\|^2_\mathrm{osc}}\right) \cdot \mathrm{K}^*\left(\frac{\left\|R f\right\|_\mu^2}{\|f\|^2_\mathrm{osc}}\right).
    \end{align*}

    We thus see that
    \begin{align*}
    & \frac{\mathcal{E}_\mu\left(T, R f\right)}{\|Rf\|^2_\mathrm{osc}} \geqslant \left(\frac{\|f\|^2_\mathrm{osc}}{\|Rf\|^2_\mathrm{osc}}\right) \cdot \mathrm{K}^*\left(\frac{\left\|R f\right\|_\mu^2}{\|f\|^2_\mathrm{osc}}\right) \\
    \Longrightarrow & \frac{\mathcal{E}_\mu\left(T, R f\right)}{\|f\|^2_\mathrm{osc}} \geqslant \mathrm{K}^*\left(\frac{\left\|R f\right\|_\mu^2}{\|f\|^2_\mathrm{osc}}\right) .
    \end{align*}
    The argument still holds after replacing $\left\|R f\right\|_\mu^2$ with $\mathcal{E}_\mu\left(T_1, R f\right)$ and $\mathcal{E}_\mu\left(T, R f\right)$ with $\mathcal{E}_\mu\left(T_2, R f\right)$.
\end{proof}
We will require the following  decomposition of Dirichlet forms.
\begin{lemma}
\label{thm: decomposition}
Suppose $T_1$ and $T_2$ are $\Pi$-invariant Markov kernels, each of which is self-adjoint on $\mathrm{L}^2(\Pi)$.  
For their composition $T=T_1T_2$, which is $\Pi$-invariant, we have
\begin{align*}
    \mathcal{E}_{\Pi}(T^*T,f)=\mathcal{E}_{\Pi}(T_2^2,f)+\mathcal{E}_{\Pi}(T_1^2,T_2f).
\end{align*}
\end{lemma}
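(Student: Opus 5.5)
The identity is purely algebraic, so the plan is to expand both sides using the definition $\mathcal{E}_\Pi(T,f)=\langle (I-T)f,f\rangle_\Pi$ together with self-adjointness of $T_1$ and $T_2$. The first step is to identify the adjoint: since $T=T_1T_2$ and each factor is self-adjoint on $\mathrm{L}^2(\Pi)$, we have $T^*=T_2^*T_1^*=T_2T_1$. Hence $T^*T=T_2T_1T_1T_2=T_2T_1^2T_2$. This is again a $\Pi$-invariant Markov operator, so its Dirichlet form is well defined.

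Next I would write
\begin{align*}
\mathcal{E}_\Pi(T^*T,f)=\|f\|_\Pi^2-\langle T_2T_1^2T_2 f,f\rangle_\Pi=\|f\|_\Pi^2-\langle T_1^2 (T_2f),T_2f\rangle_\Pi,
\end{align*}
moving the outer $T_2$ across the inner product by self-adjointness.

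Then I would add and subtract $\|T_2f\|_\Pi^2$:
\begin{align*}
\mathcal{E}_\Pi(T^*T,f)=\bigl(\|f\|_\Pi^2-\|T_2f\|_\Pi^2\bigr)+\bigl(\|T_2f\|_\Pi^2-\langle T_1^2T_2f,T_2f\rangle_\Pi\bigr).
\end{align*}
The second bracket is exactly $\mathcal{E}_\Pi(T_1^2,T_2f)$. For the first bracket, self-adjointness gives $\|T_2f\|_\Pi^2=\langle T_2^2 f,f\rangle_\Pi$, so it equals $\langle (I-T_2^2)f,f\rangle_\Pi=\mathcal{E}_\Pi(T_2^2,f)$. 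Summing the two brackets yields the claimed decomposition.

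There is no real obstacle here. The only points needing care are:
\begin{itemize}
\item the order of the adjoint, namely $T^*=T_2T_1$ rather than $T_1T_2$;
\item the membership $T_2f\in\mathrm{L}^2(\Pi)$, which holds because Markov operators are contractions on $\mathrm{L}^2(\Pi)$ for invariant $\Pi$, so every term is finite.
\end{itemize}
If one prefers the integral form of the Dirichlet form in \eqref{equ:DirichletForm}, the identity still holds. Both $T_1^2$ and $T_2^2$ are $\Pi$-invariant, so the quadratic-form and integral definitions coincide.
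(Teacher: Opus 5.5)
Your proposal is correct and follows the paper's proof. Both write $\mathcal{E}_\Pi(T^*T,f)=\|f\|_\Pi^2-\|T_1T_2f\|_\Pi^2$, add and subtract $\|T_2f\|_\Pi^2$, and use self-adjointness to identify the two brackets as $\mathcal{E}_\Pi(T_2^2,f)$ and $\mathcal{E}_\Pi(T_1^2,T_2f)$.
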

\begin{proof}
    By the definition of Dirichlet form,
    \begin{align*}
        \begin{split}
            \mathcal{E}_{\Pi}(T^*T,f)&=\mathcal{E}_{\Pi}(T_2T_1T_1T_2,f)\\
            &=\|f\|^2_\Pi-\|T_1T_2f\|^2_\Pi\\
            &=\|f\|^2_\Pi-\|T_2f\|_\Pi^2+\|T_2f\|_\Pi^2-\|T_1T_2f\|^2_\Pi\\
            &=\mathcal{E}_{\Pi}(T_2^*T_2,f)+\mathcal{E}_{\Pi}(T_1^*T_1,T_2f)\\
            &=\mathcal{E}_{\Pi}(T_2^2,f)+\mathcal{E}_{\Pi}(T_1^2,T_2f).
        \end{split}
    \end{align*}
\end{proof}
The following theorem provides a comparison result for the DG sampler and 1-MG sampler.
\begin{theorem}
    \label{thm:K*:PtoP1}
        We suppose that we have the following $K^*$-WPI comparison between $G_1$ and $H_1$: for all $s>0$, $f \in \mathrm{L}^2(\Pi)$,
        \begin{align*}
        \mathrm{K}_1^*\left(\frac{\mathcal{E}_\Pi(G_1,f)}{\|f\|^2_{\mathrm{osc}}}\right)\leqslant \frac{\mathcal{E}_\Pi(H_1,f)}{\|f\|^2_{\mathrm{osc}}} .
        \end{align*}
        Then we have the following comparison between $P^*P$ and $P_1^*P_1$:
        for all $s>0$, $f \in \mathrm{L}^2(\Pi)$,
        \begin{align*}
        \tilde{\mathrm{K}}_1^*\left(\frac{\mathcal{E}_\Pi(P^*P,f)}{\|f\|^2_{\mathrm{osc}}}\right) \leqslant \frac{\mathcal{E}_\Pi(P_1^*P_1,f)}{\|f\|^2_{\mathrm{osc}}},
        \end{align*}
        where $\tilde{\mathrm{K}}_1^*(v)=2\mathrm{K}_1^*(v/2)$.
\end{theorem}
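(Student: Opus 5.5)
We will decompose both Dirichlet forms with \Cref{thm: decomposition} and compare them term by term. Take $T_1=G_1,T_2=G_2$ for $P=G_1G_2$, and $T_1=H_1,T_2=G_2$ for $P_1=H_1G_2$. All these operators are self-adjoint by \Cref{lem:P_1reversible}. Since $G_i^2=G_i$ (projections), the decomposition gives
\begin{align*}
\mathcal{E}_\Pi(P^*P,f)&=\mathcal{E}_\Pi(G_2,f)+\mathcal{E}_\Pi(G_1,G_2f),\\
\mathcal{E}_\Pi(P_1^*P_1,f)&=\mathcal{E}_\Pi(G_2,f)+\mathcal{E}_\Pi(H_1^2,G_2f).
\end{align*}
By positivity of $H_1$ and \Cref{lem:PtoP^2}, $\mathcal{E}_\Pi(H_1^2,G_2f)\geqslant \mathcal{E}_\Pi(H_1,G_2f)$. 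Applying the second part of \Cref{lem:f_osc} with $R=G_2$ to the assumed comparison then yields
\[
\mathrm{K}_1^*\left(\frac{\mathcal{E}_\Pi(G_1,G_2f)}{\|f\|^2_{\mathrm{osc}}}\right)\leqslant \frac{\mathcal{E}_\Pi(H_1,G_2f)}{\|f\|^2_{\mathrm{osc}}}\leqslant \frac{\mathcal{E}_\Pi(H_1^2,G_2f)}{\|f\|^2_{\mathrm{osc}}}.
\]

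For the shared term we use $\mathrm{K}_1^*(v)\leqslant v$ from \Cref{thm:K*Properties}(3). This holds on the relevant range, because the normalized Dirichlet forms are bounded by a constant: for instance $\mathcal{E}_\Pi(G_2,f)\leqslant \|f\|^2_\Pi$ for $f$ centred, and $\|f\|^2_\Pi\leqslant\|f\|^2_{\mathrm{osc}}/4$. It gives $\mathrm{K}_1^*(a)\leqslant a$, where $a=\mathcal{E}_\Pi(G_2,f)/\|f\|^2_{\mathrm{osc}}$. Setting $b=\mathcal{E}_\Pi(G_1,G_2f)/\|f\|^2_{\mathrm{osc}}$, convexity of $\mathrm{K}_1^*$ gives
\[
2\mathrm{K}_1^*\left(\frac{a+b}{2}\right)\leqslant \mathrm{K}_1^*(a)+\mathrm{K}_1^*(b)\leqslant a+\frac{\mathcal{E}_\Pi(H_1^2,G_2f)}{\|f\|^2_{\mathrm{osc}}}=\frac{\mathcal{E}_\Pi(P_1^*P_1,f)}{\|f\|^2_{\mathrm{osc}}}.
\]
The left side equals $\tilde{\mathrm{K}}_1^*\big(\mathcal{E}_\Pi(P^*P,f)/\|f\|^2_{\mathrm{osc}}\big)$, which is the claim.

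The main obstacle is justifying $\mathrm{K}_1^*(a)\leqslant a$: item (3) of \Cref{thm:K*Properties} is stated only on a bounded interval $[0,a]$. We would check the domain carefully, using $\beta_1\leqslant 1/4$ and the bound $a\leqslant 1/4$. If needed we would handle non-centred $f$ by replacing $f$ with $f-\Pi(f)$, which changes neither Dirichlet forms nor the oscillation. The remaining steps are direct applications of the cited lemmas.
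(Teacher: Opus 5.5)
Your proposal is correct and is essentially the paper's proof. Both decompose $\mathcal{E}_\Pi(P^*P,f)$ and $\mathcal{E}_\Pi(P_1^*P_1,f)$ via \Cref{thm: decomposition}, transfer the $G_1$/$H_1$ comparison to $G_2f$ using \Cref{lem:f_osc} together with $\mathcal{E}_\Pi(H_1^2,\cdot)\geqslant\mathcal{E}_\Pi(H_1,\cdot)$, bound the shared term $\mathcal{E}_\Pi(G_2,f)$ by $\mathrm{K}_1^*(v)\leqslant v$ on $[0,1/4]$, and conclude by convexity; the only difference is that you invoke \Cref{lem:f_osc} before passing from $H_1$ to $H_1^2$, whereas the paper does the reverse, and your check that the normalized $G_2$-term lies in $[0,1/4]$ is exactly the range the paper relies on.
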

\begin{proof}
        From \Cref{lem:PtoP^2}, it is easy to show that
        \begin{align*}
            \frac{\mathcal{E}_\Pi(H_1^2,f)}{\|f\|^2_{\mathrm{osc}}} \geqslant \frac{\mathcal{E}_\Pi(H_1,f)}{\|f\|^2_{\mathrm{osc}}} \geqslant \mathrm{K}_1^*\left(\frac{\mathcal{E}_\Pi(G_1,f)}{\|f\|^2_{\mathrm{osc}}}\right)=\mathrm{K}_1^*\left(\frac{\mathcal{E}_\Pi(G_1^2,f)}{\|f\|^2_{\mathrm{osc}}}\right),
        \end{align*}
        using the fact that $G_1$ is a projection. 
        
        By \Cref{lem:f_osc}, since $G_2$ is a Markov operator, we have
        \begin{align}
        \label{eq:prf:H_1 and G_1}
        \frac{\mathcal{E}_\Pi(H_1^2,G_2f)}{\|f\|^2_{\mathrm{osc}}} \geqslant \mathrm{K}_1^*\left(\frac{\mathcal{E}_\Pi(G_1^2,G_2f)}{\|f\|^2_{\mathrm{osc}}}\right).
        \end{align}
        Since for $v \in \left[0, 1/4\right]$, it holds that $\mathrm{K}_1^*(v) \leqslant v$, and $v \mapsto \mathrm{K}_1^*(v)$ is convex in its domain by \Cref{thm:K*Properties}. Then, we decompose $\mathcal{E}_\Pi\left(P_1^*P_1,f\right)$ by \Cref{thm: decomposition},
        \begin{align*}
            \begin{split}
                \frac{\mathcal{E}_\Pi(P_1^*P_1,f)}{\|f\|^2_{\mathrm{osc}}}&=\frac{\mathcal{E}_\Pi(H_1^2,G_2f)}{\|f\|^2_{\mathrm{osc}}}+\frac{\mathcal{E}_\Pi(G_2^2,f)}{\|f\|^2_{\mathrm{osc}}}\\
                &\geqslant \mathrm{K}_1^*\left(\frac{\mathcal{E}_\Pi(G_1^2,G_2f)}{\|f\|^2_{\mathrm{osc}}}\right)+\frac{\mathcal{E}_\Pi(G_2^2,f)}{\|f\|^2_{\mathrm{osc}}} \quad \text{by  \eqref{eq:prf:H_1 and G_1}}\\
                &\geqslant \mathrm{K}_1^*\left(\frac{\mathcal{E}_\Pi(G_1^2,G_2f)}{\|f\|^2_{\mathrm{osc}}}\right)+\mathrm{K}_1^*\left(\frac{\mathcal{E}_\Pi(G_2^2,f)}{\|f\|^2_{\mathrm{osc}}}\right) \quad \text{by $\mathrm{K}_1^*(v) \leqslant v$}\\
                &\geqslant 2\mathrm{K}_1^*\left(\frac{1}{2}\left(\frac{\mathcal{E}_\Pi(G_1^2,G_2f)}{\|f\|^2_{\mathrm{osc}}}+\frac{\mathcal{E}_\Pi(G_2^2,f)}{\|f\|^2_{\mathrm{osc}}}\right)\right) \quad \text{by convexity}\\
                &=2\mathrm{K}_1^*\left(\frac{1}{2}\frac{\mathcal{E}_\Pi(P^*P,f)}{\|f\|^2_{\mathrm{osc}}}\right)=: \tilde{\mathrm{K}}_1^*\left(\frac{\mathcal{E}_\Pi(P^*P,f)}{\|f\|^2_{\mathrm{osc}}}\right).
            \end{split}
        \end{align*}
        
\end{proof}

\begin{remark}
    By \Cref{thm:ConstantOnBeta}, we observe that the function $\tilde{\mathrm{K}}_1^*$ closely resembles $\mathrm{K}_1^*$, differing only by a constant factors. Hence, in our context, comparison methods can supply positive results for 1-MG sampler in the case
where both i) the DG sampler converges well (\Cref{Asp:P}), and ii) the kernels $H_{1|x}$ converge for every $x$ (\Cref{Asp:P1}). Theorem 12 in \cite{qin2025spectral} establishes a similar result although the proofs are formulated using different parameterizations.
\end{remark}

We can similarly derive comparisons between $P_2^*P_2$ and $P^*_{1,2}P_{1,2}$, as well as between $PP^*$ and $P_2{P_2}^*$, by replacing the first component with its corresponding MwG version. The resulting statements are presented in the following corollaries, with proofs provided in \Cref{prf:thm:K*:P1tildeP2totildeP1tildeP2}.
\begin{corollary}
\label{thm:K*:P1tildeP2totildeP1tildeP2}
    Suppose that the comparison $K^*$-WPI between $G_1$ and $H_1$ holds with the function $\mathrm{K}_1^*$. It follows that we have the comparison $K^*$-WPI between $P_2^*P_2$ and $P_{1,2}^*P_{1,2}$ with the function $\tilde{\mathrm{K}}_1^*(v)=2\mathrm{K}_1^*(v/2)$.
\end{corollary}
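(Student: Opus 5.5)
The plan is to mirror the proof of \Cref{thm:K*:PtoP1}, with $G_2$ replaced throughout by $H_2$. We have $P_2 = G_1H_2$ and $P_{1,2}=H_1H_2$. By \Cref{lem:P_1reversible}, $G_1$, $H_1$ and $H_2$ are all $\Pi$-reversible, hence self-adjoint on $\mathrm{L}^2(\Pi)$, and $\Pi$-invariant. So \Cref{thm: decomposition} applies to both compositions and gives
\begin{align*}
\mathcal{E}_\Pi(P_2^*P_2,f)&=\mathcal{E}_\Pi(H_2^2,f)+\mathcal{E}_\Pi(G_1^2,H_2f),\\
\mathcal{E}_\Pi(P_{1,2}^*P_{1,2},f)&=\mathcal{E}_\Pi(H_2^2,f)+\mathcal{E}_\Pi(H_1^2,H_2f).
\end{align*}
The two decompositions share the term $\mathcal{E}_\Pi(H_2^2,f)$ and differ only in the second term. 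There, $G_1^2$ is replaced by $H_1^2$, evaluated at $H_2f$ instead of $G_2f$.

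Next I control the second term. Since $H_1$ is reversible and positive, \Cref{lem:PtoP^2} gives $\mathcal{E}_\Pi(H_1^2,g)\geqslant\mathcal{E}_\Pi(H_1,g)$. Since $G_1$ is a projection, $\mathcal{E}_\Pi(G_1^2,g)=\mathcal{E}_\Pi(G_1,g)$. Combining these with the assumed comparison gives
\begin{align*}
\mathrm{K}_1^*\left(\frac{\mathcal{E}_\Pi(G_1^2,g)}{\|g\|^2_{\mathrm{osc}}}\right)\leqslant \frac{\mathcal{E}_\Pi(H_1^2,g)}{\|g\|^2_{\mathrm{osc}}}.
\end{align*}
I then apply the second part of \Cref{lem:f_osc} with $R=H_2$, a $\Pi$-invariant Markov operator, so that $\|H_2f\|_{\mathrm{osc}}\leqslant\|f\|_{\mathrm{osc}}$. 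This lets me normalise by $\|f\|^2_{\mathrm{osc}}$:
\begin{align*}
\frac{\mathcal{E}_\Pi(H_1^2,H_2f)}{\|f\|^2_{\mathrm{osc}}}\geqslant \mathrm{K}_1^*\left(\frac{\mathcal{E}_\Pi(G_1^2,H_2f)}{\|f\|^2_{\mathrm{osc}}}\right).
\end{align*}

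Finally I treat the shared term with \Cref{thm:K*Properties}. The ratio $\mathcal{E}_\Pi(H_2^2,f)/\|f\|^2_{\mathrm{osc}}$ lies in the range where $\mathrm{K}_1^*(v)\leqslant v$, so $\mathcal{E}_\Pi(H_2^2,f)/\|f\|^2_{\mathrm{osc}}\geqslant \mathrm{K}_1^*\left(\mathcal{E}_\Pi(H_2^2,f)/\|f\|^2_{\mathrm{osc}}\right)$. Adding this to the previous bound and using convexity of $\mathrm{K}_1^*$ in the form $\mathrm{K}_1^*(a)+\mathrm{K}_1^*(b)\geqslant 2\mathrm{K}_1^*((a+b)/2)$ yields
\begin{align*}
\frac{\mathcal{E}_\Pi(P_{1,2}^*P_{1,2},f)}{\|f\|^2_{\mathrm{osc}}}\geqslant 2\mathrm{K}_1^*\left(\frac12\frac{\mathcal{E}_\Pi(P_2^*P_2,f)}{\|f\|^2_{\mathrm{osc}}}\right)=\tilde{\mathrm{K}}_1^*\left(\frac{\mathcal{E}_\Pi(P_2^*P_2,f)}{\|f\|^2_{\mathrm{osc}}}\right),
\end{align*}
which is the claimed comparison.

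The only point needing care is the bound $\mathrm{K}_1^*(v)\leqslant v$ on the range where it is used, just as in \Cref{thm:K*:PtoP1}. Here $v=\mathcal{E}_\Pi(H_2^2,f)/\|f\|^2_{\mathrm{osc}}$, and since $\mathcal{E}_\Pi(H_2^2,f)=\|f\|_\Pi^2-\|H_2f\|_\Pi^2\leqslant\|f\|_\Pi^2$, this requires $\|f\|_\Pi^2\leqslant\|f\|^2_{\mathrm{osc}}/4$. That holds for centred $f$ by Popoviciu. Since Dirichlet forms are unchanged by adding constants, we may assume $f\in\mathrm{L}^2_0(\Pi)$ without loss of generality. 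With this, everything else is a verbatim transcription of the earlier proof.
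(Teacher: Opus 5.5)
Your proposal is correct and matches the paper's proof, which consists of rerunning the proof of \Cref{thm:K*:PtoP1} with $G_2$ replaced by $H_2$. That is exactly your chain of steps: the decomposition via \Cref{thm: decomposition}, the $G_1$/$H_1$ comparison transported through $R=H_2$ by \Cref{lem:f_osc}, and the $\mathrm{K}_1^*(v)\leqslant v$ bound combined with convexity. Your explicit check that $\mathcal{E}_\Pi(H_2^2,f)/\|f\|^2_{\mathrm{osc}}\leqslant 1/4$ after centring $f$ is a small detail the paper leaves implicit.
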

\begin{corollary}
\label{thm:K*:P*toP2*}
     We suppose that we have the comparison $K^*$-WPI between $G_2$ and $H_2$ with the function $\mathrm{K}_2^*$, it follows that we have the comparison $K^*$-WPI between $PP^*$ and $P_2{P_2}^*$ with the function $\tilde{\mathrm{K}}_2^*(v)=2\mathrm{K}_2^*(v/2)$.
\end{corollary}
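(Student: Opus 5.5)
The plan is to repeat the argument of \Cref{thm:K*:PtoP1}, with the roles of the two components exchanged so that the \emph{first} factor of the composition is the one being replaced. Write $P^*=G_2G_1$ and $P_2^*=(G_1H_2)^*=H_2G_1$. Both are compositions $T_1T_2$ of self-adjoint, $\Pi$-invariant kernels: $T_1=G_2$ or $H_2$, and $T_2=G_1$ (by \Cref{lem:P_1reversible}). Since $(P^*)^*P^*=PP^*$ and $(P_2^*)^*P_2^*=P_2P_2^*$, \Cref{thm: decomposition} gives
\begin{align*}
\mathcal{E}_\Pi(PP^*,f)&=\mathcal{E}_\Pi(G_1^2,f)+\mathcal{E}_\Pi(G_2^2,G_1f),\\
\mathcal{E}_\Pi(P_2P_2^*,f)&=\mathcal{E}_\Pi(G_1^2,f)+\mathcal{E}_\Pi(H_2^2,G_1f).
\end{align*}
The common term $\mathcal{E}_\Pi(G_1^2,f)$ needs no comparison. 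Only the second term has to be compared.

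For the second term, I would first use positivity and reversibility of $H_2$ (\Cref{lem:P_1reversible}) together with \Cref{lem:PtoP^2}. These give $\mathcal{E}_\Pi(H_2^2,g)\geqslant\mathcal{E}_\Pi(H_2,g)$. Next apply the assumed comparison $K^*$-WPI, and use that $G_2$ is a projection, so $\mathcal{E}_\Pi(G_2,g)=\mathcal{E}_\Pi(G_2^2,g)$. Together,
\[
\mathcal{E}_\Pi(H_2^2,g)\geqslant \|g\|_{\mathrm{osc}}^2\,\mathrm{K}_2^*\!\left(\mathcal{E}_\Pi(G_2^2,g)/\|g\|_{\mathrm{osc}}^2\right).
\]
Then apply the second part of \Cref{lem:f_osc} with $R=G_1$, which is a $\Pi$-invariant Markov operator. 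This yields
\[
\frac{\mathcal{E}_\Pi(H_2^2,G_1f)}{\|f\|_{\mathrm{osc}}^2}\geqslant \mathrm{K}_2^*\!\left(\frac{\mathcal{E}_\Pi(G_2^2,G_1f)}{\|f\|_{\mathrm{osc}}^2}\right).
\]
This is the only step that needs care. One must pass from $\|G_1f\|_{\mathrm{osc}}$ to $\|f\|_{\mathrm{osc}}$, and that relies on monotonicity of $v\mapsto v^{-1}\mathrm{K}_2^*(v)$, exactly as in \Cref{lem:f_osc}.

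To finish, I would bound the common term below by $\mathrm{K}_2^*(\mathcal{E}_\Pi(G_1^2,f)/\|f\|^2_{\mathrm{osc}})$, using $\mathrm{K}_2^*(v)\leqslant v$ from \Cref{thm:K*Properties}. This requires the argument to lie in $[0,1/4]$, or more generally in the range where that property holds; as in the proof of \Cref{thm:K*:PtoP1}, the normalised Dirichlet forms lie in the relevant bounded domain. Adding the two lower bounds and applying convexity of $\mathrm{K}_2^*$ (Jensen with weights $1/2$) gives
\[
\frac{\mathcal{E}_\Pi(P_2P_2^*,f)}{\|f\|^2_{\mathrm{osc}}}\geqslant 2\mathrm{K}_2^*\!\left(\tfrac12\,\frac{\mathcal{E}_\Pi(PP^*,f)}{\|f\|^2_{\mathrm{osc}}}\right)=\tilde{\mathrm{K}}_2^*\!\left(\frac{\mathcal{E}_\Pi(PP^*,f)}{\|f\|^2_{\mathrm{osc}}}\right),
\]
which is the claimed comparison.

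I do not expect a genuine obstacle, since the argument is a relabelling of \Cref{thm:K*:PtoP1}. The point to check is that the decomposition is applied to $P^*=G_2G_1$ and $P_2^*=H_2G_1$, rather than to $P$ and $P_2$, so that the replaced kernel $H_2$ sits in the outer factor. In $P_2=G_1H_2$, by contrast, $H_2$ acts first on $f$, and there the argument would fail.
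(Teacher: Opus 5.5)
Your proposal is correct and follows the paper's route. The paper proves this corollary by rerunning the argument of \Cref{thm:K*:PtoP1} with the two components interchanged, which is what you do: you apply \Cref{thm: decomposition} to $P^*=G_2G_1$ and $P_2^*=H_2G_1$, then use \Cref{lem:PtoP^2}, \Cref{lem:f_osc} with $R=G_1$, the bound $\mathrm{K}_2^*(v)\leqslant v$, and convexity, and in doing so you spell out the details (including why the adjoint rather than $P_2$ must be decomposed) that the paper leaves implicit.
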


    \Cref{thm:K*:PtoP1} provides a method for comparing $P^*P$ with $P_1^*P_1$, while the following lemma establishes a way of comparing $P^*P$ and $PP^*$. This describes the relationship of the rate between a chain and its adjoint chain in the subgeometric regime, enabling us to extend the comparison to $P^*P$ and $P_2^*P_2$.
    \begin{lemma}
    \label{Thm:K*:PtoP*}
        Suppose that kernel $TT^*$ satisfies the $K^*$-WPI with a given $\mathrm{K^*}$,
        Then the kernel $T^*T$ satisfies a $K^*$-WPI with $\tilde{\mathrm{K}}^*(v):=\mathrm{K}^*(v/2)$.
    \end{lemma}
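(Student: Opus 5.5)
The plan is to work with the self-adjoint operator $A:=T^*T$ on $\mathrm{L}^2(\mu)$ and to transfer the $K^*$-WPI for $TT^*$ to $T^*T$ by applying it to the function $Tf$. Fix $f\in\mathrm{L}^2_0(\mu)$ with $0<\|f\|_\mathrm{osc}<\infty$. Since $T$ is $\mu$-invariant, $\mu(Tf)=\mu(f)=0$, so $Tf\in\mathrm{L}^2_0(\mu)$. By \Cref{lem:f_osc} with $R=T$ (using $\|Tf\|_\mathrm{osc}\leqslant\|f\|_\mathrm{osc}$ and the monotonicity of $v\mapsto v^{-1}\mathrm{K}^*(v)$), we get
\begin{align*}
\mathrm{K}^*\left(\frac{\|Tf\|_\mu^2}{\|f\|^2_\mathrm{osc}}\right)\leqslant \frac{\mathcal{E}_\mu(TT^*,Tf)}{\|f\|^2_\mathrm{osc}}.
\end{align*}

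Next I would show that $\mathcal{E}_\mu(TT^*,Tf)\leqslant \mathcal{E}_\mu(T^*T,f)$, which I expect to be the only real step. Write $\mathcal{E}_\mu(TT^*,Tf)=\|Tf\|_\mu^2-\|T^*Tf\|_\mu^2=\langle Af,f\rangle_\mu-\|Af\|_\mu^2$ and $\mathcal{E}_\mu(T^*T,f)=\|f\|_\mu^2-\langle Af,f\rangle_\mu$. Their difference is
\begin{align*}
\mathcal{E}_\mu(T^*T,f)-\mathcal{E}_\mu(TT^*,Tf)=\|f\|_\mu^2-2\langle Af,f\rangle_\mu+\|Af\|_\mu^2=\|(I-A)f\|_\mu^2\geqslant 0,
\end{align*}
using only the self-adjointness of $A$. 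Combining this with the first display gives $\mathrm{K}^*\bigl(\|Tf\|_\mu^2/\|f\|^2_\mathrm{osc}\bigr)\leqslant e$, where $e:=\mathcal{E}_\mu(T^*T,f)/\|f\|^2_\mathrm{osc}$.

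Finally, set $v:=\|f\|_\mu^2/\|f\|^2_\mathrm{osc}$, and note $v\leqslant 1/4$ by Popoviciu's inequality. The identity $\|f\|_\mu^2=\mathcal{E}_\mu(T^*T,f)+\|Tf\|_\mu^2$ shows that either $e\geqslant v/2$ or $\|Tf\|_\mu^2/\|f\|^2_\mathrm{osc}\geqslant v/2$.

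In the first case, \Cref{thm:K*Properties} gives $\mathrm{K}^*(v/2)\leqslant v/2\leqslant e$, since $v/2\in[0,1/4]$. In the second case, because $\mathrm{K}^*$ is increasing, $\mathrm{K}^*(v/2)\leqslant\mathrm{K}^*\bigl(\|Tf\|_\mu^2/\|f\|^2_\mathrm{osc}\bigr)\leqslant e$. In both cases $\tilde{\mathrm{K}}^*(v)=\mathrm{K}^*(v/2)\leqslant e$, which is the claimed $K^*$-WPI for $T^*T$.
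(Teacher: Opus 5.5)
Your proof is correct and follows the paper's argument: you use the same key inequality $\mathcal{E}_\mu(TT^*,Tf)\leqslant\mathcal{E}_\mu(T^*T,f)$ from $\|(I-T^*T)f\|_\mu^2\geqslant 0$, combined with \Cref{lem:f_osc} for $R=T$ and the splitting $\|f\|_\mu^2=\mathcal{E}_\mu(T^*T,f)+\|Tf\|_\mu^2$. The only minor difference is the final step: the paper uses midpoint convexity of $\mathrm{K}^*$ together with $\mathrm{K}^*(e)\leqslant e$, whereas your case split relies on monotonicity of $\mathrm{K}^*$ and $\mathrm{K}^*(v/2)\leqslant v/2$ instead of convexity, and yields the same constant.
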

    \begin{proof}
        Observe that $0 \leqslant \|f-T^*Tf\|_{\Pi}^2 = \|f\|_{\Pi}^2-\langle T^*Tf,f \rangle_{\Pi} - \langle T^*Tf,f \rangle_{\Pi}+\|T^*Tf\|_{\Pi}^2$. For any real function $f$, this gives $\|f\|_{\Pi}^2-\langle T^*Tf,f \rangle_{\Pi} \geqslant \langle T^*Tf,f \rangle_{\Pi}-\|T^*Tf\|_{\Pi}^2$. Thus, 
        \begin{align*}
            \mathcal{E}_{\Pi}(T^*T,f)=\|f\|_{\Pi}^2-\langle T^*Tf,f \rangle_{\Pi} \geqslant \langle T^*Tf,f \rangle_{\Pi}-\|T^*Tf\|_{\Pi}^2=\mathcal{E}_{\Pi}(TT^*,Tf).
        \end{align*}
        
        Hence, combining with \Cref{lem:f_osc}, since $T$ is a Markov operator, we have
        \begin{align}
        \label{eq:T}
            \mathrm{K}^*\left(\frac{\|Tf\|_\Pi^2}{\|f\|^2_{\mathrm{osc}}}\right)\leqslant \frac{\mathcal{E}_{\Pi}(TT^*,Tf)}{\|f\|^2_{\mathrm{osc}}} \leqslant \frac{\mathcal{E}_{\Pi}(T^*T,f)}{\|f\|^2_{\mathrm{osc}}}.
        \end{align}
        Then we have
        \begin{align*}
            \begin{split}
                \mathrm{K}^*\left(\frac{1}{2} \cdot\frac{\|f\|_{\Pi}^2}{\|f\|^2_{\mathrm{osc}}}\right)&=\mathrm{K}^*\left(\frac{1}{2}\left( \frac{\|f\|_{\Pi}^2-\|Tf\|_{\Pi}^2}{\|f\|^2_{\mathrm{osc}}}+\frac{\|Tf\|_{\Pi}^2}{\|f\|^2_{\mathrm{osc}}}\right)\right)\\
                &\leqslant \frac{1}{2}\left( \mathrm{K}^*\left(\frac{\mathcal{E}_{\Pi}(T^*T,f)}{\|f\|^2_{\mathrm{osc}}}\right)+\mathrm{K}^*\left(\frac{\|Tf\|_{\Pi}^2}{\|f\|^2_{\mathrm{osc}}}\right)\right) \quad \text{by convexity}\\
                &\leqslant \frac{1}{2}\left( \frac{\mathcal{E}_{\Pi}(T^*T,f)}{\|f\|^2_{\mathrm{osc}}}+\frac{\mathcal{E}_{\Pi}(T^*T,f)}{\|f\|^2_{\mathrm{osc}}}\right) \quad \text{by $\mathrm{K}^*(v) \leqslant v$ and \eqref{eq:T}}\\
                &=\frac{\mathcal{E}_{\Pi}(T^*T,f)}{\|f\|^2_{\mathrm{osc}}}.
            \end{split}
        \end{align*}
        
        Then, the $K^*$-WPI for $T^*T$ is obtained with $\tilde{\mathrm{K}}^*(v)=\mathrm{K}^*(v/2)$.
    \end{proof}
    
    We can also derive a comparison between $T^*T$ and $TT^*$ in terms of the $\beta$-parameterized WPI; the proof is in \Cref{Prf:Thm:PtoP*}. 
    \begin{proposition}
    \label{Thm:PtoP*}
        Suppose that $TT^*$ satisfies a WPI with a given $\beta$.
    Then $T^*T$ satisfies a WPI with 
    \begin{equation*}
        \Tilde{\beta}(s)=\begin{cases} \beta(s-1), & s>1, \\ \frac{1}{4}, & 0<s\leqslant 1 .\end{cases}
    \end{equation*}
    \end{proposition}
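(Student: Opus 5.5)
We work directly with the $\beta$-parameterised inequality, reusing the two ingredients from the proof of \Cref{Thm:K*:PtoP*}. The first is the identity $\mathcal{E}_\Pi(T^*T,f)=\|f\|_\Pi^2-\|Tf\|_\Pi^2$. The second is the inequality $\mathcal{E}_\Pi(TT^*,Tf)\leqslant \mathcal{E}_\Pi(T^*T,f)$, which follows from expanding $\|f-T^*Tf\|_\Pi^2\geqslant 0$.

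Fix $f\in \mathrm{L}^2_0(\Pi)$ and note that $Tf\in \mathrm{L}^2_0(\Pi)$ by invariance. Apply the assumed WPI for $TT^*$ to $Tf$ with parameter $t>0$:
\begin{align*}
\|Tf\|_\Pi^2\leqslant t\,\mathcal{E}_\Pi(TT^*,Tf)+\beta(t)\|Tf\|_{\mathrm{osc}}^2\leqslant t\,\mathcal{E}_\Pi(T^*T,f)+\beta(t)\|f\|_{\mathrm{osc}}^2 .
\end{align*}
Here we used the comparison above together with $\|Tf\|_{\mathrm{osc}}\leqslant\|f\|_{\mathrm{osc}}$. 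Adding the identity $\|f\|_\Pi^2=\mathcal{E}_\Pi(T^*T,f)+\|Tf\|_\Pi^2$ then gives
\begin{align*}
\|f\|_\Pi^2\leqslant (1+t)\,\mathcal{E}_\Pi(T^*T,f)+\beta(t)\|f\|_{\mathrm{osc}}^2,\qquad t>0 .
\end{align*}
For $s>1$ we set $t=s-1$, which yields $\|f\|_\Pi^2\leqslant s\,\mathcal{E}_\Pi(T^*T,f)+\beta(s-1)\|f\|_{\mathrm{osc}}^2$.

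For $0<s\leqslant 1$ we instead use Popoviciu's inequality $\|f\|_\Pi^2\leqslant \|f\|_{\mathrm{osc}}^2/4$. Since the Dirichlet form is nonnegative, this gives $\|f\|_\Pi^2\leqslant s\,\mathcal{E}_\Pi(T^*T,f)+\tfrac14\|f\|^2_{\mathrm{osc}}$.

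It remains to check that $\tilde\beta$ is admissible in the sense of \Cref{def:WPI}, i.e. decreasing with limit $0$ at infinity. The limit is immediate from $\beta(s-1)\to 0$. Monotonicity holds on $(1,\infty)$ because $\beta$ is decreasing. Across $s=1$ we need $\beta(s-1)\leqslant 1/4$, which holds under the normalisation $\beta\leqslant 1/4$ that the paper notes can be assumed without loss of generality; otherwise one replaces $\beta$ by $\min(\beta,1/4)$, and the WPI for $TT^*$ still holds by Popoviciu. This monotonicity at the junction is the only delicate point; the rest is the same two inequalities already used in \Cref{Thm:K*:PtoP*}.
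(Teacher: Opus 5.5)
Your proposal is correct and follows the paper's proof essentially step for step. You apply the WPI for $TT^*$ to $Tf$, use $\mathcal{E}_\Pi(TT^*,Tf)\leqslant\mathcal{E}_\Pi(T^*T,f)$ and $\|Tf\|_{\mathrm{osc}}\leqslant\|f\|_{\mathrm{osc}}$, add $\|f\|_\Pi^2-\|Tf\|_\Pi^2=\mathcal{E}_\Pi(T^*T,f)$, substitute $s\mapsto s-1$, and cover $0<s\leqslant 1$ with Popoviciu; your explicit remark that monotonicity at $s=1$ needs $\beta\leqslant 1/4$ spells out a normalisation the paper uses only implicitly.
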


    \begin{remark}
         By \Cref{Thm:PtoP*}, although $\Tilde{\beta}(s)$ decays slightly slower than ${\beta}(s)$, they exhibit the same asymptotic tail behavior. Furthermore, \Cref{Thm:K*:PtoP*} indicates that the resulting $K^*$ function differs only by a constant factor in its argument, which is reflected in the scaling of the final convergence rate via \Cref{thm:ConstantOnBeta}.
    \end{remark}
    
    Therefore, to derive a WPI for $P_2^*P_2$, we proceed as follows. Starting from the WPI for $P^*P$ (\Cref{Asp:P}), we first obtain the WPI for $PP^*$ by applying \Cref{Thm:K*:PtoP*}. Next, we deduce a WPI comparison result between $PP^*$ and $P_2P_2^*$ by \Cref{thm:K*:P*toP2*}. Finally, applying \Cref{Thm:K*:PtoP*} once more yields the WPI for $P_2^*P_2$.
    
    \begin{theorem}
        \label{thm:K*:weakPtotildeP2}
        Under \Cref{Asp:P2} and \Cref{Asp:P}, the operator $P_2^*P_2$ satisfies a $K^*$-WPI with $K^*=2K_2^*\left(\frac{1}{2}K_0^*( v/4)\right)$, where $K^*_0$ and $K^*_2$ correspond to $\beta_0$ and $\beta_2=\int_\mathcal{Y} \beta_2(s,y)\Pi_Y(\dif y)$, respectively.
    \end{theorem}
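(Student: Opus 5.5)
The plan is to follow the four-step path of \Cref{fig:PandP2}, restricted to its first three steps: start from the WPI for $P^*P$, pass to $PP^*$, compare $PP^*$ with $P_2P_2^*$, and pass back to $P_2^*P_2$. Each step transforms the current $\mathrm{K}^*$ function explicitly, and the target function is obtained by composing these transformations.

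\textbf{Step 1 (adjoint swap).} Write \Cref{Asp:P} in optimized form: $\mathrm{K}_0^*(\|f\|_\Pi^2/\|f\|_\mathrm{osc}^2)\leqslant \mathcal{E}_\Pi(P^*P,f)/\|f\|_\mathrm{osc}^2$ for $f\in\mathrm{L}^2_0(\Pi)$, using \cite[Proposition 4]{andrieu2023weak}. Then apply \Cref{Thm:K*:PtoP*} with $T:=P^*=G_2G_1$. This operator is itself a $\Pi$-invariant Markov operator, so \Cref{lem:f_osc} applies inside that proof. With this choice $TT^*=P^*P$, so the lemma yields that $T^*T=PP^*$ satisfies a $K^*$-WPI with $v\mapsto \mathrm{K}_0^*(v/2)$.

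\textbf{Step 2 (replace $G_2$ by $H_2$).} Under \Cref{Asp:P2}, \Cref{lem:P1_to_tildeP1} gives the $\beta$-form comparison between $G_2$ and $H_2$ with $\beta_2(s)=\int\beta_2(s,y)\Pi_Y(\dif y)$. Exactly as in \cite[Proposition 4]{andrieu2023weak} (see \Cref{def:comparison WPI}), optimizing over $s$ turns this into the comparison $K^*$-WPI
\[
\mathrm{K}_2^*\bigl(\mathcal{E}_\Pi(G_2,f)/\|f\|^2_\mathrm{osc}\bigr)\leqslant \mathcal{E}_\Pi(H_2,f)/\|f\|^2_\mathrm{osc}.
\]
\Cref{thm:K*:P*toP2*} then gives
\[
2\mathrm{K}_2^*\bigl(\tfrac12\,\mathcal{E}_\Pi(PP^*,f)/\|f\|^2_\mathrm{osc}\bigr)\leqslant \mathcal{E}_\Pi(P_2P_2^*,f)/\|f\|^2_\mathrm{osc}.
\]
Since $\mathrm{K}_2^*$ is increasing by \Cref{thm:K*Properties}, we can insert the Step 1 bound $\mathcal{E}_\Pi(PP^*,f)/\|f\|^2_\mathrm{osc}\geqslant \mathrm{K}_0^*\bigl(\|f\|_\Pi^2/(2\|f\|^2_\mathrm{osc})\bigr)$ into the left-hand side. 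This chaining is the $K^*$-analogue of \Cref{Thm:chaining}. It shows that $P_2P_2^*$ satisfies a $K^*$-WPI with $\mathrm{K}_{(3)}^*(v):=2\mathrm{K}_2^*\bigl(\tfrac12\mathrm{K}_0^*(v/2)\bigr)$.

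\textbf{Step 3 (swap back).} Apply \Cref{Thm:K*:PtoP*} again, now with $T:=P_2=G_1H_2$, so that $TT^*=P_2P_2^*$. This yields a $K^*$-WPI for $P_2^*P_2$ with $\mathrm{K}_{(3)}^*(v/2)=2\mathrm{K}_2^*\bigl(\tfrac12\mathrm{K}_0^*(v/4)\bigr)$, which is the claimed function.

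\textbf{Main obstacle.} The only delicate point is that \Cref{Thm:K*:PtoP*} uses convexity of $\mathrm{K}^*$, the bound $\mathrm{K}^*(v)\leqslant v$ on $[0,1/4]$, and monotonicity of $v\mapsto v^{-1}\mathrm{K}^*(v)$ (via \Cref{lem:f_osc}). At Step 3 these properties must hold for the composite function $\mathrm{K}_{(3)}^*$, not just for a conjugate of some $\beta$.

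I would first try to identify $\mathrm{K}_{(3)}^*$ as the conjugate of a legitimate $\beta$ via the composition formula $\mathrm{K}^*=\mathrm{K}_2^*\circ\mathrm{K}_1^*$ of \Cref{Thm:chaining}, after rescaling with \Cref{thm:ConstantOnBeta}. Then \Cref{thm:K*Properties} applies directly. Failing that, I would verify the three properties by hand. Composition of increasing convex functions is convex, and the scalings $v\mapsto 2\mathrm{K}_2^*(v/2)$ and $v\mapsto\mathrm{K}_0^*(v/2)$ preserve convexity. The bound $\mathrm{K}_{(3)}^*(v)\leqslant v$ follows from $\mathrm{K}_0^*(v/2)\leqslant v/2\leqslant v$ and $\mathrm{K}_2^*(w)\leqslant w$. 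Increase of $v^{-1}\mathrm{K}_{(3)}^*(v)$ follows since a convex function vanishing at $0$ has increasing slope from the origin.
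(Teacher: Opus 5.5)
Your proposal is correct and follows the paper's own proof: pass from $P^*P$ to $PP^*$ by \Cref{Thm:K*:PtoP*} with $T=P^*$, compare $PP^*$ with $P_2P_2^*$ via \Cref{thm:K*:P*toP2*} chained through monotonicity of $\mathrm{K}_2^*$, and apply \Cref{Thm:K*:PtoP*} again with $T=P_2$. Your verification that the composite $\mathrm{K}^*_{(3)}$ is convex, vanishes at $0$, and satisfies $\mathrm{K}^*_{(3)}(v)\leqslant v$ spells out a point the paper uses without comment in the second application of \Cref{Thm:K*:PtoP*}.
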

    \begin{proof}
        The following WPIs can be obtained directly by \Cref{Asp:P} and \Cref{lem:P1_to_tildeP1},
        \begin{align*}
        \frac{\mathcal{E}_\Pi\left(P^*P, f\right)}{\|f\|^2_{\mathrm{osc}}} \geqslant K_0^*\left(\frac{\|f\|_\Pi^2}{\|f\|^2_{\mathrm{osc}}}\right),\quad 
        \frac{\mathcal{E}_\Pi(H_2,f)}{\|f\|^2_{\mathrm{osc}}}\geqslant K_2^*\left(\frac{\mathcal{E}_\Pi\left(G_2, f\right)}{\|f\|^2_{\mathrm{osc}}}\right),
    \end{align*}
    where $K_i^*(v)=\sup_{u\geqslant 0}\{uv - u \cdot \beta_i(1 / u)\}$, $i=0,2$, and $\beta_2(s)=\int_\mathcal{Y} \beta_2(s,y)\Pi_Y(\dif y)$.

    Then, we can obtain the following WPIs by \Cref{Thm:K*:PtoP*} and \Cref{thm:K*:P*toP2*}:
    \begin{align*}
        \frac{\mathcal{E}_\Pi\left(PP^*, f\right)}{\|f\|^2_{\mathrm{osc}}} & \geqslant K_0^*\left(\frac{1}{2}\cdot\frac{\|f\|_\Pi^2}{\|f\|^2_{\mathrm{osc}}}\right),\\
        \frac{\mathcal{E}_\Pi\left(P_2{P_2}^*, f\right)}{\|f\|^2_{\mathrm{osc}}} & \geqslant 2 \cdot K_2^*\left(\frac{1}{2}\cdot\frac{\mathcal{E}_\Pi\left(PP^*, f\right)}{\|f\|^2_{\mathrm{osc}}}\right).
    \end{align*}

    Then, by chaining rule \Cref{Thm:chaining}, we can derive the following $K^*$-WPI:
    \begin{align*}
        \frac{\mathcal{E}_\Pi\left(P_2{P_2}^*, f\right)}{\|f\|^2_{\mathrm{osc}}} & \geqslant 2\cdot K_2^*\left(\frac{1}{2}\cdot K^*_0 \left(\frac{1}{2}\cdot \frac{\|f\|_\Pi^2}{\|f\|^2_{\mathrm{osc}}}\right)\right).
    \end{align*}
    Next, we can obtain the following $K^*$-WPI by \Cref{Thm:K*:PtoP*}:
\begin{align*}
    \frac{\mathcal{E}_\Pi\left(P_2^*P_2, f\right)}{\|f\|^2_{\mathrm{osc}}} & \geqslant 2\cdot K_2^*\left(\frac{1}{2}\cdot K^*_0 \left(\frac{1}{4}\cdot \frac{\|f\|_\Pi^2}{\|f\|^2_{\mathrm{osc}}}\right)\right).
\end{align*}
    \end{proof} 

\begin{remark}
    We did not establish a relative WPI comparison between $P^*P$ and $P_2^*P_2$, so cannot directly establish a relative comparison result for their convergence bounds. However, the WPI framework still provides a way to derive the convergence bound of the 2-MG sampler, given that the convergence bound of the MG sampler is known.
\end{remark}

    \begin{theorem}
    \label{thm:weakPtotildeP}
        Under the \Cref{Asp:P1},\Cref{Asp:P2} and \Cref{Asp:P}, the operator $P_{1,2}^*P_{1,2}$ satisfies the $K^*$-WPI with $K^*=2K_1^*\left(K_2^*\left(\frac{1}{2}K_0^*(v/4)\right)\right)$, where $K^*_0$, $K^*_1$ and $K^*_2$ correspond to $\beta_0$, $\beta_1(s)=\int_\mathcal{X} \beta_1(s,x)\Pi_X(\dif x)$ and $\beta_2=\int_\mathcal{Y} \beta_2(s,y)\Pi_Y(\dif y)$, respectively.
    \end{theorem}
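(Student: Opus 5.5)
The plan is to complete Step 4 of \Cref{fig:PandP2}: take the WPI for $P_2^*P_2$ from \Cref{thm:K*:weakPtotildeP2} and chain it with the comparison between $P_2^*P_2$ and $P_{1,2}^*P_{1,2}$ from \Cref{thm:K*:P1tildeP2totildeP1tildeP2}. Throughout I write $\mathrm{osc}$-normalised quantities, e.g. $\mathcal{E}_\Pi(T,f)/\|f\|^2_{\mathrm{osc}}$.

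First, \Cref{thm:K*:weakPtotildeP2} applies under \Cref{Asp:P2} and \Cref{Asp:P}. It gives
\begin{align*}
\frac{\mathcal{E}_\Pi(P_2^*P_2,f)}{\|f\|^2_{\mathrm{osc}}}\geqslant 2K_2^*\left(\frac12 K_0^*\left(\frac14\cdot\frac{\|f\|_\Pi^2}{\|f\|^2_{\mathrm{osc}}}\right)\right),\qquad f\in \mathrm{L}^2_0(\Pi).
\end{align*}
Second, \Cref{lem:P1_to_tildeP1} under \Cref{Asp:P1} gives the comparison $K^*$-WPI between $G_1$ and $H_1$ with $K_1^*$. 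Here $K_1^*$ is the conjugate associated with $\beta_1(s)=\int_{\mathcal X}\beta_1(s,x)\Pi_X(\dif x)$, passing from $\beta$-form to $K^*$-form as in \Cref{def:WPI}. Feeding this into \Cref{thm:K*:P1tildeP2totildeP1tildeP2} yields
\begin{align*}
\frac{\mathcal{E}_\Pi(P_{1,2}^*P_{1,2},f)}{\|f\|^2_{\mathrm{osc}}}\geqslant 2K_1^*\left(\frac12\cdot\frac{\mathcal{E}_\Pi(P_2^*P_2,f)}{\|f\|^2_{\mathrm{osc}}}\right).
\end{align*}

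Third, I chain the two bounds. Since $K_1^*$ is increasing by \Cref{thm:K*Properties}, I can insert the first display into the second. This is the composition rule of \Cref{Thm:chaining}, but applied directly in $K^*$-form. The result is
\begin{align*}
\frac{\mathcal{E}_\Pi(P_{1,2}^*P_{1,2},f)}{\|f\|^2_{\mathrm{osc}}}\geqslant 2K_1^*\left(\frac12\cdot 2K_2^*\left(\frac12K_0^*\left(\frac v4\right)\right)\right)=2K_1^*\left(K_2^*\left(\frac12 K_0^*\left(\frac v4\right)\right)\right),
\end{align*}
with $v=\|f\|^2_\Pi/\|f\|^2_{\mathrm{osc}}$. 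This is exactly the claimed function. The factor $2$ from the second display cancels the $\frac12$ inside $K_1^*$, which explains the form of the statement.

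The main point to check is domain bookkeeping for the monotone substitution. $K_1^*$ and $K_2^*$ may take the value $+\infty$ outside their domains, so I need the arguments $\frac12 K_0^*(v/4)$ and $K_2^*(\cdot)$ to lie where these functions are finite and increasing. This holds because every argument is an $\mathrm{osc}$-normalised Dirichlet form or variance, hence at most $1/4$ (or a bounded constant), and on that range \Cref{thm:K*Properties}(3) gives $K^*(v)\leqslant v$. Beyond this, the proof is a direct composition of earlier results.
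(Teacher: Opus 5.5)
Your proposal is correct and matches the paper's proof: take the $K^*$-WPI for $P_2^*P_2$ from \Cref{thm:K*:weakPtotildeP2}, then insert it into the comparison $\mathcal{E}_\Pi(P_{1,2}^*P_{1,2},f)/\|f\|^2_{\mathrm{osc}}\geqslant 2K_1^*\bigl(\tfrac12\,\mathcal{E}_\Pi(P_2^*P_2,f)/\|f\|^2_{\mathrm{osc}}\bigr)$, which comes from \Cref{thm:K*:P1tildeP2totildeP1tildeP2} fed by \Cref{lem:P1_to_tildeP1}, using monotonicity of $K_1^*$. Your domain check is harmless but not really needed: $K_1^*$ is a supremum over $u\geqslant 0$ of nondecreasing affine maps $v\mapsto uv-K_1(u)$, so it is nondecreasing on all of $[0,\infty)$ as an extended-valued function, and the substitution is valid with no range restriction.
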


\begin{proof}
By \Cref{thm:K*:weakPtotildeP2} we can derive the following $K^*$-WPI for $P_2^*P_2$:
\begin{align*}
    \frac{\mathcal{E}_\Pi\left(P_2^*P_2, f\right)}{\|f\|^2_{\mathrm{osc}}} & \geqslant 2\cdot K_2^*\left(\frac{1}{2}\cdot K^*_0 \left(\frac{1}{4}\cdot \frac{\|f\|_\Pi^2}{\|f\|^2_{\mathrm{osc}}}\right)\right).
\end{align*}

The following $K^*$-WPI can be obtained directly by \Cref{lem:P1_to_tildeP1},
\begin{align*}
    \frac{\mathcal{E}_\Pi(H_1,f)}{\|f\|^2_{\mathrm{osc}}}&\geqslant K_1^*\left(\frac{\mathcal{E}_\Pi\left(G_1, f\right)}{\|f\|^2_{\mathrm{osc}}}\right),
\end{align*}
where $K_1^*(v)=\sup_{u\geqslant 0}\{uv - u \cdot \beta_1(1 / u)\}$, and $\beta_1(s)=\int_\mathcal{X} \beta_1(s,x)\Pi_X(\dif x)$.

Finally, by \Cref{thm:K*:P1tildeP2totildeP1tildeP2} and \Cref{Thm:chaining}, we can obtain the following $K^*$-WPIs:
\begin{align*}
    \frac{\mathcal{E}_\Pi\left(P_{1,2}^*P_{1,2}, f\right)}{\|f\|^2_{\mathrm{osc}}} & \geqslant 2\cdot K_1^*\left(\frac{1}{2}\cdot \frac{\mathcal{E}_\Pi\left(P_2^*P_2, f\right)}{\|f\|^2_{\mathrm{osc}}}\right),\\
    \frac{\mathcal{E}_\Pi\left(P_{1,2}^*P_{1,2}, f\right)}{\|f\|^2_{\mathrm{osc}}} & \geqslant 2\cdot K_1^*\left(K_2^*\left(\frac{1}{2}\cdot K^*_0 \left(\frac{1}{4}\cdot \frac{\|f\|_\Pi^2}{\|f\|^2_{\mathrm{osc}}}\right)\right)\right).
\end{align*}
\end{proof}

\begin{remark}
    \Cref{Thm:PtoP*}  demonstrates that, in the subgeometric case, the convergence behavior of the original kernel and its adjoint are comparable up to a constant. Consequently, reversing the order of comparison (namely, (Step 1) replace $G_1$ with its corresponding MwG version $H_1$; (Step 2) swap the order so that $G_2$ becomes the first component; (Step 3) replace $G_2$ with its corresponding MwG version $H_2$; and (Step 4) swap again to restore $H_1$ as the first component), the expression $2K_2^*\left(K_1^*\left(\frac{1}{2}K_0^*( v/4)\right)\right)$ can also be considered a valid choice for $K^*$. However, it is not immediately clear which formulation yields better performance in different scenarios.
\end{remark}

    It is straightforward to see that $K^*$ can be expressed as a composition of $K_2^*$, $K_1^*$, and $K_0^*$ up to constant factors. Thus we can obtain the rate function $F^{-1}$ via \Cref{thm:ConstantOnBeta}.

    If the exact kernel $P^*P$ satisfies a SPI as \Cref{Asp:P_SPI}, from \Cref{thm:strongPtoP^*}, we can obtain a stronger comparison as shown in \Cref{thm:strongPtotildeP}; the proofs of the two results are in \Cref{Prf:thm:strongPtoP^*}.
    \begin{assumption}
    \label{Asp:P_SPI}
        We assume that the kernel $P^*P$ satisfies a SPI: there is $\gamma>0$ with
        \begin{align*}
            \gamma \cdot\|f\|_\Pi^2 \leqslant \mathcal{E}_\Pi(P^*P, f), \quad \forall f \in \mathrm L_0^2(\Pi).
        \end{align*}
    \end{assumption}

    \begin{lemma}
    \label{thm:strongPtoP^*}
        Suppose \Cref{Asp:P_SPI} holds,
        then, the kernel $PP^*$ satisfies a SPI with the same constant.
    \end{lemma}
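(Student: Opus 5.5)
The plan is to use the fact that, for a Markov operator $T$ with $\Pi$-invariance, $T^*T$ and $TT^*$ are positive self-adjoint operators on $\mathrm{L}^2_0(\Pi)$. Moreover, $\|T\|=\|T^*\|$ on this space. An SPI for $T^*T$ with constant $\gamma$ is simply the statement
\begin{align*}
\|Tf\|_\Pi^2=\langle T^*Tf,f\rangle_\Pi \leqslant (1-\gamma)\|f\|_\Pi^2 \quad \text{for all } f\in \mathrm{L}^2_0(\Pi),
\end{align*}
since $\mathcal{E}_\Pi(T^*T,f)=\|f\|_\Pi^2-\|Tf\|_\Pi^2$. Similarly, an SPI for $TT^*$ reads $\|T^*f\|_\Pi^2\leqslant(1-\gamma)\|f\|_\Pi^2$. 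So the lemma reduces to showing that the operator norm of $P$ restricted to $\mathrm{L}^2_0(\Pi)$ equals that of $P^*$.

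First I check that both $P$ and $P^*$ map $\mathrm{L}^2_0(\Pi)$ into itself. This holds because $\Pi P=\Pi$, so $\Pi(Pf)=\Pi(f)=0$. For the adjoint, $\Pi(P^*f)=\langle P^*f,1\rangle_\Pi=\langle f,P1\rangle_\Pi=\Pi(f)=0$. Hence $P^*$ restricted to $\mathrm{L}^2_0(\Pi)$ is the Hilbert-space adjoint of $P$ restricted to $\mathrm{L}^2_0(\Pi)$.

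Next, \Cref{Asp:P_SPI} gives $\|P|_{\mathrm{L}^2_0}\|^2\leqslant 1-\gamma$. By the standard identity $\|A^*\|=\|A\|$ for bounded operators on a Hilbert space, we get $\|P^*f\|_\Pi^2\leqslant(1-\gamma)\|f\|_\Pi^2$ for $f\in\mathrm{L}^2_0(\Pi)$. To keep the argument self-contained, I will prove this directly. Write
\begin{align*}
\|P^*f\|_\Pi^2=\langle PP^*f,f\rangle_\Pi\leqslant \|P(P^*f)\|_\Pi\|f\|_\Pi\leqslant \sqrt{1-\gamma}\,\|P^*f\|_\Pi\|f\|_\Pi,
\end{align*}
using Cauchy--Schwarz together with the fact that $P^*f\in\mathrm{L}^2_0(\Pi)$. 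Dividing through by $\|P^*f\|_\Pi$ (the case where it is zero is trivial) and squaring gives the bound. Rewriting it as $\gamma\|f\|_\Pi^2\leqslant\|f\|_\Pi^2-\|P^*f\|_\Pi^2=\mathcal{E}_\Pi(PP^*,f)$ yields the SPI for $PP^*$ with the same constant.

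There is no real obstacle here. The only point needing care is the restriction to mean-zero functions, that is, verifying that $P^*$ preserves $\mathrm{L}^2_0(\Pi)$ so that the bound on $P$ can be applied to $P^*f$. Unlike \Cref{Thm:K*:PtoP*}, no loss of constant occurs, because the spectral-gap statement is a pure operator-norm statement with no oscillation term.
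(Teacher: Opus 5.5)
Your proposal is correct and follows the paper's argument. Both reduce the SPI to the operator-norm bound $\|Pf\|_\Pi^2\leqslant(1-\gamma)\|f\|_\Pi^2$ on $\mathrm{L}^2_0(\Pi)$ and transfer it to $P^*$ via $\|P\|=\|P^*\|$; your version also checks that $P^*$ preserves $\mathrm{L}^2_0(\Pi)$ and proves the norm identity directly by Cauchy--Schwarz, two points the paper leaves implicit.
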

    
    \begin{corollary}
        \label{thm:strongPtotildeP}
        Under the \Cref{Asp:P1},\Cref{Asp:P2} and \Cref{Asp:P_SPI}, the operator $P_{1,2}^*P_{1,2}$ satisfies the $K^*$-WPI with $K^*=2\cdot K_1^*\circ K_2^*\left(\frac{\gamma}{4}\cdot v\right)$, where $K^*_1$ and $K^*_2$ correspond to $\beta_1(s)=\int_\mathcal{X} \beta_1(s,x)\Pi_X(\dif x)$ and $\beta_2=\int_\mathcal{Y} \beta_2(s,y)\Pi_Y(\dif y)$, respectively.
    \end{corollary}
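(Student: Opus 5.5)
The plan is to rerun the four-step chain of \Cref{fig:PandP2} exactly as in the proofs of \Cref{thm:K*:weakPtotildeP2} and \Cref{thm:weakPtotildeP}. The only change is that the WPI for $P^*P$ and $PP^*$ is now a standard Poincar\'e inequality, so the function $K_0^*$ becomes linear.

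\textbf{Steps 1 and 2.} By \Cref{thm:strongPtoP^*}, $PP^*$ satisfies $\gamma\|f\|_\Pi^2\leqslant \mathcal{E}_\Pi(PP^*,f)$. Unlike \Cref{Thm:K*:PtoP*}, this loses no factor $1/2$ in the argument; that is the gain over the weak case. In $K^*$-form this reads $\mathcal{E}_\Pi(PP^*,f)/\|f\|^2_{\mathrm{osc}}\geqslant \gamma\|f\|^2_\Pi/\|f\|^2_{\mathrm{osc}}$. By \Cref{lem:P1_to_tildeP1} we have the comparison $K^*$-WPI between $G_2$ and $H_2$ with $K_2^*$. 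Then \Cref{thm:K*:P*toP2*} gives
\[
\frac{\mathcal{E}_\Pi(P_2P_2^*,f)}{\|f\|^2_{\mathrm{osc}}}\geqslant 2K_2^*\Big(\tfrac12\cdot\frac{\mathcal{E}_\Pi(PP^*,f)}{\|f\|^2_{\mathrm{osc}}}\Big)\geqslant 2K_2^*\Big(\tfrac{\gamma}{2}\cdot\frac{\|f\|^2_\Pi}{\|f\|^2_{\mathrm{osc}}}\Big).
\]
The second inequality uses only that $K_2^*$ is increasing (\Cref{thm:K*Properties}). This is the chaining step of \Cref{Thm:chaining} in its simplest form, with a linear inner function.

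\textbf{Step 3.} Set $\mathrm{K}^*(v):=2K_2^*(\gamma v/2)$; this is the function for which $P_2P_2^*$ satisfies a $K^*$-WPI. To apply \Cref{Thm:K*:PtoP*} with $T=P_2$, I would check that its proof used only two properties of $\mathrm{K}^*$: convexity and $\mathrm{K}^*(v)\leqslant v$ on the relevant range. Convexity is inherited from $K_2^*$. The bound $2K_2^*(\gamma v/2)\leqslant \gamma v\leqslant v$ follows from $K_2^*(w)\leqslant w$ and $\gamma\leqslant 1$, where $\gamma\leqslant1$ holds because $\mathcal{E}_\Pi(P^*P,f)\leqslant\|f\|^2_\Pi$. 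Granting this, $P_2^*P_2$ satisfies the $K^*$-WPI with $2K_2^*(\gamma v/4)$.

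\textbf{Step 4.} By \Cref{lem:P1_to_tildeP1} and \Cref{thm:K*:P1tildeP2totildeP1tildeP2}, we have
\[
\mathcal{E}_\Pi(P_{1,2}^*P_{1,2},f)/\|f\|^2_{\mathrm{osc}}\geqslant 2K_1^*\big(\tfrac12\mathcal{E}_\Pi(P_2^*P_2,f)/\|f\|^2_{\mathrm{osc}}\big).
\]
Substituting the bound from Step 3 and using monotonicity of $K_1^*$ gives
\[
2K_1^*\Big(\tfrac12\cdot 2K_2^*\big(\tfrac{\gamma}{4}v\big)\Big)=2K_1^*\circ K_2^*\big(\tfrac{\gamma}{4}v\big),
\]
with $v=\|f\|^2_\Pi/\|f\|^2_{\mathrm{osc}}$. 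This is the claimed function.

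The main obstacle is Step 3. \Cref{Thm:K*:PtoP*} is stated for a $K^*$ arising from a $\beta$, and I must confirm that the composite $2K_2^*(\gamma\,\cdot/2)$ has the two properties its proof uses. If $\gamma\leqslant1$ were not directly available, I would instead bound $\|Tf\|^2_\Pi\leqslant \|f\|^2_\Pi$ with $\mathrm{K}^*(v)\leqslant \gamma v$ and redo the convexity split by hand. Alternatively, I would note that the composite is itself the $K^*$ of the chained $\beta$ from \Cref{Thm:chaining}, so \Cref{thm:K*Properties} applies to it.
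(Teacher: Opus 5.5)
Your proposal is correct and follows the paper's proof step for step. It obtains the SPI for $PP^*$ via \Cref{thm:strongPtoP^*}, then $2K_2^*(\gamma v/2)$ for $P_2P_2^*$ via \Cref{thm:K*:P*toP2*}, then $2K_2^*(\gamma v/4)$ for $P_2^*P_2$ via \Cref{Thm:K*:PtoP*}, and finally \Cref{thm:K*:P1tildeP2totildeP1tildeP2} combined with monotonicity of $K_1^*$. Your Step 3 check is a detail the paper skips; note that the proof of \Cref{Thm:K*:PtoP*} also uses, through \Cref{lem:f_osc}, that $v\mapsto v^{-1}\mathrm{K}^*(v)$ is increasing, which holds for $2K_2^*(\gamma v/2)$ because it is convex and vanishes at $0$ (or by your fallback remark).
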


\subsection{Marginal MwG sampler with one intractable conditional}

In certain marginal settings, it is possible to sample the first component exactly from its conditional distribution while approximating the second component using an MCMC scheme. The results from the previous section introduce additional constant factors, potentially leading to loose bounds on the convergence rate. In this section, we present an alternative approach to obtain the WPI for the 2-MG sampler, aiming to provide a sharper characterization of its convergence behavior.

We consider the relationship between the kernel of DG sampler $P^*P$ and its $X$-marginal kernel $P_X^*P_X$ as defined in \eqref{eq:marginalPX}.

\begin{theorem}
\label{thm:PtoP_X}
    Suppose that the kernel $P^*P$ satisfies a weak Poincar{\'{e}} inequality with function $\beta:(0,\infty)\rightarrow(0,\infty)$;
    \begin{align*}
         \|f\|_\Pi^2 & \leqslant s \cdot \mathcal{E}_\Pi\left(P^*P, f\right)+\beta(s) \cdot \|f\|^2_\mathrm{osc}\quad \forall s>0, f \in \mathrm{L}^2_0(\Pi).
    \end{align*}
    The the kernel $P_X^* P_X$ satisfies a WPI with the same function $\beta$:
    \begin{align}
        \|g\|_{\Pi_X}^2 & \leqslant s \cdot \mathcal{E}_{\Pi_X}\left(P_X^*P_X, g\right)+\beta(s) \cdot \|g\|^2_\mathrm{osc},\quad \forall s>0,g \in \mathrm{L}^2_0(\Pi_X).
        \label{eq:WPI_PX}
    \end{align}
\end{theorem}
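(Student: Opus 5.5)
The plan is to embed functions of $x$ into the joint space and read off the WPI for $P_X^*P_X$ directly from the one assumed for $P^*P$. Given $g \in \mathrm{L}^2_0(\Pi_X)$, define $f(x,y):=g(x)$, viewed as a function on $\mathcal X\times\mathcal Y$. Since the $X$-marginal of $\Pi$ is $\Pi_X$, we have $\Pi(f)=\Pi_X(g)=0$ and $\|f\|_\Pi^2=\|g\|^2_{\Pi_X}$. Moreover, $\|f\|_\mathrm{osc}=\|g\|_\mathrm{osc}$, because the essential supremum and infimum of $f$ under $\Pi$ coincide with those of $g$ under $\Pi_X$: the preimage sets are cylinders $g^{-1}(A)\times\mathcal Y$. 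It then remains to show that $\mathcal{E}_\Pi(P^*P,f)=\mathcal{E}_{\Pi_X}(P_X^*P_X,g)$, after which \eqref{eq:WPI_PX} is exactly the assumed inequality applied to $f$.

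For the Dirichlet form identity, I will use $\mathcal{E}_\mu(T^*T,h)=\|h\|^2_\mu-\|Th\|^2_\mu$, as in the proof of \Cref{thm: decomposition}. This reduces the claim to $\|Pf\|_\Pi^2=\|P_Xg\|_{\Pi_X}^2$. Write $Pf=G_1G_2f$, and compute $G_2f(x,y)=\int g(x')\Pi_{X|Y}(\dif x'|y)=:\phi(y)$, which is a function of $y$ alone. Then $G_1 G_2 f(x,y)=\int \phi(y')\Pi_{Y|X}(\dif y'|x)=\int\int g(x')\Pi_{X|Y}(\dif x'|y')\Pi_{Y|X}(\dif y'|x)$. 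By \eqref{eq:marginalPX} this equals $(P_Xg)(x)$, a function of $x$ only. Hence $\|Pf\|_\Pi^2=\int (P_Xg)(x)^2\,\Pi_X(\dif x)=\|P_Xg\|^2_{\Pi_X}$, and therefore
\begin{align*}
\mathcal{E}_\Pi(P^*P,f)=\|g\|^2_{\Pi_X}-\|P_Xg\|^2_{\Pi_X}=\mathcal{E}_{\Pi_X}(P_X^*P_X,g),
\end{align*}
where the last equality uses that $P_X$ is $\Pi_X$-invariant (indeed reversible). Substituting these three identities into the WPI for $P^*P$ yields \eqref{eq:WPI_PX} with the same $\beta$ for every $s>0$.

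There is no real obstacle here. The only points needing care are the measure-theoretic identification of the oscillation seminorms and the use of Fubini to justify the iterated-integral form of $Pf$. Both are routine given that $\mathcal X,\mathcal Y$ are Polish, so that the disintegration $\Pi(\dif x,\dif y)=\Pi_X(\dif x)\Pi_{Y|X}(\dif y|x)$ is available.
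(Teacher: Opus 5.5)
Your proof is correct and follows the paper's route: you embed $g$ as $f_g(x,y)=g(x)$ and check that the $\mathrm{L}^2$ norm, the oscillation seminorm and the Dirichlet form are all preserved, so the WPI for $P^*P$ transfers with the same $\beta$. The only minor difference is how you prove $\mathcal{E}_\Pi(P^*P,f_g)=\mathcal{E}_{\Pi_X}(P_X^*P_X,g)$. You use $\mathcal{E}(T^*T,h)=\|h\|^2-\|Th\|^2$ together with $Pf_g=P_Xg$, an identity the paper itself uses in the proof of \Cref{thm:P_XtoP}. The paper instead expands the kernel $P^*P=G_2G_1G_2$ and integrates out the $y$-variables, which also requires the reversibility of $P_X$ to identify $P_X^2$ with $P_X^*P_X$; your version avoids that step.
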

\begin{proof}
     Let $g \in \mathrm{L}^2_0 (\Pi_X)$ and $f_g \in \mathrm{L}^2_0 (\Pi)$ be such that $f_g(x, y) = g(x)$. It is clear that $\|f_g\|^2_\mathrm{osc}=\|g\|^2_\mathrm{osc}$. 
    We now show that $\|f\|^2_{\Pi}=\|g\|^2_{\Pi_X}$: 
    \begin{align*}
    \begin{split}
        \|f_g\|^2_{\Pi}&=\frac{1}{2}\int(f_g(x,y)-f_g(x',y'))^2\Pi(\dif x,\dif y)\Pi(\dif x',\dif y')\\
        &=\frac{1}{2}\int(g(x)-g(x'))^2\Pi_X(\dif x)\Pi_X(\dif x')\\
        &=\|g\|^2_{\Pi_X}.
    \end{split}
    \end{align*}
    Then we show $\mathcal{E}_\Pi(P^*P,f_g)=\mathcal{E}_{\Pi_{X}}(P_{X}^*P_{X},g).$ Since $G_1$ and $G_2$ are projections, we have $P^*=G_2G_1$ and $P^*P=G_2G_1G_2$.
    \begin{align*}
    \begin{split}
        \mathcal{E}_\Pi(P^*P,f_g)&=\frac{1}{2}\int(f_g(x,y)-f_g(x',y'))^2P^*P((x,y),(\dif x',\dif y'))\Pi(\dif x,\dif y)\\
        &=\frac{1}{2}\int(g(x)-g(x'))^2 \Pi_{X|Y}(\dif x'|y')\Pi_{Y|X}(\dif y'|x'')\Pi_{X|Y}(\dif x''|y)\Pi(\dif x,\dif y)\\
        &=\frac{1}{2}\int(g(x)-g(x'))^2 \int_\mathcal{Y}\Pi_{X|Y}(\dif x'|y')\Pi_{Y|X}(\dif y'|x'')\int_\mathcal{Y}\Pi_{X|Y}(\dif x''|y)\Pi_{Y|X}(\dif y|x)
        \Pi_X(\dif x)\\
        &=\frac{1}{2}\int(g(x)-g(x'))^2 P_X^*P_X(x,\dif x')
        \Pi_X(\dif x)\\
        &=\mathcal{E}_{\Pi_{X}}(P_{X}^*P_{X},g).\\
    \end{split}
    \end{align*}
    
    This implies for any $g \in \mathrm{L}^2_0(\Pi_X)$, the WPI \eqref{eq:WPI_PX} holds with the same $\beta$.
\end{proof}

Then by \Cref{Thm:rate}, we can derive $F^{-1}(n)$ to bound the convergence of $P_X^*P_X$. This implies the convergence for the $X$-marginal chain $P_{X}$ is as at least fast as the joint chain $P$.

We now determine how to bound the convergence of the joint chain in terms of that of the $X$-marginal chain.
\begin{theorem}
\label{thm:P_XtoP}
    Suppose that kernel $P_X^*P_X$ satisfies a weak Poincar{\'{e}} inequality;
     \begin{align*}
        \|g\|_{\Pi_X}^2 & \leqslant s \cdot \mathcal{E}_{\Pi_X}\left(P_X^*P_X, g\right)+\beta(s) \cdot \|g\|^2_\mathrm{osc},\quad \forall s>0, g \in \mathrm{L}^2_0(\Pi_X).
    \end{align*}
    Then, for any $f \in \mathrm{L}^2_0(\Pi)$ such that $0 < \|f\|_\mathrm{osc} < \infty$, the $K^*$-WPI holds for every $Pf$:
    \begin{align*}
        \frac{\mathcal{E}_{\Pi}(P^*P,Pf)}{\|Pf\|^2_\mathrm{osc}}\geqslant K^*\left(\frac{\|Pf\|_{\Pi}^2}{\|Pf\|^2_\mathrm{osc}}\right),
    \end{align*}
    where $K^*$ is as in \Cref{def:WPI}. 
    Hence, for any $n \in \mathbb{N_+} \backslash \{1\}$,
    \begin{align*}
    \|P^nf \|^2_{\Pi} \leqslant \|f\|^2_\mathrm{osc} \cdot F^{-1}(n-1),
    \end{align*}
    where $F$ is as in \Cref{Thm:rate}.
\end{theorem}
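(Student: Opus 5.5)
The plan is to exploit the fact that $Pf = G_1G_2 f$ depends only on $x$. Indeed, $G_1 h(x,y)=\int h(x,y')\Pi_{Y|X}(\dif y'|x)$ is a function of $x$ alone. So for $f\in\mathrm{L}^2_0(\Pi)$ we may write $Pf = f_g$ with $g(x) := (G_1G_2f)(x)$, and $g\in \mathrm{L}^2_0(\Pi_X)$ because $\Pi(Pf)=\Pi(f)=0$. Then $\|Pf\|_\Pi^2=\|g\|^2_{\Pi_X}$ and $\|Pf\|_\mathrm{osc}=\|g\|_\mathrm{osc}$, the latter since $\Pi$-null sets of functions of $x$ correspond to $\Pi_X$-null sets.

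The key identity is $\mathcal{E}_\Pi(P^*P,f_g)=\mathcal{E}_{\Pi_X}(P_X^*P_X,g)$. This was already computed in the proof of \Cref{thm:PtoP_X}, using $P^*P=G_2G_1G_2$ and integrating out the $y$-variables to recover $P_X^*P_X(x,\dif x')\Pi_X(\dif x)$; I will simply reuse it with $g = G_1G_2 f$. Next, the assumed WPI for $P_X^*P_X$ is equivalent, via \cite[Proposition 4]{andrieu2023weak}, to the optimized $K^*$-WPI
\[
\mathrm{K}^*\!\left(\frac{\|g\|_{\Pi_X}^2}{\|g\|^2_\mathrm{osc}}\right)\leqslant \frac{\mathcal{E}_{\Pi_X}(P_X^*P_X,g)}{\|g\|^2_\mathrm{osc}} .
\]
Substituting the three identities gives the claimed $K^*$-WPI for $Pf$. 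This holds provided $\|Pf\|_\mathrm{osc}>0$; if $\|Pf\|_\mathrm{osc}=0$ then $Pf=0$ and everything is trivial.

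For the rate, the same reasoning applies to every iterate $P^{k}f$ with $k\geqslant 1$, since $P^k f = P(P^{k-1}f)$ is again a function of $x$ alone. Hence the $K^*$-WPI holds along the trajectory from time $1$ onwards. I then rerun the argument of \Cref{Thm:rate} (from \cite[Theorem 8]{andrieu2022comparison}) starting at $h=Pf$. Write $a_k=\|P^{k}h\|^2_\Pi/\|f\|_\mathrm{osc}^2$. Then
\[
a_k-a_{k+1}=\mathcal{E}_\Pi(P^*P,P^kh)/\|f\|^2_\mathrm{osc}\geqslant \mathrm{K}^*(a_k).
\]
Here I pass from $\|P^kh\|_\mathrm{osc}$ to the larger $\|f\|_\mathrm{osc}$ using \Cref{lem:f_osc}, or directly the monotonicity of $v\mapsto v^{-1}\mathrm{K}^*(v)$ from \Cref{thm:K*Properties}. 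The standard comparison with the ODE $\dot a=-\mathrm{K}^*(a)$, together with $a_0\leqslant 1/4$ from Popoviciu's inequality, gives $a_m\leqslant F^{-1}(m)$. Taking $m=n-1$ yields $\|P^nf\|^2_\Pi\leqslant \|f\|^2_\mathrm{osc}F^{-1}(n-1)$.

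The main obstacle is the last step. \Cref{Thm:rate} is stated under a global WPI, whereas here the inequality is known only on the invariant subspace of $x$-measurable functions, so I must check that its proof uses the WPI only along the iterates. It does: the proof is a discrete Grönwall argument on $a_k$. The point to verify is that the osc normalisation is handled via $\|P^k h\|_\mathrm{osc}\leqslant\|f\|_\mathrm{osc}$, and that the convexity of $F$ gives $F(a_{k+1})\geqslant F(a_k)+1$.
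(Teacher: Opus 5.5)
Your proposal is correct and is essentially the paper's proof. You identify $Pf$ with the $x$-function $g_f$ and transfer the optimised WPI through $\mathcal{E}_{\Pi}(P^*P,Pf)=\mathcal{E}_{\Pi_X}(P_X^*P_X,g_f)$. You then run the convexity argument for $F$ along the iterates $P^{k}f$, $k\geqslant 1$, normalising by $\|f\|^2_{\mathrm{osc}}$ as in \Cref{lem:f_osc}, with $F(\|Pf\|^2_\Pi/\|f\|^2_{\mathrm{osc}})\geqslant 0$ giving the shift to $n-1$. The only differences are cosmetic: the paper obtains that identity from $P_Xg_f=P^2f$ rather than by reusing the computation in \Cref{thm:PtoP_X}, and your treatment of the case $\|Pf\|_{\mathrm{osc}}=0$ and of the WPI being needed only on $x$-measurable iterates makes explicit what the paper leaves implicit.
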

\begin{proof}
    Let $f\in \mathrm{L}^2_0 (\Pi)$ and  $g_f(x)=(Pf)(x,y)$, then it is easy to verify that $g_f\in \mathrm{L}^2_0 (\Pi_X)$, and $\|g_f\|_{\Pi_X}^2=\|Pf\|_{\Pi}^2$. For $(x, y) \in \mathcal{X}\times\mathcal{Y}$, We first demonstrate a crucial transfer that establishes the relationship between $Pf (x, y)$ and $ P_{X} g_f (x)$.
    \begin{align*}
        P_{X} g_f (x)&=\int_\mathcal{X} g_f (x') \int_\mathcal{Y} \Pi_{X|Y}(\dif x'|y)\Pi_{Y|X}(\dif y|x)\\
        &=\int_\mathcal{X} Pf(x',y') \int_\mathcal{Y} \Pi_{X|Y}(\dif x'|y)\Pi_{Y|X}(\dif y|x)\\
        &=\int_\mathcal{X} \int_{\mathcal{X},\mathcal{Y}} f(x'',y'')\Pi_{X|Y}(\dif x''|y'')\Pi_{Y|X}(\dif y''|x') \int_\mathcal{Y} \Pi_{X|Y}(\dif x'|y)\Pi_{Y|X}(\dif y|x)\\
        &=\int_{\mathcal{X},\mathcal{Y}}\left[\int_{\mathcal{X},\mathcal{Y}} f(x'',y'')\Pi_{X|Y}(\dif x''|y'')\Pi_{Y|X}(\dif y''|x')\right] \Pi_{X|Y}(\dif x'|y)\Pi_{Y|X}(\dif y|x)\\
        &=P^2f(x,y).
    \end{align*}
    Hence, based on $P_{X} g_f (x)=P^2f(x,y)$, it is easy to check $\|P_{X}g_f\|_{\Pi_X}^2=\|P^2f\|_{\Pi}^2$. On the other hand, we have $\|g_f\|_{\Pi_X}^2=\|Pf\|_{\Pi}^2$.
    Therefore, 
    \begin{align*}
    \begin{split}
        \mathcal{E}_{\Pi_X}(P_{X}^*P_{X},g_f)&=\|g_f\|_{\Pi_X}^2-\|P_{X}g_f\|_{\Pi_X}^2\\
        &=\|Pf\|_{\Pi}^2-\|P^2f\|_{\Pi}^2\\
        &=\mathcal{E}_{\Pi}(P^*P,Pf).
    \end{split}
    \end{align*}
    Hence for $g_f$, the $K^*$-WPI shows: $\frac{\mathcal{E}_{\Pi_X}(P_{X}^*P_{X},g_f)}{\|g_f\|^2_\mathrm{osc}}\geqslant K^*\left(\frac{\|g_f\|_{\Pi_X}^2}{\|g_f\|^2_\mathrm{osc}}\right)$, that is equivalent to $\frac{\mathcal{E}_{\Pi}(P^*P,Pf)}{\|Pf\|^2_\mathrm{osc}}\geqslant K^*\left(\frac{\|Pf\|_{\Pi}^2}{\|Pf\|^2_\mathrm{osc}}\right)$, where $K^*(v)$ corresponds to $\beta(s)$.

    Since $P^{n-2}$ ia a Markov operator for $n \geqslant 2$, by \Cref{lem:f_osc}, we have
    \begin{align*}
    \frac{\mathcal{E}_\Pi\left(P^* P, P^{n-1} f\right)}{\|f\|^2_\mathrm{osc}} \geqslant K^*\left(\frac{\left\|P^{n-1} f\right\|_\Pi^2}{\|f\|^2_\mathrm{osc}}\right) .
    \end{align*}

    Since by \Cref{thm:K*Properties}, $v \rightarrow 1/K^*(v)$ is decreasing and $k \rightarrow \|P^kf\|_\Pi^2$ decreasing, for $n \geqslant 2$,
    \begin{align*}
    \begin{split}
        F\left(\left\|P^n f\right\|_\Pi^2 / \|f\|^2_\mathrm{osc}\right)-F\left(\left\|P^{n-1} f\right\|_\Pi^2 / \|f\|^2_\mathrm{osc}\right) 
        & =\int_{\left\|P^n f\right\|_\Pi^2 / \|f\|^2_\mathrm{osc}}^{\left\|P^{n-1} f\right\|_\Pi^2 / \|f\|^2_\mathrm{osc}} 1 / K^*(v) \mathrm{d} v \\
        & \geqslant \frac{\left\|P^{n-1} f\right\|_\Pi^2-\left\|P^n f\right\|_\Pi^2}{K^*\left(\left\|P^{n-1} f\right\|_\Pi^2 / \|f\|^2_\mathrm{osc}\right) \|f\|^2_\mathrm{osc}} \\
        & =\frac{\mathcal{E}_\Pi\left(P^* P, P^{n-1} f\right) / \|f\|^2_\mathrm{osc}}{K^*\left(\left\|P^{n-1} f\right\|_\Pi^2 / \|f\|^2_\mathrm{osc}\right)} \geqslant 1,
    \end{split}
    \end{align*}
    where $F(x):=\int_{x}^{\frac{1}{4}} \frac{\dif v}{K^*(v)}$. As a result, for $n \geqslant 2$, it holds that
    \begin{align*}
        F\left(\left\|P^n f\right\|_\Pi^2 / \|f\|^2_\mathrm{osc}\right)-F\left(\left\|P f\right\|_\Pi^2 / \|f\|^2_\mathrm{osc}\right) \geqslant n-1,
    \end{align*}
    from which combining with $F^{-1}$ is strictly decreasing, we obtain
    \begin{align*}
    \begin{split}
        \left\|P^n f\right\|_\Pi^2  &\leqslant \|f\|^2_\mathrm{osc} \cdot F^{-1}\left(F\left(\left\|P f\right\|_\Pi^2 / \|f\|^2_\mathrm{osc}\right)+n-1\right)\\
        &\leqslant \|f\|^2_\mathrm{osc} \cdot F^{-1}\left(n-1\right).
    \end{split}
    \end{align*}
\end{proof}
This shows the convergence bound of $P$ can be obtained by the convergence bound of $P_X$ directly. Moreover, even though the convergence rate of $P$ is slower than $P_X$, they asymptotically converge at a comparable rate.  This result coincides with Proposition 7.1 in \cite{ascolani2025mixing}, which is expressed through entropy contraction.

\begin{remark}
    If the convergence rate of either the joint chain or the marginal chain is known, \Cref{thm:PtoP_X} and \Cref{thm:P_XtoP} offer a framework for deriving the convergence rate of the other. Specifically, in the case of exponential convergence, if the joint chain $P$ exhibits exponential convergence, then the marginal chain $P_X$ will also converge at the same exponential rate. Conversely, if the marginal chain $P_X$ converges exponentially, the joint chain $P$ will do so as well. Moreover, the convergence rates will be preserved between the joint and marginal chains, up to a possible multiplicative constant in \eqref{equ:exp}. In fact, the study of exponential convergence for DG samplers is well-established, and similar conclusions based on the operators' norm can be found in Theorem 3.2 in \cite{liu1994covariance}; see also Lemma 3.2 in \cite{qin2022convergence}. \Cref{thm:PtoP_X} and \Cref{thm:P_XtoP} extend this analysis to subgeometric cases, where the operators' norms may not be applicable.
\end{remark}

Similarly, we have the same argument between 2-MG sampler kernel $P_2^*P_2$ and its $X$-marginal kernel $\bar{P}_{X}^*\bar{P}_{X}$, from the following propositions, proven in \Cref{Prf:thm:ApproxPtoP_X}.
\begin{proposition}
\label{thm:ApproxPtoP_X}
    Suppose that kernel $P_2^*P_2$ satisfies a WPI with function $\beta$, 
    Then, the kernel $\bar{P}_{X}^*\bar{P}_{X}$ satisfies a WPI with the same $\beta$.
\end{proposition}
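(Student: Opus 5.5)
The plan is to mirror the proof of \Cref{thm:PtoP_X}: lift functions on $\mathcal X$ to functions on the joint space. Then check that the three quantities appearing in the WPI (variance, Dirichlet form, oscillation) agree exactly for $g$ and its lift. The WPI for $P_2^*P_2$ then transfers to $\bar P_X^*\bar P_X$ with the same $\beta$.

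\textbf{Step 1 (lifting).} Fix $g\in\mathrm{L}^2_0(\Pi_X)$ and set $f_g(x,y):=g(x)$. Since the $X$-marginal of $\Pi$ is $\Pi_X$, we have $f_g\in\mathrm{L}^2_0(\Pi)$ and $\|f_g\|_\Pi^2=\|g\|_{\Pi_X}^2$. The essential supremum and infimum of $f_g$ under $\Pi$ coincide with those of $g$ under $\Pi_X$, because $\Pi(f_g^{-1}(A))=\Pi_X(g^{-1}(A))$. Hence $\|f_g\|_{\mathrm{osc}}=\|g\|_{\mathrm{osc}}$.

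\textbf{Step 2 (key identity).} Instead of expanding $P_2^*P_2$ as an integral kernel, use $\mathcal{E}_\Pi(P_2^*P_2,f)=\|f\|_\Pi^2-\|P_2f\|_\Pi^2$, which holds because $P_2$ is $\Pi$-invariant. It therefore suffices to identify $P_2 f_g$. Writing $P_2=G_1H_2$:
\begin{align*}
H_2f_g(x,y)&=\int_{\mathcal X} g(x')H_{2|y}(\dif x'|x),\\
P_2f_g(x,y)&=\int_{\mathcal Y}\int_{\mathcal X} g(x')H_{2|y'}(\dif x'|x)\Pi_{Y|X}(\dif y'|x)=\bar P_X g(x).
\end{align*}
So $P_2f_g$ depends on $x$ only, and equals $\bar P_Xg$ as defined in \eqref{eq:marginalPX}. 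Consequently $\|P_2f_g\|_\Pi^2=\|\bar P_Xg\|_{\Pi_X}^2$.

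Next, $\bar P_X$ is $\Pi_X$-invariant. This follows from the $\Pi_{X|Y}(\cdot|y)$-invariance of $H_{2|y}$ together with $\Pi_X(\dif x)\Pi_{Y|X}(\dif y|x)=\Pi_Y(\dif y)\Pi_{X|Y}(\dif x|y)$. Invariance gives $\mathcal{E}_{\Pi_X}(\bar P_X^*\bar P_X,g)=\|g\|_{\Pi_X}^2-\|\bar P_Xg\|_{\Pi_X}^2$. Combining the two identities yields $\mathcal{E}_{\Pi}(P_2^*P_2,f_g)=\mathcal{E}_{\Pi_X}(\bar P_X^*\bar P_X,g)$.

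\textbf{Step 3 (conclusion).} Apply the assumed WPI for $P_2^*P_2$ to $f_g$ and substitute the three equalities. This gives, for all $s>0$,
\begin{align*}
\|g\|_{\Pi_X}^2\leqslant s\,\mathcal{E}_{\Pi_X}(\bar P_X^*\bar P_X,g)+\beta(s)\|g\|_{\mathrm{osc}}^2,
\end{align*}
which is the claimed WPI with the same $\beta$.

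The only step requiring care is Step 2. There we must verify that the $Y$-update integrated against $\Pi_{Y|X}(\cdot|x)$ exactly reproduces the marginal kernel $\bar P_X$, and that $\bar P_X$ is $\Pi_X$-invariant, so that the variance-decrement form of the Dirichlet form is legitimate. Both checks are short Fubini computations using the reversibility in \Cref{Asp:P2}. No analogue of the converse (\Cref{thm:P_XtoP}) is needed here.
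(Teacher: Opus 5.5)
Your proposal is correct and follows the paper's proof: you lift $g$ to $f_g(x,y)=g(x)$, check that variance, oscillation and Dirichlet form agree, and transfer the WPI with the same $\beta$. The only difference is the Dirichlet-form identity. The paper merely asserts it after writing $P_2^*P_2=H_2G_1H_2$, implicitly by a kernel expansion as in the proof of \Cref{thm:PtoP_X}. You instead derive it explicitly from $\mathcal{E}_\Pi(P_2^*P_2,f_g)=\|f_g\|_\Pi^2-\|P_2f_g\|_\Pi^2$ and $P_2f_g=\bar P_Xg$, which needs only the $\Pi_X$-invariance of $\bar P_X$.
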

\begin{proposition}
\label{thm:ApproxP_XtoP}
    Suppose that kernel $\bar{P}_X^*\bar{P}_X$ satisfies a weak Poincar{\'{e}}  inequality with the function $\beta$.
    Hence, for any $f \in \mathrm{L}^2_0(\Pi)$ such that $0 < \|f\|_\mathrm{osc} < \infty$, any $n \in \mathbb{N_+}\backslash \{1\}$,
    \begin{align*}
    \|P_2^nf \|^2_{\Pi} \leqslant \|f\|^2_\mathrm{osc} \cdot F^{-1}(n-1),
    \end{align*}
    where $F: (0, \frac{1}{4}] \rightarrow \mathbb{R}$ is derived by $\beta$ as \Cref{Thm:rate}. 
\end{proposition}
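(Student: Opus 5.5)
The plan is to mirror the proof of \Cref{thm:P_XtoP}, with $P$ replaced by $P_2=G_1H_2$ and $P_X$ replaced by $\bar P_X$.

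\textbf{Step 1: functions in the range of $P_2$ depend on $x$ only.} Take $f\in\mathrm L^2_0(\Pi)$. Since $P_2f=G_1(H_2f)$ and $G_1$ integrates out the $y$-coordinate, $P_2f(x,y)$ does not depend on $y$. Set $g_f(x):=(P_2f)(x,y)$. Then $g_f\in\mathrm L^2_0(\Pi_X)$, because $\Pi$ is invariant for $P_2$. Moreover $\|g_f\|^2_{\Pi_X}=\|P_2f\|^2_\Pi$ and $\|g_f\|_{\mathrm{osc}}=\|P_2f\|_{\mathrm{osc}}$.

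\textbf{Step 2: the transfer identity $\bar P_X g_f(x)=P_2^2f(x,y)$.} Write $h(x,y)=g_f(x)$. Then
\[
P_2h(x,y)=\int \Pi_{Y|X}(\dif y'|x)\,H_{2|y'}(\dif x'|x)\,g_f(x')=\bar P_Xg_f(x),
\]
directly from \eqref{eq:marginalPX}. Since $h=P_2f$, the left-hand side equals $P_2^2f(x,y)$. This gives $\|\bar P_Xg_f\|^2_{\Pi_X}=\|P_2^2f\|^2_\Pi$, and therefore
\[
\mathcal E_{\Pi_X}(\bar P_X^*\bar P_X,g_f)=\|P_2f\|^2_\Pi-\|P_2^2f\|^2_\Pi=\mathcal E_\Pi(P_2^*P_2,P_2f).
\]

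\textbf{Step 3: the $K^*$-WPI along the chain.} Apply the optimized $K^*$-WPI for $\bar P_X^*\bar P_X$ to $g_f$. This yields
\[
K^*\!\left(\frac{\|P_2f\|_\Pi^2}{\|P_2f\|^2_{\mathrm{osc}}}\right)\leqslant\frac{\mathcal E_\Pi(P_2^*P_2,P_2f)}{\|P_2f\|^2_{\mathrm{osc}}},
\]
whenever $\|P_2f\|_{\mathrm{osc}}>0$; the degenerate case is trivial. Replace $f$ by $P_2^{n-2}f$ and use the argument of \Cref{lem:f_osc}, namely that $u\mapsto u^{-1}K^*(u)$ is increasing and that Markov operators do not increase oscillation. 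This gives, for every $n\geqslant2$,
\[
K^*\!\left(\frac{\|P_2^{n-1}f\|^2_\Pi}{\|f\|^2_{\mathrm{osc}}}\right)\leqslant\frac{\mathcal E_\Pi(P_2^*P_2,P_2^{n-1}f)}{\|f\|^2_{\mathrm{osc}}}.
\]

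\textbf{Step 4: telescoping.} Use the identity $\mathcal E_\Pi(P_2^*P_2,P_2^{n-1}f)=\|P_2^{n-1}f\|^2-\|P_2^nf\|^2$, together with the fact that $1/K^*$ is decreasing and $k\mapsto\|P_2^kf\|^2$ is nonincreasing. These give $F(\|P_2^nf\|^2/\|f\|^2_{\mathrm{osc}})-F(\|P_2^{n-1}f\|^2/\|f\|^2_{\mathrm{osc}})\geqslant1$. Summing from $2$ to $n$ and dropping the nonnegative term $F(\|P_2f\|^2/\|f\|^2_{\mathrm{osc}})$ yields $F(\|P_2^nf\|^2/\|f\|^2_{\mathrm{osc}})\geqslant n-1$. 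Since $F^{-1}$ is decreasing, $\|P_2^nf\|^2_\Pi\leqslant\|f\|^2_{\mathrm{osc}}F^{-1}(n-1)$.

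The only step that differs from \Cref{thm:P_XtoP} is Step 2, and it is the main point to check. The identification relies on $H_2$ updating $x$ using the freshly drawn $y'$, and on the outer $G_1$ making $P_2f$ a function of $x$ alone. Both hold by the definition $P_2=G_1H_2$. The remaining steps are routine.
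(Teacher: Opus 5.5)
Your proposal is correct and is essentially the paper's proof. The paper simply says to rerun the argument of \Cref{thm:P_XtoP} with $P$ replaced by $P_2$, $P_X$ replaced by $\bar P_X$, and $g_f(x)=(P_2f)(x,y)$; your Step 2 writes out the transfer identity $\bar P_X g_f(x)=P_2^2f(x,y)$, and hence $\mathcal E_{\Pi_X}(\bar P_X^*\bar P_X,g_f)=\mathcal E_\Pi(P_2^*P_2,P_2f)$, which the paper leaves implicit, and your remaining steps match the original oscillation and telescoping argument.
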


\Cref{thm:PtoP_X} and \Cref{thm:ApproxP_XtoP} offer a framework for obtaining the convergence bound of $P_2$. These theorems allow us to translate the comparison from the joint chains to the $X$-marginal chains. We can also obtain a comparison between $P_X$ and $\Bar{P}_X$, which is similar to the comparison between $G_2$ and $H_2$ in \Cref{lem:P1_to_tildeP1}; the proofs of the following two results are in \Cref{Prf:lem:P_xreversible}.

\begin{lemma}
    \label{lem:P_xreversible}
    The exact $X$-marginal kernel $P_X$ is $\Pi_X$-reversible and positive. Under \Cref{Asp:P2}, the 2-MG $X$-marginal kernel $\bar{P}_X$ is also $\Pi_X$-reversible and positive.
\end{lemma}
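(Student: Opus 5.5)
To prove \Cref{lem:P_xreversible}, I will write both marginal kernels as $Q^*Q$ for a suitable operator $Q$, or as a mixture of positive self-adjoint operators. Reversibility and positivity then follow at once.

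For the exact kernel $P_X$, I will use the operator $A:\mathrm{L}^2(\Pi_X)\to \mathrm{L}^2(\Pi_Y)$ given by $(Ag)(y)=\int g(x')\Pi_{X|Y}(\dif x'|y)$. Its adjoint is $(A^*h)(x)=\int h(y)\Pi_{Y|X}(\dif y|x)$; this follows by disintegrating $\Pi(\dif x,\dif y)=\Pi_X(\dif x)\Pi_{Y|X}(\dif y|x)=\Pi_Y(\dif y)\Pi_{X|Y}(\dif x|y)$. From \eqref{eq:marginalPX} we have $P_X=A^*A$. Hence $P_X$ is self-adjoint, which is $\Pi_X$-reversibility. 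It is also positive, since $\langle P_Xg,g\rangle_{\Pi_X}=\|Ag\|^2_{\Pi_Y}\geqslant 0$. Reversibility can also be checked directly at the level of measures: $\Pi_X(\dif x)P_X(x,\dif x')=\int_{\mathcal Y}\Pi(\dif x,\dif y)\Pi_{X|Y}(\dif x'|y)$, and this expression is symmetric in $(x,x')$ because $\Pi(\dif x,\dif y)=\Pi_Y(\dif y)\Pi_{X|Y}(\dif x|y)$.

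For $\bar P_X$, I will compute $\Pi_X(\dif x)\bar P_X(x,\dif x')=\int_{\mathcal Y}\Pi_Y(\dif y)\,\Pi_{X|Y}(\dif x|y)H_{2|y}(\dif x'|x)$. By \Cref{Asp:P2}, for each fixed $y$ the measure $\Pi_{X|Y}(\dif x|y)H_{2|y}(\dif x'|x)$ is symmetric in $(x,x')$. Integrating over $y$ preserves this symmetry, so $\bar P_X$ is $\Pi_X$-reversible. For positivity, take $g\in\mathrm{L}^2(\Pi_X)$ and the same disintegration:
\begin{align*}
\langle \bar P_X g, g\rangle_{\Pi_X}=\int_{\mathcal Y}\Pi_Y(\dif y)\,\langle H_{2|y}g,g\rangle_{\Pi(\cdot|y)} .
\end{align*}
Each integrand is nonnegative by the positivity in \Cref{Asp:P2}, so $\bar P_X$ is positive. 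Equivalently, $\langle \bar P_Xg,g\rangle_{\Pi_X}=\langle H_2 f_g,f_g\rangle_\Pi$ with $f_g(x,y)=g(x)$, and positivity follows from \Cref{lem:P_1reversible}.

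The only delicate point is measure-theoretic. One must check that $g\in\mathrm{L}^2(\Pi_X)$ restricts to an element of $\mathrm{L}^2(\Pi(\cdot|y))$ for $\Pi_Y$-almost every $y$, so that positivity applies fibrewise. This holds by Fubini, since $\int\Pi_Y(\dif y)\|g\|^2_{\Pi(\cdot|y)}=\|g\|^2_{\Pi_X}<\infty$. Measurability in $y$ of the fibrewise inner products follows from the joint measurability of the kernel $H_2$. Everything else is routine bookkeeping with disintegrations.
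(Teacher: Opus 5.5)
Your proof is correct and follows essentially the same route as the paper. The factorization $P_X=A^*A$ is the operator form of the paper's computation $\langle P_Xg,g\rangle_{\Pi_X}=\int_{\mathcal Y}\bigl[\int_{\mathcal X}g(x)\Pi_{X|Y}(\dif x|y)\bigr]^2\Pi_Y(\dif y)$, and your fibrewise identity $\langle\bar P_Xg,g\rangle_{\Pi_X}=\int_{\mathcal Y}\langle H_{2|y}g,g\rangle_{\Pi(\cdot|y)}\Pi_Y(\dif y)$ is exactly the paper's argument for $\bar P_X$; you also write out the detailed-balance checks and the Fubini point that the paper calls ``easy to check''.
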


\begin{theorem}
\label{thm:XDGtoXHDG}
    Under \Cref{Asp:P2}, we have the following WPI comparison for $P_X$ and $\bar{P}_X$: 
    \begin{align*}
        \mathcal{E}_{\Pi_X}(P_X,g) \leqslant s \cdot \mathcal{E}_{\Pi_X}(\bar{P}_X,g)+\beta(s) \cdot \|g\|^2_\mathrm{osc},\quad \forall s > 0, g \in \mathrm{L}^2(\Pi_X).
    \end{align*}
   where $\beta: (0, \infty) \rightarrow [0, \infty)$ is given by $\beta(s):= \int_\mathcal{Y} \beta_2(s,y)\Pi_Y(\dif y)$. 
\end{theorem}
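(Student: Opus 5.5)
Both Dirichlet forms can be written as averages over $y\sim\Pi_Y$ of conditional Dirichlet forms on $\mathcal X$. Applying \Cref{Asp:P2} fibrewise and integrating then gives the claim, in the same way as \Cref{lem:P1_to_tildeP1}.

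\textbf{Step 1: representing $\mathcal{E}_{\Pi_X}(P_X,g)$.} For $g\in\mathrm{L}^2(\Pi_X)$, insert the definition \eqref{eq:marginalPX} and use $\Pi_X(\dif x)\Pi_{Y|X}(\dif y|x)=\Pi_Y(\dif y)\Pi_{X|Y}(\dif x|y)$. This gives
\begin{align*}
\mathcal{E}_{\Pi_X}(P_X,g)=\int_{\mathcal Y}\frac12\int\int (g(x)-g(x'))^2\,\Pi_{X|Y}(\dif x'|y)\Pi_{X|Y}(\dif x|y)\,\Pi_Y(\dif y)=\int_{\mathcal Y}\mathrm{Var}_{\Pi(\cdot|y)}(g)\,\Pi_Y(\dif y).
\end{align*}

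\textbf{Step 2: representing $\mathcal{E}_{\Pi_X}(\bar P_X,g)$.} The same disintegration gives
\begin{align*}
\mathcal{E}_{\Pi_X}(\bar P_X,g)=\int_{\mathcal Y}\frac12\int\int (g(x)-g(x'))^2 H_{2|y}(\dif x'|x)\Pi_{X|Y}(\dif x|y)\,\Pi_Y(\dif y)=\int_{\mathcal Y}\mathcal{E}_{\Pi(\cdot|y)}(H_{2|y},g)\,\Pi_Y(\dif y).
\end{align*}
Here $g$ is viewed as a function on $\mathcal X$ for each fixed $y$. It lies in $\mathrm{L}^2(\Pi(\cdot|y))$ for $\Pi_Y$-a.e.\ $y$, because $\int_{\mathcal Y}\|g\|^2_{\Pi(\cdot|y)}\,\Pi_Y(\dif y)=\|g\|^2_{\Pi_X}<\infty$.

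\textbf{Step 3: fibrewise WPI and integration.} For $\Pi_Y$-a.e.\ $y$, \Cref{Asp:P2} applied with $h=g$ gives
\begin{align*}
\mathrm{Var}_{\Pi(\cdot|y)}(g)\le s\,\mathcal{E}_{\Pi(\cdot|y)}(H_{2|y},g)+\beta_2(s,y)\|g\|^2_{\mathrm{osc},\Pi(\cdot|y)}.
\end{align*}
We then need $\|g\|_{\mathrm{osc},\Pi(\cdot|y)}\le\|g\|_{\mathrm{osc},\Pi_X}$ for $\Pi_Y$-a.e.\ $y$. To see this, let $A=\{x: g(x)>\operatorname{ess\,sup}_{\Pi_X}g\}$. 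Then $\Pi_X(A)=0$, so $0=\int_{\mathcal Y}\Pi_{X|Y}(A|y)\,\Pi_Y(\dif y)$, and hence $\Pi_{X|Y}(A|y)=0$ for a.e.\ $y$. The essential infimum is handled in the same way. Integrating the fibrewise inequality against $\Pi_Y$ and substituting the representations from Steps 1 and 2 yields the claim with $\beta(s)=\int_{\mathcal Y}\beta_2(s,y)\Pi_Y(\dif y)$.

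\textbf{Remaining checks and main obstacle.} The integrated function $\beta$ is decreasing, and it tends to $0$ as $s\to\infty$ by dominated convergence, since we may take $\beta_2\le 1/4$. The part needing most care is measurability and disintegration. First, $y\mapsto\beta_2(s,y)$ must be measurable; if it is not, we replace it by a measurable majorant, or read the integral as an outer integral. Second, the identities in Steps 1 and 2 require measurability of $(x,y)\mapsto H_{2|y}(\cdot|x)$, which holds because it is a Markov kernel on the product space.
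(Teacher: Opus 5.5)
Your proposal is correct and follows essentially the same route as the paper's proof. Both disintegrate $\mathcal{E}_{\Pi_X}(P_X,g)$ and $\mathcal{E}_{\Pi_X}(\bar P_X,g)$ over $\Pi_Y$, apply \Cref{Asp:P2} fibrewise, integrate, and use dominated convergence for the regularity of $\beta$. Your explicit check that the conditional oscillation seminorm is bounded by $\|g\|_{\mathrm{osc}}$ under $\Pi_X$ for $\Pi_Y$-a.e.\ $y$, and your remark on measurability of $y\mapsto\beta_2(s,y)$, fill in points the paper leaves implicit.
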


\begin{figure}[H] 
    \centering 
    \begin{tikzpicture}[>=stealth]
        \node (DG) at (0,2) {$P$};
        \node (XDG) at (4,2) {$P_X$};
        \node (HDG) at (0,0) {$P_2$};
        \node (XHDG) at (4,0) {$\bar{P}_X$};
        \draw [->] (DG) -- node[above] {\Cref{thm:PtoP_X}} (XDG);
        \draw [->] (XDG) -- node[right] {\Cref{thm:XDGtoXHDG}} (XHDG);
        \draw [->] (XHDG) -- node[below] {\Cref{thm:ApproxP_XtoP}} (HDG);
        
        \draw [->, dashed] (3.6,1.9) -- node[below] {\Cref{thm:P_XtoP}} (0.3,1.9);
        \draw [->, dashed] (0.3,0.1) -- node[above] {\Cref{thm:ApproxPtoP_X}} (3.6,0.1);
    \end{tikzpicture}
    \caption{Relationship among Gibbs samplers and their X-marginal version.} 
    \label{fig:PandPX} 
\end{figure}
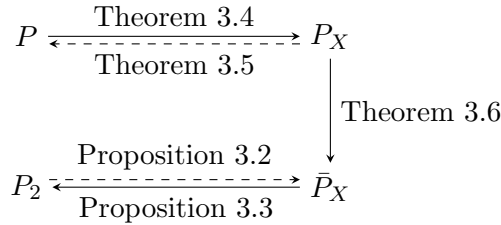
The relations that we have described so far can be summarized in \Cref{fig:PandPX}. The solid arrows give us a path to obtain the convergence bound of $P_2$. We state it as the following corollary.
\begin{corollary}
    \label{thm:weakPtotildeP2}
        Under the \Cref{Asp:P2} and \Cref{Asp:P} we have the following convergence property for $P_2$: for any $n \in \mathbb{N_+}\backslash \{1\}$,
    \begin{align}
    \|P_2^nf \|^2_{\Pi} \leqslant \|f\|^2_\mathrm{osc} \cdot F^{-1}(n-1),
    \end{align}
    where $F$ corresponds to $K^*(v) =K_2^*\circ \frac{1}{2}K_0^*(v)$, and $K_0$ and $K_2$ are associated with $\beta_0$ and $\beta_2(s)=\int_\mathcal{Y} \beta(s,y)\Pi_Y(\dif y)$, respectively.
\end{corollary}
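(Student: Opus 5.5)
The plan is to follow the solid arrows of \Cref{fig:PandPX}: move from $P$ down to the marginal $P_X$, compare $P_X$ with $\bar P_X$, and then lift back up to the joint 2-MG chain $P_2$.

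\textbf{Step 1 (joint to marginal).} Start from \Cref{Asp:P}, the WPI for $P^*P$ with $\beta_0$. By \Cref{thm:PtoP_X}, $P_X^*P_X$ satisfies a WPI with the same $\beta_0$. In $K^*$ form this reads
\begin{align*}
\mathcal{E}_{\Pi_X}(P_X^*P_X,g)/\|g\|_{\mathrm{osc}}^2 \geqslant K_0^*\bigl(\|g\|_{\Pi_X}^2/\|g\|_{\mathrm{osc}}^2\bigr).
\end{align*}

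\textbf{Step 2 (replace $P_X^*P_X$ by $P_X$).} By \Cref{lem:P_xreversible}, $P_X$ is $\Pi_X$-reversible and positive. \Cref{lem:P^2toP} then gives
\begin{align*}
\mathcal{E}_{\Pi_X}(P_X^*P_X,g)=\mathcal{E}_{\Pi_X}(P_X^2,g)\leqslant 2\,\mathcal{E}_{\Pi_X}(P_X,g).
\end{align*}
Hence $\mathcal{E}_{\Pi_X}(P_X,g)/\|g\|^2_{\mathrm{osc}} \geqslant \tfrac12 K_0^*(\|g\|^2_{\Pi_X}/\|g\|^2_{\mathrm{osc}})$. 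This is where the factor $\tfrac12$ in the statement comes from.

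\textbf{Step 3 (compare $P_X$ with $\bar P_X$ and chain).} \Cref{thm:XDGtoXHDG} gives the comparison WPI between $P_X$ and $\bar P_X$ with $\beta_2(s)=\int\beta_2(s,y)\Pi_Y(\dif y)$. In $K^*$ form this is
\begin{align*}
\mathcal{E}_{\Pi_X}(\bar P_X,g)/\|g\|^2_{\mathrm{osc}}\geqslant K_2^*\bigl(\mathcal{E}_{\Pi_X}(P_X,g)/\|g\|^2_{\mathrm{osc}}\bigr).
\end{align*}
Since $K_2^*$ is increasing (\Cref{thm:K*Properties}), I chain this with Step 2, in the spirit of \Cref{Thm:chaining}, to get
\begin{align*}
\mathcal{E}_{\Pi_X}(\bar P_X,g)/\|g\|^2_{\mathrm{osc}} \geqslant K_2^*\bigl(\tfrac12K_0^*(\|g\|^2_{\Pi_X}/\|g\|^2_{\mathrm{osc}})\bigr).
\end{align*}
Next I pass from $\bar P_X$ back to $\bar P_X^*\bar P_X$. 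By \Cref{lem:P_xreversible}, $\bar P_X$ is reversible and positive, so \Cref{lem:PtoP^2} gives $\mathcal{E}(\bar P_X^2,g)\geqslant\mathcal{E}(\bar P_X,g)$. Therefore $\bar P_X^*\bar P_X=\bar P_X^2$ satisfies the $K^*$-WPI with $K^*=K_2^*\circ\tfrac12K_0^*$, and this step costs no constant.

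\textbf{Step 4 (marginal to joint).} I would apply \Cref{thm:ApproxP_XtoP} to the WPI for $\bar P_X^*\bar P_X$ from Step 3. That result yields $\|P_2^nf\|^2_\Pi\leqslant\|f\|^2_{\mathrm{osc}}F^{-1}(n-1)$ for $n\geqslant2$, with $F$ built from this $K^*$ as in \Cref{Thm:rate}.

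\textbf{Expected obstacle.} The main technical point is that \Cref{thm:ApproxP_XtoP} is phrased in terms of a $\beta$-function, while I hold a $K^*$. I need the composite $K_2^*\circ\tfrac12 K_0^*$ to be a legitimate $K^*$, meaning the conjugate of some admissible $\beta$, or at least to satisfy the properties in \Cref{thm:K*Properties} that the proof of \Cref{thm:P_XtoP} uses. Those properties are monotonicity of $v\mapsto v^{-1}K^*(v)$ (needed for \Cref{lem:f_osc}) and positivity and monotonicity (needed for the telescoping with $F$). I would check this by the same reasoning the paper applies to chained WPIs: write the composite as the conjugate of the chained $\beta$ in \Cref{Thm:chaining}, with $\beta_1$ replaced by $2\beta_0$-type scaling via \Cref{thm:ConstantOnBeta}. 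Alternatively, I would rerun the $F$-telescoping argument of \Cref{thm:P_XtoP} directly with the composite function.
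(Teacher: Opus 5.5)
Your proposal is correct and follows essentially the same route as the paper: \Cref{thm:PtoP_X}, then \Cref{lem:P^2toP} for $P_X$ (the source of the $\tfrac12$), then \Cref{thm:XDGtoXHDG} chained with \Cref{lem:PtoP^2} for $\bar P_X$, and finally \Cref{thm:ApproxP_XtoP}. The only cosmetic difference is that you compose directly in $K^*$-form using monotonicity of $K_2^*$. The paper instead writes the chain in $\beta$-form with $\beta_0'(s)=\beta_0(s/2)$ and identifies $K^*=K_2^*\circ\tfrac12 K_0^*$ via \Cref{Thm:chaining} and \Cref{thm:ConstantOnBeta}, which is exactly how you propose to resolve your flagged obstacle.
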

\begin{proof}
    The following WPIs can be obtained from \Cref{thm:PtoP_X}, \Cref{thm:XDGtoXHDG}: for all $g \in \mathrm{L}^2_0(\Pi_X),$
    \begin{align*}
        \begin{split}
            \|g\|_{\Pi_X}^2 & \leqslant s \cdot \mathcal{E}_{\Pi_X}\left(P_{X}^*P_{X}, g\right)+\beta_0(s) \cdot \|g\|^2_\mathrm{osc},\\
            \mathcal{E}_{\Pi_X}(P_X,g) & \leqslant s \cdot \mathcal{E}_{\Pi_X}(\bar{P}_X,g)+\beta_2(s) \cdot \|g\|^2_\mathrm{osc},
        \end{split}
    \end{align*}
    where $\beta_2(s)=\int_\mathcal{Y} \beta_2(s,y)\Pi_Y(\dif y)$.
    
    By \Cref{lem:PtoP^2} and \Cref{lem:P^2toP}, we have
    \begin{align*}
        \|g\|_{\Pi_X}^2 & \leqslant s_1 \cdot \mathcal{E}_{\Pi_X}\left(P_{X}^*P_{X}, g\right)+\beta_0(s_1) \cdot \|g\|^2_\mathrm{osc}\\
         & \leqslant 2s_1 \cdot \mathcal{E}_{\Pi_X}(P_X,g)+\beta_0(s_1) \cdot \|g\|^2_\mathrm{osc}\\
         & \leqslant 2s_1 \cdot \left( s_2 \cdot \mathcal{E}_{\Pi_X}(\bar{P}_X,g)+\beta_2(s_2) \cdot \|g\|^2_\mathrm{osc}\right)+\beta_0(s_1) \cdot \|g\|^2_\mathrm{osc}\\
         &\leqslant 2s_1  s_2  \cdot \mathcal{E}_{\Pi_X}(\bar{P}_X^*\bar{P}_X,g)+ \left(2s_1\beta_2(s_2) +\beta_0(s_1) \right)\cdot \|g\|^2_\mathrm{osc}.
    \end{align*}
    Let $\beta(s):=\inf\{s_1\beta_2(s_2)+\beta_0'(s_1)|s_1>0, s_2>0, s_1s_2=s\}$, where $\beta_0'(s)=\beta_0(s/2)$.
    Applying \Cref{Thm:chaining} and \Cref{thm:ConstantOnBeta}, 
    \begin{align*}
        K^*(v):=K^*_2 \circ \bar{K}_0^*(v)= K^*_2 \circ \frac{1}{2} K_0^*(v),
    \end{align*}
    where $K_i^*(v)=\sup_{u\geqslant 0}\{uv - u \cdot \beta_i(1 / u)\},i=0,2$, and $\bar{K}_0^*(v)=\sup_{u\geqslant 0}\{uv - u \cdot \beta_0'(1 / u)\}$. 
    
    Hence, it holds that $\frac{\mathcal{E}_{\Pi_X}(\bar{P}_X^*\bar{P}_X,g)}{\|g\|^2_\mathrm{osc}} \geqslant K^*\left(\frac{\|g\|_{\Pi_X}^2}{\|g\|^2_\mathrm{osc}}\right)$.
    Then by \Cref{thm:ApproxP_XtoP}, we have
    \begin{align*}
    \|P_2^nf \|^2_{\Pi} \leqslant \|f\|^2_\mathrm{osc} \cdot F^{-1}(n-1),
    \end{align*}
    where $F: (0, \frac{1}{4}] \rightarrow \mathbb{R}$ is the decreasing convex and invertible function as $F(x):=\int_{x}^{\frac{1}{4}} \frac{\dif v}{K^*(v)}$.

\end{proof}

\begin{remark}
    In this case, we obtained $K^*(v)=K_2^*(\frac{1}{2}K_0^*(v))$. Referring to \Cref{thm:weakPtotildeP}, we obtained $\tilde{K}^*(v)=2K_2^*(\frac{1}{2}K_0^*(v/4))$. Hence $\tilde{K}^*(v)=2K^*(v/4)$. Comparing the associated convergence rates of $F^{-1}$, \Cref{thm:weakPtotildeP2} yields a rate of $F^{-1}(n-1)$, while \Cref{thm:K*:weakPtotildeP2} gives $\tilde{F}^{-1}(n)\leqslant 4\cdot F^{-1}(n/2)$ by applying \Cref{thm:ConstantOnBeta}, where $F$ and $\tilde{F}$ correspond to $K^*$ and $\tilde{K}^*$, respectively. Generally, the latter converges more slowly than the former, suggesting that, under a marginal framework, \Cref{thm:weakPtotildeP2}  yields  a better convergence rate bound compared to \Cref{thm:K*:weakPtotildeP2}. 
\end{remark}

\begin{remark}
    Moreover, although \Cref{thm:weakPtotildeP2} provides a method to obtain the convergence rate bound of $P_2$, it does not yield a WPI for $P_2^*P_2$, nor does it establish a comparison between the DG and 2-MG samplers. Consequently, we cannot directly apply the chaining WPI rule \Cref{Thm:chaining} with \Cref{thm:K*:PtoP1} to derive the WPI for $P_{1,2}^*P_{1,2}$. This motivates the use of \Cref{Thm:K*:PtoP*} to obtain the WPI for $P_2^*P_2$,as presented in \Cref{thm:K*:weakPtotildeP2} in \Cref{Sec: HDG}.
\end{remark}

\section{Tensor products as Gibbs Samplers}
\label{Sec: tensor product}
We consider a special case of the MG sampler under conditional independence. In this setting, the MG sampler reduces to a \textit{tensor product} of two Markov chains. Assuming that each of these chains has a well-defined spectral gap, the spectral gap of the resulting MG sampler is equal to the minimum of the individual spectral gaps \cite{brown1966spectra}. This leads to a more favorable convergence rate compared to the product of the spectral gaps, which appears in the bound provided by \Cref{thm:strongPtotildeP}.

More precisely, suppose that $(\mathcal{X},\mathcal{F}_\mathcal{X})$ and $(\mathcal{Y},\mathcal{F}_\mathcal{Y})$
are two measurable spaces with probability measures $\Pi_X$ and $\Pi_Y$, respectively. Let $\Pi=\Pi_X\otimes \Pi_Y$ be the resulting product measure on the product space $\mathcal{X}\otimes\mathcal{Y}$, equipped with the product $\sigma$-algebra.
Let $H_1$ and $H_2$ be two Markov chains targeting $\Pi_Y$ and $\Pi_X$, respectively. Then, the tensor product $H_1\otimes H_2$ is a Markov chain targeting $\Pi$, given by the independent coupling: 
$$
    [H_1\otimes H_2]((x,y), (\dif x',\dif y'))= H_1(\dif x'|x)H_2(\dif y'|y).
$$

For continuous-time Markov processes, the standard Poincar{\'{e}} inequality is stable under products. More precisely, as shown in \cite[Proposition 4.3.1]{bakry2013analysis}, the product Markov triple satisfies a standard Poincar{\'{e}} inequality with a constant equal to the \textit{minimum} of the corresponding constants for the two component processes. 
For discrete-time chains, we similarly have that the spectrum of the tensor product operator is the Minkownski product of the spectra of the component two operators \cite{brown1966spectra}.
This result for discrete-time chains is generally not informative for subgeometric chains with no spectral gap. However, by utilising WPIs, we can derive an analogous result for subgeometric chains, where the resulting WPI $\beta$-function for the tensor product is simply the sum of the individual functions.

\begin{theorem}
\label{thm: tensor product}
Let $H_1$ and $H_2$ be two independent, reversible and positive Markov chain targeting $\Pi_Y$ and $\Pi_X$, respectively. Suppose that $H_1^2$ and $H_2^2$ each satisfy a WPI: for $\beta_1,\beta_2$ as in \Cref{def:WPI},
\begin{align}\label{eq:tensor_prod_1}
    \mathrm{Var}_{\Pi_Y}(h) & \leqslant s \cdot \mathcal{E}_{\Pi_Y}(H_1^2,h)+ \beta_1(s) \cdot \|h\|_\mathrm{osc}^2,\quad \forall s>0,  h \in \mathrm{L}^2(\Pi_Y),\\
    \mathrm{Var}_{\Pi_X}(g) & \leqslant s \cdot \mathcal{E}_{\Pi_X}(H_2^2,g)+ \beta_2(s) \cdot \|g\|_\mathrm{osc}^2,\quad \forall s>0, g \in \mathrm{L}^2(\Pi_X).\label{eq:tensor_prod_2}
\end{align}

Then, the tensor product chain $H_1 \otimes H_2$ satisfies a WPI:  for all $ s>0$, $f \in \mathrm{L}^2_0(\Pi_X \otimes \Pi_Y)$,
\begin{align*}
    \|f\|^2 &\leqslant s \cdot \mathcal{E}\left(\left({H_1 \otimes H_2}\right)^2,f\right)+\beta(s) \cdot \|f\|_\mathrm{osc}^2,\\
    \beta(s) &:=\beta_1(s)+\beta_2(s).
\end{align*}
\end{theorem}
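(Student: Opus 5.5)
Write $T=H_1\otimes H_2$, viewed as an operator on $\mathrm{L}^2(\Pi)$, and let $\tilde H_1$, $\tilde H_2$ be the single-coordinate embeddings, each acting on one coordinate and leaving the other fixed. Then $T=\tilde H_1\tilde H_2=\tilde H_2\tilde H_1$. Both $\tilde H_i$ are self-adjoint and positive, and they commute, so $T^2=\tilde H_1^2\tilde H_2^2$ is self-adjoint.

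The plan is to bound $\|f\|^2$ by the sum of two fibrewise variances, apply each one-dimensional WPI fibrewise, and then show the resulting Dirichlet-form terms are dominated by $\mathcal{E}(T^2,f)$.

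\textbf{Step 1 (variance decomposition).} For $f\in\mathrm{L}^2_0(\Pi)$, let $G_Y$ and $G_X$ denote integration over one coordinate under $\Pi_Y$ and $\Pi_X$ respectively (conditional expectations, i.e. commuting projections). Since $\Pi$ is a product, the standard Efron--Stein type bound gives
\begin{align*}
\|f\|^2_\Pi=\mathrm{Var}_\Pi(f)\leqslant \int \mathrm{Var}_{\Pi_Y}(f(x,\cdot))\,\Pi_X(\dif x)+\int \mathrm{Var}_{\Pi_X}(f(\cdot,y))\,\Pi_Y(\dif y).
\end{align*}
This follows because $\mathrm{Var}(f)=\mathbb{E}\,\mathrm{Var}(f\mid x)+\mathrm{Var}(G_Yf)$, and by Jensen $\mathrm{Var}(G_Yf)\leqslant \mathbb{E}_y\mathrm{Var}_x(f(\cdot,y))$.

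\textbf{Step 2 (fibrewise WPIs).} Apply \eqref{eq:tensor_prod_1} to each $h=f(x,\cdot)$ and \eqref{eq:tensor_prod_2} to each $g=f(\cdot,y)$. The fibre oscillations are at most $\|f\|_\mathrm{osc}$ (up to null-set issues handled via essential sup over fibres for a.e. fibre). Integrating gives
\begin{align*}
\|f\|^2_\Pi\leqslant s\left[\mathcal{E}_\Pi(\tilde H_1^2,f)+\mathcal{E}_\Pi(\tilde H_2^2,f)\right]+(\beta_1(s)+\beta_2(s))\|f\|^2_\mathrm{osc}.
\end{align*}

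\textbf{Step 3 (key comparison, main obstacle).} It remains to show
\begin{align*}
\mathcal{E}_\Pi(\tilde H_1^2,f)+\mathcal{E}_\Pi(\tilde H_2^2,f)\leqslant \mathcal{E}_\Pi(T^2,f).
\end{align*}
Write $A=\tilde H_1^2$ and $B=\tilde H_2^2$. These are commuting self-adjoint operators with spectrum in $[0,1]$, so by the joint spectral theorem the claim reduces to the scalar inequality
\begin{align*}
(1-a)+(1-b)\leqslant 1-ab \quad \text{for } a,b\in[0,1].
\end{align*}
This inequality is false: it is equivalent to $(1-a)(1-b)\leqslant 0$. So the sum cannot be bounded directly. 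The correct target is the maximum, since $1-ab\geqslant \max(1-a,1-b)$, which gives
\begin{align*}
\max\{\mathcal{E}(A,f),\mathcal{E}(B,f)\}\leqslant \mathcal{E}(T^2,f).
\end{align*}
The bound via the maximum follows without spectral theory: positivity of $B$ means $\|\tilde H_2 g\|\leqslant\|g\|$, and $\langle ABf,f\rangle=\|\tilde H_1\tilde H_2 f\|^2\leqslant \|\tilde H_2 f\|^2=\langle Bf,f\rangle$.

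\textbf{Repairing Step 2.} Using the maximum, the sum in Step 2 is at most $2\,\mathcal{E}(T^2,f)$, which only gives $\beta(s)=\beta_1(s/2)+\beta_2(s/2)$. To obtain exactly $\beta_1+\beta_2$, I will avoid the sum by decomposing more cleverly in Step 1. Use
\begin{align*}
\|f\|^2=\left(\|f\|^2-\|G_Xf\|^2\right)+\|G_Xf\|^2.
\end{align*}
The first term is handled by the $H_2$ WPI fibrewise, giving $s\,\mathcal{E}(B,f)+\beta_2\|f\|^2_\mathrm{osc}$. The second term is handled by the $H_1$ WPI applied to the function $y\mapsto \tilde H_2 f$ rather than to $G_Xf$. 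The pieces then telescope:
\begin{align*}
\left(\|f\|^2-\|\tilde H_2 f\|^2\right)+\left(\|\tilde H_2 f\|^2-\|\tilde H_1\tilde H_2 f\|^2\right)=\mathcal{E}(T^*T,f),
\end{align*}
exactly as in \Cref{thm: decomposition}. The delicate part is the remainder terms: I need to check that $\|\tilde H_2f\|^2-\|G_X\tilde H_2 f\|^2$ and the variance leftover are controlled by the $H_2$ WPI via monotonicity of $\beta_2$, using that oscillation does not increase under Markov operators, as in \Cref{lem:f_osc}. This bookkeeping is where I expect the real difficulty, and if it fails I will settle for the constant-2 version.
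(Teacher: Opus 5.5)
Your argument as committed proves only the weaker inequality with $\beta_1(s/2)+\beta_2(s/2)$. The repair that should give the stated $\beta_1+\beta_2$ is left unfinished, and the bookkeeping you anticipate points the wrong way.

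You are right that the Efron--Stein route of Steps 1--2 is inherently lossy, because $(1-a)+(1-b)\geqslant 1-ab$. You also set up the correct replacement: the exact split $\|f\|_\Pi^2=\bigl(\|f\|_\Pi^2-\|G_Xf\|_\Pi^2\bigr)+\|G_Xf\|_\Pi^2$, matched term by term against $\mathcal{E}_\Pi(T^2,f)=\mathcal{E}_\Pi(\tilde H_2^2,f)+\mathcal{E}_\Pi(\tilde H_1^2,\tilde H_2 f)$. The first pair is handled correctly by the fibrewise $H_2$ inequality. For the second pair, however, there is no ``remainder'' for the $H_2$ inequality to absorb. Spending $\beta_2$ a second time would rule out $\beta_1+\beta_2$. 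The quantity $\|\tilde H_2 f\|_\Pi^2-\|G_X\tilde H_2 f\|_\Pi^2$ that you single out is a variance in the $x$-coordinate and plays no role in the argument.

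The missing step is to bound $\|G_Xf\|_\Pi^2$ using the $H_1$ inequality alone.
\begin{itemize}
\item Since $\Pi_X$ is $H_2$-invariant, $G_Xf=G_X\tilde H_2 f$.
\item Put $g:=\tilde H_2 f$. It is still centred and has $\|g\|_\mathrm{osc}\leqslant\|f\|_\mathrm{osc}$.
\item Use convexity of $h\mapsto\mathrm{Var}_{\Pi_Y}(h)$, averaged over $x$, and then apply \eqref{eq:tensor_prod_1} to each fibre $g(x,\cdot)$.
\end{itemize}
This gives
\begin{align*}
\|G_Xf\|_\Pi^2=\mathrm{Var}_{\Pi_Y}(G_Xg)\leqslant\int_{\mathcal X}\mathrm{Var}_{\Pi_Y}\bigl(g(x,\cdot)\bigr)\,\Pi_X(\dif x)\leqslant s\,\mathcal{E}_\Pi(\tilde H_1^2,\tilde H_2f)+\beta_1(s)\|f\|_\mathrm{osc}^2 .
\end{align*}
Adding the first pair yields exactly $\beta_1+\beta_2$.

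This is the paper's proof with the coordinates swapped. The paper writes $\|f\|_\Pi^2=\mathcal{E}_\Pi(\bar\Pi_Y,f)+\mathcal{E}_\Pi(\bar\Pi_X,\bar\Pi_Yf)$. It then uses $\bar\Pi_Y\bar H_1=\bar\Pi_Y$ to evaluate the second term at $\bar H_1f$, and closes with the Jensen bound $\mathcal{E}_\Pi(\bar\Pi_X,\bar\Pi_Yg)\leqslant\mathcal{E}_\Pi(\bar\Pi_X,g)$.

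A minor point: the inequality $\|\tilde H_2 g\|\leqslant\|g\|$ in your max bound comes from contractivity of a Markov operator, not from positivity.
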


\begin{proof}
    We start with defining the embedded chains in the joint space:
    \begin{align*}
        \bar{H}_1((x,y),(\dif x', \dif y'))=H_1(\dif y' |y)\delta_x(\dif x'), \quad
        \bar{H}_2((x,y),(\dif x', \dif y'))=H_2(\dif x' |x)\delta_y(\dif y').
    \end{align*}
    We can thus realise $\bar{H}_1$ and $\bar{H}_2$ as operators on $\mathrm{L}^2_0(\Pi)$. 
    For the tensor product chain $H_1 \otimes H_2$, we can write it as $P_{1,2}=\bar{H}_1\bar{H}_2$.
    \begin{align*}
        \bar{H}_1\bar{H}_2((x,y),(\dif x', \dif y'))=H_1 \otimes H_2((x,y),(\dif x', \dif y'))=H_1(\dif y' |y)H_2(\dif x' |x).
    \end{align*}
    It is easy to check the operator $\bar{H}_1\bar{H}_2$ is reversible and positive, in particular $\bar{H}_1\bar{H}_2=\bar{H}_2\bar{H}_1$.

    For $f\in \mathrm{L}_0^2(\Pi)$, we denote $f_x(y):=f(x,y)$ and $f_y(x):=f(x,y)$ so that $f_x(y) \in \mathrm{L}^2(\Pi_Y)$ for $\Pi_X$-almost every $x$ and $f_y(x) \in \mathrm{L}^2(\Pi_X)$ for $\Pi_Y$-almost every $y$.
    
    We define the operators $\bar \Pi_X$ and $\bar \Pi_Y$ on the joint space $\mathrm{L}^2(\Pi)$ as following:
    \begin{align*}
        \bar \Pi_X f(x,y) = \int f(x',y') \Pi_X(\dif x')\delta_y(\dif y'),\quad
        \bar \Pi_Y f(x,y) = \int f(x',y') \Pi_Y(\dif y')\delta_x(\dif x').
    \end{align*}
    They can be viewed as two independent projection operators on the joint space $\mathrm{L}^2(\Pi)$.
    
    Since
    \begin{align*}
        \mathcal{E}_{\Pi}(\bar \Pi_Y,f)&=\frac{1}{2}\int (f(x,y)-f(x',y'))^2 \Pi_{Y}(\dif y')\delta_x(\dif x')\Pi(\dif x,\dif y)\\
        &=\frac{1}{2}\int (f_x(y)-f_x(y'))^2 \Pi_{Y}(\dif y')\Pi_Y(\dif y)\Pi_{X}(\dif x)\\
        &=\int_\mathcal X \mathrm{Var}_{\Pi_Y}(f_x)\Pi_X(\dif x),
    \end{align*}
    and
    \begin{align*}
        \mathcal{E}_{\Pi}(\bar{H}_1^2,f)&=\frac{1}{2}\int (f(x,y)-f(x',y'))^2 H_1^2(\dif y'| y)\delta_x(\dif x')\Pi(\dif x,\dif y)\\
        &=\frac{1}{2}\int (f_x(y)-f_x(y'))^2 H_{1}^2(\dif y'|y)\Pi_{Y}(\dif y)\Pi_X(\dif x)\\
        &=\int_\mathcal X \mathcal{E}_{\Pi_Y}((H_{1}^2,f_x)\Pi_X(\dif x),
    \end{align*}
    by \eqref{eq:tensor_prod_1}
    , we have
    \begin{align*}
        \mathcal{E}_{\Pi}(\bar \Pi_Y,f)
        &=\int_\mathcal{X} \mathrm{Var}_{\Pi_Y}(f_x) \Pi_X(\dif x)\\ & \leqslant s \cdot \int_\mathcal{X} \mathcal{E}_{\Pi_Y}(H_1^2,f_x)\Pi_X(\dif x)+ \beta_1(s) \cdot \|f\|_\mathrm{osc}^2 \\
        & = s \cdot \mathcal{E}_{\Pi}(\bar{H}_1^2,f)+ \beta_1(s) \cdot \|f\|_\mathrm{osc}^2.
    \end{align*}
    Similarly, using \eqref{eq:tensor_prod_2}, we have $\mathcal{E}_{\Pi}(\bar \Pi_X,f) \leqslant s \cdot \mathcal{E}_{\Pi}(\bar{H}_2^2,f)+ \beta_2(s) \cdot \|f\|_\mathrm{osc}^2$.

    Without loss of generality, after rearranging the WPI and rescaling the functions by the oscillation norm, we have
    \begin{align}
    \label{eq:beta_sup}
    \begin{split}
         \beta_1(s) \geqslant \sup_{\|f\|^2_\mathrm{osc}\leqslant1}\{ \mathcal{E}_\Pi(\bar \Pi_Y,f)-s \cdot \mathcal{E}_\Pi(H_1^2,f) \},\\
        \beta_2(s) \geqslant \sup_{\|f\|^2_\mathrm{osc}\leqslant1}\{ \mathcal{E}_\Pi(\bar \Pi_X,f)-s \cdot \mathcal{E}_\Pi(H_2^2,f) \}.
    \end{split}
    \end{align}
    
    By \Cref{thm: decomposition}, the decomposition of the variance of $f$ can be written as:
    \begin{align}
    \label{eq: variance_decomp}
        \|f\|_\Pi^2 &= \mathcal{E}_\Pi(\bar \Pi_Y,f)+\mathcal{E}_\Pi(\bar \Pi_X,\bar \Pi_Y f).
    \end{align}
    Now we claim $\mathcal{E}_\Pi(\bar \Pi_X,\bar \Pi_Y f) \leqslant \mathcal{E}_\Pi(\bar \Pi_X, f)$. 
    Fix a pair $(x,x') \in \mathcal{X} \times \mathcal{X}$, let $g_{x,x'}(y)=f(x,y)-f(x',y)$.
    Then by Jensen's inequality for the probability measure $\Pi_Y$,
    \begin{align*}
        \left(\int_\mathcal Y g_{x,x'}(y) \Pi_Y(\dif y)\right)^2 \leqslant \int_\mathcal Y g_{x,x'}(y)^2 \Pi_Y(\dif y).
    \end{align*}
    So for each $(x,x')$,
    \begin{align}
    \label{eq: dirichlet form ineq}
        &\left(\int_\mathcal Y f(x,y)\Pi_Y(\dif y)-\int_\mathcal Y f(x',y)\Pi_Y(\dif y)\right)^2 \leqslant \int_\mathcal Y \left(f(x,y)-f(x',y)\right) ^2 \Pi_Y(\dif y), \nonumber\\
        \Longrightarrow &\int_{\mathcal X,\mathcal Y}\int_{\mathcal X,\mathcal Y} \left(\int_\mathcal Y f(x,y)\Pi_Y(\dif y)-\int_\mathcal Y f(x',y)\Pi_Y(\dif y)\right)^2 \Pi_X(\dif x')\delta_y(\dif y')\Pi(\dif x, \dif y) \nonumber\\
        &\leqslant \int_{\mathcal X,\mathcal Y} \int_\mathcal Y \int_\mathcal X\left(f(x,y)-f(x',y')\right) ^2 \Pi_X(\dif x') \delta_y(\dif y')\Pi(\dif x, \dif y), \nonumber\\
        \Longrightarrow &\mathcal{E}_\Pi(\bar \Pi_X,\bar \Pi_Y f) \leqslant \mathcal{E}_\Pi(\bar \Pi_X, f).
    \end{align}
    Since $\Pi_Y$ and $\Pi_X$ are the invariant measures for $H_1$ and $H_2$, respectively, we have $\bar \Pi_Y \bar H_1=\bar \Pi_Y$. 
    Then, starting from the decompositions \eqref{eq: variance_decomp} and \Cref{thm: decomposition},
    \begin{align*}
        &\sup_{\|f\|_\mathrm{osc}^2\leqslant1} \left\{\|f\|_\Pi^2-s \cdot \mathcal{E}_\Pi(\bar{H}_2\bar{H}_1\bar{H}_1\bar{H}_2,f) \right\}\\
        &= \sup_{\|f\|_\mathrm{osc}^2\leqslant1} \left\{\mathcal{E}_\Pi(\bar{\Pi}_Y,f)+\mathcal{E}_\Pi(\bar{\Pi}_X,\bar{\Pi}_Y f)-s \cdot (\mathcal{E}_\Pi(\bar{H}_1^2,f)+\mathcal{E}_\Pi(\bar{H}_2^2,\bar{H}_1 f)) \right\}\quad \\
        &\leqslant \sup_{\|f\|_\mathrm{osc}^2\leqslant1} \left\{ \mathcal{E}_\Pi(\bar{\Pi}_Y,f)-s \cdot \mathcal{E}_\Pi(\bar{H}_1^2,f) \right\}+\sup_{\|f\|_\mathrm{osc}^2\leqslant1} \left\{ \mathcal{E}_\Pi(\bar{\Pi}_X,\bar{\Pi}_Y f)- s \cdot \mathcal{E}_\Pi(\bar{H}_2^2,\bar{H}_1 f) \right\}\\
        &\leqslant \beta_1(s)+\sup_{\|f\|_\mathrm{osc}^2\leqslant1} \left\{ \mathcal{E}_\Pi(\bar{\Pi}_X,\bar{\Pi}_Y \bar{H}_1 f)- s \cdot \mathcal{E}_\Pi(\bar{H}_2^2,\bar{H}_1 f) \right\} \quad \text{by \eqref{eq:beta_sup} and $\bar{\Pi}_Y \bar{H}_1=\bar{\Pi}_Y$}\\
        &\leqslant \beta_1(s)+\sup_{\|\bar{H}_1f\|_\mathrm{osc}^2\leqslant1} \left\{ \mathcal{E}_\Pi(\bar{\Pi}_X,\bar{\Pi}_Y \bar{H}_1 f)- s \cdot \mathcal{E}_\Pi(\bar{H}_2^2,\bar{H}_1 f) \right\} \quad \text{by $\|\bar{H}_1f\|_\mathrm{osc}^2\leqslant \|f\|_\mathrm{osc}^2$}\\
        &\leqslant \beta_1(s)+\sup_{\|f\|_\mathrm{osc}^2\leqslant1} \left\{ \mathcal{E}_\Pi(\bar{\Pi}_X,\bar{\Pi}_Y  f)- s \cdot \mathcal{E}_\Pi(\bar{H}_2^2,f) \right\}\\
        &\leqslant \beta_1(s)+\sup_{\|f\|_\mathrm{osc}^2\leqslant1} \left\{ \mathcal{E}_\Pi(\bar{\Pi}_X,f)- s \cdot \mathcal{E}_\Pi(\bar{H}_2^2,f) \right\} \quad \text{by \eqref{eq: dirichlet form ineq}} \\
        &\leqslant \beta_1(s)+ \beta_2(s)\quad \text{by \eqref{eq:beta_sup}}.
    \end{align*}
    
    It is clear that the function $\beta:=\beta_1+\beta_2$ satisfies the regularity conditions of the WPI definition for $\left({H_1 \otimes H_2}\right)^2$.
\end{proof}

\begin{corollary}
    Let $\{ H_i \}_{i=1}^n$ be $n$ independent Markov chains targeting  $\{ \Pi_i \}_{i=1}^n$, respectively. Suppose that for each $i = 1, \dots, n$, the chain $H_i^2$ satisfies the weak Poincar{\'{e}} inequality with function $\beta_i$, i.e., for all $f_i \in \mathrm{L}^2(\Pi_i)$,
    \begin{align*}
        \mathrm{Var}_{\Pi_i}(f_i) &\leqslant s \cdot \mathcal{E}_{\Pi_i}(H_i^2,f_i)+ \beta_i(s) \cdot \|f_i\|_\mathrm{osc}^2.
    \end{align*}
    
    Then, the tensor product chain $H = H_1 \otimes \cdots \otimes H_n$ satisfies the weak Poincar{\'{e}} inequality with function $\beta$, i.e., for all $f \in L_0^2(\Pi)$,
    \begin{align*}
       \|f\|_\Pi^2 &\leqslant s \cdot \mathcal{E}_{\Pi}(H^2,f)+ \beta(s) \cdot \|f\|_\mathrm{osc}^2,
    \end{align*}
    where $\beta(s)=\sum_{i=1}^n \beta_i(s)$.
\end{corollary}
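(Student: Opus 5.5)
We argue by induction on $n$, using \Cref{thm: tensor product} as the inductive step. For $n=1$ the claim is exactly the hypothesis on $H_1^2$, since $\|f\|_{\Pi_1}^2=\mathrm{Var}_{\Pi_1}(f)$ for $f\in\mathrm{L}^2_0(\Pi_1)$. For $n=2$ it is \Cref{thm: tensor product} itself. As there, we read the hypotheses as including that each $H_i$ is $\Pi_i$-reversible and positive.

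For the inductive step, suppose the statement holds for $n-1$ chains. Put $\tilde H:=H_1\otimes\cdots\otimes H_{n-1}$ on $\tilde{\mathcal{X}}:=\mathcal{X}_1\times\cdots\times\mathcal{X}_{n-1}$ with target $\tilde\Pi:=\Pi_1\otimes\cdots\otimes\Pi_{n-1}$. The inductive hypothesis gives, for all $g\in\mathrm{L}^2_0(\tilde\Pi)$,
$$\|g\|_{\tilde\Pi}^2\leqslant s\,\mathcal{E}_{\tilde\Pi}(\tilde H^2,g)+\Big(\sum_{i=1}^{n-1}\beta_i(s)\Big)\|g\|^2_{\mathrm{osc}}.$$
This extends to all $g\in\mathrm{L}^2(\tilde\Pi)$ in variance form. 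Indeed, the Dirichlet form and the oscillation seminorm are invariant under adding constants, so applying the bound to $g-\tilde\Pi(g)$ yields $\mathrm{Var}_{\tilde\Pi}(g)$ on the left. This is precisely the form \eqref{eq:tensor_prod_1} required by \Cref{thm: tensor product}.

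We also need $\tilde H$ to be $\tilde\Pi$-reversible and positive. Reversibility holds because a product of commuting self-adjoint operators is self-adjoint; the embedded factors act on distinct coordinates and therefore commute. For positivity, we realise $\tilde H$ on $\mathrm{L}^2(\tilde\Pi)$ as $\bar H_1\cdots\bar H_{n-1}$, where each $\bar H_i$ is positive and self-adjoint and the factors commute. A product of commuting positive self-adjoint operators is positive: for two factors, $\langle AB f,f\rangle=\langle A^{1/2}BA^{1/2}f,f\rangle\geqslant0$, since $A^{1/2}$ commutes with $B$; induction handles more factors. Alternatively, $\langle \bar H_1\bar H_2 f,f\rangle=\langle \bar H_2(\bar H_1^{1/2}f),\bar H_1^{1/2}f\rangle\geqslant0$. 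This is the same property the proof of \Cref{thm: tensor product} asserts for $\bar H_1\bar H_2$.

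Next we apply \Cref{thm: tensor product} with its first chain $\tilde H$ (with function $\sum_{i<n}\beta_i$) and its second chain $H_n$ (with function $\beta_n$). The product space $\tilde{\mathcal X}\times\mathcal X_n$ with measure $\tilde\Pi\otimes\Pi_n=\Pi$ is canonically identified with the full product, and under this identification $\tilde H\otimes H_n=H$. The theorem then gives a WPI for $H^2$ with $\beta=\sum_{i<n}\beta_i+\beta_n=\sum_{i=1}^n\beta_i$. We still need $\beta$ to be admissible in the sense of \Cref{def:WPI}: a finite sum of decreasing functions tending to $0$ is again decreasing and tends to $0$.

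The only step needing care is the one we expect to be the main obstacle: checking that $\tilde H$ inherits reversibility and positivity, so that \Cref{thm: tensor product} applies at every stage. The commuting-factor argument above handles it. Everything else is bookkeeping of measure-space identifications.
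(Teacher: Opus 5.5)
Your proposal is correct and follows the same route as the paper, whose proof says only that the result follows by induction from \Cref{thm: tensor product}. Your extra checks fill in what that one line leaves implicit, and they are sound: that the intermediate product $H_1\otimes\cdots\otimes H_{n-1}$ is again reversible and positive, and that its WPI extends from $\mathrm{L}^2_0$ to the variance form on all of $\mathrm{L}^2$.
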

\begin{proof}
    The proof follows directly by induction.
\end{proof}
The following corollary demonstrates that \Cref{thm: tensor product} is consistent with the standard Poincar{\'{e}} inequality, which characterizes the existence of a spectral gap. Specifically, it establishes that the minimum of the spectral gaps of two independent Markov chains provides a lower bound for the spectral gap of their tensor product chain. Since our primary interest lies in upper bounds on the convergence rate, this result implies that obtaining upper bounds based on spectral gaps for the individual chains directly yields an upper bound for the tensor product chain. The proof is in \Cref{prf:Thm:SPIproduct}. 
\begin{corollary}
\label{Thm:SPIproduct}
    Let $H_1$ and $H_2$ be two independent, reversible and positive Markov chain targeting $\Pi_Y$ and $\Pi_X$, respectively. Suppose that the spectral gaps of $H_1$ and $H_2$ exist and equal to $\gamma_1$ and $\gamma_2$, respectively: for any $h \in \mathrm{L}^2(\Pi_Y)$ and $g \in \mathrm{L}^2(\Pi_X)$,
\begin{align*}
    \gamma_1= \inf_{h: \mathrm{Var}_{\Pi_Y}(h) \neq 0} \frac{\mathcal{E}_{\Pi_Y}(H_1,h)}{\mathrm{Var}_{\Pi_Y}(h)},\quad
    \gamma_2= \inf_{g: \mathrm{Var}_{\Pi_X}(g) \neq 0} \frac{\mathcal{E}_{\Pi_X}(H_2,g)}{\mathrm{Var}_{\Pi_X}(g)}.
\end{align*}
Then the tensor product chain $H = H_1 \otimes H_2$ satisfies a standard Poincar{\'{e}} inequality:
\begin{align*}
    \min\{\gamma_1,\gamma_2\} \cdot \|f\|_\Pi^2 &\leqslant  \mathcal{E}_\Pi(H,f),\quad  \forall f \in \mathrm{L}^2_0(\Pi).
\end{align*}
\end{corollary}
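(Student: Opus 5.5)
The plan is a direct Dirichlet-form argument, mirroring the proof of \Cref{thm: tensor product} but working with $H=\bar H_1\bar H_2$ itself rather than its square. Routing through \Cref{thm: tensor product} with $\beta_i(s)=\mathbbm{1}_{\{s\leqslant \gamma_i^{-1}\}}$ (via \Cref{rmk:strongtoweak} and \Cref{lem:PtoP^2}) would only give an SPI for $H^2$ with constant $\min\{\gamma_1,\gamma_2\}$. Passing back to $H$ by \Cref{lem:P^2toP} would then lose a factor $2$, so I avoid that route.

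\textbf{Step 1 (setup).} Reuse the notation of the proof of \Cref{thm: tensor product}: the embedded kernels $\bar H_1,\bar H_2$ and the projections $\bar\Pi_X,\bar\Pi_Y$. The fibrewise computation there, applied with $H_1$ in place of $H_1^2$ together with the spectral gap assumption $\mathrm{Var}_{\Pi_Y}(h)\leqslant \gamma_1^{-1}\mathcal E_{\Pi_Y}(H_1,h)$, gives
\begin{align*}
\gamma_1\,\mathcal E_\Pi(\bar\Pi_Y,g)\leqslant \mathcal E_\Pi(\bar H_1,g),\qquad \gamma_2\,\mathcal E_\Pi(\bar\Pi_X,g)\leqslant \mathcal E_\Pi(\bar H_2,g),
\end{align*}
for all $g\in\mathrm L^2(\Pi)$. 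Recall also the variance decomposition \eqref{eq: variance_decomp}, $\|f\|_\Pi^2=\mathcal E_\Pi(\bar\Pi_Y,f)+\mathcal E_\Pi(\bar\Pi_X,\bar\Pi_Yf)$, and the Jensen bound \eqref{eq: dirichlet form ineq}.

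\textbf{Step 2 (decomposing $\mathcal E_\Pi(H,f)$).} $\bar H_1$ and $\bar H_2$ are commuting, self-adjoint, positive contractions on $\mathrm L^2(\Pi)$. Hence $\bar H_1$ has a positive self-adjoint square root $A:=\bar H_1^{1/2}$, which commutes with $\bar H_2$, and $H=A\bar H_2A$. This gives
\begin{align*}
\mathcal E_\Pi(H,f)=\langle (I-\bar H_1)f,f\rangle_\Pi+\langle (I-\bar H_2)Af,Af\rangle_\Pi=\mathcal E_\Pi(\bar H_1,f)+\mathcal E_\Pi(\bar H_2,Af).
\end{align*}
Applying Step 1 to each term yields
\begin{align*}
\mathcal E_\Pi(H,f)\geqslant \gamma_1\,\mathcal E_\Pi(\bar\Pi_Y,f)+\gamma_2\,\mathcal E_\Pi(\bar\Pi_X,Af).
\end{align*}

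\textbf{Step 3 (closing).} By \eqref{eq: dirichlet form ineq}, $\mathcal E_\Pi(\bar\Pi_X,Af)\geqslant\mathcal E_\Pi(\bar\Pi_X,\bar\Pi_YAf)$. The key point is then $\bar\Pi_Y A=\bar\Pi_Y$. This follows because $\bar\Pi_Y\bar H_1=\bar\Pi_Y$ makes the constants-in-$y$ part invariant, and $A$ is a limit of polynomials in $\bar H_1$ (or one argues fibrewise via the spectral theorem for $H_1$: $H_1^{1/2}$ fixes constants and is self-adjoint, so it preserves $\Pi_Y$-means). Therefore $\mathcal E_\Pi(\bar\Pi_X,Af)\geqslant \mathcal E_\Pi(\bar\Pi_X,\bar\Pi_Yf)$. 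Combining with \eqref{eq: variance_decomp},
\begin{align*}
\mathcal E_\Pi(H,f)\geqslant \min\{\gamma_1,\gamma_2\}\left(\mathcal E_\Pi(\bar\Pi_Y,f)+\mathcal E_\Pi(\bar\Pi_X,\bar\Pi_Yf)\right)=\min\{\gamma_1,\gamma_2\}\|f\|_\Pi^2 .
\end{align*}

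\textbf{Main obstacle.} The delicate point is the square-root step: justifying that $A=\bar H_1^{1/2}$ exists as a positive operator commuting with $\bar H_2$, and that $\bar\Pi_YA=\bar\Pi_Y$. $A$ need not be a Markov operator, but only its $\mathrm L^2$ properties are used. If this causes trouble, the fallback is a fibrewise spectral argument. Condition on $x$ and use the spectral measure of $H_1$ on $\mathrm L^2_0(\Pi_Y)$. Then on the orthogonal decomposition $\mathrm L^2(\Pi)=\mathrm{ran}\,\bar\Pi_Y\oplus\ker\bar\Pi_Y$, which is invariant under both $\bar H_1$ and $\bar H_2$, bound $1-\lambda\mu\geqslant\min\{1-\lambda,1-\mu\}$ for $\lambda,\mu\in[0,1]$.
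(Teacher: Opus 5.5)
Your proof is correct, and it follows a genuinely different route from the paper's.

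\textbf{The paper's route.} The paper does pass through \Cref{thm: tensor product}. It applies it to $H_1^2,H_2^2$ with indicator functions $\beta_i$ built from the gaps $C_{H_i}=1-(1-\gamma_i)^2$ of $H_i^2$, and obtains an SPI for $H^2$ with constant $\min_i C_{H_i}$. It then returns to $H$ not through \Cref{lem:P^2toP} but through an exact relation: since $H=\bar H_1\bar H_2$ is self-adjoint and positive, the spectral gap of $H^2$ equals $1-(1-\gamma)^2$, where $\gamma$ is the gap of $H$. Monotonicity of $t\mapsto 1-(1-t)^2$ on $[0,1]$ finishes the argument. So the paper also avoids the factor-$2$ loss you anticipated, but via a spectral identity rather than by bypassing $H^2$.

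\textbf{Your route.} Your factorization $H=A\bar H_2A$ gives the symmetric splitting $\mathcal{E}_\Pi(H,f)=\mathcal{E}_\Pi(\bar H_1,f)+\mathcal{E}_\Pi(\bar H_2,\bar H_1^{1/2}f)$. The point you flag as the main obstacle is harmless. Since $\sqrt{t}$ is a uniform limit of polynomials $p_n$ on $[0,1]$, $A=\lim p_n(\bar H_1)$ in operator norm, so $A$ commutes with $\bar H_2$. Moreover $\bar\Pi_Y\bar H_1=\bar\Pi_Y$ gives $\bar\Pi_Y p_n(\bar H_1)=p_n(1)\bar\Pi_Y\to\bar\Pi_Y$, hence $\bar\Pi_Y A=\bar\Pi_Y$.

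\textbf{What your route buys.} It is self-contained, with no WPI machinery and no spectral analysis of $H$ itself. It also uses positivity only of $H_1$, to define $A$. The paper needs positivity of $H_1$, $H_2$ and $H$ to identify the gaps of the squares. So your argument in fact proves the statement when only one factor is positive (by symmetry, either one).

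\textbf{What it gives up.} $A$ is not a Markov operator, so there is no bound $\|Af\|_{\mathrm{osc}}\leqslant\|f\|_{\mathrm{osc}}$, and the trick does not extend to weak Poincar\'e inequalities. That is exactly why \Cref{thm: tensor product} is formulated for the squares $H_i^2$: there the second term of the decomposition involves the Markov operator $\bar H_1$.
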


    In the context of continuous-time Markov chains, \cite{barthe2005concentration} shows that the weak Poincar{\'{e}} inequality holds for the tensor product of $n$ independent copies of the same chain. Specifically, the corresponding function $\beta'$ for the tensor product satisfies $\beta'(s)=n\beta(s)$, where $\beta(s)$ is corresponding to the weak Poincar{\'{e}} function for a single chain. Our result can hence be seen as a generalisation of this result in the discrete-time case.

\section{Toy Example: Normal Inverse Gamma Distribution}
\label{Sec: toy example}
We consider the problem of sampling from a Normal-inverse-gamma (NIG) distribution using the following deterministic-scan Gibbs sampler $P$:
\begin{align}
\label{NIG}
\begin{split}
    &\tau|\xi \sim \Gamma\left(1,\beta +\frac{\xi^2}{2}\right),\\
    &\xi|\tau \sim \mathcal{N}\left(0,\frac{1}{\tau}\right),
\end{split}
\end{align}
where $\beta>0$, and $(\xi,1/\tau)$ is NIG distribution.

We now consider the deterministic-scan MwG (MG) sampler $P_{1,2}=H_1H_2$ with invariant distribution $\Pi(\tau,\xi)$.  
Specifically, let $H_1=H_{1|{\xi}}(\dif \tau'|\tau)\delta_\xi(\dif \xi')$ and $H_2=H_{2|{\tau}}(\dif \xi'|\xi)\delta_\tau(\dif \tau')$, 
where $H_{1|{\xi}}$ and $H_{2|{\tau}}$ denote the transition kernels corresponding to the Random Walk Metropolis (RWM) updates for the $\tau$- and $\xi$-components, respectively. Each RWM kernel targets the respective full conditional distributions $\Pi(\tau| \xi)$ and $\Pi(\xi| \tau)$, and is implemented with step size parameters $\sigma_{\xi}$ and $\sigma_{\tau}$. We denote the marginal distributions of $\tau$ and $\xi$ as $\Pi_T(\dif \tau)$ and $\Pi_\Xi(\dif \xi)$, respectively.

The following lemma follows from the results of \cite{andrieu2024explicit}, which provides a way to obtain the WPIs for the RWM kernel $H_{1|{\xi}}(\dif \tau'|\tau)$ and $H_{2|{\tau}}(\dif \xi'|\xi)$.

\begin{lemma}
\label{eg: spectralgap_for_conditional}
    We write $\gamma_\xi$ for the spectral gap of $H_{1|{\xi}}(\dif \tau'|\tau)$ and $\gamma_\tau$ for the spectral gap of $H_{2|{\tau}}(\dif \xi'|\xi)$. Then, we have
    \begin{align*}
        \gamma_\xi \geqslant c \cdot \frac{(\beta_\xi\sigma_\xi)^6}{(\beta^2_\xi\sigma_\xi^2+1)^4}, \quad
        \gamma_\tau \geqslant c_0 \cdot \sigma_\tau^2\cdot \tau \cdot \exp(-2\cdot\sigma_\tau^2\cdot \tau),
    \end{align*}
    where $\beta_\xi=\beta+\frac{\xi^2}{2}$, $c=\pi^{-2} \times 2^{-11}$,  $c_0=1.972 \times 10^{-4}$.
\end{lemma}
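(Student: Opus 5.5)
We will derive both bounds from the explicit spectral-gap estimates for RWM in \cite{andrieu2024explicit}. These apply to targets with a log-concave (or suitably regular) density on $\mathbb{R}$, and give a gap lower bound in terms of the proposal scale $\sigma$ and the curvature/scale of the target. The plan has three steps: identify each full conditional, verify it falls within the scope of the cited result, and substitute the relevant scale parameter.

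\textbf{The $\tau$-update.} For fixed $\xi$, the conditional $\Pi(\tau\mid\xi)$ is $\Gamma(1,\beta_\xi)$ with $\beta_\xi=\beta+\xi^2/2$, i.e.\ an exponential law with rate $\beta_\xi$ supported on $(0,\infty)$. This law is log-concave but not strongly log-concave, and it lives on a half-line. Rescaling $\tau\mapsto\beta_\xi\tau$ sends it to the standard exponential, and sends the RWM with step $\sigma_\xi$ to an RWM with step $\beta_\xi\sigma_\xi$. By scale invariance of the spectral gap, $\gamma_\xi$ therefore depends only on $\beta_\xi\sigma_\xi$. We then invoke the result in \cite{andrieu2024explicit} for log-concave targets with a bounded-isoperimetry (Cheeger) constant. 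That argument combines a close-coupling estimate, $\|P_x-P_y\|_{\mathrm{TV}}\le 1-\delta$ for $|x-y|$ small relative to $\sigma$, with the isoperimetric profile of the exponential distribution. It yields a conductance bound, and Cheeger's inequality converts this into the polynomial expression $c(\beta_\xi\sigma_\xi)^6/(\beta_\xi^2\sigma_\xi^2+1)^4$ with $c=\pi^{-2}2^{-11}$. The remaining work is to track the constants when the normalised step $h=\beta_\xi\sigma_\xi$ is plugged in.

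\textbf{The $\xi$-update.} For fixed $\tau$, the conditional $\Pi(\xi\mid\tau)$ is $\mathcal N(0,1/\tau)$, which is strongly log-concave with condition number $1$. Rescaling $\xi\mapsto\sqrt{\tau}\,\xi$ reduces it to a standard Gaussian with normalised step $h^2=\sigma_\tau^2\tau$. The Gaussian-target bound in \cite{andrieu2024explicit} has the form $\gamma\ge c_0\,h^2\exp(-2h^2)$ in dimension one. Reading this off with $h^2=\sigma_\tau^2\tau$ gives the second inequality, with $c_0=1.972\times10^{-4}$.

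\textbf{Main obstacle.} The hard step is the $\tau$-update. The exponential target is not strongly log-concave and is supported on a half-line, so proposals below $0$ are rejected. We must check that the cited result covers this case, or else redo the conductance estimate directly. That means lower-bounding the acceptance probability $\min\{1,e^{-h z}\}\mathbbm{1}_{\tau+\sigma z>0}$ uniformly in the bulk and controlling the overlap of proposal kernels near the boundary. We would first try this through the Cheeger constant $1$ of the standard exponential and the general close-coupling/conductance lemma of \cite{andrieu2024explicit}, and then optimise the constants.
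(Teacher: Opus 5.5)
Your overall route matches the paper's. The $\xi$-update is read off from Theorem~1 of \cite{andrieu2024explicit}, applied to a one-dimensional Gaussian with $m=L=\tau$. The $\tau$-update uses that paper's close-coupling plus isoperimetry result (its Theorem~18 with Lemma~38), with exactly the caveat you flag: the positive-density-on-$\mathbb{R}$ assumption fails for the exponential law, and the paper simply asserts that the proofs still go through. The problem is that for the $\tau$-update you state the final expression rather than derive it. The step you defer as ``tracking constants'' is the only substantive computation in the proof, and nothing in your proposal explains where the form $h^6/(h^2+1)^4$ comes from.

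The missing ingredient is the minimal acceptance probability $\alpha_0$, the infimum over \emph{all} states of the RWM acceptance probability. The cited bound reads
\[
\gamma_\xi\ge 2^{-5}\epsilon^2\min\{1,4\delta^2 I_\pi(1/4)^2\},\qquad I_\pi(p)=\beta_\xi\min\{p,1-p\},
\]
with close-coupling constants $\epsilon=\alpha_0/2$ and $\delta=\alpha_0\sigma_\xi$.

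\begin{itemize}
\item For the exponential target the infimum is attained as $\tau\downarrow 0$, where every downward proposal falls outside the support. This gives $\alpha_0=e^{h^2/2}(1-\Phi(h))$ with $h=\beta_\xi\sigma_\xi$.
\item Substituting gives $\gamma_\xi\ge 2^{-9}\alpha_0^4h^2$. The minimum is the second branch because the Mills-ratio upper bound gives $\alpha_0 h\le(2\pi)^{-1/2}$, so $4\delta^2I_\pi(1/4)^2=\alpha_0^2h^2/4<1$.
\item The Mills-ratio lower bound $1-\Phi(t)\ge(2\pi)^{-1/2}\,t\,(t^2+1)^{-1}e^{-t^2/2}$ gives $\alpha_0\ge(2\pi)^{-1/2}h/(h^2+1)$.
\end{itemize}

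Together these yield the exponents $6$ and $4$ and the constant $c=2^{-9}(2\pi)^{-2}=\pi^{-2}2^{-11}$.

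This also shows that the separate ``bulk'' acceptance bound plus boundary-overlap analysis you sketch is unnecessary. The close-coupling lemma only needs a global lower bound on acceptance, and the boundary is precisely where acceptance is worst.
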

\begin{proof}
    See \Cref{Prf:eg: spectralgap_for_conditional}.
\end{proof}

\begin{lemma}
    \label{eg:spectralgap_for_DG}
    There exists a constant $\gamma>0$, such that the following standard Poincar{\'{e}} inequality holds for the exact DG sampler $P^*P$: for any $f \in L_0^2(\Pi)$,
    \begin{align*}
        \gamma \|f\|_{\Pi}^2 \leqslant \mathcal{E}_{\Pi}(P^*P,f).
    \end{align*}
\end{lemma}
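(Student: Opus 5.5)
We will prove the standard Poincar\'e inequality for $P^*P$ by passing to a marginal chain and using that the joint and marginal chains converge at the same exponential rate. Write $P = G_1 G_2$, where the $\tau$-update comes first and then the $\xi$-update. The relevant marginal is the $\xi$-marginal $P_\Xi(\xi,\dif\xi') = \int \Pi(\dif\xi'|\tau)\Pi(\dif\tau|\xi)$, which is a reversible positive kernel on $\mathbb R$. Running the argument of \Cref{thm:P_XtoP} with the roles of the two components arranged suitably (or equivalently through \Cref{Thm:K*:PtoP*}), a spectral gap of $P_\Xi$ transfers to a SPI for $P^*P$. The key identity is $\mathcal{E}_\Pi(P^*P,f)=\|f\|^2-\|G_1G_2 f\|^2$. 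Since $G_1 G_2 f$ depends only on $\xi$ (after $G_2$ integrates out $\xi$ given $\tau$, $G_1$ integrates out $\tau$ given $\xi$), we get $\|P^{n}f\|^2 = \|P_\Xi^{\,n-1} g_f\|^2$ with $g_f=Pf$. Thus $\|P^n\|_{\mathrm L^2_0}\le \|P_\Xi^{\,n-1}\|_{\mathrm L^2_0}$. A positive spectral gap for $P_\Xi$ then gives $\|P\|^2_{\mathrm L^2_0\to \mathrm L^2_0}=\|G_1G_2\|^2=\|P_\Xi\|_{\mathrm L^2_0}<1$, using the standard identity $\|G_1G_2\|^2=\|G_2G_1G_2\|$ restricted to mean-zero functions, which equals the maximal correlation squared (\cite{liu1994covariance}). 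Hence $\mathcal E(P^*P,f)\ge (1-\|P_\Xi\|)\|f\|^2$, and we may take $\gamma=1-\|P_\Xi\|$.

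It therefore suffices to show that the maximal correlation between $\tau$ and $\xi$ under $\Pi$ is strictly less than one. Equivalently, the positive self-adjoint operator $P_\Xi$ (or $P_T$) has a spectral gap on $\mathrm L^2_0$. I would prove this for the $\tau$-marginal chain $P_T(\tau,\dif\tau')=\int\Pi(\dif\tau'|\xi)\Pi(\dif\xi|\tau)$, since it has the same nonzero spectrum as $P_\Xi$. The plan is a drift and minorisation argument giving geometric ergodicity. By reversibility, geometric ergodicity of $P_T$ is equivalent to an $\mathrm L^2$ spectral gap (Roberts--Rosenthal). The explicit dynamics are: given $\tau$, $\xi^2\sim \tau^{-1}\chi^2_1$, and then $\tau'\sim\mathrm{Exp}(\beta+\xi^2/2)$. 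Take $V(\tau)=\tau^{a}+\tau^{-b}$ for small $a,b>0$. For large $\tau$, $\mathbb E[\tau'^{a}\mid\tau]=\Gamma(1+a)\,\mathbb E[(\beta+\xi^2/2)^{-a}]\le C\tau^{a}\mathbb E[(Z^2/2)^{-a}]$, which is finite and gives a contraction factor below one once $a$ is small, with an additive constant from $\beta$. For small $\tau$, $\mathbb E[\tau'^{-b}\mid\tau]=\Gamma(1-b)\,\mathbb E[(\beta+\xi^2/2)^{b}]\approx \Gamma(1-b)\,\tau^{-b}\,\mathbb E[(Z^2/2)^b]$. Here I need $\Gamma(1-b)\,\mathbb E[(Z^2/2)^b]<1$ for some small $b$. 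A derivative check at $b=0$ gives $\gamma_E+\mathbb E\log(Z^2/2)=\gamma_E-\gamma_E-2\log 2<0$, so this holds. On compact sets $[\epsilon,M]$, the transition density of $P_T$ is continuous and positive, which gives a minorisation (small sets).

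I expect the main obstacle to be the drift at the two boundary regimes, in particular the small-$\tau$ regime, where the contraction factor is close to one and the $\beta$ term must be controlled uniformly. A cleaner alternative that I would try first is to compute the maximal correlation directly. The conditional structure suggests changing variables to $\tau\beta_\xi$, which is exactly $\mathrm{Exp}(1)$ and independent of $\xi$. The dependence then enters only through $\log(\beta+\xi^2/2)$, so it would suffice to bound the maximal correlation of $(\log\tau, \log\beta_\xi)$ away from one via a Gaussian-like comparison. This route seems harder to make fully rigorous, so the drift route is the fallback. Since the statement only asserts existence of some $\gamma>0$, a non-explicit constant from geometric ergodicity suffices.
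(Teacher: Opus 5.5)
Your plan is correct and follows the paper's structure. Both arguments reduce the SPI for $P^*P$ to a spectral gap for the reversible, positive $\tau$-marginal chain, via the Liu--Wong--Kong identity $\|P\|^2_{\mathrm{L}^2_0}=\|P_T\|$, and both obtain that gap from a Foster--Lyapunov drift.

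Your $\tau^{-b}$ component is exactly the paper's drift. The paper writes the marginal in $\kappa=1/\tau$ as $\kappa'=\frac{N^2}{2E}\kappa+\frac{\beta}{E}$ and takes $V(\kappa)=\kappa^{1/4}$. Subadditivity of $x\mapsto x^{1/4}$ then gives the contraction factor $\Gamma(3/4)^2/\sqrt{\pi}\approx 0.85$, which is your factor $\Gamma(1-b)\Gamma(1/2+b)/\sqrt{\pi}$ at its minimiser $b=1/4$.

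The routes differ in how the drift is converted into a gap. The paper applies the Lyapunov-to-Poincar\'e theorem of \cite{taghvaei2021lyapunov} directly. You go through geometric ergodicity and Roberts--Rosenthal, as the paper does for its other examples. Your route is non-quantitative, but the lemma only asserts existence of $\gamma$, so it suffices.

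One estimate in your proposal is wrong, though the step can be repaired. In the large-$\tau$ regime you bound $\mathbb{E}[\tau'^a\mid\tau]\le\Gamma(1+a)\,\tau^a\,\mathbb{E}[(Z^2/2)^{-a}]$. This has multiplicative factor $\Gamma(1+a)\Gamma(1/2-a)/\sqrt{\pi}$, which exceeds $1$ for every $a\in(0,1/2)$. It equals $1$ at $a=0$, is log-convex, and has log-derivative $\psi(1)-\psi(1/2)=2\log 2>0$ there. So this bound gives no contraction.

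The correct observation is that $\beta>0$ yields the uniform bound $\mathbb{E}[\tau'^a\mid\tau]\le\Gamma(1+a)\beta^{-a}$, so the $\tau^a$ term contributes only a constant. In fact that term is unnecessary, as in the paper. Every set $\{\tau\ge\epsilon\}$ is already small, because for all $\tau\ge\epsilon$ the transition density satisfies $p(\tau,\tau')\ge\beta e^{-\beta\tau'}(1+\tau'/\epsilon)^{-1/2}$. Hence $V(\tau)=\tau^{-b}$ alone suffices.

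Your worry about the small-$\tau$ regime is also unfounded. The inequality $(\beta+\xi^2/2)^b\le\beta^b+(\xi^2/2)^b$ makes the $\beta$-contribution a pure additive constant $\Gamma(1-b)\beta^b$. The multiplicative factor is not close to one once $b$ is taken near $1/4$, so no delicate uniform control is needed.
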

\begin{proof}
    See \Cref{Prf:eg:spectralgap_for_DG}.
\end{proof}

We consider two cases. In \Cref{eg: NIG_scale_sigma}, when the step sizes $\sigma_\xi^2$ and $\sigma_\tau^2$ are scaled appropriately, the chain $P_{1,2}$ converges exponentially. In contrast, \Cref{eg: NIG_fix_sigma} shows that fixing the step sizes at $\sigma_\xi^2=\sigma_\tau^2=\sigma_0^2$ yields only polynomial convergence. More generally, it is possible using our techniques to bound the convergence rate for arbitrary choices of the step sizes.

\begin{theorem}
\label{eg: NIG_scale_sigma}
    Under the choices $\sigma_\xi^2=3/\beta_\xi^2$ and $\sigma_\tau^2=1/2\tau$, the spectral gaps $\gamma_\xi$ and $\gamma_\tau$ are uniformly bounded over $\xi$ and $\tau$, respectively.
    Then, we have the convergence bound for MG sampler $P_{1,2}$:  for any $f \in \mathrm{L}^2_0(\Pi)$  
    and any $n \in \mathbb{N}$, 
    \begin{align*}
    \|P_{1,2}^nf \|^2_\Pi \leqslant \|f\|^2_\Pi \cdot \exp\left(-\gamma_\tau\gamma_\xi\gamma\cdot n\right),
    \end{align*}
    where $\gamma_\xi= \frac{27}{256}\times \pi^{-2} \times 2^{-11}$ and $\gamma_\tau=\frac{1.972 \times 10^{-4}}{2e}$.
\end{theorem}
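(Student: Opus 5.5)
The plan is to reduce everything to standard Poincar\'e inequalities, so that the weak Poincar\'e machinery collapses to multiplication of spectral gaps. First I plug the prescribed step sizes into \Cref{eg: spectralgap_for_conditional}.

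\begin{itemize}
\item With $\sigma_\xi^2=3/\beta_\xi^2$ we have $\beta_\xi^2\sigma_\xi^2=3$, so
\[
\frac{(\beta_\xi\sigma_\xi)^6}{(\beta_\xi^2\sigma_\xi^2+1)^4}=\frac{27}{256},
\]
giving $\gamma_\xi\geqslant \frac{27}{256}\pi^{-2}2^{-11}$ for every $\xi$.
\item With $\sigma_\tau^2=1/(2\tau)$ we have $\sigma_\tau^2\tau=1/2$, so $\gamma_\tau\geqslant c_0\cdot\frac12 e^{-1}$ for every $\tau$.
\end{itemize}

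Both lower bounds are free of the conditioning variable. Hence \Cref{Asp:P1} and \Cref{Asp:P2} hold with $\beta_i(s,\cdot)=\mathbbm{1}_{\{s\leqslant \gamma_i^{-1}\}}$ (\Cref{rmk:strongtoweak}). Averaging over $\Pi_X$ or $\Pi_Y$ changes nothing, so \Cref{lem:P1_to_tildeP1} becomes the exact comparisons
\[
\gamma_\xi\,\mathcal{E}_\Pi(G_1,f)\leqslant \mathcal{E}_\Pi(H_1,f),\qquad \gamma_\tau\,\mathcal{E}_\Pi(G_2,f)\leqslant \mathcal{E}_\Pi(H_2,f).
\]
Positivity of the RWM kernels, which have Gaussian increments, makes these assumptions legitimate. \Cref{eg:spectralgap_for_DG} supplies \Cref{Asp:P_SPI} with constant $\gamma$.

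Next I would obtain an SPI for $P_{1,2}^*P_{1,2}$ with constant $\gamma\gamma_\xi\gamma_\tau$, following the route of \Cref{fig:PandP2} in its SPI form. For an indicator $\beta$ with constant $\gamma_i$, the associated $\mathrm{K}_i^*$ is linear, $\mathrm{K}_i^*(v)=\gamma_i v$ on the relevant range $v\leqslant 1/4$. Consequently every chaining step (\Cref{Thm:chaining}) just multiplies constants, and \Cref{thm:strongPtotildeP} yields $\mathrm{K}^*(v)=c\,\gamma_\xi\gamma_\tau\gamma v$ for some absolute $c$.

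To avoid the factors $2$ and $1/4$ and reach exactly $\gamma_\tau\gamma_\xi\gamma$, I would instead argue directly where possible.
\begin{itemize}
\item The SPI for $PP^*$ holds with the same constant $\gamma$, by \Cref{thm:strongPtoP^*}.
\item I decompose via \Cref{thm: decomposition}, using $\mathcal{E}(H_i^2,\cdot)\geqslant\mathcal{E}(H_i,\cdot)$ (\Cref{lem:PtoP^2}) and the fact that each $G_i$ is a projection. This gives
\[
\mathcal{E}_\Pi(P_{1,2}^*P_{1,2},f)\geqslant \gamma_\tau\mathcal{E}_\Pi(G_2,f)+\gamma_\xi\mathcal{E}_\Pi(G_1,H_2f)\geqslant\gamma_\xi\gamma_\tau\bigl[\mathcal{E}_\Pi(G_2,f)+\mathcal{E}_\Pi(G_1,H_2f)\bigr],
\]
since $\gamma_\xi,\gamma_\tau\leqslant 1$.
\item I then transfer the bracket to the $P_2$ and $P$ Dirichlet forms through the adjoint swaps, as in \Cref{Thm:K*:PtoP*}.
\end{itemize}

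Finally, an SPI with constant $\kappa$ for $P_{1,2}^*P_{1,2}$ gives
\[
\|P_{1,2}^nf\|_\Pi^2\leqslant(1-\kappa)^n\|f\|_\Pi^2\leqslant e^{-\kappa n}\|f\|_\Pi^2
\]
by the proposition following \Cref{Thm:rate}, and taking $\kappa=\gamma_\tau\gamma_\xi\gamma$ completes the proof.

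The main obstacle is the middle step. The bracket involves $\mathcal{E}_\Pi(G_1,H_2f)$ rather than $\mathcal{E}_\Pi(G_1,G_2f)$, so it is not literally $\mathcal{E}_\Pi(P^*P,f)$. The adjoint swaps in \Cref{Thm:K*:PtoP*} and the convexity steps in \Cref{thm:K*:PtoP1} each cost a constant factor. If the clean constant cannot be recovered, I would fall back on \Cref{thm:strongPtotildeP}. That gives the stated form $\exp(-C\gamma_\tau\gamma_\xi\gamma n)$ with an absolute constant $C$ that can be absorbed into the reported values of $\gamma_\xi$ and $\gamma_\tau$.
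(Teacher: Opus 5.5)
You handle the step-size plug-in, the exact comparisons $\gamma_\xi\mathcal{E}_\Pi(G_1,f)\leqslant\mathcal{E}_\Pi(H_1,f)$ and $\gamma_\tau\mathcal{E}_\Pi(G_2,f)\leqslant\mathcal{E}_\Pi(H_2,f)$, and the final passage from an SPI to exponential decay correctly. The gap is in the middle: you never obtain the constant $\gamma_\tau\gamma_\xi\gamma$ that the statement asserts.

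Your fallback through \Cref{thm:strongPtotildeP} gives $K^*(v)=2K_1^*(K_2^*(\gamma v/4))=\tfrac12\gamma_\xi\gamma_\tau\gamma\,v$. That yields only $\|P_{1,2}^nf\|_\Pi^2\leqslant\|f\|_\Pi^2\exp(-\tfrac12\gamma_\tau\gamma_\xi\gamma n)$. The factor $\tfrac12$ cannot be absorbed, because $\gamma_\xi$ and $\gamma_\tau$ are fixed explicit numbers in the theorem.

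Your direct attempt stalls for a self-inflicted reason. You lower $\mathcal{E}_\Pi(H_2^2,f)$ to $\gamma_\tau\mathcal{E}_\Pi(G_2,f)$ at the same time as the $H_1$-term. This produces the bracket $\mathcal{E}_\Pi(G_2,f)+\mathcal{E}_\Pi(G_1,H_2f)$, which is neither $\mathcal{E}_\Pi(P^*P,f)$ nor $\mathcal{E}_\Pi(P_2^*P_2,f)$.

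You are missing two observations.

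\begin{enumerate}
\item With linear $K_i^*(v)=\gamma_iv$, the convexity step of \Cref{thm:K*:PtoP1} is exact, since $2K_i^*(v/2)=\gamma_iv$. Contrary to your remark, \Cref{thm:K*:P*toP2*} and \Cref{thm:K*:P1tildeP2totildeP1tildeP2} lose nothing. The only lossy step is the adjoint swap via \Cref{Thm:K*:PtoP*}.
\item At the SPI level that swap is lossless by \Cref{thm:strongPtoP^*}, since $\|T\|=\|T^*\|$ on $\mathrm{L}^2_0(\Pi)$. You used this for $P^*P\to PP^*$, but you must use it again for $P_2P_2^*\to P_2^*P_2$.
\end{enumerate}

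Concretely, apply \Cref{thm: decomposition} to $P_2^*=H_2G_1$, and use positivity of $H_2$ and $\gamma_\tau\leqslant1$:
\begin{align*}
\mathcal{E}_\Pi(P_2P_2^*,f)&=\mathcal{E}_\Pi(G_1,f)+\mathcal{E}_\Pi(H_2^2,G_1f)\geqslant\gamma_\tau\bigl[\mathcal{E}_\Pi(G_1,f)+\mathcal{E}_\Pi(G_2,G_1f)\bigr]\\
&=\gamma_\tau\,\mathcal{E}_\Pi(PP^*,f)\geqslant\gamma_\tau\gamma\|f\|_\Pi^2.
\end{align*}
By \Cref{thm:strongPtoP^*}, $P_2^*P_2$ then satisfies an SPI with the same constant $\gamma_\tau\gamma$.

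Next, in your own decomposition of $\mathcal{E}_\Pi(P_{1,2}^*P_{1,2},f)$, keep $\mathcal{E}_\Pi(H_2^2,f)$ intact and lower only the $H_1$-term:
\begin{align*}
\mathcal{E}_\Pi(P_{1,2}^*P_{1,2},f)\geqslant\mathcal{E}_\Pi(H_2^2,f)+\gamma_\xi\,\mathcal{E}_\Pi(G_1,H_2f)\geqslant\gamma_\xi\,\mathcal{E}_\Pi(P_2^*P_2,f)\geqslant\gamma_\xi\gamma_\tau\gamma\|f\|_\Pi^2.
\end{align*}
This uses $\mathcal{E}_\Pi(P_2^*P_2,f)=\mathcal{E}_\Pi(H_2^2,f)+\mathcal{E}_\Pi(G_1,H_2f)$ from \Cref{thm: decomposition}, together with $\gamma_\xi\leqslant1$. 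This is exactly the paper's route, and it gives $\kappa=\gamma_\tau\gamma_\xi\gamma$. Your final step then yields the stated bound.
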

\begin{proof}
    See \Cref{Prf:eg: NIG_scale_sigma}.
\end{proof}

\begin{theorem}
\label{eg: NIG_fix_sigma}
    If we fix $\sigma_\xi^2$ and $\sigma_\tau^2$ as the same constants $\sigma_0^2$, then, we have the convergence bound for MG sampler $P_{1,2}$:  for any $f \in \mathrm{L}^2_0(\Pi)$ such that $\|f\|^2_\mathrm{osc} < \infty$ and any $n \in \mathbb{N}$,
    \begin{align*}
    \|P_{1,2}^nf \|^2_\Pi \leqslant \begin{cases}
        \tilde{C}_1 \cdot n^{-\frac{1}{14}} \cdot \|f\|^2_\mathrm{osc}, \quad\quad\quad\quad &\text{if}\quad  \frac{\beta}{\sigma_0}>1,\\
        \tilde{C}_2 \cdot n^{-\frac{\beta}{4\beta+10\sigma_0}}\cdot \|f\|^2_\mathrm{osc}, &\text{otherwise},
        \end{cases}
    \end{align*}
    where  $\tilde{C}_1$ and $\tilde{C}_2$ are constants which depend on $\beta$ and $\sigma_0$.
\end{theorem}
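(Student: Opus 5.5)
Under fixed step sizes the conditional spectral gaps from \Cref{eg: spectralgap_for_conditional} are not uniformly bounded, so I would turn the pointwise gaps into averaged weak Poincar\'e functions $\beta_1,\beta_2$ via \Cref{rmk:strongtoweak} and \Cref{lem:P1_to_tildeP1}. I would then combine them with the SPI of \Cref{eg:spectralgap_for_DG} through \Cref{thm:strongPtotildeP}, and read off a polynomial rate from \Cref{Thm:rate}.

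\textbf{Step 1: the $\tau$-update.} With $\sigma_\xi=\sigma_0$, \Cref{eg: spectralgap_for_conditional} gives
\[
\gamma_\xi\geqslant c\,(\beta_\xi\sigma_0)^6/(\beta_\xi^2\sigma_0^2+1)^4 .
\]
Since $\beta_\xi=\beta+\xi^2/2\geqslant\beta$, the lower bound behaves like $c(\beta_\xi\sigma_0)^{-2}$ for large $\beta_\xi$, i.e.\ like $\xi^{-4}$ in $\xi$. By \Cref{rmk:strongtoweak}, $\beta_1(s,\xi)=\mathbbm{1}\{s\leqslant \gamma_\xi^{-1}\}$, so
\[
\beta_1(s)=\Pi_\Xi\bigl(\gamma_\xi<1/s\bigr)\leqslant \Pi_\Xi\bigl(|\xi|\gtrsim s^{1/4}\bigr).
\]
The marginal of $\xi$ under the NIG model is Student-type, with density proportional to $(\beta+\xi^2/2)^{-3/2}$, so its tail is $\Pi_\Xi(|\xi|>r)\asymp r^{-1}$. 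This gives $\beta_1(s)\lesssim s^{-1/4}$. The case $\beta/\sigma_0>1$ versus $\leqslant1$ enters only through the constants and the small-$\beta_\xi$ regime (whether $\beta_\xi\sigma_0$ stays away from $0$), which I would handle separately.

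\textbf{Step 2: the $\xi$-update.} With $\sigma_\tau=\sigma_0$,
\[
\gamma_\tau\geqslant c_0\,\sigma_0^2\,\tau\, e^{-2\sigma_0^2\tau},
\]
which degenerates both as $\tau\to0$ (linearly) and as $\tau\to\infty$ (exponentially). So
\[
\beta_2(s)=\Pi_T\bigl(\gamma_\tau<1/s\bigr)\leqslant \Pi_T\bigl(\tau\lesssim 1/s\bigr)+\Pi_T\bigl(\tau\gtrsim \log s/(2\sigma_0^2)\bigr).
\]
The marginal of $\tau$ is $\Gamma(1/2,\beta)$, with density proportional to $\tau^{-1/2}e^{-\beta\tau}$. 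This gives a small-$\tau$ contribution of order $s^{-1/2}$ and a large-$\tau$ contribution of order $\exp(-\beta\log s/(2\sigma_0^2))=s^{-\beta/(2\sigma_0^2)}$. Here the comparison of $\beta$ with $\sigma_0$ decides which term dominates, so $\beta_2(s)\lesssim s^{-\min\{1/2,\,\kappa(\beta,\sigma_0)\}}$. I would get the precise exponent by optimizing carefully, perhaps keeping a polynomial prefactor.

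\textbf{Step 3: combining.} For power-law functions, $\beta_i(s)=C_i s^{-a_i}$ gives $K_i^*(v)\asymp v^{1+1/a_i}$. Applying \Cref{thm:strongPtotildeP},
\[
K^*(v)=2K_1^*\bigl(K_2^*(\gamma v/4)\bigr)\asymp v^{(1+1/a_1)(1+1/a_2)} .
\]
In \Cref{Thm:rate}, $F(x)\asymp x^{-p+1}$ with $p=(1+1/a_1)(1+1/a_2)$, hence $F^{-1}(n)\asymp n^{-1/(p-1)}$. With $a_1=1/4$ and $a_2=1/2$ one gets $p=15$ and rate $n^{-1/14}$, matching the first case. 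In the other case $a_2$ is the $\beta,\sigma_0$-dependent exponent, producing $n^{-\beta/(4\beta+10\sigma_0)}$ once the exact exponent from Step 2 is inserted. Constants would be absorbed into $\tilde C_1,\tilde C_2$ via \Cref{thm:ConstantOnBeta}.

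\textbf{Main obstacle.} The hard part is Step 2: getting exactly the exponent that yields $\beta/(4\beta+10\sigma_0)$. This needs a careful tail computation for $\Pi_T$ combined with the exact form of the RWM gap bound. Since the claimed exponent is linear in $\sigma_0$, the correct scaling may arise from how the paper parametrises the step size, so I would redo that computation carefully in terms of $\sigma_0$.
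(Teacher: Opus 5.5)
Your route is the same as the paper's. You turn the pointwise RWM gap bounds of \Cref{eg: spectralgap_for_conditional} into indicator $\beta$-functions and average them over the marginals via \Cref{lem:P1_to_tildeP1}. You then pass to power-law $K_i^*$ via \cite[Lemma 15]{andrieu2022comparison}, compose them through \Cref{thm:strongPtotildeP}, and read off the rate with \Cref{Thm:rate}. Your exponents $a_1=1/4$, $a_2=\min\{1/2,\beta/(2\sigma_0^2)\}$ and $p=(1+1/a_1)(1+1/a_2)$ coincide with the paper's. Two things need fixing.

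\textbf{The $\xi$-marginal.} It is Cauchy ($t_1$ with scale $\sqrt{2\beta}$). The joint density is proportional to $e^{-(\beta+\xi^2/2)\tau}$, so integrating out $\tau$ gives $\pi(\xi)\propto(\beta+\xi^2/2)^{-1}$, not $(\beta+\xi^2/2)^{-3/2}$.
\begin{itemize}
\item Your tail $\Pi_\Xi(|\xi|>r)\asymp r^{-1}$ is the correct one, but it does not follow from the density you wrote. That density would give $r^{-2}$, hence $a_1=1/2$ and, in the first case, the unjustified rate $n^{-1/8}$.
\item No separate small-$\beta_\xi$ analysis is needed, and the case distinction plays no role in Step 1. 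Since $\beta_\xi\geqslant\beta$, one has globally $\gamma_\xi\geqslant c'/(\beta_\xi^2\sigma_0^2)$ with $c'=c\,\bigl(\beta^2\sigma_0^2/(\beta^2\sigma_0^2+1)\bigr)^4$, which is exactly what the paper uses.
\end{itemize}

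\textbf{Your ``main obstacle''.} This is not a defect of your argument.
\begin{itemize}
\item The exponent you computed in Step 2, $\kappa=\beta/(2\sigma_0^2)$, is exactly the paper's. It follows from $\Gamma_U(1/2,x)\leqslant e^{-x}$ for $x>1$ and $w_{-1}(x)\leqslant\ln(-x)$.
\item Hence the threshold is $\beta$ versus $\sigma_0^2$; the paper's own proof splits on $\beta>\sigma_0^2$.
\item Inserting $\kappa$ gives $p-1=4+10\sigma_0^2/\beta$, i.e.\ the rate $n^{-\beta/(4\beta+10\sigma_0^2)}$.
\end{itemize}
The $\sigma_0$ in the displayed statement, both in the condition $\beta/\sigma_0>1$ and in the exponent, is inconsistent with that derivation. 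No reparametrisation of the step size reconciles it: when $\sigma_0>1$ and $\beta<\sigma_0^2$, the statement as written claims a faster rate than the argument delivers. So do not assert that your exponent ``produces'' $n^{-\beta/(4\beta+10\sigma_0)}$. State and prove the version with $\sigma_0^2$ in both the case condition and the exponent, which is what this method, yours and the paper's, actually yields.
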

\begin{proof}
    See \Cref{Prf:eg: NIG_fix_sigma}.
\end{proof}

\section{Bayesian Hierarchical Model}
\label{Sec: bayes hierarchical model}
We consider a Bayesian Hierarchical Model: 
\begin{align*}
    \mathbf Y|\sigma, \beta \sim \mathcal N(\mathbf X \beta, \sigma^2 I),
\end{align*}
where $\mathbf Y\in \mathbb R^N$ is the response vector, $\mathbf X$ is a fixed $N \times p$ design matrix, $\beta\in \mathbb R^p$ are the regression coefficients, and $\sigma^2>0$. We suppose that $N>p$ and $\mathbf X$ has full column rank.  The priors for $\sigma^2$ and $\beta$ are given by
\begin{align*}
    \sigma^{-2} \sim \Gamma(a,b), \quad
    \beta \propto 1,
\end{align*}
where hyperparameters $a>1$ and $b>0$ are all assumed to be known.

We analyze a block Gibbs sampler $P$ that simulates a Markov chain $\{(\sigma^{-2}_n, \beta_n)\}_{n=1}^{\infty}$. The
full conditionals required for this Gibbs sampler are:
\begin{align*}
    \sigma^{-2}|\beta,\mathbf Y &\sim \Gamma\left(a+\frac{N}{2},b +\frac{(\mathbf Y-\mathbf X\beta)^\top (\mathbf Y-\mathbf X\beta)}{2}\right),\\
    \beta |\sigma^{-2},\mathbf Y &\sim N \left((\mathbf X^\top \mathbf X)^{-1}(\mathbf X^\top \mathbf Y), \sigma^{-2}(\mathbf X^\top \mathbf X)^{-1}\right).
\end{align*}
Now we consider the MwG sampler $P_2=G_1H_2$ with invariant distribution $\Pi(\sigma^{-2}, \beta|\mathbf Y)$. 
Let $H_2=H_{2|{\sigma^{-2}}}(\dif \beta'|\beta)\delta_{\sigma}(\dif \sigma')$, and $H_{2|\sigma^2}$ be the operator associated with the Random-Walk Metropolis kernel for $\beta$ targeting $\Pi(\beta| \sigma^{-2})$ with step size $\sigma_0$. Furthermore, let $G_1$ be the operator corresponding to exact sampling from the conditional distribution $\Pi(\sigma^{-2}|\beta)$.

\begin{lemma}
\label{eg:bayes:spectralgap_DG}
    There exists a constant $\gamma>0$, such that a SPI holds for the DG sampler $P^*P$.
\end{lemma}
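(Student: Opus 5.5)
We will prove the SPI for $P^*P$ by passing through the $\beta$-marginal chain. The $\beta$-marginal of the DG kernel is reversible, and its spectral gap controls that of the joint chain. Concretely, let $P_X$ be the marginal chain on $\beta$ (integrating out $\sigma^{-2}$), $P_X(\beta,\dif\beta')=\int \Pi(\dif\beta'|\sigma^{-2})\Pi(\dif\sigma^{-2}|\beta)$. Since the $\beta$-update is the latter update in $P=G_1G_2$, the identity $P_X g_f = P^2 f$ from the proof of \Cref{thm:P_XtoP} gives
\[
\mathcal{E}_\Pi(P^*P,Pf)=\mathcal{E}_{\Pi_X}(P_X^*P_X,g_f), \qquad g_f=Pf .
\]
Hence an SPI for $P_X^*P_X$ with constant $\gamma_X$ transfers to $P$: the argument of \Cref{thm:P_XtoP} with $K^*(v)=\gamma_X v$ yields geometric decay of $\|P^nf\|_\Pi^2$. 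To get a genuine SPI for $P^*P$, I will instead use the operator-norm fact of \cite{liu1994covariance}. In the two-component case, $\|P\|_{\mathrm{L}^2_0(\Pi)}^2$ equals the squared maximal correlation $\rho^2$ between $\sigma^{-2}$ and $\beta$. Moreover, $\|P_X\|_{\mathrm{L}^2_0(\Pi_X)}=\rho^2$. So it suffices to show $\rho<1$, which gives $\gamma=1-\rho^2>0$, since $\mathcal{E}_\Pi(P^*P,f)=\|f\|^2-\|Pf\|^2\geqslant(1-\rho^2)\|f\|^2$.

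To show $\rho<1$, equivalently that $P_X$ (or the $\sigma^{-2}$-marginal chain, which has the same nonzero spectrum) has a spectral gap, I will establish geometric ergodicity of the $\sigma^{-2}$-marginal chain via a drift and minorization argument. Reversibility then converts this into an $\mathrm{L}^2$ spectral gap, by the Roberts–Rosenthal equivalence for reversible chains. Write $\lambda=\sigma^{-2}$ and $\hat\beta=(\mathbf X^\top\mathbf X)^{-1}\mathbf X^\top\mathbf Y$, and decompose
\[
\|\mathbf Y-\mathbf X\beta\|^2=\mathrm{SSE}+(\beta-\hat\beta)^\top\mathbf X^\top\mathbf X(\beta-\hat\beta).
\]
Given $\lambda$, the quadratic form $(\beta-\hat\beta)^\top\mathbf X^\top\mathbf X(\beta-\hat\beta)$ is $\lambda^{-1}\chi^2_p$-distributed. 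Taking $V(\lambda)=\lambda^{-1}$, we get
\[
\mathbb E[V(\lambda')|\lambda]=\frac{2b+\mathrm{SSE}+p/\lambda}{2a+N-2}=\frac{p}{2a+N-2}V(\lambda)+\text{const}.
\]
Since $a>1$ gives $2a+N-2>N>p$, the contraction factor is $p/(2a+N-2)<1$. For the minorization, the sublevel sets $\{\lambda^{-1}\leqslant d\}$ are small: for $\lambda$ in such a set, the gamma densities of $\lambda'$ have rate parameter confined to a compact interval, so a uniform lower density bound holds. An alternative route avoids drift entirely: the chain is a linear-Gaussian/Gamma system, and $\rho$ can be bounded directly, since one could compute $\mathrm{Var}$-ratios on polynomials. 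The drift route is the more robust of the two.

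The main obstacle is the clean passage from geometric ergodicity of the marginal chain to an $\mathrm{L}^2$ SPI for the nonreversible joint $P^*P$. I handle it via the reversibility of $P_X$ and the maximal-correlation identity $\|P\|^2=\|P_X\|=\rho^2$, or alternatively by noting $P^*P=G_2G_1G_2$ and $\|Pf\|^2=\|G_1G_2f\|^2\leqslant\rho^2\|G_2 f\|^2$ for mean-zero $f$. The existence of $\gamma$ (not its explicit value) is all that the lemma claims, so this suffices.
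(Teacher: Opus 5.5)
Your proof follows the paper's route: drift and minorization for a reversible marginal chain, the Roberts--Rosenthal equivalence to get an $\mathrm{L}^2$ gap, and the Liu--Wong--Kong identity $\|P\|^2=\|P_X\|$ to get the SPI for $P^*P$; the only change is that you use the $\sigma^{-2}$-marginal with $V(\lambda)=\lambda^{-1}$ (which shares its nonzero spectrum with the $\beta$-marginal) instead of the $\beta$-marginal with $V(\beta)=\alpha\|\mathbf Y-\mathbf X\beta\|^2$, which is the same two-step computation, and your contraction factor $p/(2a+N-2)$ is correct. One fix is needed in the smallness step: given $\lambda$, the rate $b+\|\mathbf Y-\mathbf X\beta\|^2/2$ is not confined to a compact interval, because $\beta$ is Gaussian, so restrict to the event $\{\|\mathbf Y-\mathbf X\beta\|^2\leqslant R\}$, whose conditional probability $\mathbb P(\chi^2_p\leqslant \lambda(R-\mathrm{SSE}))$ is bounded below uniformly on the non-compact set $\{\lambda\geqslant 1/d\}$, and note that gamma densities with rates in $[b+\mathrm{SSE}/2,\,b+R/2]$ have a common integrable positive lower bound.
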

\begin{proof}
    See \Cref{Prf:eg:bayes:spectralgap_DG}.
\end{proof}

\begin{proposition}
\label{eg:rate for Bayes}
    Let $P$ and $P_2$ be as above. When $\sigma_0$ is fixed, we have the following convergence bound for the 2-MG sampler $P_2$: for any $f \in \mathrm{L}^2_0(\Pi)$ such that $\|f\|^2_\mathrm{osc} < \infty$ and any $n \in \mathbb{N}_+\backslash\{1\}$,
    \begin{align*}
    \|P_{2}^nf \|^2_\Pi \leqslant \Tilde{C} \cdot \|f\|^2_\mathrm{osc} \cdot  (n-1)^{-\min\{a',b'/C_2\}},
    \end{align*}
    
    where $a'=a+\frac{N}{2}-\frac{p}{2}$, $b'=b+\frac{\mathbf Y^\top \mathbf Y-u^\top (\mathbf X^\top \mathbf X)u}{2}$, $u=(\mathbf X^\top \mathbf X)^{-1}\mathbf X^{\top}\mathbf Y$, $C_2=2\cdot \lambda_{\max}(\mathbf X^\top \mathbf X) \cdot p \cdot \sigma_0^2$, and $\lambda_{\max}(\mathbf X^\top \mathbf X)$ is the maximum eigenvalue of $\mathbf X^\top \mathbf X$, $\Tilde{C}$ a universal constant.
\end{proposition}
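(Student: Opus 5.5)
The plan is to route everything through \Cref{thm:weakPtotildeP2} (the marginal route in \Cref{fig:PandPX}). That result yields $\|P_2^nf\|_\Pi^2\leqslant \|f\|_\mathrm{osc}^2 F^{-1}(n-1)$ with $K^*=K_2^*\circ\tfrac12K_0^*$. By \Cref{eg:bayes:spectralgap_DG} and \Cref{rmk:strongtoweak}, we may take $\beta_0(s)=\mathbbm{1}_{\{s\leqslant\gamma^{-1}\}}$, which gives $K_0^*(v)\geqslant \gamma v$ for small $v$ (indeed $K_0^*(v)=\sup_{u\geqslant\gamma}u(v-0)$ restricted appropriately; in any case, via \Cref{Thm:chaining} the SPI replaces $K_0^*$ by the linear map $v\mapsto\gamma v$). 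Then $K^*(v)\geqslant K_2^*(\gamma v/2)$. By \Cref{thm:ConstantOnBeta}, the polynomial rate is governed entirely by $\beta_2(s)=\int\beta_2(s,\sigma^{-2})\Pi(\dif\sigma^{-2})$, the averaged conditional WPI function for the RWM on $\beta\mid\sigma^{-2}$. Constants are absorbed into $\tilde C$.

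The first main step is to obtain the conditional spectral gap. Conditionally on $\tau:=\sigma^{-2}$, the target $\Pi(\beta\mid\tau)=\mathcal N(u,\tau^{-1}(\mathbf X^\top\mathbf X)^{-1})$ is Gaussian, hence strongly log-concave and smooth, with condition number that is fixed in $\tau$. Using the explicit RWM spectral gap bounds of \cite{andrieu2024explicit}, as already used in \Cref{eg: spectralgap_for_conditional}, with fixed step $\sigma_0$ one should obtain a bound of the form
\[
\gamma_\tau\geqslant c\,\exp\!\bigl(-\lambda_{\max}(\mathbf X^\top\mathbf X)\,p\,\sigma_0^2\,\tau\bigr)
\]
for a constant $c$ depending on the curvature ratio, possibly times a polynomial factor in $\tau$ that can be bounded below. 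This bound degrades as the precision $\tau$ grows. The constant $C_2=2\lambda_{\max}p\sigma_0^2$ suggests that the exponent $\tfrac12C_2\tau$, or something of that order, is the relevant one. Each conditional kernel then satisfies an SPI, and we set $\beta_2(s,\tau)=\mathbbm{1}_{\{s\gamma_\tau<1\}}$. Positivity can be assumed, since Gaussian RWM is positive, or the chain can be made lazy.

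The second step is to average over $\tau$:
\[
\beta_2(s)=\Pi_T(\gamma_\tau<1/s)\leqslant\Pi_T\bigl(\tau>\log(cs)/(\tfrac12C_2)\bigr).
\]
For this we need the marginal of $\tau$. Integrating $\beta$ out of the posterior gives $\tau\sim\Gamma(a',b')$, with $a'=a+\tfrac N2-\tfrac p2$ and $b'$ as stated. The Gamma tail satisfies $\Pi_T(\tau>t)\lesssim t^{a'-1}e^{-b't}$. Substituting $t\asymp\log s/C_2$ gives $\beta_2(s)\lesssim (\log s)^{a'-1}s^{-2b'/C_2}$, or with the cruder exponent, $\lesssim s^{-b'/C_2}$ up to logarithmic factors. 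Bounding $\gamma_\tau$ slightly more crudely to absorb these yields $\beta_2(s)\leqslant C s^{-b'/C_2}$.

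The final step converts this to a rate. For $\beta_2(s)=Cs^{-\alpha}$, the standard computation gives $K_2^*(v)\asymp v^{1+1/\alpha}$, so $F(x)\asymp x^{-1/\alpha}$ and $F^{-1}(n)\asymp n^{-\alpha}$. Composing with the linear $K_0^*$ and applying \Cref{thm:ConstantOnBeta} preserves the exponent. The $\min\{a',\cdot\}$ in the statement presumably arises when handling the polynomial prefactor from the Gamma tail, or the small-$\tau$ side where a polynomial factor in $\gamma_\tau$ vanishes; that side contributes a term of order $s^{-a'}$ via $\Pi_T(\tau<\epsilon)\asymp\epsilon^{a'}$. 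This gives exponent $\min\{a',b'/C_2\}$. I expect the main obstacle to be extracting a clean explicit lower bound on the RWM conditional gap in dimension $p$ that is uniform in the right way in $\tau$, with both the small-$\tau$ and large-$\tau$ regimes tracked. Everything else follows from earlier results in the paper.
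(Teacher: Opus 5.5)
Your overall route is the paper's. You use \Cref{thm:weakPtotildeP2} with $K_0^*(v)=\gamma v$ from \Cref{eg:bayes:spectralgap_DG}, an explicit RWM gap from \cite{andrieu2024explicit} for $\Pi(\beta\mid\tau)$, the $\Gamma(a',b')$ marginal of $\tau=\sigma^{-2}$, and \cite[Lemma 15]{andrieu2022comparison} to turn a power-law $\beta_2$ into a rate. The gap is in the one step that fixes the exponent.

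\textbf{The conditional gap bound.} The bound you posit, $\gamma_\tau\geqslant c\,e^{-\lambda_{\max}p\sigma_0^2\tau}$ with a prefactor ``that can be bounded below'', is not what that result gives, and it is false. Take $m=\tau\lambda_{\min}(\mathbf X^\top\mathbf X)$, $L=\tau\lambda_{\max}(\mathbf X^\top\mathbf X)$ and $\varsigma^2=Lp\sigma_0^2$. Then \cite[Theorem 1]{andrieu2024explicit} gives $\gamma_\tau\geqslant C\,\lambda_{\min}(\mathbf X^\top\mathbf X)\,\sigma_0^2\,\tau\,e^{-C_2\tau}$. The exponent is $C_2\tau$: the $2$ in $C_2$ is the $2$ in $e^{-2\varsigma^2}$. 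The prefactor is linear in $\tau$, so it vanishes as $\tau\to0$. This is genuine, not an artefact: a fixed-step RWM on an increasingly diffuse Gaussian has gap tending to zero.

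\textbf{The missing lower tail.} Consequently the bad set $\{\tau:\gamma_\tau<1/s\}$ consists of a lower tail $\{\tau\lesssim 1/s\}$ and an upper tail $\{\tau\gtrsim \log s/C_2\}$. The paper writes the endpoints via the branches $w_0,w_{-1}$ of the Lambert function. Your computation $\beta_2(s)\leqslant\Pi_T(\tau>\cdots)$ drops the lower tail. That tail is exactly where the $s^{-a'}$ term comes from, via $\Pi_T(\tau\leqslant\epsilon)=\Gamma_L(a',b'\epsilon)/\Gamma(a')\leqslant (b'\epsilon)^{a'}/(a'\Gamma(a'))$; the Gamma-tail prefactor plays no role there. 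You mention the lower tail only afterwards, as a guess, and this contradicts your stated gap bound. So, as written, $\min\{a',b'/C_2\}$ is read off from the statement rather than derived.

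\textbf{The logarithmic factor.} Your device for removing it also fails. Bounding $\gamma_\tau$ more crudely enlarges the bad set and hence $\beta_2$. Absorbing $(\log s)^{a'-1}$ requires slack in the exponent, and you only had slack because of the spurious factor $\tfrac12$. With the correct exponent, the paper bounds the upper tail using $\Gamma_U(a',x)\leqslant Be^{-x}x^{a'-1}$ and $w_{-1}(x)\leqslant\ln(-x)$, obtaining order $s^{-b'/C_2}$. At the upper endpoint $\tau_+$ we have $\tau_+e^{-C_2\tau_+}=C_1/s$, with $C_1=(C\lambda_{\min}(\mathbf X^\top\mathbf X)\sigma_0^2)^{-1}$. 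Using this relation, the leftover factor is $\tau_+^{\,a'-1-b'/C_2}$ with $\tau_+\sim\log s/C_2$. The paper absorbs this factor into its constant $C_0'$. That is legitimate when $a'\leqslant 1+b'/C_2$; otherwise a logarithmic correction survives unless one has a sharper large-$\tau$ gap estimate.

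\textbf{How to repair.} State the conditional gap exactly as above. Bound the two tails separately, by $O(s^{-a'})$ and $O(s^{-b'/C_2})$, with the caveat just noted for the second. Your final step (power-law $\beta_2$, linear $K_0^*$, \Cref{thm:ConstantOnBeta}, and the shift to $n-1$) then goes through unchanged.
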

\begin{proof}
    See \Cref{Prf:eg:rate for Bayes}.
\end{proof}

\section{Data Augmentation for Diffusion Processes}
\label{Sec: diffusion}
We consider data augmentation schemes for discretely-observed diffusion processes. Weconsider the equations of the following form:
\begin{align}
\label{eq:diffusion}
    \dif X_t =b(X_t, \theta)\dif t + \dif B_t,  \quad t \in [0,T],
\end{align}
where $B_t$ is a standard Brownian motion on $\mathbb{R}$, $\theta \in \mathbb{R}^p$ is the parameter of interest, and the diffusion is observed at discrete time points $t=(t_0,t_1,\dots,t_N)$ with $t_0=0,t_N=T$. 

Given the entire trajectory $X_{[0,T]}=\{X_s, s \in [0,T]\}$, the likelihood of $\theta$ can be expressed using the Girsanov formula,
\begin{align*}
    \mathcal{G}\left(X_{[0, T]}, \theta\right)=\exp \left\{\int_0^T b\left(X_t, \theta\right) \mathrm{d} X_t-\frac{1}{2} \int_0^T b^2\left(X_t, \theta\right) \mathrm{d} t\right\}.
\end{align*}
 Given a prior distribution $p(\dif \theta)$,
the posterior distribution of $\theta$ has density with respect to the law of the Brownian motion given by $\pi(\dif \theta | X_{[0,T]}) \propto \mathcal{G}(X_{[0,T]},\theta)p(\theta)$.

Furthermore we write $X_{\mathrm{mis}}$ for the sample path of $X$ excluding the observations $Y =\{Y_{t_{i}},1 \leqslant i \leqslant N\}$. 
Hence, $\pi(\dif \theta | X_{\mathrm{mis}},Y) \propto \mathcal{G}(X_{[0,T]},\theta)p(\theta)$. 

Let $\mathbb{P}_\theta$ denote the law of $X$, and let $\mathbb{B}_\theta(Y_{t_{i-1}},Y_{t_{i}})$ denote the law of the resulting Brownian bridge between these observations. 
We denote the $N$-dimensional Lebesgue density of $Y$ under $\mathbb{P}_\theta$ by $g_\theta(Y_{t_0},\ldots, Y_{t_N})= g_\theta(Y_{t_0}) \prod _{i=1}^N g_\theta(Y_{t_{i-1}}, Y_{t_i})$, where $g_\theta(Y_{t_{i-1}}, Y_{t_i})$ denotes the transition density associated with \eqref{eq:diffusion} for moving from $Y_{t_{i-1}}$ to $Y_{t_i}$ over the time interval $(t_{i-1},t_i)$. 
We can write the conditional density of $X_{(t_{i-1},t_i)}$ with respect to the Brownian bridge measure as 
\begin{align*}
    \frac{\mathrm{d} \mathbb{P}_\theta}{\mathrm{d} \mathbb{B}_\theta(Y_{t_{i-1}},Y_{t_{i}})}\left(X_{(t_{i-1},t_i)}|Y\right)=\mathcal{G}\left(X_{(t_{i-1},t_i)}, \theta\right) \frac{f\left(Y_{t_{i-1}}, Y_{t_i}\right)}{g_\theta\left(Y_{t_{i-1}}, Y_{t_i}\right)}\propto \mathcal{G}\left(X_{(t_{i-1},t_i)}, \theta\right),
\end{align*}
where we have $f_\Delta(x, y)=\frac{1}{\sqrt{2 \pi}} e^{-\frac{1}{2 \Delta}(y-x)^2}$, and $\Delta$ is the time increment; in an abuse of notation we may omit the subscript $\Delta$.

In general, the update for $\theta$ follows a standard MH scheme and the update for $X_{(t_{i-1},t_i)}$ follows an Independent Metropolis--Hastings (IMH) scheme, see e.g. \cite{roberts2001inference}, where the new path $X_{(t_{i-1},t_i)}'$ is sampled according to a Brownian bridge on $[t_{i-1},t_i]$ constrained to be equal to $Y_{t_{i-1}}$ and $Y_{t_i}$ at the endpoints and it is accepted with probability 
\begin{align}
    \label{eq:acceptance for diffusion}
    \alpha_i(X,X')=\min\left(1,\frac{\mathcal{G}(X_{(t_{i-1},t_i)}',\theta)}{\mathcal{G}(X_{(t_{i-1},t_i)},\theta)}\right).
\end{align}

We update each of the $N$ pieces of missing data $X_{(t_{i-1},t_i)}$ in parallel, since they are independent given $\theta$.

This can be described as a coordinate-wise scheme on $\mathcal{X}=\mathbb{R}^p \times \prod_{i=1}^N\mathbb{R}^{(t_{i-1},t_i)}$ with invariant distribution $\pi(\dif \theta, \dif X_{\mathrm{mis}}|Y)$. The associated operator $P_{1,2}$ can be formally defined as $P_{1,2}=H_1H_2$, where $H_1=H_{1|X_{\mathrm{mis}}}(\dif \theta'|\theta)\delta_{X_{\mathrm{mis}}}(\dif X'_{\mathrm{mis}})$ , and $H_{1|X_{\mathrm{mis}}}$ is the operator associated with the standard Random-Walk Metropolis kernel for $\theta$ with the invariant distribution $\pi(\dif \theta|X_\mathrm{mis},Y)$; $H_2=H_{2|\theta}(\dif X'_{\mathrm{mis}}|X_{\mathrm{mis}})\delta_{\theta}(\dif \theta')$ and  $H_{2|\theta}(\dif X'_{\mathrm{mis}}|X_{\mathrm{mis}})= \prod_{i=1}^N H^\theta_i(\dif X_{(t_{i-1},t_i)}'|X_{(t_{i-1},t_i)})$ is the operator associated with the IMH kernel as in \eqref{eq:acceptance for diffusion} for $X_{\mathrm{mis}}$ with the invariant distribution $\pi(\dif X_\mathrm{mis}|\theta, Y)$. For notational simplicity, we omit the conditioning on $Y$. Similarly, the operator $P_2=G_1H_2$, where $G_1$ corresponds to the Markov kernel that samples $\theta$ from the conditional distribution $\pi(\dif \theta|X_\mathrm{mis})$ while keeping $X_\mathrm{mis}$ fixed, see \Cref{Alg:MwGDiffusion}.

\begin{algorithm}[H]
\caption{MwG Sampler for Diffustion}
\label{Alg:MwGDiffusion}
\begin{algorithmic}[1]
\Require  Current $(\theta, X_{\mathrm{mis}})$, observations $Y$.
\State \textbf{Update $\theta$:} Draw $\theta'$ from $H_{1|X_{\mathrm{mis}}}(\cdot |\theta)$. 
\State \textbf{Update $X_{\mathrm{mis}}$:} Draw $X_{(t_{i-1},t_i)}'$ from $H_i^\theta(\cdot |X_{(t_{i-1},t_i)})$, for $i={1,\ldots, N}$, implemented as follows.

\For {$i=1,\dots,N$}
    \State Propose the Brownian bridge $X'_{(t_{i-1},t_i)}$ from $Y_{t_{i-1}}$ to $Y_{t_i}$.
    \State Accept it with probability $\alpha_i$ as in \eqref{eq:acceptance for diffusion}.
\EndFor

\State Set $(\theta, X_{\mathrm{mis}})\leftarrow (\theta', X'_{\mathrm{mis}})$.
\end{algorithmic}
\end{algorithm}
We shall impose the following assumptions, inspired by \cite[Section 8]{ascolani2024scalability}, which are satisfied in many cases \cite{beskos2006retrospective}. Our following results extend the results of \cite{ascolani2024scalability} by removing their restrictive compact support condition on the prior.
\begin{assumption}
    \label{asp:diffusion}
    The function $b(x,\theta)$ is differentiable in $x$ for all $\theta$ and continuous in $\theta$ for all $x$. 
    There exists a function $M(\theta)$ such that $b^2(x,\theta) + b'(x,\theta)\geqslant M(\theta)$ for all $x$.
\end{assumption}

\begin{proposition}
    \label{thm:betaforIMHindiffusion}
    Let $P_{1,2}$ be as above, and $P$ be the standard Gibbs kernel. Under \Cref{asp:diffusion}, the operator $H_{2|\theta}$ satisfies the following WPI:
    \begin{align*}
        \|g\|_{\pi(\cdot|\theta)}^2 \leqslant s \cdot \mathcal{E}_{\pi(\cdot|\theta)}(H_{2|\theta},g) + \beta_2(s,\theta) \cdot \|g\|_\mathrm{osc}^2, \quad \forall g \in \mathrm{L}^2_0(\pi(\cdot|\theta)),
    \end{align*}
    where $\beta_2(s,\theta)=\mathbbm{1}_{\{s\leqslant \max_i\{\tilde{\mathcal{G}}(X_{(t_{i-1},t_i)},\theta)\}\}}$,   $\tilde{\mathcal{G}}(X_{(t_{i-1},t_i)},\theta)=\exp\{-\frac{1}{2}M(\theta)\Delta t_i+(A(Y_{t_{i}})-A(Y_{t_{i-1}}))\}$ and $A(u)=\int_0^u b(x,\theta)dx$.
\end{proposition}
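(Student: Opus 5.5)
The approach is to establish a standard Poincar\'e inequality for each one-interval kernel $H^\theta_i$, pass it to the product kernel $H_{2|\theta}=\prod_{i=1}^N H^\theta_i$ by tensorization, and then read off $\beta_2(s,\theta)$ via \Cref{rmk:strongtoweak}. Fix $\theta$. Given $\theta$ and $Y$, the $N$ path segments $X_{(t_{i-1},t_i)}$ are conditionally independent. So $\pi(\cdot|\theta)$ is a product measure, and $H_{2|\theta}$ is exactly the tensor product of the $N$ IMH kernels $H^\theta_i$, each targeting $\pi(\dif X_{(t_{i-1},t_i)}|\theta)$. Each $H^\theta_i$ is $\pi(\cdot|\theta)$-reversible, being a Metropolis--Hastings kernel. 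It is also positive, since IMH kernels are positive operators (a classical fact that can be seen from the Liu/Mengersen--Tweedie eigen-analysis). Hence the hypotheses of \Cref{Thm:SPIproduct} hold. Iterating it by induction over $i=1,\dots,N$, as in the $n$-fold tensor corollary, gives spectral gap $\gamma(\theta)\geqslant\min_i\gamma_i(\theta)$ for $H_{2|\theta}$.

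The central step is a lower bound on the spectral gap $\gamma_i(\theta)$ of each $H^\theta_i$. Write $w_i=\dif\pi(\cdot|\theta)/\dif\mathbb{B}(Y_{t_{i-1}},Y_{t_i})$ for the importance weight on segment $i$; it is proportional to $\mathcal{G}(X_{(t_{i-1},t_i)},\theta)$. By the Mengersen--Tweedie result for IMH, if $w_i\leqslant W_i$ everywhere, then $H^\theta_i$ has spectral gap at least $1/W_i$ (uniform ergodicity with rate $1-1/W_i$). To get $W_i$, apply It\^o's formula to $A(X_t)=\int_0^{X_t}b(x,\theta)\dif x$:
\begin{align*}
    \int_{t_{i-1}}^{t_i} b(X_t,\theta)\dif X_t=A(Y_{t_i})-A(Y_{t_{i-1}})-\frac{1}{2}\int_{t_{i-1}}^{t_i}b'(X_t,\theta)\dif t .
\end{align*}
Substituting this into the Girsanov density gives
\begin{align*}
    \mathcal{G}=\exp\Bigl\{A(Y_{t_i})-A(Y_{t_{i-1}})-\tfrac12\int_{t_{i-1}}^{t_i}(b^2+b')(X_t,\theta)\dif t\Bigr\}.
\end{align*}
\Cref{asp:diffusion} gives $b^2+b'\geqslant M(\theta)$, so $\mathcal{G}\leqslant\tilde{\mathcal{G}}(X_{(t_{i-1},t_i)},\theta)$, which is a deterministic quantity depending only on $\theta$ and the endpoints. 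This yields $\gamma_i(\theta)\geqslant 1/\tilde{\mathcal{G}}_i$, and therefore $\gamma(\theta)\geqslant 1/\max_i\tilde{\mathcal{G}}_i$.

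To conclude, the SPI $\gamma(\theta)\|g\|^2\leqslant\mathcal{E}(H_{2|\theta},g)$ gives the WPI with $\beta_2(s,\theta)=\mathbbm{1}_{\{s\leqslant\gamma(\theta)^{-1}\}}$ by \Cref{rmk:strongtoweak}. Since $\gamma(\theta)^{-1}\leqslant\max_i\tilde{\mathcal{G}}_i$, we may enlarge the indicator to $\mathbbm{1}_{\{s\leqslant\max_i\tilde{\mathcal{G}}_i\}}$; this is allowed because $\beta$ may always be taken larger (and in any case $\beta\leqslant 1/4$ can be imposed by Popoviciu).

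I expect the main obstacle to be the normalization in the IMH bound. The Mengersen--Tweedie constant is the supremum of the \emph{normalized} weight $\mathcal{G}/\mathbb{E}_{\mathbb{B}}[\mathcal{G}]$, whereas the claimed threshold uses the unnormalized bound $\tilde{\mathcal{G}}$. I would first attempt to show $\mathbb{E}_{\mathbb{B}}[\mathcal{G}]\geqslant 1$ in the relevant normalization, which is equivalent to $g_\theta\geqslant f$ through the identity $\mathbb{E}_{\mathbb{B}}[\mathcal{G}]=g_\theta(Y_{t_{i-1}},Y_{t_i})/f(Y_{t_{i-1}},Y_{t_i})$. If that fails, I would instead absorb the ratio $f/g_\theta$ into the definition of $\tilde{\mathcal{G}}$, interpreting the stated bound with the Girsanov density normalized relative to the bridge. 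I would also check that the $\mathrm{L}^2_0$ versus variance formulations are matched when invoking \Cref{Thm:SPIproduct}.
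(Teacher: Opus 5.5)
Your route is the paper's. It\^o's formula for $A$ gives $\mathcal{G}\leqslant\tilde{\mathcal{G}}_i$. A bounded-weight spectral-gap result for IMH then gives each $H^\theta_i$ an indicator $\beta$ with threshold $\tilde{\mathcal{G}}_i$; the paper cites \cite[Proposition 28, Remark 29]{andrieu2022comparison} rather than Mengersen--Tweedie. Finally \Cref{Thm:SPIproduct} tensorizes.

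The normalization problem you flag is genuine, and the paper's proof does not resolve it either. It feeds the unnormalized bound $\tilde{\mathcal{G}}_i$ into a result whose constant is the supremum of the normalized weight $w_i=\mathcal{G}\,f/g_\theta$, with $f,g_\theta$ evaluated at $(Y_{t_{i-1}},Y_{t_i})$.

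Your first remedy (proving $g_\theta\geqslant f$) cannot succeed, because the statement as written is false in general. Take the paper's own Ornstein--Uhlenbeck example with $\theta>0$. There $b^2+b'=\theta^2x^2-\theta\geqslant M(\theta):=-\theta$ and $A(u)=-\theta u^2/2$, so $\tilde{\mathcal{G}}_i=\exp\{\tfrac{\theta}{2}(\Delta t_i-Y_{t_i}^2+Y_{t_{i-1}}^2)\}$. With, say, $N=1$, $Y_{t_0}=0$ and $|Y_{t_1}|$ large, this is below $1/2$. For $s\in(\tilde{\mathcal{G}}_1,1/2)$ the claimed WPI has $\beta_2(s,\theta)=0$. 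It would then give $\|g\|^2\leqslant s\,\mathcal{E}(H_{2|\theta},g)\leqslant 2s\|g\|^2<\|g\|^2$ for every nonzero $g\in\mathrm{L}^2_0$, for instance a centred indicator. This uses only $\mathcal{E}(T,g)\leqslant 2\|g\|^2$, valid for any Markov operator $T$; for positive kernels such as IMH, $\mathcal{E}(T,g)\leqslant\|g\|^2$, so any threshold below $1$ is already impossible. Equivalently, $\mathbb{E}_{\mathbb{B}}[\mathcal{G}]=g_\theta/f\leqslant e^{\frac{\theta}{2}(T-Y_{t_1}^2)}$ can be made arbitrarily small, so $g_\theta\geqslant f$ fails.

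Your second remedy is the correct fix, and with it your argument proves the corrected statement. Write $J_i=\int_{t_{i-1}}^{t_i}(b^2+b')(X_t,\theta)\dif t$. The $A$-terms cancel in the normalized weight, so
$w_i=e^{-J_i/2}/\mathbb{E}_{\mathbb{B}}[e^{-J_i/2}]\leqslant W_i:=\tilde{\mathcal{G}}_i\,f/g_\theta=e^{-M(\theta)\Delta t_i/2}/\mathbb{E}_{\mathbb{B}}[e^{-J_i/2}]$, and $W_i\geqslant 1$ always.

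The gap bound $\gamma_i(\theta)\geqslant W_i^{-1}$ then follows directly from the Dirichlet form, with no need for Mengersen--Tweedie. Let $\pi=\pi_i(\cdot|\theta)$ and let $q$ be the Brownian-bridge proposal. Off the diagonal, $\pi(\dif \omega)H^\theta_i(\omega,\dif \omega')=\min\{\pi(\dif \omega)q(\dif \omega'),\pi(\dif \omega')q(\dif \omega)\}\geqslant W_i^{-1}\pi(\dif \omega)\pi(\dif \omega')$. Tensorization and the read-off via \Cref{Thm:SPIproduct} then go through with threshold $\max_i W_i$ in place of $\max_i\tilde{\mathcal{G}}_i$.

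Note that $W_i$ depends on $\theta$ through $g_\theta$. This matters when $\beta_2(s,\theta)$ is later averaged over $\pi(\dif \theta)$ in the downstream corollaries.
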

\begin{proof}
    See \Cref{prf:thm:betaforIMHindiffusion}.
\end{proof}

The following two corollaries are established by directly by combining \Cref{thm:betaforIMHindiffusion}
    with \Cref{thm:weakPtotildeP} and \Cref{thm:strongPtotildeP}, respectively.

\begin{corollary}
\label{eg:diffusionWPI}
    In addition to \Cref{asp:diffusion},  assume that $P^*P$ satisfies \Cref{Asp:P}, and $H_{1|X_{\mathrm{mis}}}(\dif \theta'|\theta)$ satisfies \Cref{Asp:P1} with $\beta_1(s,X_{\mathrm{mis}})$.
    Then, there exists a function $K^*$ such that $P_{1,2}^*P_{1,2}$ satisfies a $K^*$-WPI: for all $f \in \mathrm{L}^2_0(\pi)$,
    \begin{align*}
        \frac{\mathcal{E}_\pi(P_{1,2}^*P_{1,2},f)}{\|f\|_\mathrm{osc}^2} \geqslant K^*\left(\frac{\|f\|_\pi^2}{\|f\|_\mathrm{osc}^2}\right),
    \end{align*}
    where 
     \begin{align}
        \begin{split}
            &K^*(v)=2K_1^*\circ K_2^* \circ \frac{1}{2}K_0^*\left( \frac{1}{4} \cdot v\right)=2K_1^*\left(K_2^*\left(\frac{1}{2}K_0^*\left( \frac{1}{4} \cdot v\right)\right)\right),\\
            &K_i^*(v)=\sup_{u\geqslant 0}\{uv - u \cdot \beta_i(1 / u)\}, \quad i=0,1,2.\\
            &\beta_1(s)=\int_{X_{\mathrm{mis}}} \beta_1(s,X_{\mathrm{mis}})\Pi_{X_{\mathrm{mis}}}(\dif X_{\mathrm{mis}}),\\
            &\beta_2(s)=\int_\theta \beta_2(s,\theta)\Pi_\theta(\dif \theta), \quad \beta_2(s,\theta)= \mathbbm{1}_{\{s\leqslant \max_i{\tilde{\mathcal{G}}(X_{(t_{i-1},t_i)},\theta)}\}},\\
            & \tilde{\mathcal{G}}(X_{(t_{i-1},t_i)},\theta)=\exp \left\{A\left(Y_{t_i}\right)-A(Y_{t_{i-1}})-\frac{1}{2} \Delta t_i M(\theta)\right\}.
        \end{split}
        \end{align}
\end{corollary}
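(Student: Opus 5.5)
The plan is to verify the hypotheses of \Cref{thm:weakPtotildeP} and then read off its conclusion with the specific $\beta$-functions. \Cref{Asp:P} for $P^*P$ and \Cref{Asp:P1} for $H_{1|X_{\mathrm{mis}}}$ are assumed directly in the statement, so the only hypothesis left to check is \Cref{Asp:P2} for the IMH kernel $H_{2|\theta}$.

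For \Cref{Asp:P2} I would proceed as follows.
\begin{enumerate}
\item \emph{Reversibility.} $H_{2|\theta}$ is a product over $i$ of Metropolis--Hastings kernels $H_i^\theta$, each reversible with respect to the $i$-th bridge-conditional. Conditional independence of the pieces given $\theta$ makes the product reversible with respect to $\pi(\cdot|\theta)$.
\item \emph{Positivity.} Each independent Metropolis--Hastings kernel is a positive operator, a standard fact (see \Cref{rmk:requirementforH}). A tensor product of positive self-adjoint operators is positive, since $A\otimes B=(A^{1/2}\otimes B^{1/2})^2$.
\item \emph{WPI.} This is exactly \Cref{thm:betaforIMHindiffusion}, with $\beta_2(s,\theta)$ the stated indicator. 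For each fixed $\theta$, $s\mapsto\beta_2(s,\theta)$ is decreasing and vanishes for large $s$, so it satisfies the conditions of \Cref{def:WPI}.
\end{enumerate}

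With the three assumptions in place, \Cref{lem:P1_to_tildeP1} gives the comparison WPIs between $G_1,H_1$ and between $G_2,H_2$, with the averaged functions $\beta_1(s)=\int\beta_1(s,X_{\mathrm{mis}})\Pi_{X_{\mathrm{mis}}}(\dif X_{\mathrm{mis}})$ and $\beta_2(s)=\int\beta_2(s,\theta)\Pi_\theta(\dif\theta)$. Applying \Cref{thm:weakPtotildeP} then yields the $K^*$-WPI for $P_{1,2}^*P_{1,2}$ with $K^*=2K_1^*\circ K_2^*\circ\tfrac12K_0^*(\cdot/4)$, which is the displayed formula.

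The main obstacle is that the averaged $\beta_2$ must itself be a legitimate WPI function; in particular, $\beta_2(s)\to0$ as $s\to\infty$. Since $\beta_2(s)=\Pi_\theta\bigl(\max_i\tilde{\mathcal G}(\cdot,\theta)\geqslant s\bigr)$, this tail probability tends to $0$ by continuity of measure, provided $\max_i\tilde{\mathcal G}$ is finite $\Pi_\theta$-almost surely. Finiteness follows because $\tilde{\mathcal G}$ depends only on $\theta$ and the fixed observations, and $M(\theta)$ is finite by \Cref{asp:diffusion}. Measurability in $\theta$ follows from continuity of $b$ in $\theta$.

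Two further points need checking. First, the bound should use $\beta\leqslant 1/4$ where needed (Popoviciu); this is harmless, as truncation keeps all properties. Second, the indicator form of $\beta_2(s,\theta)$ is not continuous. I would confirm that the definition of $K^*$ via the supremum tolerates this, which it does, since only monotonicity and the vanishing limit are required.
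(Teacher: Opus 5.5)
Your proposal is correct and matches the paper's proof. The paper obtains this corollary by taking the WPI for $H_{2|\theta}$ from \Cref{thm:betaforIMHindiffusion} as \Cref{Asp:P2} and applying \Cref{thm:weakPtotildeP}. Your added checks (reversibility, positivity of the tensor-product IMH kernel, and that the averaged $\beta_2$ is a legitimate WPI function) are sound, and the last one is already covered by the dominated-convergence step in the proof of \Cref{lem:P1_to_tildeP1}, which works equally well with an integrand bounded by $1$.
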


\begin{corollary}
\label{eg:diffusion_rate}
    In addition to \Cref{asp:diffusion},  assume that $P^*P$ satisfies \Cref{Asp:P_SPI}, then for any $f \in \mathrm{L}^2_0(\pi)$ such that $\|f\|^2_\mathrm{osc} < \infty$ and any $n \in \mathbb{N}_+\backslash\{1\}$,
    \begin{align*}
        \|P_2^nf\|_\pi^2 \leqslant \|f\|^2_{\mathrm{osc}}\cdot F^{-1}(n-1),
    \end{align*}
    where $F:=\int_x^{\frac{1}{4}} \frac{\dif v}{K^*(v)}$ is as in \Cref{Thm:rate},
    \begin{align*}
        &K^*(v) =K_2^*\left(\frac{\gamma}{2}v\right), \quad K_2^*(v)=\sup_{u\geqslant 0}\{uv - u \cdot \beta_2(1 / u)\},\\
        &\beta_2(s)=\int_\theta \beta_2(s,\theta)\Pi_\theta(\dif \theta),\quad \beta_2(s,\theta)= \mathbbm{1}_{\{s\leqslant \max_i{\tilde{\mathcal{G}}(X_{(t_{i-1},t_i)},\theta)}\}},\\
        &\tilde{\mathcal{G}}(X_{(t_{i-1},t_i)},\theta)=\exp \left\{A\left(Y_{t_i}\right)-A(Y_{t_{i-1}})-\frac{1}{2} \Delta t_i M(\theta)\right\}.
    \end{align*}
\end{corollary}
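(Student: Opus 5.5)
The bound for $P_2$ will come from the marginal route of \Cref{fig:PandPX}, exactly as in \Cref{thm:weakPtotildeP2}, except that \Cref{Asp:P_SPI} replaces \Cref{Asp:P}. Concretely, \Cref{thm:PtoP_X} applied with $\beta_0(s)=\mathbbm{1}_{\{s\leqslant \gamma^{-1}\}}$ (\Cref{rmk:strongtoweak}) transfers the SPI from $P^*P$ to $P_X^*P_X$ with the same constant:
\begin{align*}
\gamma\|g\|_{\Pi_X}^2\leqslant \mathcal{E}_{\Pi_X}(P_X^*P_X,g), \quad g\in\mathrm{L}^2_0(\Pi_X).
\end{align*}
Since $P_X^*P_X$ is $\Pi_X$-invariant, \Cref{lem:P^2toP} then gives $\frac{\gamma}{2}\|g\|^2_{\Pi_X}\leqslant \mathcal{E}_{\Pi_X}(P_X,g)$.

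Next I need a comparison between $P_X$ and $\bar P_X$. By \Cref{thm:XDGtoXHDG}, this holds with $\beta_2(s)=\int\beta_2(s,\theta)\Pi_\theta(\dif\theta)$, provided the per-$\theta$ WPI of \Cref{Asp:P2} is available. That input is supplied by \Cref{thm:betaforIMHindiffusion}, which gives the indicator form $\beta_2(s,\theta)$ for the IMH kernel $H_{2|\theta}$. \Cref{Asp:P2} also requires reversibility and positivity. The IMH kernel is reversible, and it is positive as recalled in \Cref{rmk:requirementforH}; the product over $i$ of independent positive reversible kernels remains positive.

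With these in hand, I chain exactly as in the proof of \Cref{thm:weakPtotildeP2}:
\begin{align*}
\|g\|^2_{\Pi_X}\leqslant \frac{2}{\gamma}\mathcal{E}_{\Pi_X}(P_X,g)\leqslant \frac{2}{\gamma}\Big(s_2\,\mathcal{E}_{\Pi_X}(\bar P_X,g)+\beta_2(s_2)\|g\|^2_\mathrm{osc}\Big).
\end{align*}
By \Cref{lem:PtoP^2} and the positivity of $\bar P_X$ (\Cref{lem:P_xreversible}), $\mathcal{E}_{\Pi_X}(\bar P_X,g)\leqslant\mathcal{E}_{\Pi_X}(\bar P_X^*\bar P_X,g)$. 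Setting $s=2s_2/\gamma$ therefore gives a WPI for $\bar P_X^*\bar P_X$ with $\beta(s)=\frac{2}{\gamma}\beta_2(\gamma s/2)$. By \Cref{thm:ConstantOnBeta} with $c_1=2/\gamma$ and $c_2=\gamma/2$, the corresponding conjugate is $K^*(v)=K_2^*(\frac{\gamma}{2}v)$. Equivalently, in $K^*$ form, one composes $\frac{\gamma}{2}v$, the SPI conjugate, with $K_2^*$ via \Cref{Thm:chaining}.

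Finally, \Cref{thm:ApproxP_XtoP} converts the WPI for $\bar P_X^*\bar P_X$ into $\|P_2^nf\|^2_\pi\leqslant\|f\|^2_\mathrm{osc}F^{-1}(n-1)$ for $n\geqslant 2$, with $F$ built from this $K^*$.

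The step I expect to be the main obstacle is the constant bookkeeping: making sure the factor lands as $\gamma/2$ inside $K_2^*$, with no stray outer factor. That requires checking that the SPI conjugate is exactly $v\mapsto \gamma v$ when $\beta_0=\mathbbm{1}_{\{s\leqslant\gamma^{-1}\}}$ is restricted to $v\leqslant 1/4$, and that \Cref{thm:ConstantOnBeta} is applied in the correct direction. A secondary point is verifying the measurability and integrability of $\theta\mapsto\beta_2(s,\theta)$, which \Cref{lem:P1_to_tildeP1} needs.
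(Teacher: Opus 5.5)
Your proposal is correct and follows the paper's route: the statement is \Cref{thm:weakPtotildeP2} specialized to the SPI conjugate $K_0^*(v)=\gamma v$, with \Cref{thm:betaforIMHindiffusion} supplying the per-$\theta$ WPI of \Cref{Asp:P2}. The paper's one-line justification cites \Cref{thm:strongPtotildeP}, but the constants $K_2^*(\gamma v/2)$ and $F^{-1}(n-1)$ come from the marginal path $P\to P_X\to\bar P_X\to P_2$, which is the one you take and the one the paper uses in the Ornstein--Uhlenbeck example. Your bookkeeping via \Cref{thm:ConstantOnBeta} with $c_1c_2=1$ is right; the only slip is cosmetic: \Cref{lem:P^2toP} is applied with $T=P_X$, so the invariance you need is that of $P_X$, not of $P_X^*P_X$.
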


For diffusions, it is generally difficult to establish SPIs for the standard Gibbs kernel. However, for a broad class of diffusion processes, we will show in \Cref{thm:SPI_for_diff} that a SPI can be obtained under explicit and verifiable boundedness conditions on the drift function $b(x,\theta)$. The proof is in \Cref{Prf:thm:SPI_for_diff}. By combining \Cref{Asp:SPI_for_diff} and \Cref{asp:diffusion}, the convergence bound for $P_2$ can be derived directly by \Cref{eg:diffusion_rate}. 
\begin{assumption}
\label{Asp:SPI_for_diff}
    The function $b(x,\theta)$ is differentiable in $x$ for all $\theta$ and continuous in $\theta$ for all $x$. Meanwhile, the function $ b^2(x,\theta)+b'(x,\theta)$ is uniformly bounded below by constant $M_L$ and uniformly bounded above by constant $M_U$ for some $a>0$, all $|x|\leqslant a$.
\end{assumption}

\begin{proposition}
\label{thm:SPI_for_diff}
      Let $P$ be the standard Gibbs kernel. $P^*P$ satisfies the SPI under \Cref{Asp:SPI_for_diff}.
\end{proposition}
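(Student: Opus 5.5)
The plan is to prove the SPI for $P^*P$ through a uniform Doeblin (minorization) condition on the marginal chain of the missing path $X_{\mathrm{mis}}$. This marginal chain is the analogue of $P_X$ in \eqref{eq:marginalPX}, with $\theta$ playing the role of the component integrated out. The result is then transferred back to the joint chain.

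\textbf{Step 1: normalized density of $X_{\mathrm{mis}}$ given $\theta$.} Rewrite the conditional law of $X_{\mathrm{mis}}$ given $\theta$ relative to the product of Brownian bridges $\mathbb{B}=\bigotimes_i \mathbb{B}(Y_{t_{i-1}},Y_{t_i})$. By It\^o's formula applied to $A(x)=\int_0^x b(u,\theta)\,\dif u$, on each interval
\begin{align*}
\mathcal{G}(X_{(t_{i-1},t_i)},\theta)=\exp\Big\{A(Y_{t_i})-A(Y_{t_{i-1}})-\tfrac12\int_{t_{i-1}}^{t_i}\big(b^2+b'\big)(X_s,\theta)\,\dif s\Big\}.
\end{align*}
The boundary term does not depend on the path, so it cancels in the normalized density
\begin{align*}
\frac{\dif\pi(\cdot|\theta)}{\dif\mathbb{B}}(X_{\mathrm{mis}})=\prod_i\frac{\exp\{-\tfrac12\int_{t_{i-1}}^{t_i}(b^2+b')(X_s,\theta)\,\dif s\}}{Z_i(\theta)}.
\end{align*}

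\textbf{Step 2: uniform lower bound on a good event.} Let $B=\{\sup_s |X_s|\leqslant a\}$. On $B$, \Cref{Asp:SPI_for_diff} bounds the numerator below by $e^{-\frac12 M_U T}$, uniformly in $\theta$. For the normalizing constants, the lower bound $b^2+b'\geqslant M_L$ would give $Z_i(\theta)\leqslant e^{-\frac12 M_L\Delta t_i}$. This requires the lower bound to hold globally, so I would invoke the lower bound of \Cref{asp:diffusion} off $[-a,a]$ and need it uniform in $\theta$; I expect this to be the main obstacle. If it fails, I would bound $Z_i(\theta)$ through $\sup_\theta \mathbb{E}_{\mathbb{B}}[e^{-\frac12\int (b^2+b')}]<\infty$ via continuity in $\theta$. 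Granting this, for every $\theta$,
\begin{align*}
\pi(\dif X_{\mathrm{mis}}|\theta)\geqslant \varepsilon\,\nu(\dif X_{\mathrm{mis}}),\qquad \nu:=\mathbb{B}(\cdot\cap B)/\mathbb{B}(B),\qquad \varepsilon:=e^{-\frac12(M_U-M_L)T}\,\mathbb{B}(B).
\end{align*}
We need $\mathbb{B}(B)>0$, which holds when the observations lie in $(-a,a)$. Otherwise $a$ is enlarged, or $B$ is replaced by a tube around the linear interpolant of the data.

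\textbf{Step 3: Doeblin condition and spectral gap.} The marginal kernel $X_{\mathrm{mis}}\to\theta\to X'_{\mathrm{mis}}$ is $\int \pi(\dif X'_{\mathrm{mis}}|\theta)\,\pi(\dif\theta|X_{\mathrm{mis}})$. By Step 2 it is bounded below by $\varepsilon\nu$, uniformly in the starting point. This chain is reversible and positive (the analogue of \Cref{lem:P_xreversible}). A uniform minorization therefore gives a spectral gap of at least $\varepsilon$, i.e.\ an SPI, equivalently a WPI with $\beta(s)=\mathbbm{1}_{\{s\leqslant \varepsilon^{-1}\}}$ by \Cref{rmk:strongtoweak}.

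\textbf{Step 4: back to the joint chain.} The SPI for the marginal chain gives an SPI for its square via \Cref{lem:PtoP^2}. The identity $\mathcal{E}_{\Pi_X}(P_X^*P_X,g_f)=\mathcal{E}_\Pi(P^*P,Pf)$ from the proof of \Cref{thm:P_XtoP} yields $\|P^2f\|^2\leqslant(1-\varepsilon)\|Pf\|^2$. Alternatively, the classical identity that the $\mathrm{L}^2_0$ norm of the marginal operator equals the squared maximal correlation (\cite{liu1994covariance}) gives the bound directly. To reach $P^*P$ rather than $P^*P$ restricted to the range of $P$, decompose $\mathcal{E}_\Pi(P^*P,f)=\mathcal{E}(G_2,f)+\mathcal{E}(G_1,G_2f)$ by \Cref{thm: decomposition} and apply the maximal-correlation bound to $G_2 f$. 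This gives $\gamma\geqslant\varepsilon$ up to a constant, and hence the SPI for $P^*P$.
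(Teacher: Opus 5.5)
Your argument is correct under the same reading of \Cref{Asp:SPI_for_diff} that the paper uses, but it is organised differently.

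\textbf{The paper's route.} The paper works with the $\theta$-marginal. It compares $\mathbb{P}_\theta$ with the bridge to get $\mathbb{P}_\theta(\max_s X_s\geqslant K)\leqslant\epsilon<1$ uniformly in $\theta$, using $M_L$ on the bad set and $M_U$ on $\{|X|\leqslant a\}$ in the normaliser. It then restricts $P_\Theta(\theta,A)=\int\pi(A|X)\,\mathbb{P}_\theta(\dif X)$ to $C_K^\complement$, claims uniform ergodicity, and transfers to $P^*P$ via \Cref{thm:Px_L1_to_P_SPI}.

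\textbf{Your route.} You use the same two bounds the other way round: $M_L$ in the normaliser and $M_U$ on the good set. This minorises $\pi(\dif X_{\mathrm{mis}}|\theta)$ itself by $\varepsilon\,\mathbb{B}(\cdot\cap B)/\mathbb{B}(B)$, uniformly in $\theta$, and hence minorises the path-marginal chain. What this buys you:
\begin{itemize}
\item a genuine Doeblin condition with a fixed minorising measure;
\item an explicit constant $\gamma\geqslant e^{-\frac12(M_U-M_L)T}\,\mathbb{B}(B)$;
\item no need for any information about $\pi(\dif\theta|X_{\mathrm{mis}})$, and hence none about the prior.
\end{itemize}
By contrast, the measure $\nu$ in the paper's proof is built from $\mathbb{P}_\theta$, so it depends on the starting $\theta$. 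Turning it into a true minorisation would need a lower bound on $\pi(\cdot|X_{\mathrm{mis}})$ uniform over paths in $C_K^\complement$. That is model-specific (the OU lemma supplies it in that case) and is not implied by \Cref{Asp:SPI_for_diff}. Your marginal is also exactly the $X$-marginal for which \Cref{thm:Px_L1_to_P_SPI} is stated. The paper applies that lemma to the other marginal, which is fine by the symmetry in \cite{liu1994covariance} but not as the lemma is written.

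\textbf{Three clean-ups.}
\begin{itemize}
\item \emph{The lower bound $M_L$.} A global, $\theta$-uniform bound $b^2+b'\geqslant M_L$ is exactly how the paper reads \Cref{Asp:SPI_for_diff}. It bounds $\tilde{\mathcal{G}}$ by $e^{-\frac12 M_L\Delta t_i}$ on $C_K^i$, whose paths leave $[-a,a]$ once $K>a$. So use it directly. Do not route through \Cref{asp:diffusion}, whose $M(\theta)$ is not uniform. Drop the ``continuity in $\theta$'' fallback: it gives no uniform bound over a non-compact parameter space.
\item \emph{Positivity of $\mathbb{B}(B)$.} This needs $|Y_{t_i}|<a$ for all $i$. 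Neither enlarging $a$ nor a tube around the interpolant is available, because no upper bound on $b^2+b'$ is assumed off $[-a,a]$. The paper's constant $c>0$ silently uses the same condition, so simply state it as a hypothesis.
\item \emph{Step 4.} The detour through the range of $P$ is unnecessary. You have $\mathcal{E}_\Pi(P^*P,f)=\|f\|_\Pi^2-\|G_1G_2f\|_\Pi^2\geqslant(1-\rho^2)\|f\|_\Pi^2$. Here $\rho^2$ is the $\mathrm{L}^2_0$ norm of the positive reversible path-marginal, which is at most $1-\varepsilon$. So $\gamma\geqslant\varepsilon$ with no further constant.
\end{itemize}
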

\begin{remark}
In a multivariate setting $X_t\in\mathbb{R}^p$, consider drift $b(x,\theta)=\nabla_x B(x,\theta)$ for some $B(x,\theta)\in C^2(\mathbb{R}^{p+1})$.
Then, replacing $b^2(x,\theta)+b'(x,\theta)$ by $\|b(x,\theta)\|^2+\Delta_x B(x,\theta)$ in \Cref{Asp:SPI_for_diff}, \Cref{thm:SPI_for_diff} remains valid.
\end{remark}

\subsection{Ornstein--Uhlenbeck Process Example}
We consider applying this data augmentation scheme to an Ornstein--Uhlenbeck process:
    \begin{align}
    \label{eq:ou}
        \dif X_t = -\theta X_t\dif t + \dif B_t,  \quad t \in [0,T],
    \end{align}
with discrete observations $Y$ as in Section~\ref{Sec: diffusion}, and where the prior $p(\theta)$ is $\mathcal{N}(\mu_0,\tau_0^2)$. 
The full conditionals required for this Gibbs sampler are:
\begin{align*}
    &\theta | X_{\mathrm{mis}}, Y \sim \mathcal{N}\left(\frac{-\int_0^TX_t \dif X_t+\mu_0\tau_0^{-2}}{\int_0^T X_t^2 \dif t +\tau_0^{-2}},\frac{1}{\int_0^T X_t^2 \dif t +\tau_0^{-2}}\right),\\
    &\dif \mathbb{P}(X_{\mathrm{mis}}|\theta, Y) = \prod_{i=1}^N \mathcal{G}(X_{(t_{i-1},t_i)}, \theta)\frac{f(Y_{t_{i-1}},Y_{t_i})}{g_\theta(Y_{t_{i-1}},Y_{t_i})} \dif  \otimes_{i=1}^N \mathbb{B}(Y_{t_{i-1}},Y_{t_i}).
\end{align*}
We denote the operator of the standard Gibbs sampler as $P=G_1G_2$, where $G_1$ and $G_2$ denote the operators on the joint space corresponding to updating the corresponding components from the conditional distributions $\pi(\theta|X_{\mathrm{mis}},Y)$ and $\pi(X_{\mathrm{mis}}|\theta,Y)$, respectively.
Now we consider the MwG sampler $P_{2}=G_1H_2$ with invariant distribution $\pi(\theta, X_{\mathrm{mis}})$. Let $H_2=H_{2|\theta}(\dif X'_{\mathrm{mis}}|X_{\mathrm{mis}})\delta_{\theta}(\dif \theta')= \prod_{i=1}^N H^\theta_i(\dif X_{(t_{i-1},t_i)}'|X_{(t_{i-1},t_i)})\delta_{\theta}(\dif \theta')$, where $H^\theta_i(\dif X_{(t_{i-1},t_i)}'|X_{(t_{i-1},t_i)})$ is the IMH updating scheme, which is reversible and positive.

Since the drift $-\theta x$ does not satisfy the conditions in \Cref{Asp:SPI_for_diff}, we instead provide \Cref{eg: ou_SPI}. The proofs of the following two results are in in \Cref{Prf:eg: ou_SPI}.
\begin{lemma}
\label{eg: ou_SPI}
     $P^*P$ satisfies the SPI with the constant $\gamma>0$.
\end{lemma}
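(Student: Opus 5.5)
The plan is to prove the SPI for $P^*P$ in the OU example through the $\theta$-marginal chain, using the joint/marginal correspondence of \Cref{thm:PtoP_X} and \Cref{thm:P_XtoP} with the roles of the components relabelled so that the marginal is taken on $\theta$. The $\theta$-marginal kernel is $P_\theta(\theta,\dif\theta')=\int \pi(\dif\theta'|X_{\mathrm{mis}})\pi(\dif X_{\mathrm{mis}}|\theta)$. It is $\pi_\theta$-reversible and positive by the analogue of \Cref{lem:P_xreversible}. Its $\mathrm{L}^2$ spectral gap equals $1-\rho^2$, where $\rho$ is the maximal correlation between $\theta$ and $X_{\mathrm{mis}}$ under $\pi$. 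By \cite{liu1994covariance} this quantity also controls $\|P\|$ on $\mathrm{L}^2_0(\Pi)$. The whole statement therefore reduces to showing $\rho<1$, after which $\gamma=1-\rho^2$ and $\mathcal{E}_\Pi(P^*P,f)=\|f\|^2-\|Pf\|^2\geqslant(1-\rho^2)\|f\|^2$.

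To show $\rho<1$, I would exploit the fact that $\theta$ depends on $X_{\mathrm{mis}}$ only through the sufficient statistics $S_1=\int_0^T X_t^2\,\dif t$ and $S_2=\int_0^T X_t\,\dif X_t=(X_T^2-X_0^2-T)/2$. The second is deterministic given the observations $Y$. Hence $P_\theta$ factors through the one-dimensional statistic $S_1$, and $\rho$ equals the maximal correlation between $\theta$ and $S_1$.

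It then suffices to show that the two-component Gibbs sampler on $(\theta,S_1)$ is geometrically ergodic, for example via a drift and minorization argument. For a drift function I would take $V(\theta)=\theta^2$. Given $S_1$, the conditional mean $(c+\mu_0\tau_0^{-2})/(S_1+\tau_0^{-2})$ is bounded in absolute value uniformly in $S_1$, and the conditional variance is at most $\tau_0^2$. So $\theta\mid S_1$ has uniformly bounded second moment, which gives $P_\theta V\leqslant C$ for all $\theta$. This is even a uniform bound, so the chain essentially returns to a compact set in one step. For minorization on $\{|\theta|\leqslant R\}$, I would use that the law of $S_1\mid\theta$ (OU bridges, which are Gaussian) depends continuously on $\theta$ and has a positive density. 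Consequently, $\pi(\dif\theta'\mid S_1)$ integrated against it dominates a fixed Gaussian-type measure uniformly over $|\theta|\leqslant R$. In fact, because $\theta\mid S_1$ is Gaussian with variance bounded below by $1/(E[S_1]+\tau_0^{-2})$ and bounded mean, one may get a global Doeblin condition: restricting to the event $S_1\leqslant K$, which has probability at least $1/2$ uniformly for $|\theta|\leqslant R$ by Markov's inequality, yields a common minorizing density.

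Geometric ergodicity of the reversible chain $P_\theta$ then gives a spectral gap by Roberts–Rosenthal, and hence $\rho<1$. Transferring back through the relabelled \Cref{thm:P_XtoP} argument, or directly through $\|Pf\|\leqslant\rho\|f\|$, yields the SPI.

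I expect the main obstacle to be the uniform control over unbounded $\theta$. For large $|\theta|$ the bridge law of $S_1$ degenerates: with $\theta\to+\infty$ the bridges concentrate and $S_1\to$ small values determined by $Y$. The remedy is to rely on $P_\theta V\leqslant C$ uniformly. This bound holds because the posterior mean of $\theta$ given $S_1$ stays bounded for every $S_1\geqslant0$, thanks to the prior precision term $\tau_0^{-2}$ in the denominator. With this uniform bound, the minorization only needs to be checked on a compact set.
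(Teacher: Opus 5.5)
Your proof is correct and follows the same overall route as the paper. You pass to the reversible, positive $\theta$-marginal chain and obtain a minorization from the Gaussian conditional $\theta\mid X_{\mathrm{mis}}\sim\mathcal N(\mu\sigma^2,\sigma^2)$, restricted to an event on which $\sigma^2=1/(S_1+\tau_0^{-2})$ lies in a compact interval; $\mu$ is constant because $\int_0^T X_t\,\dif X_t=(Y_T^2-Y_0^2-T)/2$. You then get a spectral gap from Roberts--Rosenthal and transfer it to $P$ through $\|P\|^2=\|P_\theta\|=\rho^2$, as in \Cref{thm:Px_L1_to_P_SPI}.

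Where you genuinely differ is in handling unbounded $\theta$. The paper proves in \Cref{lem:OU_probs} that $\mathbb{E}_\theta\int_0^T X_t^2\,\dif t$ is bounded uniformly over \emph{all} $\theta$, by dominating the OU-bridge drift coefficients with Brownian-bridge ones. This gives a one-step global Doeblin condition (\Cref{thm:minorisation condition for diffusion}) and hence uniform ergodicity directly. You instead use the prior precision to get $P_\theta V\leqslant C$ for $V(\theta)=\theta^2$. Then you only need $\sup_{|\theta|\leqslant R}\mathbb{E}_\theta S_1<\infty$, which follows from continuity of the explicit bridge moments in $\theta$. Combining the two gives $P_\theta^2(\theta,\cdot)\geqslant(1-C/R^2)\,\epsilon\,\nu(\cdot)$, where $\epsilon\nu$ is your minorization on $\{|\theta|\leqslant R\}$, so you also get uniform ergodicity. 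Your route avoids the SDE comparison estimates, while the paper's yields an explicit one-step minorization constant in terms of $Y$ and $\tau_0$.

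Three small clean-ups:
\begin{itemize}
\item The pointwise lower bound on the conditional variance is $1/(K+\tau_0^{-2})$ on $\{S_1\leqslant K\}$, not $1/(\mathbb{E}[S_1]+\tau_0^{-2})$.
\item The minorization you describe is local to $\{|\theta|\leqslant R\}$; it becomes global only after combining it with the drift.
\item A relabelled \Cref{thm:P_XtoP} only controls $P$ on its range. The step that actually delivers the SPI for all $f\in\mathrm{L}^2_0(\Pi)$ is the direct bound $\|Pf\|_\Pi\leqslant\rho\|f\|_\Pi$.
\end{itemize}
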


\begin{corollary}
\label{eg: rate for ou}
    Let $P$ and $P_2$ be as above, then we have the convergence bound for the MwG $P_2$: for any $f \in \mathrm{L}^2_0(\pi)$ such that $\|f\|^2_\mathrm{osc} < \infty$ and any $n \in \mathbb{N}_+\backslash\{1\}$,
    \begin{align*}
    \|P_{2}^nf \|^2_\pi \leqslant \Tilde{C} \cdot \|f\|^2_\mathrm{osc} \cdot \exp\left(-\frac{a}{\delta} \log^2\left(\frac{n-1}{\gamma/2}\right)\right),
    \end{align*}
    where $\Tilde{C}$ is a universal constant, $a=\frac{2}{\eta^2 \tau_0^2}$, $\eta=\max_{i}\{\Delta t_{i}-Y_{t_i}^{2}+Y_{t_{i-1}}^{2}\}$, $\delta$ is any constant strictly greater than $1$.
\end{corollary}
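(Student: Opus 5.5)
The bound will come from the marginal route of \Cref{thm:weakPtotildeP2}, specialised to the Ornstein--Uhlenbeck case. Take \Cref{eg: ou_SPI} as given: $P^*P$ satisfies an SPI with constant $\gamma$, so by \Cref{rmk:strongtoweak} we have $\beta_0(s)=\mathbbm{1}_{\{s\leqslant\gamma^{-1}\}}$ and hence $K_0^*(v)=\gamma v$. \Cref{thm:weakPtotildeP2} then gives $K^*(v)=K_2^*(\tfrac{\gamma}{2}v)$, exactly as in \Cref{eg:diffusion_rate}. This yields
\[
\|P_2^nf\|_\pi^2\leqslant\|f\|_\mathrm{osc}^2F^{-1}(n-1)
\]
with $F(x)=\int_x^{1/4}\dif v/K_2^*(\gamma v/2)$. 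What remains is to compute $\beta_2$, then $K_2^*$, then $F^{-1}$, explicitly for the OU drift.

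\textbf{Step 1: the conditional WPI.} For $b(x,\theta)=-\theta x$ we have $b^2+b'=\theta^2x^2-\theta\geqslant-\theta$. So \Cref{asp:diffusion} holds with $M(\theta)=-\theta$, and $A(u)=-\theta u^2/2$. \Cref{thm:betaforIMHindiffusion} then gives $\beta_2(s,\theta)=\mathbbm{1}_{\{s\leqslant \max_i\tilde{\mathcal G}_i(\theta)\}}$, where
\[
\log\tilde{\mathcal G}_i(\theta)=\tfrac{\theta}{2}\bigl(\Delta t_i-Y_{t_i}^2+Y_{t_{i-1}}^2\bigr).
\]
For $\theta$ of either sign this is bounded by $|\theta|\eta/2$ with $\eta$ as in the statement, assuming $\eta>0$. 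Integrating against $\Pi_\theta$,
\[
\beta_2(s)\leqslant\Pi_\theta\bigl(|\theta|\geqslant 2\log s/\eta\bigr).
\]

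\textbf{Step 2: the main obstacle, a tail bound for the $\theta$-marginal.} We need a Gaussian-type tail for the posterior marginal of $\theta$. Conditionally on $X_{\mathrm{mis}}$, $\theta$ is Gaussian with variance at most $\tau_0^2$, because $\int X_t^2\,\dif t\geqslant0$. The difficulty is that the conditional mean $(-\int X\,\dif X+\mu_0\tau_0^{-2})/(\int X^2+\tau_0^{-2})$ is random. By Itô, $-\int X\,\dif X=-(Y_T^2-Y_0^2-T)/2$ is fixed by the observations, so the mean is bounded by a constant $m$ depending only on $Y$, $\mu_0$ and $\tau_0$. Therefore
\[
\Pi_\theta(|\theta|\geqslant r)\leqslant 2\exp\bigl(-(r-m)^2/(2\tau_0^2)\bigr),
\]
and so $\beta_2(s)\lesssim\exp\bigl(-\tfrac{2}{\eta^2\tau_0^2}(\log s)^2(1-o(1))\bigr)$. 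Absorbing the shift $m$ and the $o(1)$ costs an arbitrary factor $\delta>1$:
\[
\beta_2(s)\leqslant C\exp\bigl(-\tfrac{a}{\delta}\log^2 s\bigr),\qquad a=\tfrac{2}{\eta^2\tau_0^2}.
\]

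\textbf{Step 3: from $\beta_2$ to the rate.} Rather than computing $K_2^*$ and inverting $F$ by hand, I would use the $\beta$-form directly. Chaining as in \Cref{Thm:chaining} with $\beta_0'(s)=\mathbbm{1}_{\{s\leqslant 2/\gamma\}}$ and the choice $s_1=2/\gamma$ gives a WPI for $\bar P_X^*\bar P_X$ with
\[
\beta(s)\leqslant\tfrac{2}{\gamma}\,\beta_2(\gamma s/2).
\]
Then I apply the standard conversion from $\beta$ to the rate. From the WPI at step $n$, $\|P^n f\|^2\leqslant\inf_s\bigl\{s(\|P^{n-1}f\|^2-\|P^nf\|^2)+\beta(s)\bigr\}$; for such $\beta$ this yields $F^{-1}(n)\lesssim\beta(c\,n)$ up to constants, possibly after adjusting $\delta$. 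Together with \Cref{thm:ConstantOnBeta} to absorb constants, this gives
\[
\|P_2^nf\|^2\leqslant\tilde C\,\|f\|_\mathrm{osc}^2\exp\Bigl(-\tfrac{a}{\delta}\log^2\bigl(\tfrac{n-1}{\gamma/2}\bigr)\Bigr),
\]
using the shift $n-1$ from \Cref{thm:ApproxP_XtoP}.
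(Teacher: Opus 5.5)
Your overall route matches the paper's. The SPI of \Cref{eg: ou_SPI} gives $K_0^*(v)=\gamma v$, and \Cref{thm:weakPtotildeP2} reduces everything to $K_2^*(\gamma v/2)$. \Cref{thm:betaforIMHindiffusion}, with $M(\theta)=-\theta$ and $A(u)=-\theta u^2/2$, turns $\beta_2$ into a tail probability of the $\theta$-marginal.

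The genuinely different piece is Step 2. The paper bounds $\pi(\theta\mid Y)$ pointwise by $K$ times the $\mathcal N(m,\tau_0^2)$ density, using the exact Gaussian transition densities of the discretely observed OU process. You instead average the Gaussian full conditional of $\theta$. Its variance is at most $\tau_0^2$, and its mean is bounded uniformly in $X_{\mathrm{mis}}$ because Itô's formula fixes $\int_0^T X_t\,\dif X_t=(Y_T^2-Y_0^2-T)/2$ in terms of the data. This is correct, needs no likelihood computation, and controls both tails at once. Your absorption of the shift $m$ into $\delta$ is also more careful than the paper. The paper simply drops $m$ when passing to $\exp(-\frac{2}{\eta^2\tau_0^2}\log^2 s)$, and that is not a valid upper bound when $m>0$.

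There are two gaps.

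First, in Step 1 the claim $\max_i\log\tilde{\mathcal G}_i(\theta)\leqslant|\theta|\eta/2$ is false for $\theta<0$. Write $c_i=\Delta t_i-Y_{t_i}^2+Y_{t_{i-1}}^2$. For $\theta<0$ you have $\max_i\frac{\theta}{2}c_i=\frac{|\theta|}{2}\max_i(-c_i)$, and $-\min_i c_i$ can exceed $\eta=\max_i c_i$ (e.g.\ $c_1=1$, $c_2=-8$). Controlling the lower tail therefore forces you to replace $\eta$ by $\max_i|c_i|$, which changes $a$, unless you add an assumption. The paper's proof only estimates the upper tail $1-\Phi(\cdot)$, so it does not address this either.

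Second, Step 3 asserts without proof that $F^{-1}(n)\lesssim\beta(cn)$ ``after adjusting $\delta$''. This is exactly where the $\log^2$ rate comes from. The one-step inequality you cite does not give it on its own: summing $x_k\leqslant s(x_{k-1}-x_k)+\beta(s)$ over $k\leqslant n$ only yields $x_n\leqslant s/(4n)+\beta(s)$, which is at best of order $1/n$. You can close this in either of two ways.

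\textbf{(a)} Do what the paper does. Your chained bound $\frac{2}{\gamma}\beta_2(\gamma s/2)$ is, for large $s$, of the form $c'\exp(-(a_1\log s+b_1)^2)$ with $a_1^2=a$. So \Cref{thm:K*forexplog} applies directly: it bounds $K^*$ from below, integrates $1/K^*$, and inverts $F$.

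\textbf{(b)} Run a restart argument that stays in $\beta$-form. For $k\geqslant2$, apply the marginal WPI to the $x$-only function $P_2^{k-1}f$, as in \Cref{thm:ApproxP_XtoP}; this gives $x_k\leqslant s(x_{k-1}-x_k)+\beta(s)$. Summing over a block of length $L$ with $s=L/2$ gives $x_{j+L}\leqslant\frac12x_j+\beta(L/2)$. Iterating $k\asymp\log^2 n$ times with $L\asymp n/\log^2 n$ gives $x_n\leqslant C_{\delta'}\exp(-\frac{a}{\delta'}\log^2 n)$ for any $\delta'>\delta$.

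A smaller point: \Cref{thm:ConstantOnBeta} actually produces $\log^2(\gamma(n-1)/2)$, not the stated $\log^2(\frac{n-1}{\gamma/2})$. Reaching the stated form costs one more adjustment of $\delta$, and makes $\tilde C$ depend on $\delta$ and $\gamma$.
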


\appendix
\section{Notation}
We summarize the notations and definitions that appear in the auxiliary results related to measures and operators.
\begin{itemize}
    \item We assume $\mathrm{L}^2(\mu)$ denote the Hilbert space of the set of (equivalence classes of) real-valued measurable functions that are square integrable with respect to $\mu$, equipped with the inner product $\langle f, g\rangle_\mu=\int_{\mathsf{E}} f(x) g(x) \mathrm{d} \mu(x)$, for $f, g \in \mathrm{L}^2(\mu)$. Then let $\|f\|^2_\mu = \langle f, f\rangle_{\mu}$, then write $\mathrm{L}^2_0(\mu)$ as the set of functions $f \in \mathrm{L}^2(\mu)$ satisfying $\mu(f) = 0$.
    \item For a set $A \in \mathcal{F}$, we denote the corresponding indicator function by $\mathbbm{1}_A: \mathsf{E} \mapsto \{0,1\}$. We denote the Dirac measure at $x$ by $\delta_x(A)$, defined as $\delta_x(A)=\mathbbm{1}_A(x)$, for any $A \in \mathcal{F}$.
    \item Let $T$ be a Markov transition kernel on $\mathsf{E}$, define a linear operator $T: \mathrm{L}^2(\mu)\rightarrow \mathrm{L}^2(\mu)$ by $(Tf)(x)=\int_\mathsf{E} T(x,\dif  x')f(x')$, for any $x \in \mathsf{E}$.
    \item For such an operator $T$ , we write $T^*$ for its adjoint operator $T^* : \mathrm{L}^2(\mu)\rightarrow \mathrm{L}^2(\mu)$, which satisfies $\langle f, T g\rangle_\mu = \langle T^*f, g\rangle_\mu$ for any $f,g \in \mathrm{L}^2(\mu)$.
    \item We will use the notation $\mathrm{K}^*$ to denote the optimised weak Poincar{\'{e}} inequality function appearing in the general theoretical results, and $K^*$ to denote our specific instance corresponding to particular operators under consideration.
    \end{itemize}

\section{Technical Proofs}
\subsection{Proofs for \Cref{Sec: Comparison for Gibbs,Sec: tensor product}}
\begin{proof}[Proof of \Cref{lem:P^2toP}]
\label{Prf:lem:P^2toP}
Observe that $0 \leqslant\|f-Tf\|_\mu^2=\|f\|_\mu^2-\langle Tf, f\rangle_\mu-\langle f, Tf\rangle_\mu+\|Tf\|_\mu^2$. For any function $f \in \mathrm{L}^2(\mu)$, this gives $\|f\|_\mu^2-\langle Tf, f\rangle_\mu \geqslant \langle Tf, f\rangle_\mu-\|Tf\|_\mu^2$. 

Thus, $ \mathcal{E}_\mu(T^*T, f)=\|f\|_\mu^2-\langle Tf, f\rangle_\mu+\langle Tf, f\rangle_\mu-\|Tf\|_\mu^2 \leqslant 2\mathcal{E}_\mu(T, f)$.
\end{proof}

\begin{proof}[Proof of \Cref{lem:P_1reversible}]
    \label{Prf:lem:P_1reversible} 
    First, it is easy to check the detailed
    balance condition for $G_{1}$ to show its reversibility. Then, for any
    $f \in \mathrm{L}^{2}(\Pi)$, we have
    \begin{align*}
        \langle G_{1}f,f\rangle_{\Pi} & =\int f(x,y)f(x',y')\Pi_{Y|X}(\dif y'|x)\delta_{x}(dx')\Pi(\dif x,\dif y)                         \\
                                      & =\int_\mathcal{X}\left[\int_\mathcal{Y}f(x,y)f(x,y')\Pi_{Y|X}(\dif y'|x)\Pi_{Y|X}(\dif y|x)\right]\Pi_{X}(\dif x) \\
                                      & =\int_\mathcal{X}\left[\int_\mathcal{Y}f(x,y)\Pi_{Y|X}(\dif y|x)\right]^{2}\Pi_{X}(\dif x)\geq 0,
    \end{align*}
    which show $G_{1}$ is positive.

    $H_{1|x}$ is $\Pi_{x}$-reversible and positive under \Cref{Asp:P1}, and hence it is straightforward
    to show $H_{1}$ is $\Pi$-reversible. Note that
    \begin{align*}
        \langle H_{1}f,f\rangle_{\Pi} & =\int f(x,y)f(x',y')H_{1|x'}(\dif y'|y)\delta_{x}(dx')\Pi(\dif x,\dif y)                         \\
                                      & =\int_\mathcal{X}\left[\int_\mathcal{Y}f(x,y)f(x,y')H_{1|x'}(\dif y'|y)\Pi_{Y|X}(\dif y|x)\right]\Pi_{X}(\dif x) \\
                                      & =\int_\mathcal{X}\left[\langle H_{1|x}f_{x},f_{x}\rangle_{\Pi(\cdot|x)}\right]\Pi_{X}(\dif x)\geq 0,
    \end{align*}
    where $f_{x}=f(x,\cdot)$, showing $H_{1}$ is positive as well.
    
    Hence, it is easy to check $G_2$ and $H_2$ are positive.
\end{proof}

\begin{proof}[Proof of \Cref{lem:P1_to_tildeP1}]
\label{prf:lem:P1_to_tildeP1}
    For any $f \in \mathrm{L}^2(\Pi)$, the Dirichlet form of $H_1$ satisfies
    \begin{align*}
        \mathcal{E}_{\Pi}(H_1,f)&=\frac{1}{2}\int (f(x,y)-f(x',y'))^2 H_{1|x}(\dif y'|y)\delta_x(\dif x')\Pi(\dif x,\dif y)\\
        &=\frac{1}{2}\int (f_x(y)-f_x(y'))^2 H_{1|x}(\dif y'|y)\Pi_{Y|X}(\dif y|x)\Pi_X(\dif x)\\
        &=\int_\mathcal{X} \mathcal{E}_{\Pi(\cdot|x)}(H_{1|x},f_x)\Pi_X(\dif x).
    \end{align*}
    Similarly,
    For any $f \in \mathrm{L}^2(\Pi)$, we have 
    \begin{align*}
        \mathcal{E}_{\Pi}(G_1,f)&=\frac{1}{2}\int (f(x,y)-f(x',y'))^2 \Pi_{Y|X}(\dif y'|x)\delta_x(\dif x')\Pi(\dif x,\dif y)\\
        &=\frac{1}{2}\int (f_x(y)-f_x(y'))^2 \Pi_{Y|X}(\dif y'|x)\Pi_{Y|X}(\dif y|x)\Pi_X(\dif x)\\
        &=\int_\mathcal{X} \mathrm{Var}_{\Pi(\cdot|x)}(f_x)\Pi_X(\dif x).
    \end{align*}
    Hence, for any $s > 0$, we have 
    \begin{align*}
        \begin{split}
            \mathcal{E}_{\Pi}(G_1,f)&=\int_\mathcal{X} \mathrm{Var}_{\Pi(\cdot|x)}(f_x)\Pi_X(\dif x)\\
            & \leqslant \int_\mathcal{X} \left[ s \cdot \mathcal{E}_{\Pi(\cdot|x)}(H_{1|x},f_x)+\beta_1(s,x) \cdot \|f_x\|^2_\mathrm{osc} \right] \Pi_X(\dif x)\\
            &=s \cdot \mathcal{E}_\Pi(H_1,f)+\|f\|^2_{\mathrm{osc}} \cdot \int_\mathcal{X}\beta_1(s,x)\Pi_X(\dif x).
        \end{split}
    \end{align*}
    Now, we check that the function $\beta_1$ satisfies the regularity conditions of the WPI definition.
    Since $s \mapsto \beta_1(s, x)$ is monotone decreasing for each $x\in \mathcal{X}$, it follows that $\beta_1$ is also monotone decreasing. Furthermore, since $\beta_1(\cdot, x) \leqslant 1/4$ pointwise for all $x \in \mathcal{X}$, and $\int_\mathcal{X} \Pi_X(\dif x) = 1$, and each $\beta_1(s, x) \rightarrow 0$ as $s \rightarrow \infty$, the dominated convergence theorem yields that
    \begin{align*}
        \beta_1(s) =  \int_\mathcal{X}\beta_1(s,x)\Pi_X(\dif x) \rightarrow 0, \quad \text{as} \quad s \rightarrow \infty.
    \end{align*}
    So the mapping $\beta_1: (0, \infty) \rightarrow [0, \infty)$ is decreasing, and satisfies $\lim_{s\rightarrow 0} \beta_1(s) = 0$.
    We can similarly obtain the comparison WPI between $G_2$ and $H_2$ with $\beta_2(s) =  \int_\mathcal{Y}\beta_2(s,y)\Pi_Y(\dif y)$.
\end{proof}
\begin{proof}[Proof of \Cref{thm:K*:P1tildeP2totildeP1tildeP2}]
\label{prf:thm:K*:P1tildeP2totildeP1tildeP2}
    The proof follows the same argument as that of \Cref{thm:K*:PtoP1}, with $G_2$ replaced by $H_2$, which does not affect the comparison between $G_1$ and $H_1$.
\end{proof}
\begin{proof}[Proof of \Cref{thm:K*:P*toP2*}]
\label{prf:thm:K*:P*toP2*}
    The proof is almost identical to that of \Cref{thm:K*:PtoP1}, except that the roles of the two components are interchanged, that is, the first component is now taken as
    $G_{2}$ and $H_2$, while the second becomes $G_{1}$ and $H_1$.
\end{proof}
\begin{proof}[Proof of \Cref{Thm:PtoP*}]
    \label{Prf:Thm:PtoP*} By \Cref{Thm:K*:PtoP*}, we have
    \begin{align*}
        \mathcal{E}_{\Pi}(T^{*}T,f)\geqslant \mathcal{E}_{\Pi}(TT^{*},Tf).
    \end{align*}
    We apply the WPI to $Tf$:
    \begin{align*}
        \begin{split}\|Tf\|_{\Pi}^{2}&\leqslant s \cdot \mathcal{E}_{\Pi}\left(TT^{*}, Tf\right)+\beta(s) \cdot \|Tf\|^{2}_{\mathrm{osc}}\\&\leqslant s \cdot \mathcal{E}_{\Pi}\left(T^{*}T, f\right)+\beta(s) \cdot \|f\|^{2}_{\mathrm{osc}}.\end{split}
    \end{align*}
    The inequality holds because
    $\|Tf\|^{2}_{\mathrm{osc}}\leqslant \|f\|^{2}_{\mathrm{osc}}$. Then,
    \begin{align*}
        \begin{split}\|f\|_{\Pi}^{2}&\leqslant \|f\|_{\Pi}^{2}-\|Tf\|_{\Pi}^{2}+ s \cdot \mathcal{E}_{\Pi}\left(T^{*}T, f\right)+\beta(s) \cdot \|f\|^{2}_{\mathrm{osc}}\\&= (s+1) \cdot \mathcal{E}_{\Pi}\left(T^{*}T, f\right)+\beta(s) \cdot \|f\|^{2}_{\mathrm{osc}}\\&= s' \cdot \mathcal{E}_{\Pi}\left(T^{*}T, f\right)+\beta(s'-1) \cdot \|f\|^{2}_{\mathrm{osc}},\end{split}
    \end{align*}
    where $s'=s+1$. Hence, for any $f \in \mathrm{L}^{2}_{0}(\Pi)$,
    \begin{align*}
        \|f\|_{\Pi}^{2} & \leqslant s \cdot \mathcal{E}_{\Pi}\left(T^{*}T, f\right)+\Tilde{\beta}(s) \cdot \|f\|^{2}_{\mathrm{osc}},
    \end{align*}
    where $\Tilde{\beta}(s)=\beta(s-1)$ when $s>1$. On the other hand, $\Tilde{\beta}
    (s)\leqslant \frac{1}{4}$ because $\|f\|_{\Pi}^{2}\leqslant \|f\|^{2}_{\mathrm{osc}}
    /4$. Hence we take $\Tilde{\beta}(s)=
    \begin{cases}
        \beta(s-1),  & s>1            \\
        \frac{1}{4}, & 0<s\leqslant 1
    \end{cases}$, then $\Tilde{\beta}(s)$ is a non-increasing function from $(0,\infty
    )$ to $[0,\infty)$ and decreasing when $s>1$ with $\lim_{s \rightarrow
    \infty}\Tilde{\beta}(s) = 0$.
\end{proof}

\begin{proof}[Proof of \Cref{thm:strongPtoP^*}]
    \label{Prf:thm:strongPtoP^*} It is easy to observe that, for any
    $f \in \mathrm L_{0}^{2}(\Pi)$,
       $ \frac{\|Pf\|_{\Pi}^{2}}{\|f\|_{\Pi}^{2}}\leqslant 1-\gamma$,
    that is equivalent to $\left(\sup_{f \neq 0}\frac{\|Pf\|_{\Pi}}{\|f\|_{\Pi}}\right)^{2}\leqslant 1-\gamma$.
    Since $\|P\|_{\Pi}=\|P^{*}\|_{\Pi}$, we have $\left(\sup_{f \neq 0}\frac{\|P^{*}f\|_{\Pi}}{\|f\|_{\Pi}}\right)^{2}\leqslant 1-\gamma$.
    Hence for any $f \in \mathrm L_{0}^{2}(\Pi)$, we have $\gamma \cdot\|f\|_{\Pi}^{2}\leqslant \mathcal{E}_{\Pi}(PP^{*}, f)$.
\end{proof}
\begin{proof}[Proof of \Cref{thm:strongPtotildeP}]
    \label{Prf:thm:strongPtotildeP}

    The following $\mathrm K^*$-WPIs can be obtained directly by \Cref{Asp:P_SPI}, \Cref{lem:P1_to_tildeP1},
    \begin{align*}
        &\frac{\mathcal{E}_{\Pi}\left(P^{*}P, f\right)}{\|f\|^{2}_{\text{osc}}}  \geqslant \frac{\gamma \cdot \|f\|_{\Pi}^{2}}{\|f\|^{2}_{\text{osc}}},                                 \\
        &\frac{\mathcal{E}_{\Pi}(H_{i},f)}{\|f\|^{2}_{\text{osc}}}               \geqslant K_{i}^{*}\left(\frac{\mathcal{E}_{\Pi}\left(G_{i}, f\right)}{\|f\|^{2}_{\text{osc}}}\right), \quad i=1,2,
    \end{align*}
    where $K_{1}^{*}$ and $K_{2}^{*}$ correspond to $\beta_{1}(s)=\int_\mathcal{X}\beta_{1}(s,x)\Pi_{X}(\text{d}x)$, $\beta_{2}(s
    )=\int_\mathcal{Y}\beta_{2}(s,y)\Pi_{Y}(\text{d}y)$, respectively.

    Then, we can obtain the following $\mathrm K^*$-WPIs by \Cref{thm:strongPtoP^*} and
    \Cref{thm:K*:P*toP2*}:
    \begin{align*}
        \frac{\mathcal{E}_{\Pi}\left(PP^{*}, f\right)}{\|f\|^{2}_{\text{osc}}}         & \geqslant \frac{\gamma \cdot \|f\|_{\Pi}^{2}}{\|f\|^{2}_{\text{osc}}}=K_{0}^{*}\left(\frac{\|f\|_{\Pi}^{2}}{\|f\|^{2}_{\text{osc}}}\right), \\
        \frac{\mathcal{E}_{\Pi}\left(P_{2}P_{2}^{*}, f\right)}{\|f\|^{2}_{\text{osc}}} & \geqslant 2 \cdot K_{2}^{*}\left(\frac{1}{2}\cdot\frac{\mathcal{E}_{\Pi}\left(PP^{*}, f\right)}{\|f\|^{2}_{\text{osc}}}\right),
    \end{align*}
    where $K_{0}^{*}(v)=\gamma \cdot v$. Then, by chaining rule \Cref{Thm:chaining}, we can obtain the following $\mathrm K^*$-
    WPI:
    \begin{align*}
        \frac{\mathcal{E}_{\Pi}\left(P_{2}P_{2}^{*}, f\right)}{\|f\|^{2}_{\text{osc}}} & \geqslant 2\cdot K_{2}^{*}\left(\frac{1}{2}\cdot K^{*}_{0}\left(\frac{\|f\|_{\Pi}^{2}}{\|f\|^{2}_{\text{osc}}}\right)\right)=2\cdot K_{2}^{*}\left(\frac{1}{2}\gamma \cdot \frac{\|f\|_{\Pi}^{2}}{\|f\|^{2}_{\text{osc}}}\right).
    \end{align*}
    Next, we can derive the following $\mathrm K^*$-WPI by \Cref{Thm:K*:PtoP*}:
    \begin{align}
        \label{eq:SimplifyConstant}\frac{\mathcal{E}_{\Pi}\left(P_{2}^{*}P_{2}, f\right)}{\|f\|^{2}_{\text{osc}}} & \geqslant 2\cdot K_{2}^{*}\left(\frac{1}{2}\cdot K^{*}_{0}\left(\frac{1}{2}\cdot \frac{\|f\|_{\Pi}^{2}}{\|f\|^{2}_{\text{osc}}}\right)\right)=2\cdot K_{2}^{*}\left(\frac{1}{4}\gamma \cdot \frac{\|f\|_{\Pi}^{2}}{\|f\|^{2}_{\text{osc}}}\right).
    \end{align}
    Finally, by \Cref{thm:K*:P1tildeP2totildeP1tildeP2} and \Cref{Thm:chaining},
    we can obtain the following $\mathrm K^*$-WPIs:
    \begin{align*}
        \frac{\mathcal{E}_{\Pi}\left(P_{1,2}^{*}P_{1,2}, f\right)}{\|f\|^{2}_{\text{osc}}} & \geqslant 2\cdot K_{1}^{*}\left(\frac{1}{2}\cdot \frac{\mathcal{E}_{\Pi}\left(P_{2}^{*}P_{2}, f\right)}{\|f\|^{2}_{\text{osc}}}\right)\geqslant 2\cdot K_{1}^{*}\left(K_{2}^{*}\left(\frac{\gamma}{4}\cdot \frac{\|f\|_{\Pi}^{2}}{\|f\|^{2}_{\text{osc}}}\right)\right).
    \end{align*}
\end{proof}
\begin{proof}[Proof of \Cref{thm:ApproxPtoP_X}]
    \label{Prf:thm:ApproxPtoP_X} Let $g \in \mathrm{L}^{2}_{0}(\Pi_{X})$ and $f_{g}
    \in \mathrm{L}^{2}_{0}(\Pi)$ be such that $f_{g}(x, y) = g(x)$. It is
    obvious that $\|f_{g}\|^{2}_{\mathrm{osc}}=\|g\|^{2}_{\mathrm{osc}}$. By the
    proof of \Cref{thm:PtoP_X}, we have $\|f\|^{2}_{\Pi}=\|g\|^{2}_{\Pi_X}$.

    Then we show
    $\mathcal{E}_{\Pi}(P_{2}^{*}P_{2},f_{g})=\mathcal{E}_{\Pi_{X}}(\bar{P}_{X}^{*}
    \bar{P}_{X},g)$. Since $G_{1}$ is projection, we have $P_{2}^{*}=H_{2}G_{1}$
    and $P_{2}^{*}P_{2}=H_{2}G_{1}H_{2}$. It is easy to show $\mathcal{E}_{\Pi}(P_{2}^{*}P_{2},f_{g})=\mathcal{E}_{\Pi_{X}}(\bar{P}_{X}^{*}\bar{P}_{X},g)$.
    This implies for any $g \in \mathrm{L}^{2}_{0}(\Pi_{X})$, the WPI $\|g\|_{\Pi_X}^{2} \leqslant s \cdot \mathcal{E}_{\Pi_{X}}(\bar{P}_{X}^{*}\bar{P}_{X},g) +\beta(s) \cdot \|g\|^{2}_{\mathrm{osc}}$
    holds with the same $\beta$.
\end{proof}

\begin{proof}[Proof of \Cref{thm:ApproxP_XtoP}]
    \label{Prf:thm:ApproxP_XtoP} The proof is nearly the same as the proof of \Cref{thm:P_XtoP}. Let $f\in \mathrm{L}^{2}_{0}(\Pi)$ and $g_{f}(x)=(P_{2}f)(x,y)$, then it is easy to verify that $g_{f}\in \mathrm{L}^{2}_{0}(\Pi_{X})$, and $\|g_{f}\|_{\Pi_X}
    ^{2}=\|P_{2}f\|_{\Pi}^{2}$. So the augment in \Cref{thm:P_XtoP} still works with replacing $P$ with $P_2$ and $P_X$ with $\bar{P}_X$.
\end{proof}
\begin{proof}[Proof of \Cref{lem:P_xreversible}]
    \label{Prf:lem:P_xreversible} First, it is easy to check the detailed
    balance condition for $P_{X}$ to show its reversibility. Then, for any
    $g \in \mathrm L^{2}(\Pi_{X})$, we have
    \begin{align*}
        \langle P_{X}g,g\rangle_{\Pi_X} & =\int g(x)g(x')\Pi_{X|Y}(\text{d}x'|y)\Pi_{Y|X}(\text{d}y|x)\Pi_{X}(\text{d}x)        \\
                                        & =\int_\mathcal{Y}\left[\int_\mathcal{X}g(x)\Pi_{X|Y}(\text{d}x|y)\right]^{2}\Pi_{Y}(\text{d}y)\geqslant 0,
    \end{align*}
    which show $P_{X}$ is positive.
    Since $H_{2|y}$ is $\Pi_{y}$-reversible, positive under \Cref{Asp:P2}, it is straightforward
    to show $\bar{P}_{X}$ is $\Pi_{X}$-reversible. Note that $\bar{P}_{X}$ is positive as well since
    \begin{align*}
        \langle \bar{P}_{X}g,g\rangle_{\Pi_X} & =\int g(x)g(x')H_{2|y}(\text{d}x'|x)\Pi_{Y|X}(\text{d}y|x)\Pi_{X}(\text{d}x)                         \\
                                              & =\int_\mathcal{Y}\left[\int_\mathcal{X}g(x)g(x')H_{2|y}(\text{d}x'|x)\Pi_{X|Y}(\text{d}x|y)\right]\Pi_{Y}(\text{d}y) \\
                                              & =\int_\mathcal{Y}\left[\langle H_{2|y}g,g\rangle_{\Pi(\cdot|y)}\right]\Pi_{Y}(\text{d}y)\geqslant 0.
    \end{align*}
\end{proof}
\begin{proof}[Proof of \Cref{thm:XDGtoXHDG}]
\label{Prf:thm:XDGtoXHDG}
    Since for $g \in \mathrm{L}^2(\Pi_X)$, $\int_\mathcal{X} g^2(x) \Pi_X(\dif x)=\int_\mathcal{Y}\int_\mathcal{X} g^2(x) \Pi(\dif x|y)\Pi_Y(\dif y)<\infty$, observe that $\int_\mathcal{X} g^2(x) \Pi(\dif x|y)<\infty$ and $\mathcal{E}_{\Pi(\cdot|y)}(H_{2|y},g)<\infty$ for almost every $y \in \mathcal{Y}$. 
    By \eqref{equ:DirichletForm},
    \begin{align*}
        \mathcal{E}_{\Pi_X}(\bar{P}_X,g)&=\frac{1}{2}\int_\mathcal{X}\int_\mathcal{X}(g(x)-g(x'))^2\bar{P}_X(x,\dif x')\Pi_X(\dif x)\\
        &=\frac{1}{2}\int_\mathcal{X}\int_\mathcal{X} (g(x)-g(x'))^2\int_\mathcal{Y}H_{2|y}(\dif x'|x)\Pi(\dif y|x)\Pi_X(\dif x).
    \end{align*}
    Changing the order of integration yields
    \begin{align*}
        &\frac{1}{2}\int_\mathcal{X}\int_\mathcal{X}(g(x)-g(x'))^2 \int_\mathcal{Y} H_{2|y}(\dif x'|x)\Pi(\dif y|x)\Pi_X(\dif x)\\
        = &\frac{1}{2}\int_\mathcal{Y}\int_\mathcal{X}\int_\mathcal{X}(g(x)-g(x'))^2  H_{2|y}(\dif x'|x)\Pi(\dif x|y)\Pi_Y(\dif y)\\
        =&\int_\mathcal{Y} \mathcal{E}_{\Pi(\cdot|y)}(H_{2|y},g)\Pi_Y(\dif y).
    \end{align*}
    Therefore, for any $g \in \mathrm{L}^2(\Pi_X)$, the Dirichlet form of $\bar{P}_X$ satisfies
    \begin{align*}
        \mathcal{E}_{\Pi_X}(\bar{P}_X,g)=\int_\mathcal{Y} \mathcal{E}_{\Pi(\cdot|y)}(H_{2|y},g)\Pi_Y(\dif y),
    \end{align*}
    where for each $y \in Y$, $\mathcal{E}_{\Pi(\cdot|y)}(H_{2|y},g)=\frac{1}{2}\int_\mathcal{X}\int_\mathcal{X} (g(x)-g(x'))^2 H_{2|y}(x,\dif x')\Pi(\dif x|y)<\infty$
    is the Dirichlet form of $H_{2|y}$.
    
    As remarked earlier, $X$-marginal DG sampling with transition kernel $P_X$ is a special case of the $X$-marginal of 2-MG sampler wherein $H_{2|y}(\dif x'|x) =\Pi_{X|Y}(\dif x'|y)$. This immediately yields the following representation of $\mathcal{E}_{\Pi_X}(P_X,g)$.
    Similarly,
    for any $g \in \mathrm{L}^2(\Pi_X)$, we have 
    \begin{align*}
        \mathcal{E}_{\Pi_X}(P_X,g)&=\frac{1}{2}\int_\mathcal{Y}\int_\mathcal{X}\int_\mathcal{X}(g(x)-g(x'))^2  \Pi_{X|Y}(\dif x'|y)\Pi_{X|Y}(\dif x|y)\Pi_Y(\dif y)\\
        &=\int_\mathcal{Y} \mathrm{Var}_{\Pi(\cdot|y)}(g)\Pi_Y(\dif y).
    \end{align*}
    We now establish the comparison result. Given the two expression of $\mathcal{E}_{\Pi_X}(\bar{P}_X,g)$ and $\mathcal{E}_{\Pi_X}(P_X,g)$ and \Cref{Asp:P2}, for any $s > 0$, we have
    \begin{align*}
        \begin{split}
            \mathcal{E}_{\Pi_X}(P_X,g)&=\int_\mathcal{Y} \mathrm{Var}_{\Pi(\cdot|y)}(g)\Pi_Y(\dif y)\\
            & \leqslant \int_\mathcal{Y} \left[s \cdot \mathcal{E}_{\Pi(\cdot|y)}(H_{2|y},g)+\beta(s,y) \cdot \|g\|^2_{\mathrm{osc}}\right] \Pi_Y(\dif y)\\
            &=s \cdot \mathcal{E}_{\Pi_X}(\bar{P}_X,g)+\|g\|^2_{\mathrm{osc}} \cdot \int_\mathcal{Y}\beta(s,y)\Pi_Y(\dif y) .
        \end{split}
    \end{align*}
    Then, we check that the function $\beta$ satisfies the regularity conditions of the WPI definition. Since $s \mapsto \beta(s, y)$ is monotone decreasing for each $y \in \mathcal{Y}$, it follows that $\beta$ is also monotone decreasing. Furthermore, since $\beta(\cdot, y) \leqslant 1/4$ pointwise for all $y \in \mathcal{Y}$, and $\int_\mathcal{Y} \Pi_Y(\dif y) = 1$, and each $\beta(s, y) \rightarrow 0$ as $s \rightarrow \infty$, the dominated convergence theorem yields that $\beta(s) =  \int_\mathcal{Y}\beta(s,y)\Pi_Y(\dif y) \rightarrow 0,$ as $s \rightarrow \infty$.
    Hence, the mapping $\beta: (0, \infty) \rightarrow [0, \infty)$ is decreasing, and satisfies $\lim_{s\rightarrow 0} \beta(s) = 0$.
\end{proof}

\begin{proof}[Proof of \Cref{Thm:SPIproduct}]
\label{prf:Thm:SPIproduct}
    Let $C_{H_1}$ and {$C_{H_2}$} be the spectral gaps of $H_1^2$ and $H_2^2$ respectively. i.e. for any $h \in \mathrm{L}^2(\Pi_Y)$ and $g \in \mathrm{L}^2(\Pi_X)$,
    \begin{align*}
        C_{H_1}= \inf_{h: \mathrm{Var}_{\Pi_Y}(h) \neq 0} \frac{\mathcal{E}_{\Pi_Y}(H_1^2,h)}{\mathrm{Var}_{\Pi_Y}(h)},\quad
        C_{H_2}= \inf_{g: \mathrm{Var}_{\Pi_X}(g) \neq 0} \frac{\mathcal{E}_{\Pi_X}(H_2^2,g)}{\mathrm{Var}_{\Pi_X}(g)}.
    \end{align*} 
    Since $H_1$ is reversible, we have
    \begin{align*}
        C_{H_1} = 1-\|H_1^2\|_{\Pi_Y}=1-\|H_1\|_{\Pi_Y}^2=1-(1-\gamma_1)^2.
    \end{align*}
    Since $\gamma_1\in[0,1]$, this establishes a one-to-one correspondence between $\gamma_1$ and $C_{H_1}$, and similarly between $\gamma_2$ and $C_{H_2}$. Then, we have the following inequality:
    \begin{align*}
        \left(1-(1-\gamma_1)^2\right) \cdot \mathrm{Var}_{\Pi_Y}(h) & \leqslant   \mathcal{E}_{\Pi_Y}(H_1^2,h).
    \end{align*}
    As in \Cref{rmk:strongtoweak} it then follows that:
    \begin{align*}
        \mathrm{Var}_{\Pi_Y}(h) & \leqslant s \cdot \mathcal{E}_{\Pi_Y}(H_1^2,h)+ \beta_1(s) \cdot \|h\|_\mathrm{osc}^2,
    \end{align*}
    where $\beta_1=\mathbbm{1}_{{\{1-(1-\gamma_1)^2 \leqslant 1/s \}}}$. 
    Similarly $\mathrm{Var}_{\Pi_X}(g) \leqslant s \cdot \mathcal{E}_{\Pi_X}(H_2^2,g)+ \beta_2(s) \cdot \|g\|_\mathrm{osc}^2$, where $\beta_2=\mathbbm{1}_{{\{1-(1-\gamma_2)^2 \leqslant 1/s \}}}$. 
    From \Cref{thm: tensor product}, we have that
    \begin{align*}
        \|f\|_\Pi^2 &\leqslant s \cdot \mathcal{E}_{\Pi}(H^2,f)+ \beta(s) \cdot \|f\|_\mathrm{osc}^2,
    \end{align*}
    where $\beta(s) \leqslant \beta_1(s)+\beta_2(s)= \mathbbm{1}_{{\{1-(1-\gamma_1)^2 \leqslant 1/s \}}} + \mathbbm{1}_{{\{1-(1-\gamma_2)^2 \leqslant 1/s \}}}$.
    
    Since $\beta(s)\leqslant 1$ for all $s\geqslant 0$, we have
    \begin{align*}
        \beta(s)\leqslant \mathbbm{1}_{{\{\min\{1-(1-\gamma_1)^2, 1-(1-\gamma_2)^2\} \leqslant 1/s \}}}.
    \end{align*}
    Hence, we can obtain the following inequality:
    \begin{align*}
        \min\{1-(1-\gamma_1)^2, 1-(1-\gamma_2)^2\} \cdot \|f\|_{\Pi}^2 \leqslant \mathcal{E}_{\Pi}(H^2,f).
    \end{align*}
    If we denote the spectral gap of $H^2$ as $C_P$, and the spectral gap of $H$ as $\gamma$, we have 
    \begin{align*}
        \min\{1-(1-\gamma_1)^2, 1-(1-\gamma_2)^2\} &\leqslant C_p = 1-(1-\gamma)^2.
    \end{align*}
    Since $1-(1-\gamma)^2$ is a strictly increasing function on $[0,1]$, $\min\{\gamma_1,\gamma_2\}\leqslant \gamma$, and so
    \begin{align*}
        \min\{\gamma_1,\gamma_2\} \cdot \|f\|_{\Pi}^2 &\leqslant \mathcal{E}_{\Pi}(H,f).
    \end{align*}
\end{proof}
\subsection{Proofs for
\Cref{Sec: toy example,Sec: bayes hierarchical model,Sec: diffusion}}

\begin{proof}[Proof of \Cref{eg: spectralgap_for_conditional}]
    \label{Prf:eg: spectralgap_for_conditional} 
    We first show the lower bound for
    $\gamma_{\tau}$. By  \cite[Theorem 1]{andrieu2024explicit}, for targeting the conditional
    distribution $\mathcal{N}\left(0,\frac{1}{\tau}\right)$ by RWM with the proposal
    increments $\mathcal{N}(0,\sigma_{\tau}^{2})$, we have that
    $\gamma_{\tau} \geqslant c_0 \cdot \sigma_{\tau}^{2}\cdot \tau \cdot \exp
    (-2\cdot\sigma_{\tau}^{2}\cdot \tau)$, where $c_0=1.972 \times 10^{-4}$.

    For the lower bound of $\gamma_{\xi}$, since $\tau|\xi \sim \Gamma\left(1,\beta
    +\frac{\xi^{2}}{2}\right)$ is not $m$-strongly convex, we make use of Theorem 18 and Lemma 38 of \cite{andrieu2024explicit}. Although Assumption 9 in \cite{andrieu2024explicit}, which requires the probability distribution to have a positive density on $\mathbb{R}$, does not hold in our case (the exponential distribution), the arguments used in the proofs of Theorem 18, Lemma 38 in \cite{andrieu2024explicit} remain valid for this distribution. Hence,
    \begin{align*}
        \gamma_{\xi} \geqslant 2^{-5}\cdot \epsilon^{2} \cdot \min \left\{1,4 \cdot \delta^{2} \cdot{I}_{\pi}\left(\frac{1}{4}\right)^{2}\right\},
    \end{align*}
    where $\epsilon=\frac{1}{2}\alpha_{0}$, $\delta=\alpha_{0} \cdot \sigma_{\xi}$,
    ${I}_{\pi}\left(p\right)=\beta_{\xi} \min\{p,1-p\}$ and $\beta_{\xi}=\beta
    +\frac{\xi^{2}}{2}$. $\alpha_{0}$ is minimal
    acceptance rate of $H_{1,\xi}$.

    It is direct to compute
    \begin{align*}
        \alpha_{0} & \geqslant \exp{\left(\frac{1}{2}\sigma_\xi^2\beta_\xi^2\right)}\cdot(1-\Phi(\beta_{\xi}\sigma_{\xi})) \geqslant\frac{1}{\sqrt{2\pi}}\cdot \frac{\beta_{\xi}\sigma_{0}}{\beta_{\xi}^{2}\sigma_{0}^{2}+1},
    \end{align*}
    since $1-\Phi(t) \geqslant \frac{1}{\sqrt{2\pi}}\cdot\frac{t}{t^{2}+1}\cdot\exp
    (-t^{2}/2)$ when $t\geqslant 0$. Hence, we have
    \begin{align*}
        \gamma_{\xi} \geqslant 2^{-9}\cdot \alpha_{0}^{4} \cdot \sigma_{\xi}^{2}\beta_{\xi}^{2} \geqslant c \cdot \frac{(\beta_{\xi}\sigma_{\xi})^{6}}{(\beta^{2}_{\xi}\sigma_{\xi}^{2}+1)^{4}},
    \end{align*}
    where $c=\pi^{-2}\times 2^{-11}$.
\end{proof}

\begin{proof}[Proof of \Cref{eg:spectralgap_for_DG}]
    \label{Prf:eg:spectralgap_for_DG} We consider the marginal chain of $\tau$.
    Let $\kappa=T(\tau)=1/\tau$, then for the Markov chain ${\tau_n}$,
    $\kappa_{n}=T(\tau_{n})$ is a Markov chain because $T$ is a bijection. Hence,
    they share the same convergence rate in total variation distance. We denote the transition kernel of the
    Markov chain ${\kappa_n}$ as $P_{K}(\kappa, \cdot)$, and the corresponding operator $P_{K}$.

    We write $\kappa_{n+1}$ as the auto regressive chain of $\kappa_{n}$, then
    we have
    \begin{align*}
        \kappa_{n+1} & = \frac{N^{2}}{2E}\cdot \kappa_{n} + \frac{\beta}{E},
    \end{align*}
    where $N$ is a standard normal random variable and $E$ is an exponential
    random variable with parameter $1$.

    Let $V(\kappa)=\kappa^{a}$, where $0<a<1$ is a constant. Since for any $x, y>0$, it is obvious $(x+y)^a\leqslant(x^a+y^a)$, and $\frac{N^{2}}{2E}$ and $\frac{\beta}{E}$ are both positive random variables, then we have
    \begin{align*}
        \mathbb{E}(V(\kappa_{n+1})|\kappa_{n}) & = \mathbb{E}\left(\frac{N^{2}}{2E}\cdot \kappa_{n} + \frac{\beta}{E}\right)^{a} \leqslant \mathbb{E}\left(\frac{N^{2}}{2E}\right)^{a} \cdot \kappa_{n}^{a} + \mathbb{E}\left(\frac{\beta}{E}\right)^{a}.
    \end{align*}
    Since $\mathbb{E}\left(\frac{N^{2}}{2E}\right)^{a}=\mathbb{E}((2E)^{-a}\mathbb{E}(N^{2a}|E))= \frac{\Gamma(a+1/2)}{\sqrt{\pi}}\mathbb{E}(E^{-a})=\frac{\Gamma(1-a)\Gamma(a+\frac{1}{2})}{\sqrt{\pi}}$, and $\mathbb{E}\left(\frac{\beta}{E}\right)^{a}=\beta^{a}\Gamma(1-a)$, we obtain 
    \begin{align*}
         \mathbb{E}(V(\kappa_{n+1})|\kappa_{n}) = \frac{\Gamma(1-a)\Gamma(a+\frac{1}{2})}{\sqrt{\pi}}V(\kappa_{n}) + \beta^{a}\Gamma(1-a),
    \end{align*}
    where $\Gamma(\cdot)$ is the Gamma function.
    
    Picking $a=\frac{1}{4}$,
    $\frac{\Gamma(1-a)\Gamma(a+\frac{1}{2})}{\sqrt{\pi}}$ attains it minimum
    value $\frac{\Gamma(\frac{3}{4})^{2}}{\sqrt{\pi}}<1$. We denote $\eta=\frac{\Gamma(\frac{3}{4})^{2}}{\sqrt{\pi}}$,
    $b=\beta^{1/4}\Gamma(\frac{3}{4})$, then we have for $V(\kappa)=\kappa^{1/4}:
    [0,\infty)\rightarrow [0,\infty)$, $P_K V \leqslant \eta V + b$.

    By Lemma 3.1 \cite{jones2004sufficient}, set $W (\kappa) = 1 + V (\kappa)$, $W
    : [0,\infty)\rightarrow [1,\infty)$, then for any $a>0$
    \begin{align*}
        P_{K} W \leqslant \rho V + L\mathbbm{1}_{C},
    \end{align*}
    where $\rho=(a + \eta)/(a + 1)$, $L=b+1-\eta$,
    $C=\left\{\kappa \in [0,\infty): W(\kappa)\leqslant \frac{(a+1)L}{a(1-\rho)}\right\}$.

    It is easy to see that $C$ is small, i.e., there exists a constant $\alpha>0$,
    and a probability measure $\mu$ such that for any $\kappa\in C$, we have
    \begin{align*}
        P_{K}(\kappa,\cdot) & \geqslant \alpha \cdot \mu(\cdot).
    \end{align*}
    Then, by \cite[Theorem 1]{taghvaei2021lyapunov}, we have
    \begin{align*}
        \gamma \|f\|_{\Pi_K}^{2} \leqslant \mathcal{E}_{\Pi_K}(P_K,f),
    \end{align*}
    where $\gamma=\frac{1+\frac{2L}{\alpha}}{1-\rho}$, $\Pi_K=T_{\#}\Pi_{T}$ is the pushforward measure.

    We denote $P_{T}$ as the operator coresponding to the Markov chain $P_T(\tau,\cdot)$ for $\tau_n$. Since $\Pi_{T}$ is the invariant distribution of $\tau_n$, $\Pi_K$ is the invariant distribution of $\kappa_n$, we have $\|P_{T}
    \|_{\Pi_{T}}=\|P_{K}\|_{\Pi_K}=1-\gamma$. By \cite[Theorem 3.2]{liu1994covariance}, we have
    \begin{align*}
        \|P\|_{\Pi}^{2}=1-\gamma=\|P_{T}\|_{\Pi_T}.
    \end{align*}
    Hence,
    \begin{align*}
        \inf_{f\in \mathrm{L}^2_0(\Pi)}\frac{\mathcal{E}_{\Pi}(P^{*}P,f)}{\|f\|_{\Pi}^{2}} & =1-\sup_{f\in \mathrm{L}^2_0(\Pi)}\frac{\|Pf\|_{\Pi}^{2}}{\|f\|_{\Pi}^{2}}=\gamma ,\\
        \gamma\|f\|_{\Pi}^{2}                                                     & \leqslant \mathcal{E}_{\Pi}(P^{*}P,f).
    \end{align*}
\end{proof}

\begin{proof}[Proof of \Cref{eg: NIG_scale_sigma}]
    \label{Prf:eg: NIG_scale_sigma} By \Cref{rmk:requirementforH}, $H_{1|\xi}$ and $H_{2|\tau}$ are positive and reversible. It is easily shown that  when $\sigma_{\xi}\sim O
    (1/\beta_{\xi})$, the lower bound of $\gamma_{\xi}$ does not depend on $\xi$.
    When $\sigma_{\xi}=\sqrt{3}/\beta_{\xi}$, the lower bound of $\gamma_\xi$ attains the maximal value.
    Hence, $\beta_{1}(s)= \mathbbm{1}_{\{s\leqslant \gamma_\xi^{-1}\}}$, where $\gamma_{\xi}=\frac{27}{256}\times c$.

    Similarly, when $\sigma_{\tau}^{2}= 1/2\tau$, the lower bound of
    $\gamma_{\tau}$ is maximized and does not depend on $\tau$. Hence, $\beta_{2}(s)= \mathbbm{1}_{\{s\leqslant \gamma_\tau^{-1}\}}$, where $\gamma_{\tau}=\frac{C_{\gamma}}{2e}$.

    We thus have $K_{1}^{*}(v)=\gamma_{\xi} \cdot v$ and $K_{2}^{*}(v)=\gamma_{\tau} \cdot v$.
    In this case, since 2-MG $P_2P_2^*$ actually satisfies the SPI, we can apply \Cref{thm:strongPtoP^*} to pass directly from $P_2P_2^*$ to $P_2^*P_2$ directly, rather than using \Cref{Thm:K*:PtoP*} as in  \eqref{eq:SimplifyConstant} of \Cref{thm:strongPtotildeP}. This yields $K^*(v)=2K_1^*(K_2^*(\gamma v/2))=\gamma_{\tau}\gamma_{\xi}\gamma \cdot v$. 
    By \Cref{rmk:strongtoweak}, we find 
    \begin{align*}
        \|P_{1,2}^nf \|^2_\Pi \leqslant \|f\|^2_\Pi \cdot \left(1-\gamma_\tau\cdot\gamma_\xi\cdot\gamma\right)^n\leqslant \|f\|^2_\Pi \cdot\exp\left(-\gamma_\tau\gamma_\xi\gamma\cdot n\right).
    \end{align*}
\end{proof}

\begin{proof}[Proof of \Cref{eg: NIG_fix_sigma}]
    \label{Prf:eg: NIG_fix_sigma} By \Cref{rmk:requirementforH}, $H_{1|\xi}$ and $H_{2|\tau}$ are positive and reversible. From \Cref{eg: spectralgap_for_conditional}, we
    have
    \begin{align*}
        \gamma_{\xi} & \geqslant c \cdot \frac{(\beta_{\xi}\sigma_{0})^{6}}{(\beta^{2}_{\xi}\sigma_{0}^{2}+1)^{4}}\geqslant c' \frac{1}{\beta^{2}_{\xi}\sigma_{0}^{2}},
    \end{align*}
    where
    $c'=c \times \left(\frac{\beta^{2}\sigma_{0}^{2}}{\beta^{2}\sigma_{0}^{2}+1}\right)^{4}$
    since $\beta_{\xi}\geqslant \beta$. We denote $\gamma_{\xi}'=c' \frac{1}{\beta^{2}_{\xi}\sigma_{0}^{2}}$,
    then we can obtain $\beta_{1}(s,\xi)= \mathbbm{1}_{\{s\leqslant \gamma_\xi'^{-1}\}}$.
    
    Since the marginal distribution of $\xi$ is $t_{1}(0,2\cdot \beta)$, the Student's $t$-distribution with one degree of freedom, location $0$, and scale $2\beta$, then,
    \begin{align*}
        \begin{split}\beta_{1}(s)&=\int \beta_{1}(s,\xi) \Pi_{\Xi}(\dif \xi)\\&=\int \mathbbm{1}_{\{s\leqslant \gamma_\xi'^{-1}\}}\cdot \frac{1}{\pi}\cdot \left(1+\frac{\xi^{2}}{2\cdot \beta}\right)^{-1}\dif \xi\\&=\frac{2\cdot \sqrt{2\beta}}{\pi}\left(\frac{\pi}{2}-\text{arctan}\left(\left(\frac{\left(c' s/\sigma_{0}^{2}\right)^{1/2}-\beta}{\beta}\right)^{1/2}\right)\right)
    \leqslant \frac{2\cdot \sqrt{2\beta}}{\pi}\cdot \left(\frac{\beta}{\sqrt{\frac{c'\cdot s}{\sigma_{0}^{2}}}-\beta}\right)^{1/2}.\end{split}
    \end{align*}
    When $\sqrt{\frac{c'\cdot s}{\sigma_{0}^{2}}}>\beta$, it is obvious that we can
    find a constant $C_{1}(c',\beta,\sigma_{0}^{2})$ such that $\beta_{1}(s)\leqslant C_{1} \cdot s^{-\frac{1}{4}}$.

    Similarly, we denote $\gamma_{\tau}'=c_0 \cdot \sigma_{0}^{2}\cdot \tau
    \cdot \exp(-2\cdot\sigma_{0}^{2}\cdot \tau)$, then $\beta_{2}(s,\tau) = \mathbbm{1}_{\{s\leqslant \gamma_\tau'^{-1}\}}$.

    Since the marginal distribution of $\tau$ is
    $\Gamma\left(\frac{1}{2},\beta\right)$,
    \begin{align*}
        \begin{split}\beta_{2}(s)&=\int \beta(s,\tau) \Pi_{Y}(\dif \tau)\\&=\int \mathbbm{1}_{\{s\leqslant \gamma_y'^{-1}\}}\frac{\beta^{1/2}}{\sqrt{\pi}}\tau^{-1/2}\exp(-\beta \tau)\dif \tau\\&=\frac{1}{\sqrt{\pi}}\left(\Gamma_{L}\left(\frac{1}{2},-\frac{\beta}{2\sigma_{0}^{2}}w_0\left(-\frac{2}{c_0\cdot s}\right)\right)+\Gamma_{U}\left(\frac{1}{2},-\frac{\beta}{2\sigma_{0}^{2}}w_{-1}\left(-\frac{2}{c_0\cdot s}\right)\right)\right),\end{split}
    \end{align*}
    where $s\geqslant 2e/c_0$,  $w_0$ and $w_{-1}$ are two real branches of the Lambert W function that give real solutions of $w(x)e^{w(x)} = x$, $\Gamma_{L}$ is the lower incomplete
    gamma function and $\Gamma_{U}$ is the upper incomplete gamma function.

    When
    $-\frac{\beta}{2\sigma_{0}^{2}}w_0\left(-\frac{2}{c_0\cdot s}\right)\leqslant1$, we  have $\Gamma_{L}(t,x) \leqslant 1/t \cdot x^{t}$ by \cite[(5)]{jameson2016incomplete}. 
    For $-1<w_0(x)<0$, since $e^{w_0(x)}\leqslant 1$ and $w_0(x)\leqslant w_0(x)e^{w_0(x)}$, we have $w_0(x)\leqslant
    x$. Then
    \begin{align*}
        \Gamma_{L}\left(\frac{1}{2},-\frac{\beta}{2\sigma_{0}^{2}}w_0\left(-\frac{2}{c_0\cdot s}\right)\right)\leqslant  2\cdot\left(-\frac{\beta}{2\sigma_{0}^{2}}w_0\left(-\frac{2}{c_0\cdot s}\right)\right)^{1/2} 
        \leqslant \frac{2\cdot \sqrt{\beta}}{\sqrt{c_0 \cdot \sigma_{0}^{2}}}\cdot s^{-\frac{1}{2}}.
    \end{align*}

    When $-\frac{\beta}{2\sigma_{0}^{2}}w_{-1}\left(-\frac{2}{c_0\cdot s}\right)>1$,
    \begin{align*}
        \Gamma_U\left(\frac{1}{2},-\frac{\beta}{2\sigma_{0}^{2}}w_{-1}\left(-\frac{2}{c_0\cdot s}\right)\right)\leqslant & \exp{\left(\frac{\beta}{2\sigma_{0}^{2}} w_{-1}\left(-\frac{2}{c_0\cdot s}\right)\right)}\cdot \left(-\frac{\beta}{2\sigma_{0}^{2}} w_{-1}\left(-\frac{2}{c_0\cdot s}\right)\right)^{-\frac{1}{2}} \\
        \leqslant & \left(\frac{2}{c_0}\right)^{\frac{\beta}{2\sigma_{0}^{2}}}\cdot s^{-\frac{\beta}{2\sigma_{0}^{2}}} ,
    \end{align*}

    since $\Gamma_{U}(t,x) \leqslant \exp(-x) \cdot x^{t-1}\leqslant \exp(-x)$ if
    $x>1$ and $t=1/2$, see \cite[(6)]{jameson2016incomplete}, and $w_{-1}(x) \leqslant \ln(-x)$ as $-1/e\leqslant x<0$, see (9) in \cite{loczi2020explicit}.

    We denote
    $C_{\sigma_0}=\frac{1}{\sqrt{\pi}}\frac{2\cdot \sqrt{\beta}}{\sqrt{c_0
    \cdot \sigma_{0}^{2}}}$
    and
    $C_{\sigma_0}'=\frac{1}{\sqrt{\pi}}\left(\frac{2}{c_0}\right)^{\frac{\beta}{2\sigma_{0}^{2}}}$.
    When $\beta>\sigma_{0}^{2}$, $s^{-\frac{1}{2}}$ dominates, hence $\beta_{2}(s
    ) \leqslant C_{\sigma_0}\cdot s^{-\frac{1}{2}}$. Otherwise, $\beta_{2}(s) \leqslant
    C_{\sigma_0}'\cdot s^{-\frac{\beta}{2\sigma_{0}^{2}}}$.

    We combine the result of fixing $\sigma_{0}^{2}$ in every step. In that case, taking $C_{2}=\max\{C_{\sigma_0},C_{\sigma_0}'\}$, we have
    \begin{align*}
        \beta_{1}(s) & = C_{1} \cdot s^{-\frac{1}{4}},\\
        \beta_{2}(s) & = C_{2} \cdot \max\{s^{-\frac{1}{2}}, s^{-\frac{\beta}{2\sigma_{0}^{2}}}\}=C_{2} s^{\max(-\frac{1}{2},-\frac{\beta}{2\sigma_{0}^{2}})}.
    \end{align*}
    By \cite[Lemma 15]{andrieu2022comparison},
    \begin{align*}
        K_{1}^{*}(v)  =C'_{1} \cdot v^{1+4},   \quad         
        K_{2}^{*}(v)  =C'_{2} \cdot v^{1+1/\min\{\frac{1}{2},\frac{\beta}{2\sigma_{0}^{2}}\}}. 
    \end{align*}
    By \Cref{thm:strongPtotildeP}, we have
    \begin{align*}
        K^{*}(v)=2\cdot K_{1}^{*}(K_{2}^{*}(\frac{\gamma}{4}\cdot v))=C'\cdot v^{5\cdot (1+1/\min\{\frac{1}{2},\frac{\beta}{2\sigma_{0}^{2}}\})},
    \end{align*}
    where
    $C'=2C_{1}'C_{2}'^{5}\left(\gamma/4\right)^{5\cdot (1+1/\min\{\frac{1}{2},\frac{\beta}{2\sigma_{0}^{2}}\})}$.

    Hence, by Lemma 15 in \cite{andrieu2022comparison},
    \begin{align*}
        F^{-1}(n)\leqslant \begin{cases}\tilde{C}_{1} \cdot n^{-\frac{1}{14}}\quad\quad\quad\quad&\text{if}\quad \frac{\beta}{\sigma_0}>1,\\ \tilde{C}_{2} \cdot n^{-\frac{\beta}{4\beta+10\sigma_0}}&\text{otherwise}.\end{cases}
    \end{align*}
    where $\tilde{C}_{1}=(\frac{1}{14C'})^{\frac{1}{14}}$, $\tilde{C}_{2}=\left(\frac{\beta}{(4\beta+10\sigma_{0})C'}
    \right)^{\frac{\beta}{4\beta+10\sigma_{0}}}$.
\end{proof}

\begin{lemma}
\label{thm:Px_L1_to_P_SPI} 
We denote by $T_{X}$ the $X$-marginal chain of $T$, where $T((x,y),(\dif x',\dif y'))=\Pi_{X|Y}(\dif x'|y')\Pi_{Y|X}(\dif y'|x)$, and $T_{X}(x,\dif x')=\int_{\mathcal{Y}} \Pi_{X|Y}(\dif x'|y)\Pi_{Y|X}(\dif y|x)$. If $T_{X}$ is $\mathrm{L^{1}}$-geometrically convergent, then
    $T^{*}T$ satisfies a SPI with some $\gamma>0$ as in \eqref{eq:SPI}.
    
\end{lemma}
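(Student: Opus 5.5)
The plan is to pass from $\mathrm{L}^1$-geometric convergence of the reversible marginal chain $T_X$ to an $\mathrm{L}^2$ spectral gap for $T_X$. I will then transfer this gap to the joint chain $T$ using the identities already established in the proof of \Cref{thm:P_XtoP}. First, recall from \Cref{lem:P_xreversible} that $T_X$ is $\Pi_X$-reversible and positive. For reversible chains there is a classical equivalence (Roberts--Rosenthal, and Roberts--Tweedie for the $\mathrm{L}^1$ version): $\mathrm{L}^1$-geometric convergence, i.e.\ $\|\nu T_X^n-\Pi_X\|_{\mathrm{TV}}\leqslant C_\nu\rho^n$ for all $\nu\ll\Pi_X$ with $\mathrm{d}\nu/\mathrm{d}\Pi_X\in\mathrm{L}^1$, is equivalent to $\mathrm{L}^2(\Pi_X)$-geometric ergodicity. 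In turn, this is equivalent to $\|T_X\|_{\mathrm{L}^2_0(\Pi_X)}<1$. I will invoke this equivalence to obtain $\lambda:=\|T_X\|_{\mathrm{L}^2_0(\Pi_X)}<1$. Since $T_X$ is self-adjoint and positive, this is the same as a spectral gap $1-\lambda>0$.

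Second, I relate $\|T\|_{\mathrm{L}^2_0(\Pi)}$ to $\lambda$. For $f\in\mathrm{L}^2_0(\Pi)$, the function $g_f(x):=(Tf)(x,y)$ depends only on $x$ and lies in $\mathrm{L}^2_0(\Pi_X)$. Moreover $\|g_f\|_{\Pi_X}=\|Tf\|_\Pi$, and, as computed in \Cref{thm:P_XtoP}, $T_Xg_f(x)=T^2f(x,y)$. Hence
\begin{align*}
\|T^{n+1}f\|_\Pi=\|T_X^n g_f\|_{\Pi_X}\leqslant \lambda^n\|Tf\|_\Pi\leqslant\lambda^n\|f\|_\Pi .
\end{align*}
This alone gives only $\|T^{n}\|\leqslant\lambda^{n-1}$, which is not yet an SPI. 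An SPI for $T^*T$ requires $\|T\|_{\mathrm{L}^2_0(\Pi)}<1$ in a single step, since $\mathcal{E}_\Pi(T^*T,f)=\|f\|^2_\Pi-\|Tf\|^2_\Pi$.

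Third, and this is the main obstacle, I must get a one-step contraction. I will use the factorization $T=G_1G_2$ with both $G_i$ orthogonal projections. Then $\|Tf\|_\Pi=\|G_1G_2f\|_\Pi$, and $\|G_1G_2\|$ on $\mathrm{L}^2_0$ equals the cosine of the angle between the ranges of $G_1$ and $G_2$, i.e.\ the maximal correlation. By the standard identity (Liu--Wong--Kong, \cite[Theorem 3.2]{liu1994covariance}), this satisfies $\|G_1G_2\|^2=\|G_2G_1G_2\|=\|T_X\|=\lambda$. Concretely, $\|Tf\|^2_\Pi=\langle G_2G_1G_2f,f\rangle_\Pi\leqslant\|G_2G_1G_2|_{\mathrm{L}^2_0}\|\,\|f\|^2_\Pi$. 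The operator $G_2G_1G_2$ restricted to functions of $y$ is unitarily the $Y$-marginal chain, whose nonzero spectrum coincides with that of $T_X$ because $AB$ and $BA$ share nonzero spectrum. I therefore expect $\|Tf\|^2_\Pi\leqslant\lambda\|f\|^2_\Pi$.

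The conclusion follows with $\gamma=1-\lambda>0$, since $\mathcal{E}_\Pi(T^*T,f)=\|f\|^2_\Pi-\|Tf\|^2_\Pi\geqslant(1-\lambda)\|f\|^2_\Pi$. The delicate point is checking that the $\mathrm{L}^1$-to-$\mathrm{L}^2$ equivalence applies; this requires reversibility, which we have, and I would cite it directly rather than reprove it.
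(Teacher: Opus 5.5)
Your proposal is correct and follows the same route as the paper. The paper likewise invokes \cite[Theorem 2.1]{roberts1997geometric} to obtain an $\mathrm{L}^2(\Pi_X)$ spectral gap $\gamma$ for the reversible chain $T_X$, then uses \cite[Theorem 3.2]{liu1994covariance} to get $\|T\|_\Pi^2=\|T_X\|_{\Pi_X}=1-\gamma$, and concludes $\mathcal{E}_\Pi(T^*T,f)=\|f\|_\Pi^2-\|Tf\|_\Pi^2\geqslant\gamma\|f\|_\Pi^2$. Your second step is an unneeded detour, and your third step simply unpacks the cited identity. You can drop the hedge ``I expect'' there: with $A$ the conditional-expectation operator, $T_X=AA^*$ and $T^*T=G_2G_1G_2$ acts as $A^*A$ on functions of $y$ and vanishes on $\ker G_2$, so $\|T\|^2=\|A^*A\|=\|AA^*\|=\|T_X\|$ holds exactly on mean-zero functions.
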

\begin{proof}
    By \cite[Theorem 2.1]{roberts1997geometric}, since  $T_{X}$ is reversible and $\mathrm{L^{1}}$-geometrically convergent, then $T_{X}$ has a $\mathrm{L^{2}}(\Pi_X)$ spectral gap $\gamma>0$. By \cite[Theorem 3.2]{liu1994covariance},
    \begin{align*}
        \|T\|_{\Pi}^{2}=\|T_{X}\|_{\Pi_X}=1-\gamma,
    \end{align*}
    where the last equality follows from the reversibility of $T_X$.
    Hence,
    \begin{align*}
        \inf_{f\in \mathrm{L}^2_0(\Pi)}\frac{\mathcal{E}_{\Pi}(T^{*}T,f)}{\|f\|^{2}_{\Pi}} & =1-\sup_{f\in \mathrm{L}^2_0(\Pi)}\frac{\|Tf\|_{\Pi}^{2}}{\|f\|_{\Pi}^{2}}=\gamma,\\
        \Longrightarrow \quad \gamma\|f\|_{\Pi}^{2}                                                      & \leqslant \mathcal{E}_{\Pi}(T^{*}T,f).
    \end{align*}
\end{proof}

\begin{lemma}
    \label{thm:driftforBayes} In the setting of \Cref{Sec: bayes hierarchical model}, we write $P_{\mathcal B}$ for the marginal chain of $\beta$, so $P_{\mathcal B}(\beta, \dif \beta')=\int_{\sigma^{-2}} \pi_{\beta|\sigma^{-2}}(\dif \beta'|\sigma^{-2})\pi_{\sigma^{-2}|\beta}(\dif \sigma^{-2}|\beta)$.
    
    Then, there exists a constant $\rho \in [0, 1)$ and a finite constant
    $L$ such that, for every $\beta' \in \mathbb{R}^{p}$,
    \begin{align*}
        \mathbb{E}(V(\beta)|\beta') \leqslant \rho \cdot V(\beta)+L,
    \end{align*}
    where the drift function is defined as $V(\beta)= \alpha\|\mathbf Y-\mathbf X\beta\|^{2}$,
    which is unbounded off petite sets. The constant $\alpha$ can be any real
    number that satisfies $0<\alpha<\frac{2a+N-2}{p}$.
\end{lemma}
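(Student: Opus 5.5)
The plan is a direct one-step computation of the conditional expectation, using the two-stage structure of $P_{\mathcal B}$ and the Gaussian/inverse-Gamma moment formulas. Fix the current state $\beta'$ and let $\beta$ denote the next state.

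\textbf{Step 1 (conditioning on the $\sigma^{-2}$ draw).} One step of $P_{\mathcal B}$ first draws
\begin{align*}
\sigma^{-2}\mid\beta' \sim \Gamma\!\left(a+\tfrac N2,\; b+\tfrac12\|\mathbf Y-\mathbf X\beta'\|^2\right).
\end{align*}
It then draws $\beta$ from the Gaussian full conditional, with mean $u=(\mathbf X^\top\mathbf X)^{-1}\mathbf X^\top\mathbf Y$ and covariance $\sigma^{2}(\mathbf X^\top\mathbf X)^{-1}$. This is the precision parametrisation implicit in the model, which I will state explicitly. Since $\mathbf X^\top(\mathbf Y-\mathbf Xu)=0$, I will use the orthogonal decomposition
\begin{align*}
\|\mathbf Y-\mathbf X\beta\|^2=\|\mathbf Y-\mathbf Xu\|^2+(\beta-u)^\top\mathbf X^\top\mathbf X(\beta-u).
\end{align*}

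\textbf{Step 2 (inner Gaussian expectation).} Conditionally on $\sigma^{-2}$, the quadratic form has expectation $\mathrm{tr}\big(\mathbf X^\top\mathbf X\,\sigma^2(\mathbf X^\top\mathbf X)^{-1}\big)=p\,\sigma^2$. This gives
\begin{align*}
\mathbb E\big[V(\beta)\mid\sigma^{-2},\beta'\big]=\alpha\|\mathbf Y-\mathbf Xu\|^2+\alpha p\,\sigma^2 .
\end{align*}

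\textbf{Step 3 (outer inverse-Gamma expectation).} For a $\Gamma(s,r)$ variable with $s>1$, the inverse has mean $r/(s-1)$. Here $s=a+N/2>1$, which holds because $a>1$. Hence
\begin{align*}
\mathbb E[\sigma^2\mid\beta']=\frac{2b+\|\mathbf Y-\mathbf X\beta'\|^2}{2a+N-2}.
\end{align*}
Substituting into Step 2 yields
\begin{align*}
\mathbb E\big(V(\beta)\mid\beta'\big)=\frac{p}{2a+N-2}\,V(\beta')+\alpha\|\mathbf Y-\mathbf Xu\|^2+\frac{2\alpha p b}{2a+N-2}.
\end{align*}
So I take $\rho=p/(2a+N-2)$ and let $L$ be the remaining constant, which is finite. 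Since $N>p$ and $a>1$, we have $2a+N-2>N>p$, so $\rho\in[0,1)$. The admissible range of $\alpha$ is where I expect to have to check the bookkeeping. In this computation $\alpha$ only rescales $V$ and $L$, so any $\alpha>0$ works, and in particular the stated range $0<\alpha<\frac{2a+N-2}{p}$ is covered. Alternatively, the stated bound is exactly what one gets if the contraction factor is written as $\alpha p/(2a+N-2)$ applied to the unscaled quantity $\|\mathbf Y-\mathbf X\beta'\|^2$.

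\textbf{Step 4 (unbounded off petite sets), the main technical point.} Because $\mathbf X$ has full column rank, each sublevel set $\{\beta:V(\beta)\le c\}$ is a compact ellipsoid, so $V$ is unbounded off compact sets. It remains to show that compact sets are petite. I would argue that $P_{\mathcal B}(\beta',\cdot)$ has a Lebesgue density $\int \varphi_{\sigma^{-2}}(\beta)\,\pi(\sigma^{-2}\mid\beta')\,\dif\sigma^{-2}$ that is strictly positive and jointly continuous in $(\beta',\beta)$. On a compact set $C$ this density is bounded below by a positive constant on any fixed ball. That gives a minorisation $P_{\mathcal B}(\beta',\cdot)\ge\epsilon\,\nu(\cdot)$ for all $\beta'\in C$, so $C$ is small and hence petite. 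This combines with the drift inequality above to conclude.
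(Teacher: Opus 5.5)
Your proposal is correct and follows the paper's argument: condition on $\sigma^{-2}$, use $\mathbb{E}(\|\mathbf Y-\mathbf X\beta\|^2\mid\sigma^{-2})=p\,\sigma^{2}+\|\mathbf Y-\mathbf Xu\|^2$, and then take the inverse-Gamma mean. Your bookkeeping is in fact cleaner than the paper's, which has three slips: it writes $\rho=\alpha p/(2a+N-2)$, the factor for the unscaled $\|\mathbf Y-\mathbf X\beta'\|^2$, which is where the artificial restriction on $\alpha$ comes from; it puts $a+\frac N2+1$ in the inverse-Gamma denominator; and it defers the petite-set claim, which you prove by minorisation, to the next lemma, where it only asserts that the sublevel sets are compact and cites Meyn--Tweedie.
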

\begin{proof}
    \begin{align*}
        \mathbb{E}(V(\beta)|\beta') = \mathbb{E}(\mathbb{E}(V(\beta)|\lambda)|\beta')=\mathbb{E}(\mathbb{E}(\|\mathbf Y-\mathbf X\beta\|^{2}|\lambda)|\beta').
    \end{align*}
    Since
    \begin{align*}
        \mathbb{E}(\|\mathbf Y-\mathbf X\beta\|^{2}|\lambda) & =\text{tr}(\mathbf X\text{Var}(\beta|\lambda)\mathbf X^\top)+\|\mathbf X\mathbb{E}(\beta|\lambda)-\mathbf Y\|^{2}        \\
                                             & = \frac{1}{\lambda}\cdot\text{tr}(\mathbf X(\mathbf X^\top\mathbf X)^{-1}\mathbf X^\top)+\|\mathbf Y-\mathbf X(\mathbf X^\top\mathbf X)^{-1}(\mathbf X^\top\mathbf Y)\|^{2} \\
                                             & =\frac{1}{\lambda}\cdot p+\|\mathbf Y-\mathbf X(\mathbf X^\top\mathbf X)^{-1}(\mathbf X^\top\mathbf Y)\|^{2},
    \end{align*}
    we can obtain
    \begin{align*}
        \mathbb{E}(\mathbb{E}(\|\mathbf Y-\mathbf X\beta\|^{2}|\lambda)|\beta') & =p \cdot \mathbb{E}\left(\frac{1}{\lambda}|\beta'\right)+\|\mathbf Y-\mathbf X(\mathbf X^\top\mathbf X)^{-1}(\mathbf X^\top\mathbf Y)\|^{2}             \\
                                                                & =p \cdot \frac{b+\frac{\|\mathbf Y-\mathbf X\beta'\|^{2}}{2}}{a+\frac{N}{2}+1}+\|\mathbf Y-\mathbf X(\mathbf X^\top\mathbf X)^{-1}(\mathbf X^\top\mathbf Y)\|^{2}       \\
                                                                & =\frac{p}{2a+N-2}\cdot \|\mathbf Y-\mathbf X\beta'\|^{2} +\frac{b}{a+\frac{N}{2}+1}+\|\mathbf Y-\mathbf X(\mathbf X^\top\mathbf X)^{-1}(\mathbf X^\top\mathbf Y)\|^{2}.
    \end{align*}
    Thus, we obtain
    \begin{align*}
        \mathbb{E}(V(\beta)|\beta') \leqslant \rho \cdot V(\beta')+L,
    \end{align*}
    where $\rho =\frac{\alpha \cdot p}{2a+N-2}$, $L=\frac{\alpha \cdot b}{a+\frac{N}{2}+1}
    +\alpha \|\mathbf Y-\mathbf X(\mathbf X^\top\mathbf X)^{-1}(\mathbf X^\top\mathbf Y)\|^{2}$.
\end{proof}

\begin{proof}[Proof of \Cref{eg:bayes:spectralgap_DG}]
    \label{Prf:eg:bayes:spectralgap_DG} By \Cref{thm:driftforBayes}, the drift
    function is given by $V(\beta)=\alpha\|\mathbf Y-\mathbf X\beta\|^{2}$. Consider the set $C_{m}=\left\{\beta \in \mathbb{R}^{p}: \alpha\|\mathbf Y-\mathbf X\beta\|^{2} \leqslant m\right\}$,
    where $m$ is a positive constant. When $\mathbf X$ has full column rank, $C_{m}$ is a compact set
    and $V(\beta)$ is unbounded off $C_{m}$. By
    Lemma 15.2.8. and Theorem 15.0.2. in \cite{meyn2012markov}, we have the marginal
    chain of $\beta$ $P_{\mathcal B}$ is $\mathrm{L}^1$-geometrically ergodic with some rate $\rho$, where
    $\rho<1$.

    By \Cref{thm:Px_L1_to_P_SPI}, there exists a constant $\gamma>0$ such that $\gamma \|f\|_\Pi^{2} \leqslant \mathcal{E}_\Pi(P^{*}P,f)$.
    
\end{proof}

\begin{proof}[Proof of \Cref{eg:rate for Bayes}]
    \label{Prf:eg:rate for Bayes} 
    By \Cref{rmk:requirementforH}, $H_{2|{\sigma^{-2}}}$ is positive and reversible. Denoting $\lambda=\sigma^{-2}$, we have
    \begin{align*}
        \pi(\beta, \lambda|\mathbf Y) & \propto \pi(\mathbf Y|\beta, \lambda)\pi(\beta)\pi(\lambda)                                                                                                                               \\
                              & \propto\lambda^{\frac{N}{2}}\exp\left(-\frac{\lambda}{2}(\mathbf Y-\mathbf X\beta)^\top(\mathbf Y-\mathbf X\beta)\right)\cdot \lambda^{a-1}\exp(-b\lambda)                                                         \\
                              & \propto \lambda^{a-1+\frac{N}{2}}\exp(-b\lambda) \cdot \exp\left(-\frac{1}{2}(\beta-u)^\top Q(\beta-u)\right) \cdot \exp\left(-\frac{1}{2}\lambda \mathbf Y^\top\mathbf Y+\frac{1}{2}u^\top Qu\right),
    \end{align*}
    where $Q:=\lambda \mathbf X^\top\mathbf X$, $u:=(\mathbf X^\top\mathbf X)^{-1}\mathbf X^\top\mathbf Y$. Then, we can derive the marginal
    distribution of $\lambda$ by integrating $\beta$:
    \begin{align*}
        \pi(\lambda|\mathbf Y) & =\int \pi(\beta, \lambda|\mathbf Y) \pi(\dif \beta)                                                                                              \\
                       & \propto \lambda^{a-1+\frac{N}{2}}\exp(-b\lambda) \cdot |Q^{-1}|^{\frac{1}{2}}\cdot \exp\left(-\frac{1}{2}(\lambda \mathbf Y^\top\mathbf Y-u^\top Qu)\right) \\
                       & \propto \lambda^{a-1+\frac{N}{2}-\frac{p}{2}}\cdot \exp\left(-\left(b+\frac{\mathbf Y^\top\mathbf Y-u^\top(\mathbf X^\top\mathbf X)u}{2}\right)\lambda\right).
    \end{align*}
    Hence, $\lambda \sim \Gamma(a',b')$, where $a'=a+\frac{N}{2}-\frac{p}{2}$ and
    $b'=b+\frac{\mathbf Y^\top\mathbf Y-u^\top(\mathbf X^\top\mathbf X)u}{2}$. Furthermore
    \begin{align*}
        \mathbf Y^\top\mathbf Y-u^\top(\mathbf X^\top\mathbf X)u=\mathbf Y^\top(I-\mathbf X(\mathbf X^\top\mathbf X)^{-1}\mathbf X^\top)\mathbf Y \geqslant 0,
    \end{align*}
    since $\mathbf X(\mathbf X^\top\mathbf X)^{-1}\mathbf X^\top$ is a projection. So the parameters are proper.

    For the conditional distribution $\pi(\beta |\lambda)$, for each $\lambda$,
    it is possible to bound the spectral gap $\gamma_{\lambda}$ of $H_2$ by conductance as in \cite{andrieu2024explicit}.

    $\pi(\beta |\lambda)$ has the density
    $\pi(\beta |\lambda) \propto \exp (-U (\beta))$, where $U$ is $m$-strongly convex
    and $L$-smooth with $m=\lambda \cdot \lambda_{\min}(\mathbf X^\top\mathbf X)$ and $L=\lambda \cdot
    \lambda_{\max}(\mathbf X^\top\mathbf X)$. Let $\lambda_{\max}(\mathbf X^\top\mathbf X)$ and 
    $\lambda_{\min}(\mathbf X^\top\mathbf X)$ denote the maximal and minimal eigenvalues of
    $\mathbf X^\top\mathbf X$, respectively.

    By \cite[Theorem 1]{andrieu2024explicit}, then the spectral gap
    $\gamma_{\lambda}$ of $H_{2}$ satisfies
    \begin{align*}
        \gamma_{\lambda} & \geqslant C \cdot L \cdot p \cdot \sigma_{0}^{2} \cdot \exp \left(-2 \cdot L \cdot p \cdot \sigma_{0}^{2}\right) \cdot \frac{m}{L}\cdot \frac{1}{p} \\
                         & = C \cdot m \cdot \sigma_{0}^{2} \cdot\exp \left(-2 \cdot L \cdot p \cdot \sigma_{0}^{2}\right)                                                     \\
                         & = C \cdot \lambda \cdot m' \cdot \sigma_{0}^{2} \cdot \exp \left(-2 \cdot \lambda \cdot L' \cdot p \cdot \sigma_{0}^{2}\right),
    \end{align*}
    where $C=1.972 \times 10^{-4}$, $m'=\lambda_{\min}(\mathbf X^\top\mathbf X)$,
    $L'=\lambda_{\max}(\mathbf X ^\top\mathbf X)$.

    Since $\beta(s,\lambda)=1_{\{1/s\geqslant\gamma_\lambda\}}$, we are interested in $\lambda$ such that $1/s\geqslant\gamma_{\lambda}$ holds, that is
    \begin{align*}
        s \leqslant C_{1} \cdot \lambda^{-1}\cdot \exp (C_{2} \cdot \lambda),
    \end{align*}
    where $C_{1}=(C \cdot m' \cdot \sigma_{0}^{2})^{-1}$, $C_{2}=2\cdot L' \cdot
    p \cdot \sigma_{0}^{2}$.
    Since the marginal distribution of $\lambda$ is $\Gamma(a',b')$,
    \begin{align*}
        \begin{split}\beta(s)&=\int \beta(s,\lambda) \Pi(\dif \lambda)\\&=\int 1_{\{1/s\geqslant\gamma_\lambda\}}\Pi(\dif \lambda)\\&=\frac{1}{\Gamma(a')}\left(\Gamma_\mathrm{L}\left(a',-b'\cdot C_{2}^{-1}\cdot w_{0}\left(-\frac{C_{1}C_{2}}{ s}\right)\right)+\Gamma_\mathrm{U}\left(a',-b'\cdot C_{2}^{-1}\cdot w_{-1}\left(-\frac{C_{1}C_{2}}{ s}\right)\right)\right),\end{split}
    \end{align*}
    where $s\geqslant e C_1C_2$, $w_0$ and $w_{-1}$ are two real branches of the Lambert W function that give real solutions of $w(x)e^{w(x)} = x$, 
    $\Gamma$ is the gamma function, $\Gamma_\mathrm{L}$ is the lower incomplete gamma function
    and $\Gamma_\mathrm{U}$ is the upper incomplete gamma function.

    When $-b'\cdot C_{2}^{-1}\cdot w_{0}\left(-\frac{C_{1}C_{2}}{ s}\right)<1$,
    \begin{align*}
        \Gamma_\mathrm{L}\left(a',-b'\cdot C_{2}^{-1}\cdot w_{0}\left(-\frac{C_{1}C_{2}}{ s}\right)\right)\leqslant  \frac{1}{a'}\cdot\left(-b'\cdot C_{2}^{-1}\cdot w_{0}\left(-\frac{C_{1}C_{2}}{ s}\right)\right)^{a'}
        \leqslant                                           \frac{b' \cdot C_{1}}{a'}\cdot s^{-a'},
    \end{align*}
    since $\Gamma_\mathrm{L}(t,x) \leqslant 1/t \cdot x^{t}$ if $x<1$ by \cite[(5)]{jameson2016incomplete}; and $w(x) \leqslant x$ when
    $-1<w_0(x)<0$, since $e^{w_0(x)}\leqslant 1$ and $w_0(x)\leqslant w_0(x)e^{w_0(x)}$.
    
  When $-b'\cdot C_{2}^{-1}\cdot w_{-1}\left(-\frac{C_{1}C_{2}}{
    s}\right)>1$,
    \begin{align*}
        \Gamma_\mathrm{U}&\left(a',-b'\cdot C_{2}^{-1}\cdot w_{-1}\left(-\frac{C_{1}C_{2}}{ s}\right)\right)\\
        \leqslant &C_0 \exp{\left(b'\cdot C_2^{-1} \cdot w_{-1}\left(-\frac{C_{1}C_{2}}{ s}\right)\right)}\cdot \left(-b'\cdot C_{2}^{-1}\cdot w_{-1}\left(-\frac{C_{1}C_{2}}{ s}\right)\right)^{a'-1} \\
        \leqslant  & C'_0\left( C_{1}C_{2}\right)^{\frac{b'}{C_{2}}}\cdot s^{-\frac{b'}{C_{2}}},
    \end{align*}
    since $\Gamma_\mathrm{U}(t,x) \leqslant B\exp(-x) \cdot x^{t-1}$ if $t>1$, $B>1$ and $x>\frac{B(t-1)}{B-1}$, see \cite[(1.5)]{borwein2009uniform}
    and $w_{-1}(x) \leqslant \ln(-x)$ as $-1/e\leqslant x<0$, see \cite[(9)]{loczi2020explicit}. Here $C_0$ and $C'_0$ are constants depending on $C_1$, $C_2$ and $a'$.

    When $\sigma_{0}^{2}<\frac{b'}{2a'L'p}$, $s^{-a'}$ dominates, hence $\beta(s)
    \leqslant C_{\sigma_0}\cdot s^{-a'}$. Otherwise, $\beta(s) \leqslant C_{\sigma_0}
    \cdot s^{-\frac{b'}{C_{2}}}$, where $C_{\sigma_0}=\max\{b' C_{1}/a',C'_0\left( C_{1}C_{2}\right)^{b'/C_{2}}\}$. 

    By \cite[Lemma 15]{andrieu2022comparison},
    $K_{2}^{*}(v)=C(C_{\sigma_0},C_{2},a',b') \cdot v^{1+1/\min\{a',b'/C_2\}}$, where $C(C_{\sigma_0},C_{2},a',b')=\frac{C_{\sigma_0}c_1}{C_{\sigma_0}(1+c_1)^{1+c_1^{-1}}}$, $c_1= \min\{a',b'/C_2\}$.

    It is easy to see that $K_{0}^{*}(v)=\gamma v$, by \Cref{thm:weakPtotildeP2},
    \begin{align*}
        K^{*}(v) =K_{2}^{*}\circ \frac{1}{2}K_{0}^{*}(v)=C' v^{1+1/\min\{a',b'/C_2\}},
    \end{align*}
    where
    $C'=(\frac{1}{2}\gamma)^{1+1/\min\{a',b'/C_2\}}\cdot C(C_{\sigma_0},C_{2},a',
    b')$.

    Then, by \cite[Lemma 15]{andrieu2022comparison} and \Cref{thm:weakPtotildeP2},
    we have
    \begin{align*}
        F^{-1}(n) \leqslant \Tilde{C}\cdot (n-1)^{-\min\{a',b'/C_2\}},
    \end{align*}
    where $\Tilde{C}=(\frac{\min\{a',b'/C_{2}\}}{C'})^{\min\{a',b'/C_2\}}$.
\end{proof}

\begin{proof}[Proof of \Cref{thm:betaforIMHindiffusion}]
\label{prf:thm:betaforIMHindiffusion}
    As in \cite[(4)]{beskos2006retrospective}, under the condition that $b$ is everywhere differentiable, we can eliminate the It\^o integral after applying It\^o's lemma to $A\left(B_t\right)$ for $A(u):=\int_0^u b(x,\theta) \mathrm{d} x, u \in \mathbb{R}$. Then since $b^2(x,\theta) + b'(x,\theta)\geqslant M(\theta)$ for all $x$,
    \begin{align*}
        \mathcal{G}(X_{(t_{i-1},t_i)},\theta)&=\exp \left\{A\left(Y_{t_i}\right)-A(Y_{t_{i-1}})-\frac{1}{2} \int_{t_{i-1}}^{t_i}\left(b^2\left(X_t, \theta\right)+b^{\prime}\left(X_t,\theta\right)\right) \mathrm{d} t\right\} \\
        & \leqslant\exp \left\{A\left(Y_{t_i}\right)-A(Y_{t_{i-1}})-\frac{1}{2} \Delta t_i M(\theta)\right\}.
    \end{align*}
    Let $\tilde{\mathcal{G}}(X_{(t_{i-1},t_i)},\theta)=\exp \left\{A\left(Y_{t_i}\right)-A(Y_{t_{i-1}})-\frac{1}{2} \Delta t_i M(\theta)\right\}$.
    Since we update $X_{(t_{i-1},t_i)}$ in parallel by IMH, we operate on the tensor product space $\mathcal{X}=\prod_{i=1}^N\mathbb{R}^{(t_{i-1},t_i)}$ with the product measure $\prod_{i=1}^N \pi(\dif X_{(t_{i-1},t_i)}|\theta,Y)$, and $H_{2|\theta}(\dif X'_{\mathrm{mis}}|X_{\mathrm{mis}})= \prod_{i=1}^N H^\theta_i(\dif X_{(t_{i-1},t_i)}'|X_{(t_{i-1},t_i)})$, where $H^\theta_i$ is the operator associated with the IMH kernel as in \eqref{eq:acceptance for diffusion} for $X_{(t_{i-1},t_i)}$ with the invariant distribution $\pi(\dif X_{(t_{i-1},t_i)}|\theta,Y)$. 

    Furthermore, we denote $\pi(\dif X_{(t_{i-1},t_i)}|\theta,Y)$ by $\pi_i(\cdot|\theta)$. By \cite[Proposition 28, Remark 29]{andrieu2022comparison}, for all $i=1, \ldots, N$, the chain $H^\theta_i$ satisfies a WPI with function $\beta_{2,i}(s,\theta)$: for all $g_i \in \mathrm{L}^2_0(\pi_i(\cdot|\theta))$, 
    \begin{align*}
        \mathrm{Var}_{\pi_i(\cdot|\theta)}(g_i) &\leqslant s \cdot \mathcal{E}_{\pi_i(\cdot|\theta)}(H^{\theta}_i,g_i)+ \beta_{2,i}(s,\theta) \cdot \|g_i\|_\mathrm{osc}^2,
    \end{align*}
    where $\beta_{2,i}(s,\theta)=\mathbbm{1}_{\{s\leqslant \tilde{\mathcal{G}}(X_{(t_{i-1},t_i)},\theta)\}}$.

    Then, by \Cref{Thm:SPIproduct}, we can conclude that $H_{2|\theta}$ satisfies a  WPI with function $\beta_2(s,\theta)=\mathbbm{1}_{\{s\leqslant \max_i{\tilde{\mathcal{G}}(X_{(t_{i-1},t_i)},\theta)}\}}$: for all $g \in \mathrm{L}^2_0(\pi(\cdot|\theta))$,
    \begin{align*}
       \|g\|_{\pi(\cdot|\theta)}^2 &\leqslant s \cdot \mathcal{E}_{\pi(\cdot|\theta)}(H_{2|\theta},g)+ \beta_2(s, \theta) \cdot \|g\|_\mathrm{osc}^2.
    \end{align*}
\end{proof}

\begin{proof}[Proof of \Cref{thm:SPI_for_diff}]
\label{Prf:thm:SPI_for_diff}
Firstly, we claim there exist $\epsilon^i>0$ and $K>0$, such that $ \mathbb{P}_\theta(C^i_K)\leqslant\epsilon^i$, where $C_K^i=\{X_{(t_{i-1},t_i)}:\max_{t_{i-1\leqslant s \leqslant t_i}} X_s \geqslant K\}$, for $i=1,\ldots,N$.

Since $\mathcal{G}(X_{(t_{i-1},t_i)},\theta)=\exp \left\{A\left(Y_{t_i}\right)-A(Y_{t_{i-1}})-\frac{1}{2} \int_{t_{i-1}}^{t_i}\left(b^2\left(X_t, \theta\right)+b^{\prime}\left(X_t,\theta\right)\right) \mathrm{d} t\right\}$, denote $\mathcal{\tilde{G}}(X_{(t_{i-1},t_i)},\theta)=\exp \left\{-\frac{1}{2} \int_{t_{i-1}}^{t_i}\left(b^2\left(X_t, \theta\right)+b^{\prime}\left(X_t,\theta\right)\right) \mathrm{d} t\right\}$. Hence,
\begin{align*}
    \mathbb{P}_\theta(C_K^i)=\frac{\mathbb{E}_{{\mathbb{B}_\theta}}(\mathcal{G}\mathbbm{1}_{C_K^i})}{\mathbb{E}_{\mathbb{B}_\theta}(\mathcal{G})}=\frac{\mathbb{E}_{{\mathbb{B}_\theta}}(\mathcal{\tilde{G}}\mathbbm{1}_{C_K^i})}{\mathbb{E}_{\mathbb{B}_\theta}(\mathcal{\tilde{G}})}\leqslant \frac{\exp{\{-\frac{1}{2}M_L \Delta t_i\}} \mathbb{B}_\theta(C_K^i)}{\mathbb{E}_{\mathbb{B}_\theta}(\mathcal{\tilde{G}})},
\end{align*}
where $\Delta t_i=t_i-t_{i-1}$. For the denominator,
$ \mathbb{E}_{\mathbb{B}_\theta}(\mathcal{\tilde{G}}) \geq \mathbb{E}_{\mathbb{B}_\theta}(\mathcal{\tilde{G}} \mathbbm{1}_{\{\max_{t_{i-1\leqslant s \leqslant t_i}}|X_s|\leq a\}}) \geq \exp{\{-\frac{1}{2}M_U \Delta t_i\}}\mathbb{B}_\theta(\max_{t_{i-1\leqslant s \leqslant t_i}}|X_s|\leq a)$. Also there exists a constant $c>0$ such that $\mathbb{B}_\theta(\max_{t_{i-1\leqslant s \leqslant t_i}}|X_s|\leq a)>c$. Therefore,
\begin{align*}
    \mathbb{P}_\theta(C_K^i)
    &\leqslant \frac{\exp{\{-\frac{1}{2}M_L \Delta t_i\}} \mathbb{B}_\theta(C_K^i)}{\exp{\{-\frac{1}{2}M_U \Delta t_i\}}\mathbb{B}_\theta(\max_{t_{i-1\leqslant s \leqslant t_i}}|x_s|\leq a)}\\
    &=\exp\left\{\frac{1}{2}(M_U-M_L) \Delta t_i\right\}\frac{\mathbb{B}_\theta(C_K^i)}{\mathbb{B}_\theta(\max_{t_{i-1\leqslant s \leqslant t_i}}|X_s|\leq a)}\\
    &\leq \exp\left\{\frac{1}{2}(M_U-M_L) \Delta t_i\right\}\frac{\mathbb{B}_\theta(C_K^i)}{c}=: \epsilon^i.
\end{align*}
Now, choose $K$ large enough such that $\epsilon^i <1$. Thus the claim is proven.

Hence, it is easy to see there exist $\epsilon>0$ and $K>0$, such that $\mathbb{P}_\theta(C_K)\leqslant\epsilon,$ where $C_K=\{X:\max_{0\leqslant s \leqslant T} X_s \geqslant K\}$, since given $Y$, the path segments $\{X_{(t_{i-1},t_i)}\}_{i=1}^N$ are conditionally independent.

Next, we claim there exist $\eta >0$
and a probability measure $\nu(\cdot)$ such that for the $\theta$ marginal transition kernel $P_\Theta(\theta, \cdot)$ and all $\theta \in \mathbb{R}$, we have $P_\Theta(\theta, \cdot) \geqslant \eta\nu(\cdot)$. Since
\begin{align*}
    P_\Theta(\theta, A)=\int_{\mathcal{X}} \pi(A|X) \mathbb{P}_\theta(\dif X) \geqslant \int_{C_K^\complement } \pi(A|X) \mathbb{P}_\theta(\dif X),
\end{align*}
let $\nu(A)= \frac{1}{Z}\int_{C_K^\complement} \pi(A|X) \mathbb{P}_\theta(\dif X)$ be a probability measure, where $Z=\mathbb{P}_\theta(C_K^\complement)\geqslant 1-\epsilon$. Hence, there exist $\eta>0$, such that for any measuable set $A$, $P_\Theta(\theta, A)\geqslant Z \nu(A)\geqslant \eta \nu(A)$.

Finally, by \cite[Theorem 8]{roberts2004general} and  \Cref{thm:minorisation condition for diffusion}, 
    the marginal chain of $\theta$, $P_{\Theta}$, is uniformly ergodic. Then by \Cref{thm:Px_L1_to_P_SPI},
    there exists a constant $\gamma>0$, such that $\gamma \|f\|_\pi^{2}\leqslant \mathcal{E}_\pi(P^{*}P,f)$.
\end{proof}

\begin{lemma}\label{lem:OU_probs}
    In the setting of the discretely-observed Ornstein--Uhlenbeck process in \eqref{eq:ou}, there exists a cumulative distribution function $F(\cdot)$ such that
    \begin{align*}
    \mathbb{P}_\theta\left(\int_{0}^{T}X_{t}^{2} \dif t\leqslant u \right) \geqslant F(u),
    \end{align*}
    where $F(\cdot)$ does not depend on $\theta$.
\end{lemma}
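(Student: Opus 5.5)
Under $\mathbb{P}_\theta$ (the law of the path given $Y$ and $\theta$), the segments $X_{(t_{i-1},t_i)}$ are conditionally independent. Each has law absolutely continuous with respect to the Brownian bridge $\mathbb{B}(Y_{t_{i-1}},Y_{t_i})$, with density proportional to $\mathcal{G}(X_{(t_{i-1},t_i)},\theta)$. For the OU drift $b(x,\theta)=-\theta x$ we have $A(u)=-\theta u^2/2$ and $b^2+b'=\theta^2x^2-\theta$. Hence
\begin{align*}
\frac{\dif \mathbb{P}_\theta}{\dif \mathbb{B}}\propto \exp\Big\{-\tfrac{\theta^2}{2}\int_{t_{i-1}}^{t_i}X_t^2\dif t\Big\}.
\end{align*}
The factors $e^{A(Y_{t_i})-A(Y_{t_{i-1}})}$ and $e^{\theta\Delta t_i/2}$ do not depend on the path and are absorbed into the normalizing constant. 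The plan is therefore to show that tilting the bridge by this density makes $\int_0^TX_t^2\dif t$ stochastically \emph{smaller}. That gives, for every $\theta$,
\begin{align*}
\mathbb{P}_\theta\Big(\int_0^TX_t^2\dif t\leqslant u\Big)\geqslant \mathbb{B}\Big(\int_0^TX_t^2\dif t\leqslant u\Big)=:F(u),
\end{align*}
where $\mathbb{B}=\otimes_i\mathbb{B}(Y_{t_{i-1}},Y_{t_i})$ is the product bridge law. This $F$ depends only on the data, not on $\theta$. It is a genuine CDF because $Z:=\int_0^TX_t^2\dif t$ is a finite nonnegative random variable under $\mathbb{B}$.

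The key step is the monotone-tilting inequality. Write $Z=\sum_i Z_i$ with $Z_i=\int_{t_{i-1}}^{t_i}X_t^2\dif t$, and set $c=\theta^2/2\geqslant0$. Under $\mathbb{B}$ the $Z_i$ are independent, and under $\mathbb{P}_\theta$ the joint density of the path with respect to $\mathbb{B}$ is $e^{-cZ}/\mathbb{E}_{\mathbb{B}}[e^{-cZ}]$. For the event $\{Z\leqslant u\}$ we need
\begin{align*}
\mathbb{E}_{\mathbb{B}}\big[e^{-cZ}\mathbbm{1}_{\{Z\leqslant u\}}\big]\geqslant \mathbb{E}_{\mathbb{B}}[e^{-cZ}]\;\mathbb{B}(Z\leqslant u).
\end{align*}
This is Chebyshev's association inequality (the one-dimensional FKG/Harris inequality) applied to the single real variable $Z$. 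Both $z\mapsto e^{-cz}$ and $z\mapsto\mathbbm{1}_{\{z\leqslant u\}}$ are nonincreasing in $z$, so they are positively correlated under any law of $Z$. No independence structure is even needed, since we work directly with the scalar $Z$. Dividing by $\mathbb{E}_{\mathbb{B}}[e^{-cZ}]$, which is positive and at most $1$, gives the claim.

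It remains to check that the Radon--Nikodym representation holds without the pathological It\^o term. This is where I expect the only care to be needed. Applying It\^o's lemma to $A(X_t)=-\theta X_t^2/2$ gives
\begin{align*}
-\theta\int X_t\dif X_t=-\tfrac{\theta}{2}\big(Y_{t_i}^2-Y_{t_{i-1}}^2\big)+\tfrac{\theta}{2}\Delta t_i,
\end{align*}
exactly as in the proof of \Cref{thm:betaforIMHindiffusion}. With this the Girsanov density becomes the explicit expression above, and it is bounded by a path-independent constant, so normalization is legitimate. The factor $\mathbb{E}_{\mathbb{B}}[e^{-cZ}]$ is strictly positive because $\mathbb{B}(Z<\infty)=1$. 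The case $\theta=0$ is trivial, since then $\mathbb{P}_\theta=\mathbb{B}$. Finally, $F$ is nondecreasing, right-continuous, tends to $1$ as $u\to\infty$, and equals $0$ for $u<0$, so it is a CDF as required.
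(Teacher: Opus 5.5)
Your proof is correct, and it takes a genuinely different route from the paper's.

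The paper works directly with the SDE of the OU bridge. It bounds the bridge drift coefficients uniformly in $\theta$, by $\alpha(\theta,t)\leqslant -1/(T-t)$ and $|\beta(\theta,t)|\leqslant |b|/(T-t)$. It then solves the linear ODEs for $\mathbb{E}X_t$ and $\mathbb{E}X_t^2$ to get a $\theta$-free bound on $\int_0^T\mathbb{E}X_t^2\dif t$, sums over segments, and concludes with Markov's inequality. Its $F$ is therefore the explicit $F(u)=(1-C/u)_+$.

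You instead use the Girsanov density relative to the Brownian bridge; after It\^o removes the stochastic integral, it is proportional to $e^{-\theta^2 Z/2}$. Chebyshev's association inequality for the scalar $Z$ then gives the stochastic domination $\mathbb{P}_\theta(Z\leqslant u)\geqslant \mathbb{B}(Z\leqslant u)$.

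What your route buys:
\begin{itemize}
\item It is shorter.
\item It avoids the bridge-SDE bookkeeping: the uniform drift bounds for both signs of $\theta$, and controlling $|\mathbb{E}X_t|$ rather than just $\mathbb{E}X_t$ in the cross term $2\beta\,\mathbb{E}X_t$.
\item It gives the sharpest possible $\theta$-free bound: since $\mathbb{P}_0=\mathbb{B}$, your $F(u)$ equals $\inf_\theta\mathbb{P}_\theta(Z\leqslant u)$.
\end{itemize}

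The paper's route has the advantage of a closed-form $F$, which is what is used to choose $u_0$ in the subsequent minorisation lemma for the $\theta$-marginal chain. You can recover this at once by applying Markov's inequality under $\mathbb{B}$, where
\[
\mathbb{E}_{\mathbb{B}}[Z]=\sum_{i=1}^N\frac{\Delta t_i}{6}\bigl(2Y_{t_{i-1}}^2+2Y_{t_{i-1}}Y_{t_i}+2Y_{t_i}^2+\Delta t_i\bigr).
\]
This is exactly the paper's constant with $|Y_{t_{i-1}}Y_{t_i}|$ replaced by $Y_{t_{i-1}}Y_{t_i}$. That agreement is what your domination result predicts, and it explains the form of the paper's bound.
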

\begin{proof}
    To begin with, suppose $X_{t}$ is an Ornstein--Uhlenbeck bridge with $X_{0}=a$, $X_{T}=b$.
    The conditioned process $X_{t}$ satisfies the stochastic differential equation
    \begin{align*}
        \dif X_{t}=-\theta X_{t} \dif t-2\theta \cdot \frac{X_{t}-b e^{\theta(T-t)}}{e^{2\theta(T-t)}-1}\dif t+\dif W_{t}.
    \end{align*}
    We denote
    $\alpha(\theta,t)=-\theta \left(1+\frac{2}{e^{2\theta(T-t)}-1}\right)$, $\beta
    (\theta,t)=\frac{2b\theta e^{\theta(T-t)}}{e^{2\theta(T-t)}-1}$, then the
    stochastic differential equation can be written as
    \begin{align*}
        \dif X_{t}=\dif W_{t}+\left\{\alpha(\theta,t)X_{t} +\beta(\theta,t)\right\}\dif t.
    \end{align*}
    It is easy to check $\alpha(\theta,t) \leqslant-\frac{1}{T-t}$ and  
        $-\frac{|b|}{T-t}\leqslant  \beta(\theta,t)\leqslant \frac{|b|}{T-t}$.
    Now, we compute $\mathbb{E}X_{t}$ and $\mathbb{E}X_{t}^{2}$ aiming for finding
    a uniform upper bound of $\mathbb{E}X_{t}^{2}$. Since $X_{t}$ can be written
    as
    \begin{align*}
        X_{t} = \int_{0}^{t}\alpha(\theta,s)X_{s}\dif s+\int_{0}^{t}\beta(\theta,s)\dif s +W_{t},
    \end{align*}
    where $W_{t}$ is a Brownian motion. We take the expectation then take the derivative
    of $t$,
    \begin{align*}
        \begin{cases}\frac{\dif \mathbb{E}X_t}{\dif t}=\alpha(\theta,t)\mathbb{E}X_{t}+\beta(\theta,t),\\ \mathbb{E}X_{0}=a.\end{cases}
    \end{align*}
    Hence, we can obtain the solution $\mathbb{E}X_{t}$:
    \begin{align*}
    \mathbb{E}X_{t}=\exp\left(\int_{0}^{t}\alpha(\theta,s)\dif s\right)\int_{0}^{t} a+\beta(\theta,s)\exp\left(-\int_{0}^{s}\alpha(\theta,r)\dif r\right)\dif s.
    \end{align*}
    Since $\alpha(\theta,t)$ and $\beta(\theta,t)$ both can be bounded from above,
    \begin{align*}
        \mathbb{E}X_{t} & =a\exp\left(\int_{0}^{t}\alpha(\theta,s)\dif s\right) +\int_{0}^{t} \beta(\theta,s)\exp\left(\int_{s}^{t}\alpha(\theta,r)\dif r\right)\dif s       \\
                        & \leqslant |a|\exp\left(\int_{0}^{t}-\frac{1}{T-s}\dif s\right)+\int_{0}^{t} \beta(\theta,s)\exp\left(\int_{s}^{t}-\frac{1}{T-r}\dif r\right)\dif s \\
                        & \leqslant |a|\cdot\frac{T-t}{T}+\int_{0}^{t} \beta(\theta,s)\frac{T-t}{T-s}\dif s                                                                  \\
                        & \leqslant |a|\cdot\frac{T-t}{T}+\int_{0}^{t} \frac{|b|}{T-s}\cdot\frac{T-t}{T-s}\dif s                                                             \\
                        & =|a|\cdot\frac{T-t}{T}+|b|\cdot\frac{t}{T}.
    \end{align*}
    Similarly, starting from $\dif X_{t}^{2}$, we have
    \begin{align*}
        \dif X_{t}^{2} =2X_{t}\dif X_{t}+\dif \langle X_{t}\rangle =\left(2\alpha(\theta,t)X_{t}^{2}+2\beta(\theta,t)X_{t}+1\right)\dif t+2X_{t}\dif W_{t}.
    \end{align*}
    Hence,
    \begin{align*}
        X_{t}^{2}=\int_{0}^{t}\left(2\alpha(\theta,s)X_{s}^{2}+2\beta(\theta,s)X_{s}+1\right)\dif s+2\int_{0}^{t}X_{s}\dif W_{s}.
    \end{align*}
    When we take the expectation, $\mathbb{E}\int_{0}^{t}X_{s}\dif W_{s}=0$ as $\int_{0}
    ^{t}X_{s}\dif W_{s}$ is a martingale. Then taking the derivative of $t$,
    \begin{align*}
        \begin{cases}\frac{\dif \mathbb{E}X_t^2}{\dif t}=2\alpha(\theta,t)\mathbb{E}X_{t}^{2}+2\beta(\theta,t)\mathbb EX_{t}+1,\\ \mathbb{E}X_{0}^{2}=a^{2}.\end{cases}
    \end{align*}
    Denote $\gamma(\theta,t)=2\beta(\theta,t)\mathbb{E}X_{t}+1$, the solution of
    $\mathbb{E}X_{t}^{2}$ is
    \begin{align*}
        \mathbb{E}X_{t}^{2} & =\exp\left(\int_{0}^{t}2\alpha(\theta,s)\dif s\right)\int_{0}^{t} a^{2}+\gamma(\theta,s)\exp\left(-\int_{0}^{s}2\alpha(\theta,r)\dif r\right)\dif s                                \\
                            & =a^{2}\exp\left(\int_{0}^{t}2\alpha(\theta,s)\dif s\right)+\int_{0}^{t} \gamma(\theta,s)\exp\left(\int_{s}^{t}2\alpha(\theta,r)\dif r\right)\dif s                                 \\
                            & \leqslant a^{2}\exp\left(\int_{0}^{t}-\frac{2}{T-s}\dif s\right)+\int_{0}^{t} \gamma(\theta,s)\exp\left(\int_{s}^{t}-\frac{2}{T-r}\dif r\right)\dif s                              \\
                            & \leqslant a^{2} \cdot\frac{(T-t)^{2}}{T^{2}}+\int_{0}^{t} \gamma(\theta,s)\left(\frac{T-t}{T-s}\right)^{2}\dif s                                                                   \\
                            & \leqslant a^{2} \cdot\frac{(T-t)^{2}}{T^{2}}+\int_{0}^{t} \left(2\frac{|b|}{T-s}\left(\frac{|a|(T-s)}{T}+\frac{|b|s}{T}\right)+1\right)\cdot\left(\frac{T-t}{T-s}\right)^{2}\dif s \\
                            & =\left(\frac{|a|(T-t)}{T}+\frac{|b|t}{T}\right)^{2}+t\left(1-\frac{t}{T}\right).
    \end{align*}
    Then,
    \begin{align*}
        \int_{0}^{T}\mathbb{E}X_{t}^{2}\dif t\leqslant\frac{1}{6}T\left(2|ab|+2a^{2}+2b^{2}+T\right).
    \end{align*}
    Hence, by Markov's inequality,
    \begin{align*}
        \mathbb{P}\left(\int_{0}^{T}X_{t}^{2} \dif t\geqslant u \right) \leqslant \frac{\mathbb{E}\int_{0}^{T}X_{t}^{2}\dif t}{u}=\frac{\int_{0}^{T}\mathbb{E}X_{t}^{2}\dif t}{u}\leqslant \frac{\frac{1}{6}T\left(2|ab|+2a^{2}+2b^{2}+T\right)}{u}.
    \end{align*}
    When $u\geqslant\frac{1}{6}T\left(2|ab|+2a^{2}+2b^{2}+T\right)$,
    \begin{align*}
        \mathbb{P}\left(\int_{0}^{T}X_{t}^{2} \dif t\leqslant u \right) \geqslant 1- \frac{\frac{1}{6}T\left(2|ab|+2a^{2}+2b^{2}+T\right)}{u}.
    \end{align*}

    Since $X_{t}$ from $0$ to $T$ can be divided into $N$ conditionally independent pieces
    starting from $X_{t_{i-1}}=Y_{t_{i-1}}$ and ending at $X_{t_{i}}=Y_{t_i}$, 
    given the parameter $\theta$ and obeserved points $Y_{t_{i}}$, $\Delta t_i=t_i-t_{i-1}$, $i=1,\ldots,N$,
    \begin{align*}
        \mathbb{E}\int_{0}^{T} X_{t}^{2}\dif t & =\int_{0}^{T}\mathbb{E}X_{t}^{2} \dif t=\sum^{N}_{i=1}\int_{t_{i-1}}^{t_{i}}\mathbb{E}X_{t}^{2} \dif t =\sum^{N}_{i=1}\int_{0}^{\Delta t_{i}}\mathbb{E}X_{t}^{2} \dif t
        \\& \leqslant \sum^{N}_{i=1}\frac{1}{6}\Delta t_{i}\left(2|Y_{t_{i-1}}Y_{t_{i}}|+2Y_{t_{i-1}}^{2}+2Y_{t_{i}}^{2}+\Delta t_{i}\right).
    \end{align*}
    Then,
    \begin{align*}
        \mathbb{P}\left(\int_{0}^{T}X_{t}^{2} \dif t\leqslant u \right) \geqslant 1- \frac{\sum^{N}_{i=1}\frac{1}{6}\Delta t_{i}\left(2|Y_{t_{i-1}}Y_{t_{i}}|+2Y_{t_{i-1}}^{2}+2Y_{t_{i}}^{2}+\Delta t_{i}\right)}{u}.
    \end{align*}
\end{proof}

\begin{lemma}
    \label{thm:minorisation condition for diffusion} In the same setting as Lemma~\ref{lem:OU_probs}, there exists $\epsilon >0$
    and a probability measure $\nu(\cdot)$ such that for the $\theta$-marginal transition kernel $P(\theta, \cdot)$, we have
    \begin{align*}
        P(\theta, \cdot) \geqslant \epsilon\nu(\cdot) \quad \theta \in \mathbb{R}.
    \end{align*}
\end{lemma}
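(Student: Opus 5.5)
We aim for a Doeblin minorisation of the form $P(\theta,A)\geqslant \epsilon\,\nu(A)$. Here $P(\theta,\cdot)$ is obtained by first drawing $X_{\mathrm{mis}}\sim \mathbb{P}_\theta(\cdot\mid Y)$ and then drawing $\theta'\sim\pi(\cdot\mid X_{\mathrm{mis}},Y)$. Writing $S(X)=\int_0^T X_t^2\,\dif t+\tau_0^{-2}$ and $R(X)=-\int_0^T X_t\,\dif X_t+\mu_0\tau_0^{-2}$, the conditional law of $\theta'$ is $\mathcal N(R/S,1/S)$. The plan is to restrict to a set of paths on which both $S$ and $R$ lie in a bounded window. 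On that window the Gaussian densities $\mathcal N(R/S,1/S)$ are uniformly bounded below by a fixed multiple of a fixed reference density. The $\mathbb{P}_\theta$-probability of the window must then be bounded below uniformly in $\theta$.

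The first step removes the Itô integral, which is not controlled by $S$. By Itô's formula applied to $X_t^2$ on each bridge segment,
\begin{align*}
-\int_0^T X_t\,\dif X_t=\tfrac12\Big(T-\sum_{i}(Y_{t_i}^2-Y_{t_{i-1}}^2)\Big)=\tfrac12\big(T-Y_{t_N}^2+Y_{t_0}^2\big).
\end{align*}
This quantity is deterministic given $Y$; call it $r_0$. Hence $R=r_0+\mu_0\tau_0^{-2}$ is a constant, and only $S$ varies, with $S\geqslant \tau_0^{-2}$.

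Next, fix $u_0\geqslant$ the threshold from Lemma~\ref{lem:OU_probs}, for example twice the bound $\sum_i\frac16\Delta t_i(2|Y_{t_{i-1}}Y_{t_i}|+2Y_{t_{i-1}}^2+2Y_{t_i}^2+\Delta t_i)$. Then $B=\{\int_0^T X_t^2\,\dif t\leqslant u_0\}$ satisfies $\mathbb{P}_\theta(B)\geqslant F(u_0)\geqslant 1/2$ uniformly in $\theta$. On $B$ we have $S\in[\tau_0^{-2},\tau_0^{-2}+u_0]$. Define
\begin{align*}
g(\theta')=\inf_{s\in[\tau_0^{-2},\tau_0^{-2}+u_0]}\sqrt{s/2\pi}\,\exp\big(-\tfrac{s}{2}(\theta'-R/s)^2\big).
\end{align*}
This infimum is continuous and strictly positive, with Gaussian tails, so $0<c:=\int g<\infty$. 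Set $\nu(\dif\theta')=g(\theta')\,\dif\theta'/c$. Then
\begin{align*}
P(\theta,A)\geqslant \int_B \pi(A\mid X)\,\mathbb{P}_\theta(\dif X)\geqslant \mathbb{P}_\theta(B)\,c\,\nu(A)\geqslant \tfrac c2\,\nu(A),
\end{align*}
which gives the claim with $\epsilon=c/2$.

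The main obstacle is the uniformity in $\theta$ of the probability of the good set; that is exactly what Lemma~\ref{lem:OU_probs} supplies through its $\theta$-free bound on $\mathbb{E}X_t^2$. A secondary check is that the Itô-integral identity holds segmentwise under the bridge law with the observed endpoints, which makes $R$ genuinely constant. If it did not, the fallback would be to bound $|R|$ on a further high-probability set via a second-moment estimate.
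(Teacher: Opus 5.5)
Your proposal is correct and follows essentially the paper's argument: restrict to the event $\{\int_0^T X_t^2\,\dif t\leqslant u_0\}$, whose probability is bounded below uniformly in $\theta$ by Lemma~\ref{lem:OU_probs}; take the infimum of the Gaussian full-conditional density over the resulting compact range of variances; and normalise it to get $\nu$, giving the constant $(1-p)Z$, which is your $c/2$ with $p=1/2$. Your It\^o-formula observation that $-\int_0^T X_t\,\dif X_t=\tfrac12\big(T-Y_{t_N}^2+Y_{t_0}^2\big)$ is deterministic given $Y$ makes explicit a step the paper uses only implicitly (it takes the infimum over $\sigma^2$ alone with $\mu$ held fixed), and the paper's subsequent total-variation identification of $\epsilon$ is not needed for the statement.
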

\begin{proof}
    Let $p\cdot u_{0}=\sum^{N}_{i=1}\frac{1}{6}\Delta t_{i}\left(2|Y_{t_{i-1}}Y_{t_{i}}
    |+2Y_{t_{i-1}}^{2}+2Y_{t_{i}}^{2}+\Delta t_{i}\right)$, where $p\in(0,1)$, then $\mathbb{P}\left(\int_{0}^{T}X_{t}^{2} \dif t\leqslant u_{0} \right) \geqslant 1-p$.
    So for any $\theta$, by Lemma~\ref{lem:OU_probs},
    \begin{align*}
        \mathbb{P}\left(\frac{1}{u_{0}+\tau_{0}^{-2}}\leqslant\frac{1}{\int_{0}^{T} X_{t}^{2} \dif t +\tau_{0}^{-2}}\leqslant\frac{1}{\tau_{0}^{-2}}\right)\geqslant1-p.
    \end{align*}
    Let $\mathcal{F}=\left\{\sigma^2:  \frac{1}{u_{0}+\tau_{0}^{-2}}\leqslant \sigma
    ^{2}\leqslant\frac{1}{\tau_{0}^{-2}}\right\}$, 
    and define the function $\psi(\theta)=\inf_{\sigma^2\in\mathcal{F}} \phi(\theta;\mu\sigma^2,\sigma^2)$, where $\mu
    =-\int_{0}^{T}X_{t} \dif X_{t}+\mu_{0}\tau_{0}^{-2}$, $ \sigma^{2}=\frac{1}{\int_{0}^{T}
    X_{t}^{2} \dif t +\tau_{0}^{-2}}$ and $\phi(\cdot;\mu\sigma^2,\sigma^2)$ is the standard normal density with mean $\mu\sigma^2$ and variance $\sigma^2$. Since $\phi $ is positive and continuous in all arguments, by compactness of $\mathcal{F}$,
    $\psi >0 $ for all $\theta $, so that $Z=\int \psi(\theta) \dif \theta >0$.
    Hence we can define a probability measure, for measurable $A$, by $\tilde{\nu}(A)=\frac{1}{Z}\int_A \psi(\theta) \dif \theta$.
    
    Since $\theta | X_{\mathrm{mis}}, Y \sim \mathcal N\left(\mu\sigma^2,\sigma^2\right)$, and $\mathbb{P}(\sigma^2 \in \mathcal{F}) \geqslant 1-p$, then for
    all $\theta\in \mathbb{R}$, and all measurable $A$,
    \begin{align*}
        P(\theta,A)\geqslant (1-p) \cdot Z \cdot \tilde{\nu}(A).
    \end{align*}
    We consider
    \begin{align*}
        \epsilon=\inf_{\sigma^2}\int\min\left\{f_{\{\mu\sigma_1^2,\sigma_1^2\}}(x),f_{\{\mu\sigma_2^2,\sigma_2^2\}}(x)\right\} \dif x,
    \end{align*}
    by Proposition 5 in \cite{roberts2001small}, $\epsilon=Z$. Now,
    we consider the total variation distance between
    $\mathcal{N}(\mu\sigma_{1}^{2},\sigma_{1}^{2})$ and
    $\mathcal{N}(\mu\sigma_{2}^{2},\sigma_{2}^{2})$ for any
    $\sigma_{1}^{2}\leqslant \sigma_{2}^{2}$. Letting $\Phi$ be the cumulative distribution function of the standard normal distribution, we have 
    \begin{align*}
        \operatorname{TV}\left(\mathcal{N}(\mu \sigma^{2}_{1}, \sigma_{1}^{2}), \mathcal{N}(\mu\sigma^{2}_{2}, \sigma_{2}^{2})\right)&=\Phi\left(\frac{c_{1}-\mu\sigma^{2}_{1}}{\sigma_{1}}\right)+1-\Phi\left(\frac{c_{2}-\mu\sigma^{2}_{1}}{\sigma_{1}}\right)\\
        &+\Phi\left(\frac{c_{2}-\mu\sigma^{2}_{2}}{\sigma_{2}}\right)-\Phi\left(\frac{c_{1}-\mu\sigma^{2}_{2}}{\sigma_{2}}\right),
    \end{align*}
    where $c_{1}<c_{2}$, and
    \begin{align*}
        (c_{1},c_{2})=\pm\frac{\sigma_{1}\sigma_{2}\sqrt{(\mu\sigma_1^2-\mu\sigma_2^2)^2+(\sigma_2^2-\sigma_1^2)\ln\left(\frac{\sigma_2^2}{\sigma_1^2}\right)}}{\sigma_{2}^{2}-\sigma_{1}^{2}}.
    \end{align*}
    Denote $f(\sigma_{a}^{2},\sigma_{b}^{2})=\sup_{\sigma^2}\operatorname{TV}\left(\mathcal{N}\left(\mu_{1}, \sigma_{1}^{2}\right), \mathcal{N}\left(\mu_{2}, \sigma_{2}^{2}\right)\right)$, 
    where $\sigma_{a}^{2}=\frac{1}{u_{0}+\tau_{0}^{-2}}$ and
    $\sigma_{b}^{2}=\frac{1}{\tau_{0}^{-2}}$. Hence,
    \begin{align*}
        \epsilon & = \inf_{\sigma^2}\int\min\left\{f_{\{\mu\sigma_1^2,\sigma_1^2\}}(x),f_{\{\mu\sigma_2^2,\sigma_2^2\}}(x)\right\}                                                \\
                 & = \inf_{\sigma^2}\left(1-\operatorname{TV}\left(\mathcal{N}\left(\mu_{1}, \sigma_{1}^{2}\right), \mathcal{N}\left(\mu_{2}, \sigma_{2}^{2}\right)\right)\right) = 1-f(\sigma_{a}^{2},\sigma_{b}^{2}).
    \end{align*}
    We can choose $u_{0}$ such that $0<1-p<1$, and
    $0<1-f(\sigma_{a}^{2},\sigma_{b}^{2})<1$ holds naturally. Hence, $P(\theta,A)\geqslant\left(1-p\right)\left(1-f(\sigma_{a}^{2},\sigma_{b}^{2})\right)\tilde{\nu}(A)$.
\end{proof}

\begin{proof}[Proof of \Cref{eg: ou_SPI}]
    \label{Prf:eg: ou_SPI} By \cite[Theorem 8]{roberts2004general} and  \Cref{thm:minorisation condition for diffusion}, 
    the marginal chain of $\theta$, $P(\theta,\cdot)$, is uniformly ergodic. Then by \Cref{thm:Px_L1_to_P_SPI},
    there exists a constant $\gamma>0$, such that $\gamma \|f\|_\pi^{2}\leqslant \mathcal{E}_\pi(P^{*}P,f)$.
\end{proof}

\begin{lemma}
    \label{thm:K*forexplog} Assume $\beta(s)=c \exp(-(a\log(s)+b)^{2})$, for
    $s\geqslant \exp{(-b/a)}$, and $\beta(s)=c$, for
    $0\leqslant s < \exp{(-b/a)}$, where $c>0$. Then there exist $C>0$, $0 < v_{0}
    < 1$, such that for $v \in [0,v_{0}]$,
    \begin{align*}
        K^{*}(v) \geqslant C v\cdot \exp\left(\frac{-\sqrt{-\delta\log{v}}+b}{a}\right).
    \end{align*}
    In addition, there exists $\tilde{C}> 0$ such that for any $\delta>1$ and
    all $n \in \mathbb{N}_{+}$,
    \begin{align*}
        F^{-1}(n) \leqslant \Tilde{C}\exp\left(-\frac{a^{2}}{\delta}\log^{2}\left(n\right)\right).
    \end{align*}
\end{lemma}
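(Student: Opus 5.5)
The plan is to compute $\mathrm{K}^*(v)=\sup_{u\geqslant 0}\{uv-u\beta(1/u)\}$ from below by evaluating the supremum at one well-chosen $u$. Then I will insert the resulting lower bound into $F(x)=\int_x^{1/4}\dif v/\mathrm{K}^*(v)$ and invert. Write $s=1/u$. For $s\geqslant e^{-b/a}$ the objective is $(v-\beta(s))/s$ with $\beta(s)=c\exp(-(a\log s+b)^2)$. A lower bound for $\mathrm{K}^*(v)$ follows from choosing $s$ so that $\beta(s)\leqslant v/2$, which gives $\mathrm{K}^*(v)\geqslant v/(2s)$.

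For the first part, solving $c\exp(-(a\log s+b)^2)=v^{\delta}$ gives $a\log s+b=\sqrt{\log c-\delta\log v}$, i.e. $s_v=\exp\big((\sqrt{\log c-\delta\log v}-b)/a\big)$. With $\delta>1$ and $v\leqslant v_0$ small we have $v^{\delta}\leqslant v/2$, and $s_v\geqslant e^{-b/a}$, so the branch of $\beta$ is the correct one. Hence $\mathrm{K}^*(v)\geqslant \tfrac{v}{2}\exp\big(-(\sqrt{\log c-\delta\log v}-b)/a\big)$. The extra $\log c$ inside the square root is absorbed as follows. By subadditivity, $\sqrt{\log c-\delta\log v}\leqslant\sqrt{-\delta\log v}+\sqrt{|\log c|}$, so it only contributes a constant factor $e^{-\sqrt{|\log c|}/a}$ into $C$. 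This yields the claimed form $\mathrm{K}^*(v)\geqslant Cv\exp\big((-\sqrt{-\delta\log v}+b)/a\big)$ on $[0,v_0]$. The exponent should be read as $-(\sqrt{-\delta\log v}-b)/a$; the sign conventions just need tracking.

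For the rate, substitute $v=e^{-t}$ in $F(x)=\int_x^{1/4}\dif v/\mathrm{K}^*(v)$. On $[x,v_0]$ this gives
$$F(x)\leqslant C'+\frac{e^{-b/a}}{C}\int_{\log(1/v_0)}^{\log(1/x)}\exp\Big(\tfrac{\sqrt{\delta t}}{a}\Big)\dif t .$$
On $[v_0,1/4]$ the integrand is bounded because $\mathrm{K}^*$ is increasing and positive by \Cref{thm:K*Properties}, which produces the constant $C'$. The remaining integral is bounded by $C''\sqrt{T}\,e^{\sqrt{\delta T}/a}$ with $T=\log(1/x)$, since the antiderivative of $e^{\kappa\sqrt t}$ is $(2\sqrt t/\kappa-2/\kappa^2)e^{\kappa\sqrt t}$. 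The condition $F(x)\geqslant n$ is implied at $x=F^{-1}(n)$. Since $F$ is decreasing, this gives $C''\sqrt T e^{\sqrt{\delta T}/a}\gtrsim n$, i.e. $\sqrt{\delta T}/a\geqslant \log n-O(\log\log n)$. Therefore $T\geqslant \frac{a^2}{\delta}(\log n-O(\log\log n))^2$.

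The main obstacle is that this gives $F^{-1}(n)\leqslant \exp(-\frac{a^2}{\delta}\log^2 n+O(\log n\log\log n))$, not the clean bound. To remove the lower-order terms, I would run the argument with some $\delta'\in(1,\delta)$ in the first part. This is legitimate since $\delta>1$ is arbitrary. The slack $\frac{a^2}{\delta'}-\frac{a^2}{\delta}>0$ multiplying $\log^2 n$ dominates the $O(\log n\log\log n)$ error for large $n$. Finitely many small $n$, together with $F^{-1}\leqslant 1/4$, are absorbed into $\tilde C$, giving $F^{-1}(n)\leqslant\tilde C\exp(-\frac{a^2}{\delta}\log^2 n)$.
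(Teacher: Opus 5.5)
Your proposal is correct and follows essentially the same route as the paper. Both arguments lower-bound $\mathrm{K}^*(v)$ by evaluating the supremum at a single $u$ for which $\beta(1/u)$ is of order $v^{\delta}$, then bound $F$ through $v=e^{-t}$ and the explicit antiderivative, invert, and spend slack in $\delta$ on the lower-order terms; the paper takes $u_v=\exp\bigl((b-\sqrt{-\delta\log v})/a\bigr)$, so that $\beta(1/u_v)=c\,v^{\delta}$ exactly, which spares you the $\log c$ adjustment. Your explicit tracking of the $O(\log n\log\log n)$ correction is more careful than the paper's Lambert-$W$ step, because $W_0(x)\leqslant\log x$ for $x\geqslant e$ means replacing $W_0^2$ by $\log^2$ is justified only by exactly this slack; as in the paper, your $\tilde C$ depends on $\delta$, which is unavoidable, so the statement should be read as ``for each $\delta>1$ there is $\tilde C$''.
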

\begin{proof}
    Since $\beta(s) \rightarrow 0$ as $s\geqslant 0$ and $s \rightarrow \infty$ monotonically,
    when $0<v<1$, it holds that $v-\beta(1/u)\leqslant 0$ for $u\geqslant 1$, so
    the maximization in $K^{*}$ can be restricted to $u < 1$:
    \begin{align*}
        K^{*}(v)=\sup_{u\in[0,1)}\left\{uv-uc_{1} \exp(-(-a\log{u}+b)^{2})\right\}.
    \end{align*}
    For some $\delta>1$ and $v \in (0,1)$, take $u_{v} :=\exp(\frac{-\sqrt{-\delta\log{v}}+b}{a}
    )$, that is such that $v=\exp(-\frac{1}{\delta}(-a\log{u}+b)^{2})<1$. As a result
    we have a lower bound on $K^{*}$ as
    \begin{align*}
        K^{*}(v) & \geqslant v\cdot \exp\left(\frac{-\sqrt{-\delta\log{v}}+b}{a}\right)- c\exp\left(\frac{-\sqrt{-\delta\log{v}}+b}{a}\right) v^{\delta}.
    \end{align*}
    Provided that $\delta>1$, the second term will decay faster as
    $v\rightarrow 0$ and the first statement follows.
    
    For $0 \leqslant w \leqslant v_{0}$, let $M:=\int_{v_0}^{1/4} \frac{\dif v}{K^{*}(v)}$.
    Thus, we can bound
    \begin{align*}
        F(w) & =\int_{w}^{1/4} \frac{\dif v}{K^{*}(v)}\\
             & =\int_{w}^{v_0}\frac{\dif v}{K^{*}(v)}+\int_{v_0}^{1/4} \frac{\dif v}{K^{*}(v)}\\
             & \leqslant C^{-1}\int_{w}^{v_0}v^{-1}\cdot \exp\left(\frac{\sqrt{-\delta\log{v}}-b}{a}\right) \dif v+M \\
             & =C^{-1}\left.\left[-\frac{2a}{\delta}\exp\left(\frac{\sqrt{-\delta\log(v)}-b}{a}\right)\left(\sqrt{-\delta\log{v}}-a\right)\right]\right|_{w}^{v_0}+M'\\
             & \leqslant C^{-1}\cdot\frac{2a^{2}}{\delta}\exp{\left(-\frac{b}{a}\right)}\cdot \frac{\sqrt{-\delta\log{w}}}{a}\exp\left(\frac{\sqrt{-\delta\log{w}}}{a}\right)+M''.
    \end{align*}
    This lead for $n\geqslant M''$ to
    \begin{align*}
        F^{-1}(n) \leqslant \exp\left(-\frac{a^{2}}{\delta}W_{0}^{2}\left(\frac{n-M''}{C'}\right)\right),
    \end{align*}
    where $W(\cdot)$ is the $w_{0}$ branch of Lambert W function and $C'=C^{-1}\cdot\frac{2a^{2}}{\delta}
    \exp{\left(-\frac{b}{a}\right)}$. For $n > M''$,
    \begin{align*}
        F^{-1}(n) \leqslant \exp\left(-\frac{a^{2}}{\delta}\log^{2}\left(\frac{n-M''}{C'}\right)\right).
    \end{align*}
    For $n > M''$ , by concavity of $[0, M] \ni x \rightarrow \log(n-M'')$, we
    have $\log(n)-M''\cdot \frac{1}{n-M''}\leqslant \log(n-M'')$ and
    $\frac{M''}{n-M''}\leqslant \frac{M''}{\lfloor M''\rfloor+1-M''}$, so there exists
    $C'' > 0$ such that for $n/C'\geqslant1$, for any $\delta>1$,
    \begin{align*}
        F^{-1}(n) \leqslant C''\exp\left(-\frac{a^{2}}{\delta}\log^{2}\left(\frac{ n}{C'}\right)\right).
    \end{align*}
    Then, since $\exp{(-\log^2(x))}$ decays faster than any polynomial, we conclude that there exists $\Tilde{C}> 0$ such that for $n \in \mathbb{N}_{+}$, for any
    $\delta'>\delta>1$,
    \begin{align*}
        F^{-1}(n) \leqslant \Tilde{C}\exp\left(-\frac{a^{2}}{\delta'}\log^{2}\left(n\right)\right).
    \end{align*}
\end{proof}
\begin{proof}[Proof of \Cref{eg: rate for ou}]
    \label{Prf:eg: rate for ou} Since
    \begin{align*}
         & Y_{t_{i+1}}|Y_{t_i}=y_{i} \sim \mathcal N(y_{i}m_{i},v_{i}),  \quad m_{i}=\exp(-\theta\Delta t_{i}),                     \\
         & v_{i}=\frac{1-\exp(-2\theta \Delta t_{i})}{2\theta}, \quad  \Delta t_{i} =t_{i+1}-t_{i}, i=0,\ldots,N-1,
    \end{align*}
    the distribution of $\pi(\theta|Y)$ can be obtained by Bayes' rule. We have
    \begin{align*}
        \log\pi(\theta|Y) & \propto \sum_{i=0}^{N-1}\log p(Y_{t_{i+1}}|Y_{t_i}, \theta) + \log p(\theta)                                                                             \\
                          & \propto -\frac{1}{2}\sum_{i=0}^{N-1}\left[\log(v_{i})+\frac{(Y_{t_{i+1}}-m_{i})^{2}}{v_{i}}\right]-\frac{1}{2}\frac{(\theta-\mu_{0})^{2}}{\tau_{0}^{2}}.
    \end{align*}
    Since
    \begin{align*}
        -\frac{1}{2}\sum_{i=0}^{N-1}\left[\log(v_{i})+\frac{(Y_{t_{i+1}}-m_{i})^{2}}{v_{i}}\right]\leqslant -\frac{1}{2}\sum_{i=0}^{N-1}\log(v_{i}) \leqslant \frac{1}{2}\sum_{i=0}^{N-1}\Delta t_{i} \theta=\frac{1}{2}T\theta,
    \end{align*}
    then, there exist a constant $M$ such that
    \begin{align*}
        \log\pi(\theta|Y) & \leqslant \frac{1}{2}T\theta -\frac{1}{2}\frac{(\theta-\mu_{0})^{2}}{\tau_{0}^{2}}+M \leqslant -\frac{1}{2}\frac{(\theta-m)^{2}}{\tau_{0}^{2}}+M',
    \end{align*}
    where $m=\mu_{0}+\frac{\tau_{0}^{2} T}{2}$,
    $M' = M - \frac{\mu_{0}^{2}}{2\tau_{0}^{2}}+ \frac{m^{2}}{2\tau_{0}^{2}}$. Therefore,
    there exists a finite positive number $K$, such that $\pi(\theta|Y) \leqslant
    K \cdot q(\theta)$, where $q(\theta) \sim \mathcal{N}(m,\tau_{0}^{2})$, $K=\sqrt{2\pi}
    \tau_{0}\exp(M')$.

    In this scheme, since the function $-\theta + \theta^{2} x^{2}$ is bounded below
    by $-\theta$ for every $\theta$, by \Cref{thm:betaforIMHindiffusion}, we have
    $\beta_{2}(\theta,s)=\mathbbm{1}_{{\{s \leqslant \max_i \{\tilde{\mathcal{G}}_i(X)\}\}}}$,
    where
    $\tilde{\mathcal{G}}_{i}(X)=\exp\left\{\frac{1}{2}\theta \left(\Delta t_{i}-Y
    _{t_i}^{2}+Y_{t_{i-1}}^{2}\right)\right\}$. Hence,
    \begin{align*}
        \int \beta_{2}(\theta,s)\pi(\dif \theta) & \leqslant K\int \mathbbm{1}_{{\{s \leqslant \max_i\{ \tilde{\mathcal{G}}_i(X)\}\}}}q(\dif \theta) =K \cdot\left(1-\Phi\left(\frac{2\log(s)/\eta -m}{\tau_{0}}\right)\right),
    \end{align*}
    where $\eta=\max_{i}\{\Delta t_{i}-Y_{t_i}^{2}+Y_{t_{i-1}}^{2}\}$. Since $1-\Phi
    (x)\leqslant\frac{1}{2}\exp{(-x^2/2)}$ for $x \geqslant 0$, as
    $\frac{2\log(s)/\eta -m}{\tau_{0}}\geqslant 0$, we have
    \begin{align*}
        \int \beta_{2}(\theta,s)\pi(\dif \theta)\leqslant \frac{K}{2}\exp{\left(-\frac{1}{2}\left(\frac{2}{\eta\tau_{0}}\log{x}-\frac{m}{\tau_{0}}\right)^2\right)}.
    \end{align*}
    We take $\beta_{2}(s)=\frac{1}{4}$ when $0<s<1$,
    $\beta_{2}(s)=\max\{\frac{1}{4},\frac{K'}{2}\}\exp{(-\frac{2}{\eta^{2} \tau_{0}^{2} }\log^2s)}
    \geqslant \int \beta_{2}(\theta,s)\pi(\dif \theta)$
    when $s\geqslant 1$. By \Cref{thm:weakPtotildeP2} and \Cref{thm:K*forexplog}, we have $K^{*}(v) =K_{2}^{*}\left(\frac{\gamma \cdot v}{2}\right)$.
    
    Then by \Cref{thm:K*forexplog} and Lemma 14 in
    \cite{ascolani2024scalability}, we have
    \begin{align*}
        F^{-1}(n) \leqslant \Tilde{C}\exp\left(-\frac{a}{\delta}\log^{2}\left(\frac{n-1}{\gamma/2}\right)\right),
    \end{align*}
    where $\Tilde{C}$ is a constant, $a=\frac{2}{\eta^{2} \tau_{0}^{2}}$, $\delta$
    is any constant greater than $1$.
\end{proof} 

\bibliographystyle{abbrv}
\bibliography{ref}

\end{document}